\documentclass[runningheads,11pt]{llncs}

\usepackage{listings}
\usepackage{xcolor}

\usepackage[T1]{fontenc}
\usepackage[edges]{forest}

\usepackage{amsfonts}
\usepackage{graphicx}
\usepackage[utf8]{inputenc} 
\usepackage{url}            
\usepackage[numbers,sort]{natbib}
\usepackage{hyperref}       
\usepackage{nicefrac}       
\usepackage{microtype}      
\hypersetup{
     colorlinks  = true,
     urlcolor    = teal,
	 citecolor   = teal,
	 linkcolor   = red
}

\makeatletter
\def\set@curr@file#1{\def\@curr@file{#1}} 
\makeatother
\usepackage[load-configurations=version-1]{siunitx} 

\usepackage{algorithm}
\usepackage[noend]{algorithmic}
\usepackage{enumitem}
\usepackage{multicol}

\usepackage{dsfont}
\usepackage{mathtools}
\usepackage{amsmath}
\usepackage{xspace}

\makeatletter
\renewenvironment{proof}[1][\proofname]{\par
\normalfont \topsep6\p@\@plus6\p@\relax
\trivlist
\item\relax
{\itshape
#1\@addpunct{.}}\hspace\labelsep\ignorespaces
}{%
\qed\endtrivlist\@endpefalse
}
\makeatother


\makeatletter
\let\original@algocf@latexcaption\algocf@latexcaption
\long\def\algocf@latexcaption#1[#2]{%
  \@ifundefined{NR@gettitle}{%
    \def\@currentlabelname{#2}%
  }{%
    \NR@gettitle{#2}%
  }%
  \original@algocf@latexcaption{#1}[{#2}]%
}
\makeatother

\newcommand{\ug}{U_>}
\newcommand{\ul}{U_<}

\newcommand{\suff}{\textnormal{suffix}}

\newcommand{\prefix}{\textit{prefix}}
\newcommand{\safe}{\textit{MIR}}
\newcommand{\pathto}[2]{#1\stackrel{\pi}{\rightsquigarrow}#2}
\newcommand{\pth}[3]{#1\stackrel{#2}{\rightsquigarrow}#3}
\newcommand{\pathtorho}[2]{#1\stackrel{\rho}{\rightsquigarrow}#2}

\renewcommand{\above}{\textnormal{pos}}
\newcommand{\below}{\textnormal{neg}}

\renewcommand{\refeq}[1]{(\ref{#1})}
\newcommand{\tmu}[1]{\tilde \mu_{#1}}
\newcommand{\sigr}{\sigma^{\textnormal{right}}}
\newcommand{\sigl}{\sigma^{\textnormal{left}}}
\newcommand{\setmuskip}[2]{#1=#2\relax}

\DeclarePairedDelimiter{\ceil}{\lceil}{\rceil}

\newenvironment{proofof}[1]{\begin{proof}[\textnormal{\textbf{Proof of~#1}}]}{\end{proof}} 

\makeatletter
\let\original@algocf@latexcaption\algocf@latexcaption
\long\def\algocf@latexcaption#1[#2]{%
  \@ifundefined{NR@gettitle}{%
    \def\@currentlabelname{#2}%
  }{%
    \NR@gettitle{#2}%
  }%
  \original@algocf@latexcaption{#1}[{#2}]%
}
\makeatother

\newcount\Comments  
\newcount\Includeappendix 
\newcount\Putacknowledgement 

\Comments=1
\Putacknowledgement=0
\definecolor{darkgreen}{rgb}{0,0.6,0}
\newcommand{\kibitz}[2]{\ifnum\Comments=1{\color{#1}{#2}}\fi}

\newcommand{\gre}{\textsc{Greedy}\xspace}
\newcommand{\mainalg}{\SEGB}


\makeatletter
\newcommand{\RemoveAlgoNumber}{\renewcommand{\fnum@algocf}{\AlCapSty{\AlCapFnt\algorithmcfname}}}
\newcommand{\RevertAlgoNumber}{\algocf@resetfnum}
\makeatother


\newcommand{\mP}{\mathcal{P}}

\newcommand{\mI}{\mathcal{I}}

\newcommand{\mH}{\mathcal{H}}

\newcommand{\mS}{\mathcal{S}}

\newcommand{\mU}{SW}
\newcommand{\ind}{\mathds{1}}

\newcommand{\R}{\mathbb{R}}

\newcommand{\defeq}{\stackrel{\text{def}}{=}}

\newcommand{\tupbracket}[1]{\left\langle {#1} \right\rangle}

\newtheorem{assumption}{Assumption}

\newtheorem{claim}{Claim}

\newtheorem{observation}{Observation}

\newcommand\bl[1]{\boldsymbol{ #1 } }

\newcommand\abs[1]{\left| #1  \right|}
\DeclareMathOperator*{\argmin}{arg\,min} 
\DeclareMathOperator*{\argmax}{arg\,max} 

\DeclareMathOperator{\E}{\mathbb{E}}

\newcounter{algorithmpolicy}
\makeatletter
\newenvironment{algorithmpolicy}[1][htb]{%
  \let\c@algorithm\c@algorithmpolicy
    \renewcommand{\ALG@name}{Policy}
   \begin{algorithm}[#1]%
  }{\end{algorithm}}
\makeatother

\newcommand{\nonl}{\renewcommand{\nl}{\let\nl\oldnl}}
\newcommand{\ise}{\textsc{IRSR}}

\newcommand{\ALG}{{\small\textnormal{\textsf{ALG}}}}
\newcommand{\OPT}{{\small\textnormal{\textsf{OPT}}}}
\newcommand{\OGP}{{\small\textnormal{\textsf{OGP}}}}
\newcommand{\SEGB}{{\small\textnormal{\textsf{IREGB}}}}
\newcommand{\ICSEGB}{{\small\textnormal{\textsf{BIC-IREGB}}}}

\bibliographystyle{abbrvnat} 
\Comments=1
\Includeappendix=1
\Putacknowledgement=0

\usepackage{times}

\makeatletter
\renewcommand\normalsize{%
  \@setfontsize\normalsize{11}{13.6}%
  \abovedisplayskip 5.5\p@ \@plus2\p@ \@minus4\p@
  \abovedisplayshortskip \z@ \@plus3\p@
  \belowdisplayshortskip 6\p@ \@plus3\p@ \@minus3\p@
  \belowdisplayskip \abovedisplayskip
  \let\@listi\@listI}
\makeatother
\usepackage[a4paper, left=1.1in, right=1.1in, top=1.1in, bottom=1.1in]{geometry}

\begin{document}

\title{Near-Linear MIR Algorithms for Stochastically-Ordered Priors}


\author{%
  Gal Bahar\inst{1}\and
  Omer Ben{-}Porat\inst{1}\thanks{Corresponding author.} \and
  Kevin Leyton{-}Brown\inst{2} \and
  Moshe Tennenholtz\inst{1}
}

\authorrunning{G.\ Bahar et al.}

\institute{%
  Technion – Israel Institute of Technology, Haifa, Israel\\
  \email{bahar@campus.technion.ac.il}, \email{\{omerbp,moshet\}@technion.ac.il}
  \and
  University of British Columbia, Vancouver, Canada\\
  \email{kevinlb@cs.ubc.ca}
}

\maketitle
\setcounter{footnote}{0}
\begin{abstract}%

With the rise of online applications, recommender systems (RSs) often encounter constraints in balancing exploration and exploitation. Such constraints arise when exploration is carried out by agents whose utility must be taken into account when optimizing overall welfare. Recent work suggests that recommendations should be \emph{mechanism-informed individually rational} (MIR)~\cite{Fiduciary}. Specifically, if agents have a default arm they would use, relying on the RS should yield each agent at least the reward of the default arm, conditioned on the information available to the RS. Under the MIR constraint, striking a balance between exploration and exploitation becomes a complex planning problem. To that end, \citet{Fiduciary} propose an approximately optimal yet inefficient planning algorithm that runs in $O(2^K K^2 H^2)$, where $K$ is the number of arms and $H$ is the size of the support of the reward distributions. In this paper, we make a significant improvement for a special yet practical case, removing both the dependence on $H$ and the exponential dependence on $K$. We assume a stochastic order of the rewards (e.g., Gaussian with unit variance, Bernoulli, etc.), and devise an asymptotically optimal algorithm with a runtime of $O(K \log K)$. Our technique is based on formulating a Goal Markov Decision Process (GMDP), establishing an optimal dynamic programming procedure, and then unveiling its crux---fleshing out a simple index-based structure that facilitates efficient computation. Additionally, we present an incentive-compatible version of our algorithm.

\end{abstract}
\begin{keywords}%
Explore-exploit, Recommender Systems, Individual Rationality, Incentive compatibility, Mechanism Design.
\end{keywords}

\section{Introduction}\label{sec:intro}

Multi-armed bandits (MABs) represent a fundamental problem in online learning. In this framework, the goal is to maximize total rewards by balancing the exploitation of known high-reward arms with the exploration of potentially better ones. With the rise of online applications, MAB scenarios often involve human-facing settings, where arms or rounds represent agents whose needs must be considered. For instance, several lines of work consider content providers as the arms~\cite{hu2023incentivizing,BenPoratT18,immorlica2024clickbait}, requiring algorithms to incentivize the creation of high-quality content. In other works, beyond the exploration-exploitation dilemma, algorithms must also address fairness~\cite{MatthewKearnsMorgensternRothNIPS2016,liu2017calibrated}, long-term welfare~\cite{ben2023learning,googleOmer20}, and strategic behavior~\cite{braverman2019multi,feng2020intrinsic}, among other concerns.

A recent line of work initiated by \citet{Kremer2014} considers MAB as a mechanism design problem~\cite{nisan1999algorithmic, nisan2007algorithmic}. In such settings, a recommender system suggests actions to agents without enforcing its choices. As agents are self-interested and may avoid high-risk, high-reward arms, there is a need to \emph{incentivize exploration} to ensure accurate reward estimates for all arms. Previous works~\cite{mansour2015bayesian,mansour2020bayesian,cohen2019optimal} achieve incentive compatibility (IC) by leveraging \emph{information asymmetry}: While each agent possesses some prior information, mechanisms accumulate additional information by observing the actions and rewards of previous agents. Although IC typically serves as a benchmark for a mechanism's trustworthiness, as it guarantees that agents maximize their expected utility by following the mechanism's recommendations, mechanisms that employ information asymmetry still suffer from an inherent trust issue. Indeed,  agents who receive recommendations that are deliberately suboptimal (i.e., exploration) may feel misled. 

One intuitive remedy is to modify the mechanism beyond mere IC by promising that an agent is never worse off by following a recommendation. This brings us to the classical notion of Individual Rationality (IR), or participation constraints \cite{tadelis2013game}. IR requires that the expected utility from following the recommendation be at least as high as that of the agent's default action---that is, the action they would select if the mechanism did not exist. Yet the standard IR definition inherits the same information-asymmetry pitfall: Because the mechanism holds more information than any single agent, it can propose actions that look rational from the agent's perspective but are sub-optimal relative to the mechanism's more comprehensive information.

To address this issue, \citet{Fiduciary} propose a stronger guarantee, \emph{Mechanism-informed Individual Rationality} (MIR).\footnote{\citet{Fiduciary} label this guarantee simply ``IR.'' To avoid confusion with the conventional IR constraint, we henceforth call it mechanism-informed IR.} MIR requires that every recommendation's expected reward, \emph{based on all the information available to the mechanism}, meet or exceed the agent's default-arm reward \emph{ex-ante}. This stronger guarantee is crucial because it closes the information gap, treating the agent as if they possessed the mechanism's accumulated information. Despite the attractiveness of MIR, relying on the mechanism's accumulated information limits the ability to learn through exploration. Consequently, MIR mechanisms must adopt a careful exploration policy, maximizing their ability to explore risky arms while still satisfying the MIR guarantee by mixing low-risk and high-risk arms. 

Due to this challenge, prior work \cite{Fiduciary} faces significant practical limitations. Indeed, the runtime of the MIR mechanism of \citet{Fiduciary} is $O(2^K K^2 H^2)$, where $K$ is the number of arms and $H$ is the support size of the (discrete) reward distributions. This high runtime complexity severely limits the mechanism's practicality, particularly when dealing with many arms or continuous reward distributions. For the latter, discretization approaches are either computationally infeasible or yield suboptimal rewards.

In this paper, we revisit MIR constraints in Bayesian MABs with static rewards.  For priors that are \emph{stochastically ordered}, we design $\SEGB$: An index-based algorithm that is MIR by construction and is asymptotically optimal among all MIR algorithms. Crucially, $\SEGB$ runs in $O(K\log K)$ time---reducing the exponential $2^{K}K^{2}$ dependence on the number of arms down to near-linear and removing the $H^2$ term entirely, so it works with continuous reward distributions. Finally, we use $\SEGB$ as a black-box to develop a MIR and IC algorithm.

\subsection{Our Contribution}
We consider a MAB setting with a Bayesian prior and static rewards; that is, the reward of each arm is initially unknown and is realized only once.\footnote{This assumption allows us to focus on the novel mechanism-informed IR viewpoint proposed in this paper. A similar approach is taken by several early works on incentivizing exploration \cite{Bahar2016,Fiduciary}, and some have strictly stronger assumptions, i.e., $\{-1,1\}$ rewards \cite{cohen2019optimal}. Despite the limitation, our problem remains technically non-trivial.} Some arms may be risky, having negative expected rewards but also offering high potential rewards. Additionally, we assume that there is a \textit{default arm} that agents would pick if they did not use the algorithm. In each round, the algorithm picks a distribution over the arms, termed a \textit{portfolio}. The algorithm then samples and selects an arm based on the portfolio's weights. We aim to maximize social welfare, defined as the sum of rewards.

Without any further constraints, maximizing welfare is straightforward: Pull each arm once, then repeatedly choose the best one. Due to our static rewards assumption, one round of exploration per arm is enough to reveal the highest reward. However, we restrict the algorithm to select only MIR portfolios. A portfolio is MIR if its expected reward is at least as good as that of the default arm \emph{conditioned on the information available to the algorithm} (no information asymmetry). Since rewards are realized only once, careful planning is crucial. Arms with a high expected value extend the set of MIR portfolios, enabling the exploration of a priori inferior arms.

Our main technical contribution is an asymptotically optimal MIR algorithm under a stochastic order assumption on rewards (see Definition~\ref{def:bayesian safety}).  Stochastic order is common in practical scenarios, e.g., Bernoulli rewards or Gaussian reward distributions with a common variance. We introduce an auxiliary Goal Markov Decision Process (GMDP)~\cite{barto1995learning}. In our GMDP, the action sets are convex polytopes of all MIR portfolios that mix a priori superior arms (better than the default arm) with a priori inferior arms. Despite the complex action space, we reveal a simple index-based optimal policy, akin to the seminal result of \citet{weitzman1978optimal} for Pandora's box \cite{gergatsouli2024weitzman, atsidakou2024contextual,boodaghians2020pandora, berger2023pandora}. We then use the optimal GMDP policy to develop $\SEGB$ (see Algorithm~\ref{alg:alg of pi}), a MIR and asymptotically optimal algorithm. Informally stated,
\begin{theorem}[Informal version of Corollary~\ref{cor: alg is asym opt}]
Given a problem instance with a stochastic order over rewards, $\SEGB$ has a runtime of $O(K \log K)$, where $K$ is the number of arms, and asymptotically achieves optimal welfare among all MIR algorithms.
\end{theorem}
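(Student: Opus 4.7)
The plan is to decompose the argument into three parts: (i) a reduction from the MIR mechanism design problem to the Goal MDP (GMDP) introduced above, (ii) identification of an index-based optimal policy for this GMDP via a Weitzman-style exchange argument, and (iii) an efficient implementation of the induced policy.

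For (i), I will formalize the GMDP whose states record the set of a priori inferior arms already explored together with their realized rewards, and whose actions are the convex polytope of MIR portfolios that mix one unexplored inferior arm with the currently best known superior arm. Since rewards are static, each inferior arm needs to be explored at most once, so the exploration phase terminates within $O(K)$ rounds; once it ends, the algorithm commits to the best arm revealed and exploits it forever. Any welfare-maximizing MIR algorithm can therefore be simulated by a GMDP policy up to an $O(K)$ additive welfare term, which is $o(T)$ in the horizon $T$ and is absorbed in the \emph{asymptotic} qualifier of the theorem.

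For (ii), I will assign each inferior arm $i$ a Weitzman-style index $\sigma_i$: the threshold that equates the expected marginal welfare gain from exploring arm $i$ (at the maximum MIR-feasible mixing weight against the current best superior arm) with the opportunity cost of deviating from that best arm. The crux is showing that exploring inferior arms in decreasing order of $\sigma_i$, using the portfolio that makes the MIR constraint bind with equality, is optimal. I will establish this by an exchange argument on adjacent exploration steps: under the stochastic-order assumption the indices depend monotonically on the prior and on the current best known superior reward, so any local swap can only decrease expected welfare. The stochastic order is essential because it guarantees that revealing a larger superior reward only expands the MIR-feasible mixing weights for every remaining inferior arm in a monotone way, which is what makes the indices globally comparable across arms.

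For (iii), once optimality of the index policy is established, the $O(K\log K)$ runtime follows directly: each $\sigma_i$ is computed in $O(1)$ from a constant-sized functional of the prior (by inverting a monotone quantile function induced by the stochastic order); sorting the $K$ indices costs $O(K\log K)$; and the simulation phase walks the sorted list once while maintaining a running best superior reward, contributing only $O(K)$ more. Crucially, no step iterates over the support of the reward distribution, which is how the $H^2$ factor of \citet{Fiduciary} is eliminated. The main obstacle I expect is the exchange argument in (ii): the interaction between the MIR mixing weights and the updating of the best known superior arm as exploration proceeds makes the naive Pandora's-box analysis fail. Getting it right --- likely by a backward induction on the number of remaining unexplored inferior arms, together with a careful accounting of how the MIR ``budget'' is consumed as superior rewards are revealed --- will be the technical heart of the proof.
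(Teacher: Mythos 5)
There is a genuine gap, and it sits exactly at the step you yourself flag as the technical heart: part (ii). Your plan is to prove optimality of the index order by an exchange argument on adjacent exploration steps, claiming that under stochastic order ``any local swap can only decrease expected welfare.'' The paper shows this naive swap does not go through: comparing $\bl p_{i,j}$ with $\bl p_{i,\tilde j}$ requires controlling the continuation values, and stochastic dominance of $X(a_j)$ over $X(a_{\tilde j})$ does \emph{not} imply $W^{\star}(s\setminus\{a_j\})\geq W^{\star}(s\setminus\{a_{\tilde j}\})$ (see Claim~\ref{claim:ass is not for W}, which gives an explicit counterexample), so the swapped policy's value cannot be bounded term by term. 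The paper's proof of Theorem~\ref{thm:optimal policy} needs an additional structural ingredient your proposal does not supply: the Equivalence Lemma (Lemma~\ref{lemma:equivalence body}), stating that the probability of fully exploring all arms is identical for \emph{every} $\mP$-valid policy, independent of the order and of which a priori superior arm is mixed in. Only with this policy-invariance, plus a careful decomposition of $W$ into events ordered by which inferior arms were reached and monotonicity of the conditional terminal rewards (Proposition~\ref{prop: monotonicity in thm}), does the two-dimensional induction close. Without an argument of this type, your backward induction ``with careful accounting of the MIR budget'' is a restatement of the difficulty, not a proof.

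Secondary mismatches with the paper's construction are worth noting because they affect whether your reduction in (i) is even well-posed. In the paper's GMDP the state is only the \emph{set of unobserved arms}; realized rewards never enter the state, exploration portfolios mix an \emph{a priori} superior arm (prior mean, chosen arbitrarily --- the Equivalence Lemma shows the choice is immaterial) with an inferior arm, and realized positive rewards are used only afterwards, in the Bernoulli-trial wrapper of $\SEGB$, which takes a geometric (not $O(K)$ deterministic) number of rounds whose expectation is finite and hence vanishes asymptotically. Your GMDP, whose states carry realized rewards and whose actions mix against the ``currently best known superior arm,'' conflates these two phases and yields a continuous state space on which your $O(1)$-per-arm index computation and the claimed simulation argument are not justified. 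Finally, the paper's optimal index under Assumption~\ref{assumption:dominance} is simply the prior mean of the inferior arms (explore the least negative first); a Weitzman-style reservation value may induce the same order under stochastic dominance, but nothing in your sketch establishes that, and the optimality proof is the missing piece either way.
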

We analyze the algorithm's convergence rate and demonstrate that it is fast in expectation. Additionally, we assume agents are strategic and aim to design an IC and MIR mechanism. To achieve this, we apply the hidden exploration technique of \citet{mansour2015bayesian} to modify $\SEGB$, making it both MIR and IC.

\subsection{Related Work}\label{subsec:related}
Mechanism-informed individual rationality can be viewed as a safety constraint, connecting to various strands of research on safe exploration. For instance, Bandits with knapsack~\cite{badanidiyuru2013bandits} address a MAB problem with a global budget, aiming to maximize total rewards before exhausting resources. Another approach focuses on stage-wise safety, ensuring that regret performance remains above a threshold set by a baseline strategy in each round~\cite{kazerouni2017conservative,wu2016conservative}. Notably, in these lines of work, constraints apply to cumulative resource consumption or reward throughout the algorithm's run. Conversely, \citet{amani2019linear} apply a reward constraint in each round. Their work leaves the set of safe decisions uncertain due to the inherent uncertainty in the learning process, aiming to minimize regret while learning the safe decision set. Similar to \citet{amani2019linear}, we also consider a stage-wise constraint. It is worth mentioning that safety has been studied beyond MABs~\cite{wachi2020safe,Moldovan:2012}. For example, \citet{Moldovan:2012} introduce  an algorithm enabling safe exploration in Markov Decision Processes to avoid fatal absorbing states. 

Closely related to our work is the research on incentivizing exploration, pioneered by \citet{KremerMP13}, motivated by using a bandit-like setting for recommendations. Since users are selfish, algorithms must incentivize exploration (see \cite{slivkins2019introduction} for an introduction and overview). Subsequent works consider regret minimization \cite{mansour2015bayesian,mansour2020bayesian}, heterogeneous agents \cite{chen18a,immorlica2019bayesian}, social networks \citep{Bahar2016,bahar2019recommendation}, and extensions to Markov Decision Processes \cite{simchowitz2024exploration}. Similarly to several previous works \cite{Fiduciary,bahar2019recommendation,cohen2019optimal,KremerMP13}, we assume each arm is associated with a fixed value, which is initially unknown and sampled from a known distribution. Given this, our problem becomes the careful planning of MIR exploration.

\citet{Fiduciary} is the work most closely related to ours within the incentivizing exploration line of research. We inherit their model and also aim to develop an optimal MIR mechanism. \citet{Fiduciary} assume rewards are integers in a bounded set of size $H$, construct an auxiliary GMDP model, and devise a dynamic programming-based planning policy with a runtime of $O(2^K K^2 H^2)$. We adopt the same GMDP approach (Section~\ref{sec:infinite}) but aim to reduce the exponential dependence on $K$ and remove the dependence on $H$. We achieve a significant improvement of $O(K \log K)$ for rewards with stochastic order, a special yet common case. Importantly, the techniques we use, such as the Equivalence Lemma (Lemma~\ref{lemma:equivalence body}), are novel and were not examined by \citet{Fiduciary}. 

\section{Problem Statement}\label{sec:problem statement}
This section formally defines the Mechanism-informed Individually Rational Sequential Recommendation problem ($\ise$ for shorthand). We consider a set $A$ of $K$ arms, $A=\{a_1,\dots a_K\}$. The reward of arm $a_i$ is a random variable $X(a_i)$, and $(X(a_i))_{i=1}^K$ are mutually independent. The rewards are static, i.e., they are realized only once but initially unknown. If the reward $X(a_i)$ is realized before round $t$ and is hence known, we use $x(a_i)$ to denote its value. We denote by $\mu({a_i})$ the expected value of $X(a_i)$, i.e., $\mu(a_i)=\E\left[X(a_i)\right]$. We augment the set of arms with a \textit{default arm}, which we denote by $a_0$. For simplicity, unless stated otherwise, we assume that the reward of the default arm $a_0$ is always 0; hence, $\mu(a_0)=0$.\footnote{The selection of zero as the threshold is arbitrary; all our results hold for any bounded distribution for $X(a_0)$ with minor modifications.} Consequently, we let $A^+ = A\cup\{a_0\}$. As will become apparent later on, it is convenient to distinguish arms with positive expected rewards from those with negative expected rewards. To that end, we let 
$\above(A)=\{a_i\in A:\mu(a_i) > 0 \}$. Analogously, $\below(A)=\{a_i\in A:\mu(a_i)< 0 \}$. We assume for simplicity that there are no arms with an expected reward of 0. 

There are $T$ agents arriving sequentially. We denote by $a^t$ the action of the agent arriving at round $t$. The reward of the agent arriving at round $t$ is denoted by $r^t$, and is a function of the arm she chooses. For instance, by selecting arm $a_l \in A$, the agent arriving at round $t$ obtains $r^t = r^t(a_l)=X(a_l)$. 
Agents are fully aware of the distribution of $(X(a_i))_{i=1}^K$, and each agent cares about her own reward, which she wants to maximize. 

A mechanism is a recommender system that interacts with agents. The input to the mechanism at round $t$ consists of all the information acquired up to round $t$ by the previous $t-1$ agents, which we denote by $\mI_t$. Namely, $\mI_1$ encodes the prior information only, $\mI_2$ encodes both the prior information and the reward of the arm selected at round 1, and so on. The output of the mechanism is a distribution over arms. Formally, a randomized mechanism is a mapping $ \bigcup_{t=1}^T \left(A^+\times \R_+ \right)^{t-1} \rightarrow \Delta(A^+)$, where $\Delta(A^+)$ is the set of probability distributions over the elements of $A^+$. 

Notice that a randomized mechanism selects a distribution of arms in every round, which we term \textit{portfolios}. A portfolio is an element from $\Delta(A^+)$---it is a distribution over the arms and the default arm. Whenever the mechanism picks a portfolio $\bl p \in \Delta(A^+)$, Nature (i.e., a third-party) flips coins according to $\bl p$ to realize one arm from $A^+$. We let $\bl p^t$ denote the portfolio the mechanism selects at round $t$, and $m^t=m^t(\mI_t)$ the recommended arm, i.e., $m^t\sim \bl p^t$. Using this formulation, the expected reward at time $t$ of an agent choosing $m^t$ is determined by Nature's coin flips and the randomness of the rewards. Formally, 
\[
\E[r^t(m^t)\mid \mI_t]=\sum_{a_i \in A^+} \bl p^t(a_i)\E\left[X(a_i)\mid \mI_t\right].
\]
We consider both \textit{non-strategic agents} and \textit{strategic agents} schemes. In the former, agents always follow the recommendation of the mechanism, namely $a^t=m^t$. Note that this is a standard underlying assumption of classical multi-armed Bandit models. In the latter, the mechanism recommends actions to agents but cannot compel them to follow these recommendations, potentially resulting in $a^t \neq m^t$. In this scheme, we say that a mechanism is Bayesian incentive compatible (BIC) if following its recommendations is a dominant strategy in the Bayesian sense. That is, when given a recommendation, an agent's best response is to follow the recommended arm. Formally,
\begin{definition}[Bayesian Incentive Compatibility]\label{def:ic}
A mechanism is Bayesian incentive compatible (BIC) if $\forall t \in \{1,\ldots,T\}$, for every information $\mI_t$  and for all action pairs $a_l,a_i \in A$,
\begin{equation}\label{eq: ic const}
\E[X(a_l)-X(a_i)\mid m^t(\mI_t)=a_l]\geq   0.
\end{equation}
\end{definition}
We stress that every agent can decide which arm to take \textit{after seeing the realized recommended arm $m^t$} and not the portfolio $\bl p^t$. Unless stated otherwise, we refer to the non-strategic agents scheme. We treat the strategic agents scheme in Section~\ref{sec:ic body}.

\paragraph{Mechanism-Informed Individual Rationality}
In this paper, we focus on \textit{ex-ante mechanism-informed individually rational} recommendations, abbreviated as MIR. A portfolio $\bl p^t$ is MIR at time $t$ if, given the information the mechanism acquired up to round $t$, its expected reward is greater than or equal to zero, which is the reward of the default arm. Formally,
\begin{definition}[Ex-ante Mechanism-Informed Individual Rationality (MIR)]\label{def:bayesian safety}
A portfolio $\bl p$ is MIR with respect to information $\mI$ if
\begin{equation}\label{ineq:IR def}
\E_{a\sim \bl p}[X(a) \mid \mI]=\sum_{a_i\in A}\bl p (a_i)\E\left[X(a_i)\mid \mI\right]\geq \E\left[X(a_0)\mid \mI\right]= 0.
\end{equation}
\end{definition}
We stress that Definition~\ref{def:bayesian safety} implies only ex-ante MIR, as Nature may realize an arm with a negative expectation; nevertheless, we shall continue referring to it as MIR. If a mechanism selects MIR portfolios for every possible information $\mI$, we say the mechanism is MIR. By focusing on MIR mechanisms, we ensure that every agent will receive at least the reward of the default arm in expectation, independent of the other agents. The definitions of MIR and BIC are orthogonal and significantly differ in the events they condition on. The BIC Inequality~\eqref{eq: ic const} conditions on the agent's information, which includes the prior information and the recommendation of the mechanism. In contrast, the MIR Inequality~\eqref{ineq:IR def} conditions on all the information the mechanism has acquired. Thereby, MIR could be viewed as a standard for how much and in what way the system may leverage its informational advantage.

We note that forbidding the mechanism from exploring arms with a negative expected reward (equivalent to demanding \emph{ex-post} MIR) trivializes the problem as it prevents exploring any such arms. The MIR constraint is a compromise: It allows learning about a priori inferior arms, but only when they are mixed with a priori superior arms.

\paragraph{Social Welfare} When agents follow the mechanism, we denote the \textit{social welfare} achieved by a mechanism $\ALG$ by $\mU_T(\ALG)=\E\left[\frac{1}{T}\sum_{t=1}^T r^t\right]$. The global objective of the mechanism is to maximize social welfare, subject to selecting MIR portfolios in every round. If agents are strategic, we narrow our attention to BIC mechanisms only, which justifies using the same formulation of social welfare.

To conclude, we represent an instance of the $\ise$ problem by the tuple 
\[
\tupbracket{K, A, (X(a_i))_i, (\mu(a_i))_i}.
\]
Notice that the horizon $T$ is not part of the description since we often discuss a particular instance with varying $T$. When the instance is known from the context, we denote the highest possible social welfare achieved by any algorithm by $\OPT_T$, where the subscript emphasizes the dependence on the number of rounds $T$. Additionally, we let $\OPT_{\infty}$  denote $\lim_{T\rightarrow \infty} \OPT_T$.

\subsection{Example}
Before we proceed, we illustrate our setting and notation with an example.
\begin{example}\label{example with normal}
Let $K=4$, and let{\setmuskip{\thickmuskip}{0mu}
$X(a_1)\sim N(2,1)$, $X(a_2)\sim N(1,1)$, $X(a_3)\sim N(-1,1)$, and $X(a_4)\sim N(-2,1)$, where $N(\mu,\sigma^2)$ denotes the normal distribution with a mean $\mu$ and a variance $\sigma^2$. 
}%
Notice that $\above(A)=\{a_1,a_2\}$ and  $\below(A)=\{a_3,a_4\}$. In the first round, the mechanism can select, for example, a portfolio that comprises $a_1$ w.p. 1. It can also select the portfolio that mixes $a_1$ w.p. $\frac{1}{2}$ with $a_2$ w.p. $\frac{1}{2}$, and infinitely many other MIR portfolios.

To demonstrate the technical difficulty of maximizing social welfare, assume that $X(a_1),X(a_2) <0$ and $X(a_3),X(a_4)>0$ (this information is \textit{not} available to the mechanism). Indeed, this occurs with positive non-negligible probability under the distributional assumptions of this example. Consider the case where the mechanism selects the portfolio $a_1$ w.p. 1 in the first round and $a_2$ w.p. 1 in the second round. Notice that these portfolios are MIR: Given the information $\mI_1$ in the first round, selecting $a_1$ w.p. 1 is MIR, and regardless of $x(a_1)$, selecting $a_2$ w.p. 1 in the second round is also MIR. After observing that $a_1$ and $a_2$ have negative rewards (since we momentarily assume $X(a_1)<0$ and $X(a_2) <0$), the only MIR portfolio in subsequent rounds is selecting the default arm $a_0$ w.p.~1.

However, there are better ways to act in the first round. Arms with positive expected rewards are essential for enabling the exploration of arms with negative expected rewards. Consider $\bl p^1$ such that $\bl p^1(a_1)=\frac{1}{3},\bl p^1(a_3)=\frac{2}{3}$ and $\bl p^1(a_2)=\bl p^1(a_4)=0$. Notice that $\bl p^1$ is MIR w.r.t. $\mI_1$, since $\bl p^1(a_1)\mu(a_1)+\bl p^1(a_3)\mu(a_3)=0$. If the mechanism selects $\bl p^1$ in the first round, it will already discover the positive reward of arm $a_3$ in the first round w.p. $\frac{2}{3}$. It is thus evident that selecting $\bl p^1$ in the first round results in higher social welfare than selecting $a_1$ w.p. 1.  
\end{example}

This example illustrates that any mechanism seeking to maximize social welfare should leverage $a_1$ and $a_2$ with their a priori positive rewards to explore the a priori inferior arms $a_3$ and~$a_4$.

\subsection{Stochastic Order}
Our main technical contribution, which we present in the next section, assumes that some of the rewards are stochastically ordered. We say that a random variable $X$ stochastically dominates (or, has first-order stochastic dominance over) a random variable $Y$ if for every $x\in (-\infty ,\infty ),$ $\Pr(X\geq x)\geq \Pr(Y\geq x)$. Observe that this dominance immediately implies that $\E\left[X\right] \geq \E\left[Y\right]$.
\begin{assumption}\label{assumption:dominance}
The rewards of the arms in $\below(A)$ are stochastically ordered.
\end{assumption}
Notice that this assumption applies only to a priori inferior arms, i.e., the arms in $\below(A)$, and not to all arms. There are many natural cases for this assumption: Unit-variance (or any fixed variance) Gaussian, as in Example~\ref{example with normal}, Bernoulli, log-normal, truncated normal, etc. When a formal statement relies on Assumption \ref{assumption:dominance}, we mention it explicitly.

\section{An Auxiliary Goal Markov Decision Process}\label{sec:infinite}
In this section, we present an auxiliary GMDP problem that helps us focus on the optimal exploration order. We describe its usefulness in Subsection~\ref{subsec:bernoulli trials} and its formal representation in Subsection~\ref{subsec:aux GMDP}. We devote Subsection~\ref{subsec:p valid inefficient} to taking the first step towards efficient computation, as we narrow down the action space. Then, in Subsection~\ref{subsec:optimal GMDP policy}, we present our main technical contribution, which is featured in Policy~\ref{policy:pi star}: An optimal index-based GMDP policy. Subsection~\ref{subsec:stopchastic} further argues for the non-triviality of our results, demonstrating that even stochastically ordered rewards present significant challenges.

\subsection{Motivation for GMDP: Bernoulli Trials}\label{subsec:bernoulli trials}
Assume momentarily that there are infinitely many rounds. In this case, the following interesting property arises.
\begin{observation}\label{obs:eventually will explore}
Assume that during the execution we discover that $X(a_i)>0$ for some arm $a_i\in A$. We can then use $a_i$ to explore all other arms in finite time.
\end{observation}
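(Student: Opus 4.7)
The plan is to give a constructive argument: once we have observed a realized $x(a_i) > 0$, for every other arm $a_j \in A \setminus \{a_i\}$ there exists a MIR portfolio that plays $a_j$ with positive probability, and repeated use of such portfolios reveals $x(a_j)$ in finite expected time.

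First, I would exhibit the portfolio explicitly. If $\mu(a_j) \geq 0$, then the degenerate portfolio playing $a_j$ with probability $1$ is trivially MIR (the expected reward is $\mu(a_j) \geq 0 = \mu(a_0)$), so a single round suffices to learn $x(a_j)$. If $\mu(a_j) < 0$, consider the two-arm mixture $\bl{p}$ with $\bl{p}(a_i) = w$ and $\bl{p}(a_j) = 1 - w$. Because $x(a_i)$ is already known, Definition~\ref{def:bayesian safety} reduces the MIR constraint to
\[
w \cdot x(a_i) + (1-w)\mu(a_j) \geq 0,
\]
which rearranges to $w \geq w^\star \defeq \frac{-\mu(a_j)}{x(a_i) - \mu(a_j)}$. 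Since $x(a_i) > 0$ and $\mu(a_j) < 0$, both numerator and denominator are positive and the numerator is strictly smaller than the denominator, so $w^\star \in (0,1)$. Choosing $w = w^\star$ produces a MIR portfolio that plays $a_j$ with probability $1 - w^\star > 0$.

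Second, I would use the static-reward assumption to convert this per-round positive probability into finite-time learning. Conditional on the information available before each round, Nature's draw from $\bl{p}$ is independent of previous draws, so the number of rounds until $a_j$ is actually pulled is dominated by a geometric random variable with success probability $1 - w^\star$ and hence has finite mean $\frac{1}{1-w^\star}$. Once pulled, $x(a_j)$ is revealed permanently, and the knowledge of $x(a_i) > 0$ is preserved throughout, so the argument can be applied sequentially to each of the at most $K-1$ remaining arms. Summing the expected waiting times yields a finite total expected exploration time, which establishes the observation.

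I do not expect a serious obstacle: the key quantitative content is simply that the threshold $w^\star$ is bounded away from $1$, which follows immediately from $x(a_i) > 0$ being strict. The only minor subtlety worth flagging is that, when $\mu(a_j) < 0$, the mixture must in fact use an arm whose reward is known (here $a_i$) rather than any a priori safe arm, because MIR is evaluated with respect to the mechanism's full information $\mI_t$; this is exactly why the observation is phrased in terms of a \emph{discovered} positive reward, not merely a positive prior mean.
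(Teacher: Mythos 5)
Your proposal is correct and follows essentially the same route as the paper: it constructs the same two-arm MIR mixture with weight $\frac{-\mu(a_j)}{x(a_i)-\mu(a_j)}$ on the revealed arm $a_i$ (the paper's Bernoulli trial), notes that $a_j$ is drawn with constant positive probability each round, and iterates over the remaining arms. Your explicit handling of the $\mu(a_j)\geq 0$ case and the geometric waiting-time bound are minor refinements of the paper's argument, not a different approach.
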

We call mixing an observed arm with a positive realized reward and an unobserved arm with a negative expected reward a \emph{Bernoulli Trial}. Observation~\ref{obs:eventually will explore} suggests that if we discover an arm with a positive reward, we can use Bernoulli trials to reveal all arms in finite time almost surely. 

To illustrate, recall Example \ref{example with normal}. Assume that we are in the third round, have already explored $a_1$ and $a_2$, and discovered that $X(a_1)=x(a_1)>0$ and that $X(a_2)<0$. Furthermore, we have not yet explored $a_3$ and $a_4$ (one of which may still hide the highest realized reward among all arms). Consider the portfolio 
\begin{align*}
\bl p(a) =
\begin{cases}
\frac{-\mu(a_3)}{x(a_1)-\mu(a_3)} & \textnormal{if } a=a_1\\
\frac{x(a_1)}{x(a_1)-\mu(a_3)} & \textnormal{if } a=a_3\\
0 & \textnormal{otherwise}
\end{cases}.
\end{align*}
Observe that
\[
\sum_{a_i \in A^+} \bl p(a_i)\E\left[X(a_i)\mid \mI_3\right]
=x(a_1)\cdot \frac{-\mu(a_3)}{x(a_1)-\mu(a_3)} + \mu(a_3)\cdot  \frac{x(a_1)}{x(a_1)-\mu(a_3)}   =0;
\]
thus, $\bl p$ is MIR. If we pick this portfolio, Nature flips coins to choose either $a_1$ or $a_3$, with the latter being selected with positive probability. This Bernoulli trial might result in selecting $a_1$, but we can repeat it until Nature picks $a_3$. The number of rounds required to explore $a_3$ follows the Geometric distribution with a success probability of $\nicefrac{x(a_1)}{x(a_1)-\mu(a_3)}$ in each Bernoulli trial. Indeed, by executing Bernoulli trials until the first success, we can guarantee exploring $a_3$. Zooming out from Example~\ref{example with normal} to the general case, Observation~\ref{obs:eventually will explore} suggests that once a positive reward is realized, we can execute enough Bernoulli trials to explore all the arms in finite time.

Observation \ref{obs:eventually will explore} calls for modeling that abstracts the setting once a positive reward is realized. More specifically, we can focus on the case where all unexplored arms are revealed instantly after a positive reward is discovered. To that end, we analyze the induced constrained Goal Markov Decision Process (GMDP), which we present in the next subsection. Our goal is to find the optimal policy for this GMDP, and then translate it to an asymptotically optimal algorithm for the corresponding $\ise$ instance.\footnote{We keep the terms mechanism and algorithm for solutions to $\ise$ and use the term policy for solutions of the GMDP.}

\subsection{Constructing the GMDP}\label{subsec:aux GMDP}
We construct the GMDP as follows:
\begin{itemize}
    \item Every state is characterized by the set of unobserved arms $s \subseteq A$, and we denote the set of all states by $\mS=2^A$. The initial state is $s_0=A$. 
    \item In every state $s$, the feasible actions are the set of MIR portfolios w.r.t. the prior information, i.e., from
    $
    \safe(s)=\left\{\bl p \in \Delta(s) : \sum_{a\in s}\bl p(a)\mu(a) \geq 0 \right\},
    $
    where $\Delta(s)$ is the set of all distributions over the elements of $s$. If $\safe(s)$ is empty, then we say that $s$ is \textit{terminal}. 
    \item Given a non-terminal state $s$ and an action (portfolio) $\bl p$, the transition probability is given by
    \[
    \Pr(s'\mid s, \bl p)=
    \begin{cases}
\bl p(a) & \textnormal{if $s' = s \setminus \{a\}$}\\
0 & \textnormal{otherwise}
\end{cases}.
    \]
    Namely, if the realized arm is $a$, the GMDP transitions to the state $s\setminus \{ a\}$. In particular, we can reach a terminal state if we ran out of arms with a positive expected value, i.e., $\above(s)=\emptyset$, or if we have explored all arms, i.e., $s=\emptyset$.
    \item Rewards are obtained in terminal states solely. The reward of a terminal state $s$ is
\[
R(s) \defeq
\begin{cases}
\max_{a\in A} X(a) & \textnormal{if $ \safe(s)=\emptyset$ and $\max_{a'\in A\setminus s} X(a')>0$}\\
0 & \textnormal{otherwise}
\end{cases}.
\]
In other words, once we reach a terminal state $s$, we sample $X(a)$ for every $a\in A$. If we reached $s$, then we have explored all arms $A\setminus s$. Following our intuition of Bernoulli trials, if at least one of $(X(a))_{a\in A\setminus s}$ is positive, we can use that arm to explore all other arms in finite time. Therefore, the reward we get is the maximum reward among \textit{all} arms $\max_{a\in A} X(a)$, even those we have not explored. 
\end{itemize}

A (stationary) \textit{policy} is a mapping from states to portfolios. Due to standard arguments, which are further elaborated in {\ifnum\Includeappendix=0{the appendix}\else{Section \ref{sec:thm1 outline}}\fi}, there exists an optimal policy that is also stationary. Hence, we limit our attention to stationary policies only.  Let $W(\pi,s)$ be the state-value function: The expected reward of a policy $\pi$ starting at the initial state $s$. Namely, 
\begin{align}
W(\pi,s) = 
\begin{cases}
R(s) & \textnormal{if }\safe(s)=\emptyset\\
\sum_{a\in s}\Pr(s\setminus \{a\}\mid s, \pi(s)) W(\pi,s\setminus \{a\}) & \textnormal{otherwise}
\end{cases}.
\end{align}
Moreover, let $W^{\star}(s)$ denote the highest possible reward of any policy, i.e., $W^{\star}(s)=\sup_{\pi}W(\pi,s)$; hence, $W^{\star}(A)$ symbolizes the optimal expected reward when starting from $s_0=A$. Later on, in Section~\ref{sec:policy to algorithm}, we show the connection between the optimal $\ise$ solution and this GMDP. For now, we focus on finding an optimal GMDP policy.

\subsection{$\mP$-valid Policies and an Inefficient Solution via Dynamic Programming}\label{subsec:p valid inefficient}

Next, we take a step toward finding an optimal policy by revealing the neat structure of the problem. To that end, we highlight the following family of portfolios that mix at most two arms. Consider a pair of arms, $a_i\in \above(A)$, and $a_j\in \below(A)$. The portfolio
\begin{align}\label{eq:blp from body}
\bl p_{i,j}(a) \defeq 
\begin{cases}
\frac{-\mu(a_j)}{\mu(a_i)-\mu(a_j)} & \textnormal{if } a=a_i\\
\frac{\mu(a_i)}{\mu(a_i)-\mu(a_j)} & \textnormal{if } a=a_j\\
0 & \textnormal{otherwise}
\end{cases}
\end{align}
mixes $a_i$ and $a_j$ while maximizing the probability of exploring $a_j$ (the a priori inferior arm). The reader can verify that $\bl p_{i,j}$ is indeed MIR, yielding an expected value of precisely zero. For completeness, for every $a_i \in \above(A)$, we also define $\bl p_{i,i}$ as a deterministic selection of $a_i$, e.g., $\bl p_{i,i}(a)=1$ if $a=a_i$, and zero otherwise. Next, we define $\mP,\mP'$ such that
\[
\mP\defeq \{\bl p_{i,j}\mid a_i\in \above(A), a_j\in \below(A) \}, \qquad \mP'\defeq\{\bl p_{i,i}\mid a_i\in \above(A) \}.
\]
We use the above to focus on a narrow class of portfolios, which we call \emph{$\mP$-valid}.
\begin{definition}[$\mP$-valid Portfolio]\label{def:p valid}
A portfolio $\bl p$ is \emph{$\mP$-valid} with respect to a state $s\in \mS$ if
\begin{itemize}
\item $\bl p \in \safe(s)$, and
\item If $\below(s)\neq \emptyset$, then $\bl p\in \mP$; else, if $\below(s)= \emptyset$, then $\bl p\in \mP'$.
\end{itemize} 
\end{definition}
We say a policy is \emph{$\mP$-valid} if, for all states $s\in \mS$, $\pi(s)$ is a \emph{$\mP$-valid} portfolio. Notice that $\mP$-valid policies are MIR by definition. Furthermore, they mix at most two actions from $A$ in each state; hence, there are  $O(K^2)$ actions $\mP$-valid policies can take in every state. This is in sharp contrast to the set of actions $\safe(s)$ for a state $s$, which is generally a convex polytope with infinitely many actions. 

Due to the linearity of the value function $W(\pi,s)$ in $\pi(s)$, the GMDP exhibits a nice structural property, as captured by the following Proposition \ref{prop:main optimal p valid}. 
\begin{proposition}\label{prop:main optimal p valid}
Fix any arbitrary $\ise$ instance and observe its induced GMDP instance. There exists an optimal policy which is $\mP$-valid.
\end{proposition}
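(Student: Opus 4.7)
My plan is to combine Bellman's optimality equation with a convex-analytic characterization of the feasible action set. Because the GMDP has finite state space $\mS=2^A$ and bounded rewards, standard MDP theory yields an optimal stationary policy $\pi^\star$ satisfying
\[
  W^\star(s) \;=\; \max_{\bl p\in\safe(s)}\sum_{a\in s}\bl p(a)\,W^\star(s\setminus\{a\})
\]
at every non-terminal state $s$. The right-hand side is a linear function of $\bl p$ and $\safe(s)$ is a compact convex polytope, so the maximum is attained at an extreme point of $\safe(s)$.

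I would next characterize these extreme points explicitly. The set $\safe(s)$ is cut out of the simplex $\Delta(s)$ by the single half-space $\sum_a\bl p(a)\mu(a)\geq 0$, so a standard linear-programming argument, together with the assumption that no arm has $\mu(a)=0$, shows that the extreme points are exactly the pure portfolios $\bl p_{i,i}$ for $a_i\in\above(s)$ and the mixtures $\bl p_{i,j}$ for $a_i\in\above(s)$, $a_j\in\below(s)$. When $\below(s)=\emptyset$ only the pure portfolios arise as extreme points, matching the definition of $\mP$-valid. When $\below(s)\neq\emptyset$ I must further rule out the possibility that every Bellman-optimal vertex is of pure form; for this it suffices to establish the monotonicity claim that for every $a_i\in\above(s)$ there exists $a_j\in\below(s)$ with $W^\star(s\setminus\{a_j\})\geq W^\star(s\setminus\{a_i\})$, since the value of $\bl p_{i,j}$ is a convex combination of $W^\star(s\setminus\{a_i\})$ and $W^\star(s\setminus\{a_j\})$ that dominates $W^\star(s\setminus\{a_i\})$ (the value of $\bl p_{i,i}$) precisely when that inequality holds.

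I expect this monotonicity to be the main obstacle and would prove the stronger set-inclusion fact $s_1\subseteq s_2\Rightarrow W^\star(s_1)\geq W^\star(s_2)$ by induction on $|s|$. The intuition is that the terminal reward $R(s)=\max_A X\cdot\ind\{\max_{A\setminus s}X>0\}$ is pointwise non-increasing in $s$ (exploring more arms only raises the chance that the bonus indicator fires), and that the GMDP's transitions are decoupled from the realized rewards, so it is enough to couple trajectories. Concretely, starting from the smaller state $s_1$ I would first use $\bl p_{i,i}$ to explore the extra above arm(s), then mimic the optimal policy of $s_2$ with each portfolio renormalized to drop the mass on an already-explored below arm; the renormalization preserves MIR because removing weight from an arm with negative $\mu$ strictly increases the portfolio's expected value. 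The delicate step is handling the events in which $\pi^\star_{s_2}$ realizes an already-explored arm while the coupled policy realizes something else, which forces the coupling to ``catch up'' by exploring the missing arm in a later step---this is where the problem's structure is used most sharply, and combined with the monotonicity of $f(s)=\E_X[R(s)]$ at terminal states the induction closes.
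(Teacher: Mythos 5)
The first half of your argument --- Bellman optimality plus identifying the extreme points of $\safe(s)$ (the simplex cut by the single half-space $\sum_{a}\bl p(a)\mu(a)\geq 0$) as exactly the singletons $\bl p_{i,i}$ for $a_i\in\above(s)$ and the two-arm mixtures $\bl p_{i,j}$ for $a_i\in\above(s)$, $a_j\in\below(s)$ --- is correct and is essentially the paper's own proof: the paper writes the one-step problem as a linear program, decomposes any MIR portfolio as a convex combination of elements of $\mP\cup\mP'$, and concludes by linearity that some element of $\mP\cup\mP'$ is optimal at every state. Up to phrasing (vertex enumeration versus convex decomposition) these coincide, and the paper's proof stops there.

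The gap lies in the extra step you rightly flag: ruling out that only pure vertices are Bellman-optimal when $\below(s)\neq\emptyset$ (needed because $\mP$-validity forces an action from $\mP$ there; the LP argument alone only gives optimality within $\mP\cup\mP'$). Your proposed lemma $s_1\subseteq s_2\Rightarrow W^\star(s_1)\geq W^\star(s_2)$ is false: deleting an arm of $\above(s)$ destroys exploration power and can strictly decrease the value. For instance, with two arms and $\mu(a_1)>0>\mu(a_2)$, the state $s_1=\{a_2\}$ is terminal and earns $\E\left[\max(X(a_1),X(a_2))\,\ind\{X(a_1)>0\}\right]$, while from $s_2=\{a_1,a_2\}\supset s_1$ the portfolio $\bl p_{1,2}$ explores $a_2$ with positive probability and additionally collects the event $\{X(a_2)>0\geq X(a_1)\}$, so $W^\star(s_2)>W^\star(s_1)$ for, say, Gaussian rewards. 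Your intuition tracks only the pointwise monotonicity of the terminal reward $R$ and ignores the loss of above arms; relatedly, your coupling ``starts from the smaller state and first explores the extra above arm(s)'', but those arms are not present in the smaller state. Moreover, even if the set-inclusion lemma were true, it would not deliver the inequality you actually need, $W^\star(s\setminus\{a_j\})\geq W^\star(s\setminus\{a_i\})$, because these two states are not nested. What does work (and needs no stochastic-order assumption) is the restricted monotonicity that deleting a \emph{below} arm never hurts, i.e., $W^\star(u\setminus\{a_j\})\geq W^\star(u)$ for every $a_j\in\below(u)$, provable by induction on $\abs{u}$ by distinguishing whether the optimal $\mP\cup\mP'$ action at $u$ mixes with $a_j$ or not; then $W^\star(s\setminus\{a_j\})\geq W^\star(s\setminus\{a_i,a_j\})\geq W^\star(s\setminus\{a_i\})$, the first inequality by playing $\bl p_{i,i}$ at $s\setminus\{a_j\}$ and the second by that lemma applied to $s\setminus\{a_i\}$.
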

The proof of this proposition appears in {\ifnum\Includeappendix=0{the appendix}\else{Section~\ref{sec:aux}}\fi}. While the GMDP is defined using a continuous action space in each state (the convex polytope $\safe(s)$), Proposition~\ref{prop:main optimal p valid} allows us to narrow down the search to $\mP$-valid portfolios. Particularly, we can find an optimal policy with dynamic programming using a bottom-up approach in time $O(2^K K^2)$, as we explain next.

For every $s \in \mS$, compute $W^{\star}(s)$ by exhaustively searching over $\mP$-valid portfolios and using previously obtained $W^{\star}(s')$ for $s' \subset s$. Since $s$ leads to states of the form $s/{a}$ for $a\in s$, we can compute the expected reward for each $\mP$-valid portfolio and pick the best portfolio. There are $2^K$ states and at most $O(K^2)$ $\mP$-valid portfolios in every state, each taking $O(1)$ computations to assess; therefore, this by itself guarantees finding an optimal policy in time $O(2^K K^2)$. Noticeably, this runtime could be intractable for large $K$. 
In the next subsection, we show how to substantially reduce the computation.

\subsection{Efficiently Computing an Optimal GMDP Policy}\label{subsec:optimal GMDP policy}

\begin{algorithmpolicy}[H]
\caption{Optimal GMDP Policy ($\OGP$)\label{policy:pi star}}
\begin{algorithmic}[1]
\REQUIRE a state $s\subseteq A$
\ENSURE a $\mP$-valid portfolio or $\emptyset$
\IF{$s$ is terminal \label{policy:if terminal}} {
\STATE \textbf{return} $\emptyset$ \label{policy:return empty}
}
\ELSE\label{policy:non terminal}{
\STATE pick any arbitrary $a_i \in \above(s)$\label{policy:pick arbitrary}
\IF{$\below(s)=\emptyset$\label{policy:if no below}} {
		\RETURN $\bl p_{i,i}$ \label{policy:return double above}
}
\ELSE\label{policy:if has below}{
\STATE pick $a_{j^\star}\in \argmax_{a_j \in \below(s)} \mu(a_j)$\label{policy:pick below}
		\RETURN $\bl p_{i,j^\star}$\label{policy:return mix}
}
\ENDIF
}
\ENDIF
\end{algorithmic}
\end{algorithmpolicy}

We start this subsection by presenting an optimal GMDP policy (hereinafter $\OGP$ for shorthand), which we formalize via Policy~\ref{policy:pi star}, and argue for its optimality later. Given a state $s$, $\OGP(s)$ operates as follows. If $s$ is terminal, it returns the empty set (Lines~\ref{policy:if terminal}--\ref{policy:return empty}). Otherwise, if $s$ is non-terminal, we enter the ``else'' clause in Line~\ref{policy:non terminal}, and pick any arbitrary arm $a_i$ from $\above(s)$. Then, we have two cases. If $\below(s)$ is empty, $\OGP$ returns $\bl p_{i,i}$ (Line~\ref{policy:return double above}). Else, it picks the best arm from $\below(s)$ in terms of expected reward, which we denote by $a_{j^\star}$ (Line~\ref{policy:pick below}), and returns $\bl p_{i,j^\star}$ (Line~\ref{policy:return mix}). Overall, $\OGP$ returns $\emptyset$ if $s$ is terminal or a $\mP$-valid portfolio if $s$ is non-terminal.
\begin{theorem}\label{thm:optimal policy}
Fix any arbitrary $\ise$ instance satisfying Assumption~\ref{assumption:dominance} and observe its induced GMDP instance. For any state $s\subseteq A$, $\OGP$ prescribed in Policy~\ref{policy:pi star} satisfies $W(\OGP,A)=W^{\star}(A)$.
\end{theorem}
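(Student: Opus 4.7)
My plan is strong induction on $|s|$, leveraging Proposition~\ref{prop:main optimal p valid} to restrict attention to $\mathcal{P}$-valid policies. The base case at terminal $s$ is immediate, $W(\OGP,s)=R(s)=W^\star(s)$. For the inductive step at a non-terminal $s$, the induction hypothesis supplies $W^\star(s')$ for every $s'\subsetneq s$, and the task reduces to showing that the portfolio chosen by $\OGP$ maximizes
\[
V(\bl p_{i,j},s)\defeq \bl p_{i,j}(a_i)\,W^\star(s\setminus\{a_i\})+\bl p_{i,j}(a_j)\,W^\star(s\setminus\{a_j\})
\]
among all $\mathcal{P}$-valid portfolios $\bl p_{i,j}$. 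When $\below(s)=\emptyset$ this is automatic: every feasible $\bl p_{i,i}$ deterministically pulls positives until $s=\emptyset$, so by induction all such portfolios yield the common value $\E[(\max_{a\in A}X(a))^+]$.

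The substantive case is $\below(s)\neq\emptyset$, where I would split optimality into two sub-claims that together pin down $\OGP(s)=\bl p_{i^\star,j^\star}$: (a)~\emph{Equivalence of positives}, $V(\bl p_{i,j^\star},s)=V(\bl p_{i',j^\star},s)$ for every $a_i,a_{i'}\in\above(s)$, validating the arbitrary pick in Line~\ref{policy:pick arbitrary}; and (b)~\emph{Dominance of the best negative}, $V(\bl p_{i,j^\star},s)\geq V(\bl p_{i,j'},s)$ for every $a_{j'}\in\below(s)$, validating Line~\ref{policy:pick below}. Substituting $\bl p_{i,j^\star}(a_{j^\star})=1-\bl p_{i,j^\star}(a_i)$ and canceling, claim (a) reduces to the slope-consistency identity
\[
\frac{W^\star(s\setminus\{a_{j^\star}\})-W^\star(s\setminus\{a_i\})}{\mu(a_i)-\mu(a_{j^\star})}=\frac{W^\star(s\setminus\{a_{j^\star}\})-W^\star(s\setminus\{a_{i'}\})}{\mu(a_{i'})-\mu(a_{j^\star})},
\]
which I would prove as the Equivalence Lemma (Lemma~\ref{lemma:equivalence body}). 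The key combinatorial observation driving that lemma is that $\OGP$'s probability of reaching a specific terminal state $t\subseteq\below(s)$ factorizes as a product of per-round failure factors $\frac{-\mu(\text{target})}{\mu(\text{positive})-\mu(\text{target})}$ over the positives consumed in each target-negative phase; such products are symmetric in the order and identity of the positives, so the terminal-state distribution, and hence $W^\star(s)$, is unaffected by which positive $\OGP$ picks.

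For claim (b), I would invoke Assumption~\ref{assumption:dominance} via a coupling. Relative to $\bl p_{i,j'}$, the portfolio $\bl p_{i,j^\star}$ places strictly greater mass $\frac{\mu(a_i)}{\mu(a_i)-\mu(a_{j^\star})}$ on pulling the target negative (because $|\mu(a_{j^\star})|<|\mu(a_{j'})|$), while first-order stochastic dominance gives $\Pr[X(a_{j^\star})>0]\geq\Pr[X(a_{j'})>0]$. Coupling $X(a_{j^\star})$ and $X(a_{j'})$ via the inverse-CDF transform so that $X(a_{j^\star})\geq X(a_{j'})$ pointwise, the $\bl p_{i,j^\star}$-trajectory is never less likely than the $\bl p_{i,j'}$-trajectory to reveal an arm with positive realized value, which is the event that unlocks the $\max_{a\in A}X(a)$ terminal payoff; on the branch where both trajectories pull the positive $a_i$, the continuation value $W^\star(s\setminus\{a_i\})$ is identical by induction, so the mixing-weight and dominance effects combine in favor of $j^\star$.

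The main obstacle is step (a), the Equivalence Lemma: the combinatorial accounting needed to sum the probabilities of all execution interleavings leading to a common terminal state, and to confirm that this sum is invariant under OGP's arbitrary positive picks, is the novel ingredient flagged in the introduction. I expect it will require strengthening the inductive statement to carry the full distribution over terminal states rather than only the scalar $W^\star(s)$.
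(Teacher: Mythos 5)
Your overall architecture (induction over states, restriction to $\mP$-valid portfolios via Proposition~\ref{prop:main optimal p valid}, and the split into (a) invariance over the choice of the positive arm and (b) dominance of the best negative arm) matches the paper's proof of Theorem~\ref{thm:holy grail}, and your treatment of (a) is essentially on track: the paper proves exactly the invariance you describe (its Step~2) by expressing terminal-state probabilities as differences of the full-exploration probabilities $Q$ and invoking the Equivalence Lemma (Lemma~\ref{lemma:equivalence body}), which is a statement about $Q$, not about the slope identity you wrote; your anticipated ``strengthening to the full terminal-state distribution'' is indeed what is needed, and it is the intricate part.

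The genuine gap is in claim (b), which is precisely where Assumption~\ref{assumption:dominance} must enter and where your coupling sketch does not close the argument. Your plan implicitly needs the comparison in Inequality~\eqref{eq:why hard body} to follow from pointwise dominance of $X(a_{j^\star})$ over $X(a_{j'})$ plus the identical continuation on the $a_i$-branch. But the weight shift goes \emph{against} you on that branch ($\bl p_{i,j^\star}(a_i)<\bl p_{i,j'}(a_i)$, so you put less mass on the common continuation $W^{\star}(s\setminus\{a_i\})$), and the compensating gain would have to come from comparing $W^{\star}(s\setminus\{a_{j^\star}\})$ with $W^{\star}(s\setminus\{a_{j'}\})$ --- yet stochastic order does \emph{not} imply $W^{\star}(s\setminus\{a_{j^\star}\})\geq W^{\star}(s\setminus\{a_{j'}\})$ (Claim~\ref{claim:ass is not for W} gives a counterexample: removing the less risky negative arm can leave a strictly worse continuation, because the remaining very negative arm is nearly impossible to explore). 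For the same reason, the ``trajectory-wise'' claim that the $\bl p_{i,j^\star}$ process is never less likely to unlock the $\max_a X(a)$ payoff cannot be established branch by branch: the paper explicitly notes that the natural summand-wise comparison $\Pr(E^\pi_\psi)\alpha^\pi_\psi\geq\Pr(E^\rho_\psi)\alpha^\rho_\psi$ fails. The paper's resolution is substantially more delicate: it decomposes $W$ over prefixes of explored negatives, shows the cross coefficients coincide ($f^\pi_Z=f^\rho_Z$, again via the Equivalence Lemma), proves two monotonicity properties of the \emph{conditional} terminal rewards $\alpha_l$ (Proposition~\ref{prop: monotonicity in thm}, the only place Assumption~\ref{assumption:dominance} is used), and then runs a telescoping induction ($f(r)+g(r)$) to aggregate the per-prefix terms, which individually need not be signed. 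Your proposal asserts the conclusion of (b) at exactly the point where this machinery is required, so as written it would not go through.
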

\begin{proof}[\textnormal{\textbf{Proof Overview of Theorem \ref{thm:optimal policy}}}]
We defer the full proof to {\ifnum\Includeappendix=0{the appendix}\else{Section~\ref{sec:thm1 outline}}\fi} and outline it below.
To prove this theorem, we focus on the dynamic programming explained at the end of the previous subsection. We unveil its crux and flesh out the index structure $\OGP$ uses. A crucial ingredient of the analysis is the probability of reaching the empty state, representing the case in which we explored all arms. More formally, for every state $s\in \mS$ and a policy $\pi$, we denote by $Q(\pi,s)$ the probability starting at $s\subseteq A$, following the policy $\pi$ and reaching the empty state. Note that, similarly to $W$ the state-value function, $Q$ is defined recursively: Namely, if $\pi(s)=\bl p_{i,j}$ for a non-terminal state $s$, then \[ Q(\pi,s)=\bl p_{i,j}(a_i)Q(\pi,s\setminus\{a_i\})+\bl p_{i,j}(a_j)Q(\pi,s\setminus\{a_j\}). \]
We prove a rather surprising feature of $Q$, as captured in the following lemma.
\begin{lemma}[Equivalence Lemma]\label{lemma:equivalence body}
For every two $\mP$-valid policies $\pi,\rho$ and every state $s\in \mS$, it holds that $Q(\pi,s)=Q(\rho,s)$.
\end{lemma}
The Equivalence Lemma asserts that $Q$ is policy-independent, a property that significantly simplifies the analysis. The proof of this lemma is intricate and relies on a careful factorization of the elements of $Q$.  Interestingly, the arguments we have mentioned so far, including the Equivalence Lemma, are independent of Assumption \ref{assumption:dominance}.

Equipped with the Equivalence Lemma, we use the recursive structure of $Q$ and the reward function $W$ to prove the optimality of $\OGP$ inductively over the sizes of $\above(s)$ and $\below(s)$. This step makes use of further insights into the dynamic programming procedure, such as monotonicity in the set of available arms.
\end{proof}

We exemplify $\OGP$ for the induced GMDP of Example~\ref{example with normal} in Figure~\ref{fig:tree small}. 
Recall that 
$X(a_1)\sim N(2,1)$, $X(a_2)\sim N(1,1)$, $X(a_3)\sim N(-1,1)$, and $X(a_4)\sim N(-2,1)$. Consequently, $\above(A)=\{a_1,a_2\}$, $\below(A)=\{a_3,a_4\}$, and $\mu(a_3)=-1 > -2=\mu(a_4)$. Every node is a state---the root is $s_0=A$. In $s_0$, we select the portfolio $\bl p_{1,3}$, which mixes $a_1\in \above(A)$ and $a_3 \in \below(A)$ (selecting $\bl p_{2,3}$ would also be optimal, since Line~\ref{policy:pick arbitrary} allows us to pick any $a_i\in\above(A)$). The split follows from Nature's coin flips: Left if the realized action is $a_1$ (w.p. $\bl p_{1,3}(a_1)$), and right if the action is $a_3$ (w.p. $\bl p_{1,3}(a_3)$). The leaves $\{a_3,a_4\}$, $\{a_4\}$, and $\{\}$ are the terminal states. 

\begin{figure}[t]
\centering
    \includegraphics[scale=0.8]{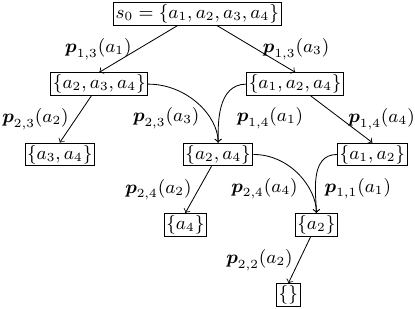}
    \small
    \caption{Illustration of $\OGP$ for Example~\ref{example with normal}.\label{fig:tree small}}
\end{figure}

\subsection{Stochastic Order and Non-triviality}\label{subsec:stopchastic}
Before leveraging $\OGP$ to an optimal $\ise$ algorithm, it is important to describe why Theorem~\ref{thm:optimal policy} is non-trivial to prove, even with the limiting Assumption~\ref{assumption:dominance}. Consider a state $s\in\mS$ and three arms $a_i,a_j ,a_{\tilde j}$ such that
\[
a_i\in \above(s), a_j \in \argmax_{a_j \in \below(s)} \mu(a_{j'}), \mu(a_{\tilde j})< \mu(a_j).
\]
Namely, $a_i$ has a positive expected value, while $a_j, a_{\tilde j}$ have negative expected values and $a_j$ stochastically dominates $a_{\tilde j}$. The action $\bl p_{i,j}$, which mixes $a_i$ with the minimal element $a_j$ according to the stochastic order on $\below(s)$, is weakly superior to  $\bl p_{i,{\tilde{j}}}$ if 

{
\thinmuskip=.2mu
\medmuskip=0mu plus .2mu minus .2mu
\thickmuskip=1mu plus 1mu
\[
\Big(\bl p_{i,j}(j)W^{\star}(s\setminus \{a_j\})+\bl p_{i,j}(i)W^{\star}(s\setminus\{a_i\})\Big)-\left(\bl p_{i,{\tilde j}}({\tilde j})W^{\star}(s\setminus \{a_{\tilde j}\})+\bl p_{i,{\tilde j}}(i)W^{\star}(s\setminus\{a_i\})\right) \geq 0.
\]}%
Rearranging,
\begin{align}\label{eq:why hard body}
&\bl p_{i,j}(j)W^{\star}(s\setminus \{a_j\})-\bl p_{i,{\tilde j}}({\tilde j})W^{\star}(s\setminus \{a_{\tilde j}\})+\left(\bl p_{i,j}(i)-\bl p_{i,{\tilde j}}(i)\right)W^{\star}(s\setminus\{a_i\}) \geq 0.
\end{align}%
By our selection of $a_j,a_{\tilde j}$, we know that $\bl p_{i,j}(i)-\bl p_{i,{\tilde j}}(i) \leq 0$; hence, the third term is non-positive. Interestingly, as we show in~{\ifnum\Includeappendix=0{the appendix}\else{Claim \ref{claim:ass is not for W} in Section \ref{sec:aux}}\fi}, stochastic order does not imply that $W^{\star}(s\setminus \{a_j\}) \geq W^{\star}(s\setminus \{a_{\tilde j}\})$; thus, it is unclear that the expression
{
\begin{align*}
&\bl p_{i,j}(j)W^{\star}(s\setminus \{a_j\})-\bl p_{i,{\tilde j}}({\tilde j})W^{\star}(s\setminus \{a_{\tilde j}\}),
\end{align*}}%
which appears in Inequality \eqref{eq:why hard body}, is non-negative. Therefore, we cannot prove Inequality  \eqref{eq:why hard body} without revealing the structure of $W^{\star}$, even when Assumption \ref{assumption:dominance} holds.

\section{From GMDP to an Approximately Optimal $\ise$ Algorithm}\label{sec:policy to algorithm}
In this section, we leverage $\OGP$, which is the optimal policy for the GMDP presented in the previous section, to an approximately optimal algorithm, which we call $\SEGB$ and is outlined in Algorithm~\ref{alg:alg of pi}. We start by addressing the infinite number of rounds, and then present a finite-time analysis. 

\subsection{The Case of $T\rightarrow \infty$}
Recall that the GMDP of Section~\ref{sec:infinite} mimics the reward any algorithm could obtain, since it abstracts the need for Bernoulli trials. In other words, 
\begin{observation}\label{obs: U leq W*}
For any algorithm $\ALG$, it holds that $\lim_{T \rightarrow \infty }\mU_T(\ALG) \leq W^{\star}(A)$.
\end{observation}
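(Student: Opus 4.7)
The plan is to relate any $\ise$ algorithm $\ALG$ to a GMDP policy of at-least-as-high expected reward, so that the claim follows from $W^\star(A) = \sup_\pi W(\pi, A)$. First I would argue that the exploration phase of $\ALG$ contributes negligibly to the time-averaged reward: new-arm pulls occur at most $K$ times, and any Bernoulli-trial block initiated after a positive realized reward has been observed has bounded expected length (geometric in the portfolio weight on the unexplored arm). Hence the combined exploration/trial phase occupies $o(T)$ rounds in expectation, and $\lim_T \mU_T(\ALG)$ equals the asymptotic per-round reward $Y(\ALG) := \lim_{t\to\infty}\E[r^t]$.

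Second, I would bound $Y(\ALG)$ sample-path-wise. Let $B_\infty(\omega)$ be the eventual explored set on path $\omega$ and $s_\infty(\omega) = A\setminus B_\infty(\omega)$. After stabilization, $\ALG$ may only pull arms in $B_\infty(\omega)\cup\{a_0\}$, so its asymptotic per-round reward is at most $\max(0, \max_{a\in B_\infty(\omega)} X(a,\omega))$. If $\max_{a\in B_\infty(\omega)} X(a,\omega) > 0$, Observation~\ref{obs:eventually will explore} allows $\ALG$ to Bernoulli-trial until all arms are revealed, tightening the bound to $\max_{a\in A} X(a,\omega)$; if $\max_{a\in B_\infty(\omega)} X(a,\omega) \leq 0$, the MIR constraint forces $\ALG$'s asymptotic reward to $0$. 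Both cases coincide with the GMDP reward $R(\cdot)$ at a suitable terminal state.

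Third, I would construct a GMDP policy $\tilde\pi^{\ALG}$ that mimics $\ALG$'s exploration order: at each state $s$, $\tilde\pi^{\ALG}(s)$ samples the next new arm from the conditional distribution $\ALG$ induces over its new-arm pulls given that the current unexplored set is $s$. For states where $\ALG$ voluntarily halts before the GMDP reaches a terminal state, I extend $\tilde\pi^{\ALG}$ by following $\OGP$ from that point; since the resulting terminal state $s'$ satisfies $A\setminus s' \supseteq B_\infty$, we have $R(s') \geq \max(0, \max_{a\in B_\infty} X(a))$ pointwise, so the extension cannot decrease the expected reward. Consequently $W(\tilde\pi^{\ALG}, A) \geq Y(\ALG)$, and since $W^\star(A) \geq W(\tilde\pi^{\ALG}, A)$ by definition, the observation follows.

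The main obstacle is translating an arbitrary, possibly history-dependent $\ise$ algorithm into a well-defined, value-preserving GMDP policy, because $\ALG$ may interleave exploration with exploitation and condition on the entire history. Fortunately, the GMDP state space is acyclic---every transition strictly reduces $|s|$---so standard MDP theory yields that any history-dependent policy is dominated by a stationary one. The remaining technical care concerns collapsing exploitation rounds (which do not change $s$) and verifying measurability of the induced conditional distributions when defining $\tilde\pi^{\ALG}(s)$.
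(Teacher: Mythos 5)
Your overall route---simulating $\ALG$ by a GMDP policy and invoking $W^\star(A)=\sup_\pi W(\pi,A)$---is the same as the paper's, whose proof constructs a policy that ``selects what $\ALG$ selects and ignores realized arms that were already explored,'' deferring details to \citet{Fiduciary}. However, your construction has a genuine gap: you never verify that the conditional distribution of $\ALG$'s next new-arm pull, given the current unexplored set $s$, is a feasible GMDP action, i.e., an element of $\safe(s)=\{\bl p\in\Delta(s):\sum_{a\in s}\bl p(a)\mu(a)\ge 0\}$, whose constraint uses \emph{prior} means. In general it is not: once $\ALG$ has observed a positive realized reward, it can run Bernoulli trials mixing that already-explored arm with an unexplored arm of negative prior mean; conditioning on ``a new arm is pulled'' then concentrates all mass on arms of $\below(s)$, violating the prior-mean MIR constraint. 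So $\tilde\pi^{\ALG}$ as defined need not be a GMDP policy, and the step $W^\star(A)\ge W(\tilde\pi^{\ALG},A)$ is not available. (A smaller issue: your claim that exploration plus Bernoulli trials occupy $o(T)$ rounds presumes a block structure an arbitrary $\ALG$ need not have; since rewards are static and only $K$ new-arm pulls exist, a Ces\`aro argument on the stabilized tail is cleaner and suffices.)

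The fix is to split at the first round in which $\ALG$ explores an arm whose realization is positive. Before that time, every explored arm has nonpositive realized value and $X(a_0)=0$, so $\ALG$'s MIR constraint gives $\sum_{a\in s}\bl p^t(a)\mu(a)\ge -\sum_{a\ \textnormal{explored}}\bl p^t(a)x(a)\ge 0$; hence restricting $\bl p^t$ to the unexplored arms and renormalizing yields an element of $\safe(s)$, so mimicking $\ALG$'s new-arm kernel is feasible on this segment (history-dependence is harmless, since $W^\star$ is a supremum over history-dependent MIR policies). After the first positive realization there is no need to mimic at all: any feasible continuation (e.g., $\OGP$) reaches a terminal state whose explored set contains that positive arm, so its terminal reward is $\max_{a\in A}X(a)$, which pointwise dominates $\ALG$'s asymptotic per-round reward; on paths with no positive realization, $\ALG$'s asymptotic reward is at most $0\le R(\cdot)$. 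With this modification (together with your $\OGP$-extension for early halting), the sandwich $\lim_{T\to\infty}\mU_T(\ALG)\le W(\tilde\pi^{\ALG},A)\le W^\star(A)$ goes through and coincides with the paper's sketched simulation argument.
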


To see this, note that for any $\ALG$, there exists a modified algorithm that is deterministic and uses portfolios of at most two arms. Then, there is a modified algorithm that first performs exploration (mixed portfolios of unexplored arms), then Bernoulli trials, and finally exploitation. Those modifications can only increase $\ALG$'s social welfare reward, with an asymptotic value of $W^{\star}(A)$.

Furthermore, given any policy $\pi$ for the GMDP, we can construct a \emph{wrapper algorithm} $\ALG(\pi)$ for the corresponding $\ise$ instance by wrapping $\pi$ with Bernoulli trials. At the beginning, $\ALG(\pi)$ picks portfolios according to $\pi$ and updates the state until it reaches a terminal state. Then, if a positive reward is realized, $\ALG(\pi)$ executes Bernoulli trials until full exploration. Otherwise, it selects the default arm forever (we elaborate more when we explain Algorithm~\ref{alg:alg of pi}). It is straightforward to see that
\begin{observation}\label{obs: U get W}
For any policy $\pi$, there exists an algorithm $\ALG(\pi)$ such that 
\[
\lim_{T \rightarrow \infty }\mU_T(\ALG(\pi)) = W(\pi,A).
\]
\end{observation}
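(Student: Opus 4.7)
The plan is to exhibit a three-phase wrapper $\ALG(\pi)$ and to show that the initial two phases finish almost surely in a finite number of rounds, so that the long-run time-averaged reward is governed by the third phase. In the \emph{simulation phase}, $\ALG(\pi)$ maintains a GMDP state $s$ initialized to $A$; in each round it plays portfolio $\pi(s)$, and upon observing the realized arm $a$ and its reward $X(a)$ it updates $s \leftarrow s\setminus\{a\}$. This continues until $s$ is terminal, which takes at most $K$ rounds. In the \emph{Bernoulli-trial phase}, if some arm $a_i$ explored so far has revealed a positive reward $x(a_i)>0$, the algorithm follows the scheme of Section~\ref{subsec:bernoulli trials}: it iterates over the unexplored arms $a_j\in s$ and, for each, repeatedly plays the portfolio defined in~\eqref{eq:blp from body} until $a_j$ is realized, thereby revealing $X(a_j)$. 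In the \emph{exploitation phase}, the algorithm plays $\argmax_{a\in A} X(a)$ forever if any positive reward has been discovered, and plays $a_0$ forever otherwise. By construction, every portfolio used in phases (i)--(ii) is MIR, and the per-round reward in phase (iii) equals the GMDP terminal reward $R(s_{\mathrm{term}})$, where $s_{\mathrm{term}}$ is the terminal state reached in phase (i).

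Next, I would show that the random termination time $\tau$ of phases (i)--(ii) is almost surely finite: phase (i) contributes at most $K$ rounds deterministically, and, conditional on $x(a_i)>0$, each Bernoulli trial for the pair $(a_i,a_j)$ succeeds with positive probability $x(a_i)/(x(a_i)-\mu(a_j))$, so the trial count for each remaining arm is a geometric random variable and hence finite a.s. This makes the decomposition
\begin{align*}
\frac{1}{T}\sum_{t=1}^{T} r^t \;=\; \frac{1}{T}\sum_{t=1}^{\min(\tau,T)} r^t \;+\; \frac{T-\min(\tau,T)}{T}\, R(s_{\mathrm{term}})
\end{align*}
well-defined, and as $T\to\infty$ the first summand vanishes almost surely (finitely many bounded-in-expectation terms divided by $T$), while the coefficient of $R(s_{\mathrm{term}})$ tends to $1$. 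Invoking dominated convergence with the integrable envelope $\max_{a\in A}|X(a)|$ to swap the limit with the expectation yields
\begin{align*}
\lim_{T\to\infty}\mU_T(\ALG(\pi)) \;=\; \E[R(s_{\mathrm{term}})] \;=\; W(\pi,A),
\end{align*}
where the last equality is simply the definition of the GMDP state-value function at $s_0=A$ combined with the fact that the state trajectory of $\ALG(\pi)$ in phase (i) is distributed identically to the trajectory of $\pi$ in the GMDP.

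The main technical hurdle I anticipate is the interchange of limit and expectation, since the Bernoulli-trial count can be large when $x(a_i)$ happens to be close to zero, and in general $\E[\tau]$ need not be finite. However, almost-sure finiteness of $\tau$ already forces the first summand above to converge to $0$ pointwise; combined with the uniform envelope $\max_{a\in A}|X(a)|$ (which is integrable whenever each $X(a)$ is) on the per-round reward, dominated convergence applies directly to the averaged sequence, bypassing any need for a finite-moment assumption on $\tau$ itself.
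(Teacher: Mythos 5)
Your proposal is correct and follows essentially the same route as the paper: wrap $\pi$ with the structure of Algorithm~\ref{alg:alg of pi} (simulate $\pi$ until a terminal state, then Bernoulli trials and exploitation), observe that the pre-exploitation rounds are almost surely finite, and conclude that the time-average converges to $W(\pi,A)$; your added details (a.s.\ finiteness of $\tau$ and dominated convergence with the envelope $\max_{a\in A}\abs{X(a)}$) simply make explicit what the paper leaves as a sketch. One small correction: in the Bernoulli-trial phase the mixing portfolio must use the realized value $x(a_i)$ in place of $\mu(a_i)$ (as in the proof of Observation~\ref{obs:eventually will explore}), which is what your stated success probability $x(a_i)/(x(a_i)-\mu(a_j))$ already presumes, rather than the prior-based portfolio of Equation~\eqref{eq:blp from body}.
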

An immediate corollary of Observations~\ref{obs: U leq W*} and~\ref{obs: U get W} is a sandwich argument. For an optimal policy $\pi^\star$, namely, $W(\pi^\star,A)=W^{\star}(A)$, it holds that
\begin{equation}\label{eq:sandwith}
\OPT_{\infty} \leq W(\pi^\star,A) = \lim_{T\rightarrow \infty}\mU_T(\ALG(\pi^\star))  \leq  \OPT_{\infty};  
\end{equation}
thus, wrapping the optimal GMDP policy results in optimal welfare. 

We are ready to present $\SEGB$, which is implemented in Algorithm~\ref{alg:alg of pi}. Following the notation of Observation~\ref{obs: U get W}, $\SEGB=\ALG(\OGP)$, namely $\SEGB$ wraps $\OGP$. $\SEGB$ picks portfolios according to $\OGP$ until it reaches a terminal state (Lines~\ref{algpi:while}--\ref{algpi:upd s}). Then, if $\SEGB$ realized a positive reward (Line~\ref{algpi:pos then geo}), it executes Bernoulli trials to explore all arms (Line~\ref{algpi:geo}), and exploits the best arm (Line~\ref{algpi:exploit best}). Otherwise, it selects the default arm $a_0$ forever. Inequality~\eqref{eq:sandwith} and Theorem~\ref{thm:optimal policy} suggest the following corollary.
\begin{corollary}\label{cor: alg is asym opt}
Fix any arbitrary $\ise$ instance satisfying Assumption~\ref{assumption:dominance}. It holds that 
\[\lim\limits_{T \rightarrow \infty }\mU_T(\SEGB) = \OPT_{\infty}.
\]
\end{corollary}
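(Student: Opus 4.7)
The plan is to assemble the corollary directly from the sandwich argument already displayed in Inequality~\eqref{eq:sandwith}. The first ingredient is Observation~\ref{obs: U leq W*}, which bounds the asymptotic welfare of every algorithm from above by $W^{\star}(A)$, so in particular $\OPT_{\infty} \le W^{\star}(A)$. The second ingredient is Observation~\ref{obs: U get W}, which guarantees that for any stationary GMDP policy $\pi$ the wrapper algorithm $\ALG(\pi)$ satisfies $\lim_{T \to \infty} \mU_T(\ALG(\pi)) = W(\pi, A)$.

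The key identification to make is that $\SEGB = \ALG(\OGP)$. This follows directly from reading Algorithm~\ref{alg:alg of pi} against the wrapper template described just above Observation~\ref{obs: U get W}: $\SEGB$ deterministically follows $\OGP$ until the GMDP reaches a terminal state (Lines~\ref{algpi:while}--\ref{algpi:upd s}), then executes Bernoulli trials to uncover the remaining arms whenever some positive realized reward is available (Lines~\ref{algpi:pos then geo}--\ref{algpi:geo}), and finally exploits the best realized arm forever (Line~\ref{algpi:exploit best}); otherwise it plays the default arm $a_0$ forever. Plugging $\pi=\OGP$ into Observation~\ref{obs: U get W} therefore yields $\lim_{T\to\infty}\mU_T(\SEGB) = W(\OGP, A)$.

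Finally, invoking Theorem~\ref{thm:optimal policy} under Assumption~\ref{assumption:dominance} gives $W(\OGP, A) = W^{\star}(A)$. Since the asymptotic welfare of $\SEGB$ is itself a lower bound for $\OPT_{\infty}$ (by definition of $\OPT_T$ as a supremum over all algorithms), the chain
\[
\OPT_{\infty} \;\le\; W^{\star}(A) \;=\; \lim_{T\to\infty}\mU_T(\SEGB) \;\le\; \OPT_{\infty}
\]
closes and forces equality throughout. I do not anticipate a substantive obstacle here: the heavy lifting has already been done by Theorem~\ref{thm:optimal policy} (optimality of the index policy $\OGP$ under the stochastic-order assumption) and by the two observations (correctness of the GMDP abstraction and tightness of the wrapper construction). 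The only routine check is that the limits in Observations~\ref{obs: U leq W*} and~\ref{obs: U get W} are indeed attained—this is immediate once one notes that each Bernoulli trial succeeds with probability bounded below by a constant depending only on the observed positive reward and $\mu(a_j)$, so the total length of the exploration-plus-trials phase has finite expectation and thus contributes nothing to the time-averaged welfare in the limit $T\to\infty$.
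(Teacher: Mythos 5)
Your proposal is correct and follows essentially the same route as the paper: the sandwich argument of Inequality~\eqref{eq:sandwith} built from Observations~\ref{obs: U leq W*} and~\ref{obs: U get W}, the identification $\SEGB=\ALG(\OGP)$, and the optimality of $\OGP$ from Theorem~\ref{thm:optimal policy} under Assumption~\ref{assumption:dominance}. Nothing substantive is missing.
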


\begin{algorithm}[t]
\renewcommand{\algorithmiccomment}[1]{\texttt{\color{blue}{\##1}}}

\caption{MIR Exploration via GMDP and Bernoulli Trials ($\SEGB$) \label{alg:alg of pi}}
\begin{algorithmic}[1]
\REQUIRE the $\OGP$ policy
\STATE $s\gets A$
\WHILE[$s$ is not a terminal state] {$\OGP(s)\neq \emptyset$\label{algpi:while}}{
\STATE select $\OGP(s)$, and denote the realized action by $a_k$\label{algpi:play with ogp}
\STATE $s\gets s\setminus \{a_k\}$\label{algpi:upd s}
}
\ENDWHILE
\IF[A positive reward is revealed]{$x_{a_k}>0$ for some $a_k\in A$\label{algpi:pos then geo}} {
        \STATE $a_{k^\star} \gets \argmax_{a_i} x(a_i)$ \COMMENT{best among all the explored arms} 
		\STATE execute Bernoulli trials mixing $a_{k^\star}$ with every other unexplored arm until all are revealed \label{algpi:geo}
		\STATE select the best arm forever \label{algpi:exploit best}
}
\ENDIF
\STATE \textbf{else}, select the default arm $a_0$ forever \label{algpi:exploit a0}
\end{algorithmic}
\end{algorithm}

\subsection{Convergence Rate}
Despite $\SEGB$'s optimality when $T\rightarrow \infty$, its welfare is lower than $\OPT_\infty$ for finite $T$. In this subsection, we analyze $\SEGB$'s performance for finite time and bound its convergence rate. Our goal is to show the gap between $\mU_T(\SEGB)$ and $\OPT_\infty$. The analysis of this subsections assumes that $\abs{X_i}\leq H$ for some $H\in \mathbb R^+$ for all $i\in\{1,\dots K\}$.

We begin by noting that in special cases, $\SEGB$ can be slightly modified to achieve $\OPT_\infty$ even for a finite $T$. To illustrate, consider $(X(a_i))_i$ that are supported on $\{x^-,x^+\}$, with $x^+>0$ and $x^- <0$. In such a case, revealing a positive reward of $x^+$ suggests that we need not explore any further (since the other rewards cannot outperform $+1$); thus, the Bernoulli trials in Line~\ref{algpi:geo} become redundant. Formalizing this intuition,
\begin{proposition}\label{prop:bernoulli opt}
Fix any arbitrary $\ise$ instance such that $(X(a_i))_i \in \{x^-,x^+\}$ w.p. 1, i.e., the rewards are almost surely supported on only two values. Let $\SEGB'$ be a modified version of $\SEGB$ that exploits once a reward of $x^+$ is realized. Then, there exists $T_0$ such that whenever $T\geq T_0$, $\mU_T(\SEGB') =\OPT_T$.
\end{proposition}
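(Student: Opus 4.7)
The upper bound $\mU_T(\SEGB')\le \OPT_T$ is immediate from $\SEGB'$'s MIR property (inherited from $\SEGB$, plus the fact that exploiting a revealed $x^+$ arm is trivially MIR). The substantive direction is $\OPT_T\le\mU_T(\SEGB')$ for $T$ sufficiently large. The plan begins by specializing the GMDP analysis of Section~\ref{sec:infinite}: with two-point support, the terminal reward simplifies to $R(s)=x^+\cdot\ind[\text{some explored arm realized }x^+]$, so Theorem~\ref{thm:optimal policy} yields $W^\star(A)=x^+\cdot P^\star$, where $P^\star$ denotes the probability that $\OGP$ ever observes $x^+$. Two structural features of the setting are central: once any arm's realization equals $x^+$, deterministically exploiting that arm is MIR and yields the maximum possible per-round reward, so $\SEGB$'s Bernoulli trials in Line~\ref{algpi:geo} are wasteful---this is exactly what $\SEGB'$ avoids; and every MIR exploration terminates within at most $K$ rounds, since each mixed portfolio reveals a new arm and no MIR mixture exists once $\above(s)=\emptyset$.

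I would next reduce any MIR algorithm $\ALG$ to a canonical form $\ALG^c$: follow $\ALG$ until either $x^+$ is observed or exploration is forced to stop, then deterministically exploit the revealed $x^+$ arm (or default to $a_0$). Because $x^+$ upper-bounds every MIR portfolio's conditional expected reward, this modification only improves welfare, i.e., $\mU_T(\ALG^c)\ge\mU_T(\ALG)$, and the exploration stopping time satisfies $\tau_{\ALG^c}\le K$ almost surely. Decomposing,
\begin{equation*}
T\cdot\mU_T(\ALG^c) \;=\; \E[R_{\mathrm{expl}}(\ALG^c)] \;+\; x^+\cdot\E\bigl[(T-\tau_{\ALG^c})\,\ind[\text{found }x^+]\bigr],
\end{equation*}
with $|\E[R_{\mathrm{expl}}]|\le KH$ (using $|X(a_i)|\le H$) and $\Pr[\text{found }x^+]=:P_{\ALG^c}\le P^\star$ by Theorem~\ref{thm:optimal policy}.

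Applying the same decomposition to $\SEGB'$, which attains $\Pr[\text{found}]=P^\star$ and $\tau_{\SEGB'}\le K$, subtracting the two displays gives
\begin{equation*}
T\cdot\bigl(\mU_T(\SEGB')-\mU_T(\ALG^c)\bigr) \;\ge\; x^+\,T\,(P^\star-P_{\ALG^c})\;-\;C,
\end{equation*}
for a constant $C$ depending only on $K$, $H$, and $x^+$ (not on $T$). Because canonical policies are $\mP$-valid in the sense of Definition~\ref{def:p valid} and the $\mP$-valid policy space is finite, the attainable set $\{P_{\ALG^c}\}$ is finite. Letting $\gamma>0$ be the smallest positive gap $P^\star - P_{\ALG^c}$ over this finite set, choosing $T_0=\lceil C/(x^+\gamma)\rceil$ ensures that every canonical $\ALG^c$ with $P_{\ALG^c}<P^\star$ is strictly beaten by $\SEGB'$ whenever $T\ge T_0$.

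The remaining case---and the main obstacle---is comparison against canonical policies that also attain $P^\star$. Here I would invoke the Equivalence Lemma (Lemma~\ref{lemma:equivalence body}) and extend its reasoning: the policy-independence of the empty-state probability, combined with the recursive structure used to prove Theorem~\ref{thm:optimal policy}, propagates to the joint distribution of $(\tau,\ind[\text{found}])$ and hence to $\E[R_{\mathrm{expl}}]$ along $\mP$-valid $P^\star$-optimal trajectories. All such policies therefore have welfare exactly matching $\SEGB'$'s, so $\mU_T(\ALG^c)=\mU_T(\SEGB')$ in this subcase. Combined with the strict-domination argument above, this yields $\OPT_T=\mU_T(\SEGB')$ for every $T\ge T_0$.
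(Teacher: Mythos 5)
Your strategy for ruling out canonical algorithms whose exploration attains a strictly smaller probability of revealing $x^+$ than $P^\star$ mirrors the paper's first step (a uniform finite gap over a finite policy class, then $T_0$ large enough), and is fine in spirit, although two of its supporting assertions need real arguments: an arbitrary MIR algorithm is not automatically $\mP$-valid after canonicalization (it may mix many arms, mix with $a_0$ or explored arms, and be time/history dependent), so you need the extreme-point reduction of Proposition~\ref{prop:main optimal p valid} plus a separation of exploration from exploitation; and $\tau_{\ALG^c}\le K$ a.s.\ is false for lazy algorithms, so the bound $\abs{\E[R_{\mathrm{expl}}]}\le KH$ requires a Wald-type argument rather than a round count.

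The genuine gap is the tie case $P_{\ALG^c}=P^\star$. The claimed ``extension'' of Lemma~\ref{lemma:equivalence body} is false: the Equivalence Lemma only says that $Q$, the probability of exploring \emph{all} arms, is policy-independent; it says nothing about \emph{when} $x^+$ is revealed, and the joint law of $(\tau,\ind[\text{found}])$ is not policy-independent across $P^\star$-attaining $\mP$-valid policies. Concretely, take $\below(A)=\emptyset$ and two positive arms with $\Pr(X=x^+)$ equal to $p_a>p_b$: every $\mP$-valid policy attains $P^\star=1-(1-p_a)(1-p_b)$, yet pulling $a$ before $b$ beats the reverse order by exactly $(p_a-p_b)(x^+-x^-)$ in total expected reward, a gap that does not vanish at any finite $T$. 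The same issue arises in the singleton phase of every instance, where Policy~\ref{policy:pi star} (Lines~\ref{policy:pick arbitrary} and~\ref{policy:return double above}) allows an arbitrary choice of positive arm, so the plain ``$\SEGB$ that exploits once $x^+$ is seen'' need not equal $\OPT_T$ at all. The paper's proof supplies exactly what you are missing: (i) a two-point-specific computation showing that every mixed portfolio $\bl p_{i,j}$ reveals $x^+$ with the same per-round probability $\tfrac{1}{H+1}$, independent of which arms are mixed, which is what makes the mixed phase order-insensitive; and (ii) an explicit tie-breaking modification of $\SEGB'$ (Algorithm~\ref{alg:alg of pi two supported}): once $\below(s)=\emptyset$, pull the remaining positive arms in decreasing stochastic order, which is what makes the singleton phase exactly optimal. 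Without these two ingredients your final step does not go through, so the equality $\mU_T(\SEGB')=\OPT_T$ is not established.
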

However, $\SEGB$ cannot achieve $\OPT_\infty$ in finite time due to two factors. First, the Bernoulli trials in Line~\ref{algpi:geo} might take several rounds, thereby delaying the potentially higher rewards hidden by risky arms. Consequently, the number of rounds that can benefit from these high rewards is limited (unlike in the $T\rightarrow\infty$ case). Second, the Bernoulli trials can be unfruitful in expectation. To illustrate, assume that there are only a few rounds left and one unexplored arm from $\below(A)$. In such a case, the one-time cost of exploring this arm outweighs the value it would provide in the remaining rounds. Deciding whether to mix such arms in Bernoulli trials or to exploit the best-seen arm constitutes another barrier. We leave this technical challenge for future work. Instead, our goal is to bound the time it takes to reveal all arms and address the potential performance gap it causes. 

Given an $\ise$ instance, let the random variable $\delta$ denote the highest positive reward among the arms of $\above(A)$, i.e., $\delta = \max_{a_i\in \above(A): X(a_i)>0} X(a_i)$. Recall that in Line~\ref{algpi:geo} of Algorithm~\ref{alg:alg of pi}, we execute Bernoulli trials with the best-seen arm. The term $\delta$ constitutes a lower bound on the reward of that arm $a_{k^\star}$. Furthermore, let $\eta = \max_{a_i\in A:\mu(a_i)<0}\abs{\mu(a_i)}$. Notice that $\eta$ quantifies the highest absolute value of an arm in $\below(A)$. The success probability of each Bernoulli trial, which leads to exploring one additional arm, is at least 
$ \frac{\delta}{\delta+\eta}$. Consequently, after $K(1+{\eta}\E\left[{\frac{1}{\delta}}\right])$ Bernoulli trials in expectation, we would explore all arms. We formalize this intuition through Proposition~\ref{prop:i-d bounds} below.
\begin{proposition}\label{prop:i-d bounds}
Fix any arbitrary $\ise$ instance satisfying Assumption~\ref{assumption:dominance}, and let $\delta>0$ and $\eta$ be the quantities defined above. There exists $T_0$ such that whenever $T>T_0$, it holds that
$
\mU_T(\SEGB) \geq \left(  1-\frac{K(1+\eta\cdot \E\left[{\frac{1}{\delta}}\right]}{T}\right) \OPT_\infty.
$
\end{proposition}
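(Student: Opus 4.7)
The plan is to split $\SEGB$'s execution into three phases---the GMDP exploration phase of (random) length $N_1$, the Bernoulli-trial phase of length $N_2$ (with $N_2=0$ if no positive reward is realized in phase~1), and the exploit phase---and to control the gap $T\cdot\OPT_\infty - T\cdot\mU_T(\SEGB)$ by bounding $\E[NM\ind]$, where $N=N_1+N_2$, $M=\max_i X(a_i)$, and $\ind$ is the indicator that some explored arm has positive realized reward. A central preliminary observation is that $\OGP$ always terminates with $\above(s)=\emptyset$, so \emph{every} arm of $\above(A)$ is explored during phase~1; hence $\{\delta>0\}\subseteq\{\ind=1\}$ and $x(a_{k^\star})\geq\delta$ on $\{\delta>0\}$. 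From the GMDP analysis I will also use $\OPT_\infty = W^\star(A) = \E[M\ind]$, since $R(s)$ at a terminal state equals $M$ precisely when some explored arm is positive.

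First I would lower-bound the welfare. Because every portfolio used in phases~1--2 is MIR, the per-round conditional expected reward there is nonnegative, and every round after phase~2 yields $M\ind$ (the overall best arm when $\ind=1$, the default otherwise). Using $(T-N)^+\geq T-N$ together with $M\ind\geq 0$, this gives
\[
T\cdot\mU_T(\SEGB)\;\geq\;\E\!\left[(T-N)^+ M\ind\right]\;\geq\; T\cdot\OPT_\infty - \E[NM\ind],
\]
reducing the proposition to proving $\E[NM\ind]\leq K\bigl(1+\eta\,\E[1/\delta]\bigr)\OPT_\infty$.

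Next I would bound $N$ in conditional expectation. The Bernoulli phase pairs $a_{k^\star}$ with each unexplored $a_j\in B\subseteq\below(A)$ via $\bl p_{k^\star,j}$; the number of trials needed to uncover $a_j$ is Geometric with success probability $x(a_{k^\star})/\bigl(x(a_{k^\star})+|\mu(a_j)|\bigr)$. Summing over $|B|\leq K$ arms and combining with $N_1\leq K$ yields $\E[N\mid \text{phase-1 state}]\leq K\bigl(1+\eta/x(a_{k^\star})\bigr)$, which is at most $K(1+\eta/\delta)$ on $\{\delta>0\}$. Because Nature's coin flips in phase~2 are independent of the reward vector given the phase-1 state, $N_2$ is conditionally independent of $M$; substituting gives
\[
\E[NM\ind]\;\leq\; K\,\E[M\ind]\,+\,K\eta\,\E[M\ind/\delta].
\]

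The hard part will be the last expectation: $M$ and $1/\delta$ are anti-monotone through the $\above$-arm rewards, so I cannot simply multiply expectations. I plan to invoke the FKG inequality for independent random variables: as functions of $(X(a_i))_i$, $M$ is coordinate-wise nondecreasing while $1/\delta$ is coordinate-wise nonincreasing (strictly decreasing in $X(a_i)$ for $a_i\in\above(A)$, constant in the remaining coordinates), hence $\E[M/\delta]\leq\E[M]\,\E[1/\delta]$. Because the statement is vacuous when $\Pr(\delta=0)>0$ (then $\E[1/\delta]=\infty$), I may assume $\delta>0$ almost surely, so $\ind=1$ a.s.\ and $\OPT_\infty=\E[M]$; this yields $\E[M\ind/\delta]\leq\OPT_\infty\,\E[1/\delta]$ and completes the proof. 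Choosing $T_0$ large enough ensures that the resulting fraction is less than one and that the gap left by $(T-N)^+\geq T-N$ is negligible.
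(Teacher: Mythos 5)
Your proof is correct, and it follows the paper's high-level decomposition (at most $K$ rounds of $\OGP$ exploration, Bernoulli trials with per-arm success probability at least $\delta/(\delta+\eta)$, nonnegative conditional reward in every exploration round by MIR, then exploitation), but it is genuinely more careful on one point that the paper's proof silently elides. The paper bounds the expected number of exploration rounds by $K(1+\eta\,\E[1/\delta])$ and then directly writes $\mU_T(\SEGB)\geq \frac{1}{T}\bigl(T-K(1+\eta\,\E[1/\delta])\bigr)\OPT_\infty$, i.e., it multiplies the expected exploration length by the expected exploitation value without addressing that the exploration length $N$ and the post-exploration per-round value $M=\max_i X(a_i)$ are dependent (both are driven by the realized rewards of $\above(A)$). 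You isolate exactly this cross term $\E[NM\ind]$, reduce it by conditioning on the reward vector to $K\E[M\ind]+K\eta\,\E[M\ind/\delta]$, and then decouple $M$ from $1/\delta$ via the Harris/FKG inequality for independent coordinates ($M$ coordinate-wise nondecreasing, $1/\delta$ nonincreasing), together with the identity $\OPT_\infty=W^\star(A)=\E[M\ind]$ and the reduction to $\delta>0$ almost surely (otherwise $\E[1/\delta]=\infty$ and the bound is vacuous). The dependence happens to be in the favorable (negative-correlation) direction, which is why the paper's shortcut yields the right constant, but your argument is the one that actually justifies it; the price is the extra machinery (FKG, the explicit identification of $\OPT_\infty$ with $\E[M\ind]$, and the boundedness of $M$ from the subsection's assumption $|X_i|\leq H$ to ensure integrability), none of which appears in the paper's proof. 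The remaining details you use---$\OGP$ terminates only when $\above(s)=\emptyset$ so all of $\above(A)$ is explored and $x(a_{k^\star})\geq\delta$ on $\{\delta>0\}$, and $(T-N)^+\geq T-N$ with $M\ind\geq 0$---are sound, so the proposal stands as a valid (and tighter-argued) proof of the same bound.
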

To illustrate, assume that $(X(a_i))_i$ are arbitrarily distributed in the discrete set $\{-H,\dots, H\}$. In such a case, as long as $\delta > 0$, $\frac{\eta}{\delta} \leq H$ holds almost surely; thus,  Proposition~\ref{prop:i-d bounds} suggests that $\SEGB$ is optimal up to a multiplicative factor of $\frac{K(H+1)}{T}$ or an additive factor of $\frac{KH(H+1)}{T}$.

\section{Strategic Agents and Bayesian Incentive Compatibility}\label{sec:ic body}
In this section, we address the case of \emph{strategic agents}: Agents only observe the mechanism's recommendation, but are free to select any arm. In Subsection~\ref{subsec:enabling}, we lay out several observations and assumptions that enable exploration. Then, Subsection~\ref{subsec:BIC ALG} presents $\ICSEGB$, which is implemented in Algorithm~\ref{mainicalg}. To simplify our analysis, we shall assume that the rewards are bounded. Namely, there exists $H \in \mathbb R$ such that for every $a_i \in A^+$, it holds that $\abs{X(a_i)} \leq H$ almost surely.

\subsection{Enabling Exploration and Harmless Mechanisms}\label{subsec:enabling}
In the case of strategic agents, agents would naturally try to avoid exploration. However, as mechanism designers, we are still interested in maximizing social welfare and, consequently, in exploring risky arms. To allow such an exploration, we must have the standard assumption below. 
\begin{assumption}\label{assumption ic body support}
For every pair $i,j$ such that $i,j\in A^+$, it holds that $\Pr(X_i < \mu_j)>0$.
\end{assumption}
If Assumption \ref{assumption ic body support} does not hold for some pair $i,j$, agents would never agree to explore arm $a_j$ even though it could hide a higher reward than $a_i$.

The main technique used in previous work~\cite{mansour2015bayesian,Mansour2016Slivkins} is \emph{hidden exploration}: Crafting the mechanism in such a way that agents cannot determine whether they are exploring or exploiting. As a result, a crucial element that affects BIC is the agents' knowledge about the \emph{order of arrival}. In the BIC constraint in Definition~\ref{def:ic}, the expectation is conditioned on all the information the agent has about their arrival order. For instance, assume that agents have a uniform belief about their order of arrival and fix any a priori inferior arm $a$. Notice that there is at most one round of exploration for each arm $a$ and potentially many exploitation rounds (under the event that $a$ is observed to be the best arm). Consequently, given enough agents, the probability of exploitation is higher than exploration from an agent's standpoint.
\begin{proposition}\label{prop:ic for uniform}
Fix any arbitrary $\ise$ instance, and assume uniform belief over arrival orders. There exists $T_0$ such that for any $T>T_0$, $\SEGB$ is BIC.
\end{proposition}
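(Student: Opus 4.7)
The plan is to follow the hidden-exploration blueprint of \citet{mansour2015bayesian}: under a uniform prior on the arrival round, the agent's posterior weight on any event $E$ conditioned on receiving recommendation $a_l$ is proportional to $\E[N_l^E]$, where $N_l^E$ counts the rounds in the whole execution during which $\SEGB$ issues the recommendation $a_l$ while $E$ holds. I would decompose $\{m^t = a_l\}$ into an exploration event $E_{\text{expl}}$ (rounds from Lines~\ref{algpi:play with ogp} or \ref{algpi:geo} of Algorithm~\ref{alg:alg of pi}) and an exploitation event $E_{\text{exploit}}$ (rounds from Line~\ref{algpi:exploit best}). Bayes' rule then yields
\[
\Pr(E_{\text{exploit}}\mid m^t=a_l)=\frac{\E[N_l^{\text{exploit}}]}{\E[N_l^{\text{expl}}]+\E[N_l^{\text{exploit}}]},
\]
and the goal is to show this ratio tends to $1$ as $T\to\infty$ while the exploitation payoff is uniformly bounded away from zero.

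\textbf{Bounding the exploration mass.} An inferior arm $a_l\in\below(A)$ is issued at most once in the OGP phase (the call that removes $a_l$ from the state) and at most once as a successful Bernoulli trial, so $N_l^{\text{expl}}\le 2$ almost surely. A superior arm $a_l\in\above(A)$ can additionally serve as the base $a_{k^\star}$ in many failed Bernoulli trials; however, by the same counting argument underlying Proposition~\ref{prop:i-d bounds}, the expected total length of the Bernoulli-trial phase is bounded by $K\bigl(1+\eta\cdot\E[1/\delta\mid\delta>0]\bigr)$, which is finite under the bounded-support assumption and independent of $T$. Hence in both cases $\E[N_l^{\text{expl}}]=O(1)$.

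\textbf{Bounding the exploitation mass.} Let $p_l=\Pr\!\bigl(a_l=\argmax_{i}X(a_i)\ \text{and}\ X(a_l)>0\bigr)$. Independence of the arms together with Assumption~\ref{assumption ic body support} guarantees $p_l>0$: with positive probability every $X(a_i)$ for $i\ne l$ falls below $\mu_l$ while $X(a_l)$ simultaneously lies above $0$. Conditional on this event, $\SEGB$ issues $a_l$ in every round after a finite exploration prefix $\tau$ with $\E[\tau]=O(1)$, giving $\E[N_l^{\text{exploit}}]\ge p_l(T-\E[\tau])=\Omega(T)$. Therefore $\Pr(E_{\text{expl}}\mid m^t=a_l)=O(1/T)$.

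\textbf{Closing the inequality and main obstacle.} Under $E_{\text{exploit}}$ the recommendation is the realized argmax, so $X(a_l)-X(a_i)\ge 0$ surely and, using non-degeneracy drawn from Assumption~\ref{assumption ic body support}, $\E[X(a_l)-X(a_i)\mid E_{\text{exploit}}]\ge c_l>0$ for a constant $c_l$ depending only on the instance. Under $E_{\text{expl}}$ the same quantity is at least $-2H$ by the bounded-reward assumption. Combining,
\[
\E[X(a_l)-X(a_i)\mid m^t=a_l]\ge c_l\Pr(E_{\text{exploit}}\mid m^t=a_l)-2H\Pr(E_{\text{expl}}\mid m^t=a_l),
\]
which becomes non-negative once $T\ge T_0(c_l,H)$. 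Taking $T_0$ to be the maximum such threshold over the $O(K^2)$ pairs $(a_l,a_i)$ yields the proposition. The main obstacle will be the uniform control of $\E[N_l^{\text{expl}}]$ for superior arms, since a very small positive realization $\delta$ can produce many failed Bernoulli-trial rounds in which $a_l=a_{k^\star}$; handling this requires a moment bound on $1/\delta$ conditioned on $\delta>0$, which is exactly the quantity that already appears in Proposition~\ref{prop:i-d bounds} and is finite under the bounded-support hypothesis.
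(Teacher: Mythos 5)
Your overall route is exactly the one the paper intends: the paper's own ``proof'' is a one-line deferral to \citet[Section E]{Fiduciary}, and the counting/Bayes argument you spell out (under a uniform arrival belief, the posterior weight of an event given recommendation $a_l$ is proportional to the expected number of rounds in which $a_l$ is recommended under that event; exploration rounds are few, exploitation rounds are $\Omega(T)$, and exploitation carries a strictly positive conditional gap) is precisely the hidden-exploration argument the paper alludes to in the paragraph preceding the proposition. So the decomposition, the use of $\E[N_l^{\text{expl}}]$ versus $\E[N_l^{\text{exploit}}]$, and the final threshold $T_0$ are all in the spirit of the intended proof.

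Two steps, however, do not hold as written. First, your bound $\E[N_l^{\text{expl}}]=O(1)$ rests on the claim that $\E[1/\delta\mid\delta>0]$ is finite because rewards are bounded; boundedness gives $|X_i|\le H$ but says nothing about mass near zero, and for, say, arms in $\above(A)$ with rewards uniform on $[-1,1]$ one has $\E[1/\delta\mid\delta>0]=\infty$. The conclusion survives, but by a different route: the Bernoulli-trial phase length $\tau$ is almost surely finite (each trial succeeds with a fixed positive probability), hence $\E[\min(\tau,T)]=o(T)$ by dominated convergence, which is all you need to get $\Pr(E_{\text{expl}}\mid m^t=a_l)\to 0$ and $\E[N_l^{\text{exploit}}]=\Omega(T)$; you should replace the $O(1)$ claim by this $o(T)$ argument. (Relatedly, an a priori inferior arm can also serve as the base $a_{k^\star}$ of failed trials, so the ``$N_l^{\text{expl}}\le 2$'' count for $\below(A)$ arms is not literally true, though the same phase-length bound covers it.) Second, your claim $p_l>0$ for $a_l\in\below(A)$ needs $\Pr(X(a_l)>0)>0$, which does not follow from Assumption~\ref{assumption ic body support} alone: that assumption only gives events of the form $\{X_i<\mu_j\}$ positive probability, and if $\Pr(X(a_l)>0)=0$ then $a_l$ is never exploited and every recommendation of $a_l$ is exploration, so the positive exploitation mass you rely on vanishes for that arm. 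This degenerate case needs either an explicit non-degeneracy hypothesis or a separate argument; the paper sidesteps it by citing \citet{Fiduciary}, but in a self-contained write-up you should state the condition you are using.
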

The more common and challenging case, which is the standard in prior work~\cite{Kremer2014,mansour2015bayesian}, is the informative order, where each agent knows their round number exactly. Namely, the agent arriving at time $l$ knows that she is the $l$'th agent. We focus on this setting for the remainder of this section.

It turns out that we must have stronger information asymmetry regarding the default arm $a_0$, which we previously assumed has a known reward of zero.\footnote{Assuming $X(a_0)=0$ almost surely is without loss of generality, as all of our results in Sections~\ref{sec:infinite} and~\ref{sec:policy to algorithm} extend beyond it. In this section, assuming the default arm has a deterministic reward is \emph{with} loss of generality, and hence we elaborate.} To illustrate the complexity, we highlight the following property.
\begin{definition}[Harmless Mechanism]
    A mechanism is called \emph{harmless} if, for any round $t$, information $\mI_t$ and any realized arm it selects $m^t$, it holds that $\Pr(X(m^t)>0\mid I_t)>0$.
\end{definition}
In other words, if an arm $a$ is observed to have a negative reward, the mechanism will avoid recommending $a$ in subsequent rounds almost surely. If a mechanism is not harmless, we say it is \emph{harmful}. What can a mechanism gain from being harmful? The following observation answers this question.
\begin{observation}\label{obs: harmless not exploring}
Assume that $X(a_0)=0$ almost surely and that $\max_{a_i\in \above(A)}X_i <0$. If a MIR and BIC mechanism explores the arms of $\below(A)$, it is necessarily harmful.
\end{observation}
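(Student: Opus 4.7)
The plan is to argue by contradiction: suppose $\ALG$ is MIR, BIC, and \emph{harmless}, and with positive prior probability recommends some $a_j\in\below(A)$. From $X(a_0)=0$ almost surely, harmlessness forbids $\ALG$ from recommending $a_0$ in any round, since $\Pr(X(a_0)>0\mid \mI_t)=0$ identically; by the static-reward assumption it also forbids re-recommending any arm once observed to be non-positive. Hence every MIR portfolio $\ALG$ selects is supported entirely on unobserved arms, and---by the MIR inequality---must include an arm of strictly positive posterior expectation whenever the support contains any unobserved $\below(A)$-arm.

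I would then exhibit a positive-probability sample path on which $\ALG$ is forced to place unit mass on $a_0$. Define $\omega:=\{X(a_i)<0\text{ for all }a_i\in A\}$. By Assumption~\ref{assumption ic body support} (applied with $j=0$ to get $\Pr(X(a_i)<0)>0$ for every arm) and the independence of arm rewards, $\Pr(\omega)>0$, and $\omega\subseteq E$. Because $\ALG$'s round-$1$ portfolio depends only on the prior---and inductively, its behavior at each round given only negative observations is determined by those negative observations alone---the hypothesis that $\ALG$ explores $\below(A)$ with positive prior probability implies that $\ALG$ does so on a positive-probability subset of $\omega$.

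Next, I trace $\ALG$'s evolution on $\omega$. Every realized arm is observed with negative reward and becomes unrecommendable by harmlessness. Consequently, as long as some $\above(A)$-arm remains unobserved, MIR together with the forbidden status of $a_0$ force $\ALG$ to place positive weight on an unobserved $\above(A)$-arm in every round. A direct rearrangement of the MIR inequality yields that the aggregate weight on unobserved $\above(A)$-arms in any such round is at least a positive prior-dependent constant (specifically, at least $\mu^-_{\min}/(\mu^+_{\max}+\mu^-_{\min})$ when unobserved $\below(A)$-arms are also in the support, and $1$ otherwise). Since $|\above(A)|$ is finite and the per-round probability of realizing some unobserved $\above(A)$-arm is bounded below by a fixed positive constant, a first-hitting-time argument shows that on $\omega$ every arm of $\above(A)$ is observed within finitely many rounds almost surely.

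Let $\tau$ be the first round at which every $\above(A)$-arm has been observed on $\omega$. At round $\tau$, every non-$a_0$ arm is either observed-negative or has prior mean $\mu<0$, so $a_0$ is the only arm with non-negative posterior expectation; the unique MIR portfolio is therefore the point mass on $a_0$. Thus $\ALG$ is forced to recommend $a_0$ at round $\tau$, violating harmlessness and producing the desired contradiction. The main obstacle is the depletion step: showing that the mechanism cannot indefinitely avoid observing some $\above(A)$-arm on $\omega$. The key insight is the MIR-derived lower bound on the per-round weight allocated to unobserved $\above(A)$-arms, which rules out ``slow exhaustion'' strategies and guarantees almost-sure finite depletion of $\above(A)$ on $\omega$.
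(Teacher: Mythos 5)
There is a genuine gap, and it sits at the linchpin of your argument: the claim that harmlessness forbids ever recommending $a_0$ because $\Pr(X(a_0)>0\mid \mI_t)=0$. This exploits a boundary artifact of the written definition rather than its intended content. The paper's own gloss ("if an arm is observed to have a negative reward, the mechanism will avoid recommending $a$ in subsequent rounds") and its subsequent use of harmlessness (harmless mechanisms that repeatedly recommend the default arm, e.g.\ in Algorithm~\ref{mainicalg}) make clear that recommending $a_0$ is the paradigmatic harmless action, not a harmful one. Note what your reading would entail: on the event that all arms realize negative rewards, \emph{every} MIR mechanism is eventually forced into the point mass on $a_0$ (exactly as you argue at your stopping time $\tau$), so under your reading no harmless MIR mechanism exists at all, the observation becomes vacuous, and your contradiction is reached whether or not the mechanism ever explores $\below(A)$ --- indeed your final contradiction uses neither that hypothesis nor BIC. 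Once $a_0$ is again an admissible harmless recommendation, both pillars of your proof collapse: the per-round lower bound on the weight given to unobserved $\above(A)$-arms no longer follows (the mechanism may simply recommend $a_0$), and the terminal step "forced to recommend $a_0$, hence harmful" is no longer a contradiction.

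The missing ingredient is BIC, which your proposal never invokes but which is essential: without it the statement is false. A harmless MIR (non-BIC) mechanism can select the mixture $\bl p_{1,j}$ of some $a_1\in\above(A)$ with $a_j\in\below(A)$ in round~1; both arms are then unobserved, so the portfolio is harmless, and with positive probability the realized arm is $a_j$, i.e.\ $\below(A)$ is explored even on the event $\max_{a_i\in\above(A)}X_i<0$. The paper's argument uses BIC together with the informative arrival order to rule this out: each of the first $\abs{\above(A)}$ agents, knowing her index, would deviate from any recommendation of an a priori inferior arm, so a BIC mechanism is forced to have them pull $a_1,a_2,\dots$ in decreasing order of expectation; under the stated event each is observed negative, harmlessness then forbids mixing any of them, and MIR forbids any portfolio placing weight on $\below(A)$ when only $a_0$ and observed-negative or negative-mean arms remain. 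The contrapositive is the observation. To repair your proof you would need to replace the "forbidden $a_0$" step with this BIC-driven argument that forces sequential exhaustion of $\above(A)$ before any $\below(A)$-arm can be reached.
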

We now explain why this observation holds. Fix any harmless mechanism and assume w.l.o.g. that the arms are sorted in decreasing order of expectation. The first agent, knowing she is the first, will only agree to pull $a_1$, the a priori optimal arm. Therefore, any BIC mechanism will recommend the portfolio $\bl p_{1,1}$. Next, consider the second agent. She knows that the only arm that has been explored so far is $a_1$. Thus, if the mechanism recommends anything but $a_1$, she will pick $a_2$. Since $X(a_1)<0$ due to the assumptions of the observation ($\max_{a_i\in \above(A)}X_i <0$), any harmless mechanism cannot mix $a_1$ in a portfolio. Thus, the second agent pulls $a_2$. This argument holds inductively for all arms in $\above(A)$.  From here on, we focus on harmless mechanisms only. 

In light of Observation~\ref{obs: harmless not exploring}, we need a stronger form of information asymmetry regarding the default arm; hence, we shall make the standard assumption that the default arm $a_0$ is a priori superior~\cite{Fiduciary,Kremer2014}. This reflects cases where agents are generally informed about the attractiveness of the arms and prefer the default arm over the others. 
\begin{assumption}\label{assumption order}
The expected values of the rewards satisfy $\mu(a_0)>\mu(a_1)\geq \mu(a_2)\cdots \geq \mu(a_K)$.
\end{assumption}
For completeness, we also remark that the MIR constraint from Inequality~\eqref{ineq:IR def} becomes
\[
\E_{a\sim \bl p}[X(a) \mid \mI]=\sum_{a_i\in A}\bl p (a_i)\E\left[X(a_i)\mid \mI\right]\geq\E\left[X(a_0)\mid \mI\right].
\]
Notably, the right-hand side is no longer zero. Due to Assumption~\ref{assumption order}, any MIR mechanism must recommend the default arm to the first agent.

\subsection{The $\ICSEGB$ Algorithm}\label{subsec:BIC ALG}

\begin{algorithm}[tb]
\caption{Incentive Compatible $\mainalg$ \label{mainicalg} ($\ICSEGB$)}
\begin{algorithmic}[1]
\REQUIRE the $\mainalg$ algorithm
\STATE initialize an instance of $\mainalg$ and update it after every recommendation\label{alg-ic:initialize} 
\STATE recommend the default arm $a_0$ to the first agent, observe $x(a_0)$ \label{alg-ic:first agent}
\IF{$x(a_0) < \mu_K$ \label{alg-ic:very low reward}} 
    \STATE recommend using \gre to agents $2,\dots, K+1$\label{alg-ic: r1 is bad}%
    \ELSE
    \STATE recommend $a_0$ to agents $2,\dots, K+1$  \label{alg-ic: r1 after one}
\ENDIF
\STATE split the remaining rounds into consecutive phases of $B$ rounds each \label{alg-ic:split}
\FOR {phase $k=1,\dots$ \label{alg-ic:for loop}}
    \IF{$\mainalg$ exploits (Lines \ref{algpi:exploit best}-\ref{algpi:exploit a0} in   $\mainalg$) \label{alg-ic:if exploits}}
        \STATE follow $\mainalg$ \label{alg-ic:exploits} 
    \ELSE {\label{alg-ic:else block explore}
    \STATE pick one agent $l(k)$ from the $B$ agents in this phase uniformly at random and recommend her according to $\mainalg$ \label{alg-ic: pick to explore}}
    \STATE as for the rest of the agents,
    \IF{an arm $a_i$ with $x(a_i)>x(a_0)$ was revealed \label{alg-ic:was revealed}}
    \STATE recommend using \gre \label{alg-ic: recommend as greedy if observed}
    \ELSE
        \STATE recommend $a_0$\label{alg-ic: recommend a on in phase} 
    \ENDIF
    \ENDIF
\ENDFOR
\end{algorithmic}
\end{algorithm}
In this subsection, we present a Bayesian incentive-compatible mechanism with approximately optimal welfare. The algorithm, which we call $\ICSEGB$ and is implemented in Algorithm~\ref{mainicalg}, uses $\mainalg$ to conduct exploration under the MIR and BIC constraints. From a technical standpoint, our approach builds on the techniques introduced by~\citet{mansour2015bayesian}. We divide rounds into \textit{phases}: In each phase, there is at most one exploration round (following $\mainalg$). The other rounds are either recommending the default arm $a_0$ or greedy exploitation. We denote by $\gre$ the algorithm that picks the best arm in expectation over the information it has (could be an already observed arm).

The algorithm works as follows. It initializes an $\mainalg$ instance in Line~\ref{alg-ic:initialize}. For the first agent to arrive, it recommends the default arm and observes its realized reward in Line~\ref{alg-ic:first agent}. Then, if the default arm realizes a very low reward, we enter the ``if'' clause in Line~\ref{alg-ic:very low reward} and continue by following $\gre$ for the next $K$ agents (Line~\ref{alg-ic: r1 is bad}). Otherwise, in Line~\ref{alg-ic: r1 after one}, we recommend the default arm for those agents. Line~\ref{alg-ic:split} splits the next rounds into phases, each with $B$ rounds (we determine $B$ later on). The for loop in Line~\ref{alg-ic:for loop} contains two possible behaviors. If $\mainalg$ exploits (the ``if'' clause in Line~\ref{alg-ic:if exploits}), we exploit as well. Otherwise, we enter the ``else'' clause in Line~\ref{alg-ic:else block explore}. One of the agents in this phase, who we note by $l(k)$ and is chosen uniformly at random, will get a recommendation according to $\mainalg$ (Line~\ref{alg-ic: pick to explore}). The other agents will either get the recommendation of $\gre$ or the default arm. The former refers to the ``if'' clause in Line~\ref{alg-ic:was revealed}: As long as we discover an arm $a_i$ with $x(a_i)>x(a_0)$, which means that $\mainalg$ can explore all arms using Bernoulli trials. This completes the description of the algorithm.

To analyze the social welfare of $\ICSEGB$, we need to determine the phase length $B$; thus, we introduce the following quantities $\xi$ and $\gamma$. Assumption~\ref{assumption ic body support} hints that there exist $\xi>0$ and $\gamma>0$ such that every arm $i \in A^+$ has a chance of at least $\gamma$ to be greater than any other arm by at least $\xi$. Formally, for all $i\in A^+$, it holds that $\Pr(\forall i'\in A^+\setminus \{i\}:\mu_i -X_{i'} > \xi  )>\gamma$. We use these quantities and set $B= \ceil*{\frac{H}{\xi \gamma}}+1$. We are ready to summarize the properties of $\ICSEGB$.
\begin{theorem}\label{theorem: ic fee}
Under Assumptions~\ref{assumption:dominance},\ref{assumption ic body support} and~\ref{assumption order}, $\ICSEGB$ satisfies MIR and BIC. In addition,
\[
\mU_T(\ICSEGB) \geq \left(  1-O\left(\frac{K \eta H \E\left[\frac{1}{\delta}\right] }{T \xi \gamma}  \right)\right) \OPT_\infty.
\]
\end{theorem}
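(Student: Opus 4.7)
The plan is to establish the three claims of Theorem~\ref{theorem: ic fee}---MIR, BIC, and the welfare bound---in sequence, with BIC being the main obstacle. MIR follows by direct inspection, the welfare bound by overhead accounting on top of Proposition~\ref{prop:i-d bounds}, and BIC by a hidden-exploration argument in the spirit of~\citet{mansour2015bayesian} that exploits the phase structure in Algorithm~\ref{mainicalg}.

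For MIR, I would walk through each recommendation type. The first agent gets $a_0$ (Line~\ref{alg-ic:first agent}), which is MIR by Assumption~\ref{assumption order}. For agents $2,\dots,K+1$: if $x(a_0)<\mu_K$ then every arm has posterior expectation at least $\mu_K>x(a_0)$, so $\gre$'s recommendation is MIR; otherwise $a_0$ itself is recommended, which is trivially MIR. Inside the phases, the exploration slot uses $\mainalg$ whose portfolios are MIR by construction (Section~\ref{sec:policy to algorithm}), the ``default'' slot recommends $a_0$ (MIR), and the greedy slot is only entered after some $a_i$ with $x(a_i)>x(a_0)$ has been observed, which certifies that arm as a singleton MIR portfolio relative to $a_0$.

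For BIC, I would verify Inequality~\eqref{eq: ic const} case by case. The first agent getting $a_0$ is BIC by Assumption~\ref{assumption order}. For agents $2,\dots,K+1$, the recommendation is a deterministic function of $x(a_0)$: conditioned on seeing a particular recommended arm, the agent can invert the rule, but the construction in Lines~\ref{alg-ic:very low reward}--\ref{alg-ic: r1 after one} ensures that in either branch the recommended arm has the largest conditional expectation among $A^+$. The crux is the phase regime (Line~\ref{alg-ic:else block explore}). Conditioned on receiving arm $a_i$ in round $t$ of phase $k$, the agent's posterior puts probability $1/B$ on being the exploration slot $l(k)$---in which case $a_i$ may be a priori inferior and the gap $\E[X(a_i)-X(a_{i'})]$ is bounded below by $-H$---and probability $(B-1)/B$ on being a non-exploration slot, in which case $a_i$ is either $a_0$ or an already-revealed arm with $x(a_i)>x(a_0)$. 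In the latter event, the quantities $\xi,\gamma$ yield a one-sided margin $\E[X(a_i)-X(a_{i'})\mid\text{non-exploration slot, }a_i\text{ recommended}] \ge \xi\gamma$. Summing the two pieces,
\[
\E[X(a_i)-X(a_{i'})\mid m^t=a_i] \ge -\frac{H}{B}+\frac{B-1}{B}\xi\gamma,
\]
which is non-negative exactly when $B\ge H/(\xi\gamma)+1$; this is why $B=\lceil H/(\xi\gamma)\rceil+1$ is chosen. The non-trivial bookkeeping here is making the ``non-exploration'' conditional expectation rigorous---one must verify that once some $a_i$ dominates $a_0$ the greedy recommendation has the required $\xi\gamma$ margin, and that in the pre-discovery phases the default-arm recommendation is BIC by Assumption~\ref{assumption order}.

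For the welfare bound, I would count the total overhead incurred by $\ICSEGB$ before it enters pure exploitation. Corollary~\ref{cor: alg is asym opt} says that $\mainalg$'s internal clock terminates after at most $K$ GMDP steps plus, in expectation, $K(1+\eta\,\E[1/\delta])$ Bernoulli trials (Proposition~\ref{prop:i-d bounds}). In $\ICSEGB$'s clock, each $\mainalg$ step is embedded in a phase of $B$ rounds, and there is a constant-plus-$K$ overhead from Lines~\ref{alg-ic:first agent}--\ref{alg-ic: r1 after one}. Hence the expected number of rounds before pure exploitation is $O\!\left(KB(1+\eta\,\E[1/\delta])\right)$, each round costing at most $H$ relative to the exploitation reward, and each surviving exploitation round earns at least $\OPT_\infty$ in expectation. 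Substituting $B=O(H/(\xi\gamma))$ gives
\[
\mU_T(\ICSEGB)\ge\left(1-O\!\left(\frac{K\eta H\,\E[1/\delta]}{T\xi\gamma}\right)\right)\OPT_\infty,
\]
as claimed. The hardest step, as noted, is the BIC case (iii): turning the informal two-way decomposition above into a rigorous inequality requires carefully conditioning on the filtration available to the agent (round index, phase index, and the set of previously revealed arms) and showing that the $\xi\gamma$ margin kicks in on every non-exploration sub-event.
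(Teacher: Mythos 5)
Your proposal follows essentially the same route as the paper's own (much terser) proof: MIR is argued ``by design'' since every recommendation is the default arm, a greedy arm certified by an observed reward above $x(a_0)$, or an $\SEGB$ portfolio; the welfare bound comes from exactly your accounting --- at most $K(1+\eta\,\E[1/\delta])$ phases in expectation by Proposition~\ref{prop:i-d bounds}, each of length $B=\ceil*{H/(\xi\gamma)}+1$, giving $O\bigl(K\eta H\,\E[1/\delta]/(\xi\gamma)\bigr)$ non-exploitation rounds and hence the stated multiplicative loss; and BIC is handled by hidden exploration, which the paper does not spell out at all but delegates to \citet[Theorem~3]{Fiduciary}. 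One caution on your BIC display: conditioned on the realized recommendation $m^t=a_i$, the agent's posterior probability of being the exploration slot is \emph{not} $1/B$ --- Bayes reweighting can push it close to $1$ when $a_i$ is a priori inferior, so the inequality $\E[X(a_i)-X(a_{i'})\mid m^t=a_i]\ge -H/B+\tfrac{B-1}{B}\xi\gamma$ is not valid as written. The standard fix (and presumably what the cited analysis does) is to multiply through by $\Pr(m^t=a_i)>0$ and bound the unconditional quantity $\E\bigl[(X(a_i)-X(a_{i'}))\ind\{m^t=a_i\}\bigr]$, charging the exploration event at most $H\cdot\Pr(\text{explore slot})\le H/B$ in losses and crediting the exploitation event at least $\tfrac{B-1}{B}\xi\gamma$ via the margin event of probability $\gamma$; this yields the same threshold $B\ge H/(\xi\gamma)+1$, so your conclusion and choice of $B$ stand once the decomposition is recast this way (a need you yourself flag).
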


\section{Conclusion and Discussion}\label{sec:discussion}
Our model emphasizes the MIR requirement, ensuring that agents receive, in expectation, at least as much as they would have received without using the mechanism, based on the mechanism's accumulated information. 
In Section~\ref{sec:infinite}, we introduced an auxiliary GMDP and achieved our main contribution: An index-based optimal policy that only requires sorting arms according to their expected values. Later, in Section~\ref{sec:policy to algorithm}, we leveraged this optimal GMDP policy to develop an approximately optimal algorithm and analyzed its convergence rate. Finally, in Section~\ref{sec:ic body}, we addressed incentive compatibility.

We see considerable scope for future work. On the technical side, one possible follow-up is to relax Assumption~\ref{assumption:dominance}; more details can be found in Subsection~\ref{subsec:conjecture}. Another possible direction is to consider stochastic bandits with Bayesian priors, moving beyond the assumption of static rewards. The MIR constraint is readily extendable to this case as well. More conceptually, MIR concerns only the expected value. However, in many settings, other factors, such as variance, should also be considered. In such cases, our constraint can be generalized to $\sum_{a_i\in A}\bl p (a_i)\E\left[f(X(a_i))\mid \mI\right]\geq 0$, for some function $f$. This more general formulation can express risk-aversion or risk-seeking behavior, as well as more complex quantities that depend on the reward distribution. Our results from Section~\ref{sec:infinite} hold for any general $f$, as long as it agrees with the stochastic order. Finally, the MIR constraint could be applied to other explore-exploit models, such as Markov Decision Processes.


\subsection{Open Problem: Relaxing Assumption~\ref{assumption:dominance}}\label{subsec:conjecture}
Unfortunately, our results hold only under the stochastic order assumption, Assumption \ref{assumption:dominance}. This limitation is not merely a byproduct of our proof technique; there are examples where $\OGP$ ceases to be optimal due to incorrect planning (see~{\ifnum\Includeappendix=0{the appendix}\else{the proof of Proposition \ref{prop:index with ugeq one} in the appendix}\fi}).

To illustrate why $\OGP$ leads to sub-optimal performance in the absence of stochastic order, consider three crucial aspects of selecting an arm $a_j$ as part of a portfolio (a randomized action). First, its expected value $\mu(a_j)$ determines the likelihood of exploring it. Second, the probability of obtaining a positive value, $\Pr(X_{a_j}>0)$,  influences the exploration of all other arms via Bernoulli trials. Third, the potential for exceptionally high rewards if $X_{a_j}$ turns out to be positive, $\E(\max_{a_l } X_{a_l}\mid X_{a_j}>0 )$, can generate significant regret if not explored. Under Assumption~\ref{assumption:dominance}, ordering the arms in $\below(A)$ by each of these factors yields the same order. Without stochastic order, the exploration index, and whether such an index even exists, is unclear. We conjecture that a different index-like policy is optimal.

\begin{conjecture}
Let $A$ be an arbitrary set of arms, not necessarily satisfying Assumption \ref{assumption:dominance}. For every state $U\subseteq A$, let $f^U$ be the real-valued function, $f^U:\below(U)\rightarrow \mathbb R$ defined as
\[
f^U(a_{j})=\frac{\Pr(X_{a_j} > 0 )\E(\max_{a_l \in {U} } X_{a_l}\mid X_{a_j}>0 )}{\abs{\mu(a_j)}}.
\] 
Define a policy $\pi$ such that for every state $U$, $\pi(U)=\bl p_{i,j}$, where $a_i \in \above(U)$ is an arbitrary arm with a positive expected value and $a_j  \in \argmax_{a_{j'}\in \below(U)} f^U(a_{j'})$. Then, $W(\pi,s_0)=\OPT$. 
\end{conjecture}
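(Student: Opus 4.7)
My plan is to adapt the outline of Theorem~\ref{thm:optimal policy} via strong induction on $|U|$, keeping the Equivalence Lemma (Lemma~\ref{lemma:equivalence body}) as the workhorse since its proof never invokes Assumption~\ref{assumption:dominance}. For any $\mP$-valid policy $\pi$, the Equivalence Lemma pins down $Q(\pi,U)$---the probability of reaching the empty state---and therefore the contribution $Q(\pi,U)\cdot \E[\max_{a\in A}X(a)]$ to $W(\pi,U)$ is policy-independent. All of the policy-dependent reward is concentrated in the partial terminal states $s\subsetneq U$ with $\above(s)=\emptyset$, and the conjecture asserts that $f^U$ is the right index for ordering the below-arms that are explored before we hit such an $s$.

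The first step would be an exchange lemma. Fix a state $U$, an above arm $a_i\in \above(U)$, and two candidates $a_j,a_{j'}\in \below(U)$; compare the $\mP$-valid policies that agree everywhere except at $U$, where one plays $\bl p_{i,j}$ and the other plays $\bl p_{i,j'}$. Using the Bellman recursion and the explicit weights of $\bl p_{i,\cdot}$, the welfare difference expands into a bilinear form in $\bigl(\mu(a_i),|\mu(a_j)|,|\mu(a_{j'})|\bigr)$ and the values $W^{\star}(U\setminus\{a_k\})$ for $k\in\{i,j,j'\}$. The goal is to rearrange this expression so that its sign is controlled by $f^U(a_j)-f^U(a_{j'})$, at which point a standard greedy-exchange argument delivers optimality of the index policy.

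The crux---and the reason this remains a conjecture---is to open $W^{\star}(U\setminus\{a_j\})$ and $W^{\star}(U\setminus\{a_{j'}\})$ enough to expose the index. Without Assumption~\ref{assumption:dominance}, the paper shows via Inequality~\eqref{eq:why hard body} that these value functions are not comparable term-by-term, so one cannot sign the exchange by monotonicity alone. My plan is to strengthen the induction hypothesis: instead of carrying only the scalar $W^{\star}(U)$, I would carry an explicit product-style formula expressing $W^{\star}(U)$ as the policy-independent baseline $Q(\pi,U)\E[\max_{a\in A} X(a)]$ plus a weighted sum indexed by subsets $S\subseteq \below(U)$, with per-arm weights of the form $|\mu(a_j)|/(\mu(a_i)+|\mu(a_j)|)$ inherited from $\bl p_{i,j}$. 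The Equivalence Lemma strongly suggests that such a factorization exists; if each per-arm weight can be pulled into a term involving $\E\bigl[\mathbf{1}_{X_{a_j}>0}\max_{a_l\in U}X_{a_l}\bigr]$, then the exchange collapses to a direct comparison between $f^U(a_j)$ and $f^U(a_{j'})$.

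I expect the main obstacle to be twofold. First, the probability of terminating at a partial state $s\neq \emptyset$ is genuinely policy-dependent (as the small example in Subsection~\ref{subsec:bernoulli trials} illustrates), so any clean product decomposition must track these residual terms carefully alongside the Equivalence-Lemma-pinned term. Second, the index $f^U$ uses the max over $U$ rather than over $A$, so its value changes as arms are consumed; the induction must preserve both the sum structure and the correct conditional-max factors as $U$ shrinks. A natural alternative attack is a Pandora's-box-style coupling~\cite{weitzman1978optimal} that attaches to each below arm $a_j$ an exponential ``cost clock'' with rate $|\mu(a_j)|$ and to each above arm $a_i$ one with rate $\mu(a_i)$; if the induced continuous-time process faithfully reproduces the GMDP dynamics, Weitzman's classical exchange argument would output a Gittins-style index that I would then verify coincides with $f^U$.
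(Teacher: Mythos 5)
There is a fundamental mismatch here: the statement you are addressing is posed in the paper as an \emph{open conjecture}, not a theorem. The paper's only supporting evidence is the base case $\abs{\above(U)}=1$, which is exactly Proposition~\ref{prop:index with ugeq one} (proved by a pairwise-swap argument on the right order of $\below(U)$), together with the observation that the Equivalence Lemma (Lemma~\ref{lemma:equivalence body}) does not use Assumption~\ref{assumption:dominance}. Your submission is likewise not a proof but a plan, and you say so yourself (``the reason this remains a conjecture''). So the honest verdict is that there is a genuine gap: the entire inductive step is missing. Your exchange lemma reduces the comparison of $\bl p_{i,j}$ versus $\bl p_{i,j'}$ at $U$ to signing an expression involving $W^{\star}(U\setminus\{a_j\})$, $W^{\star}(U\setminus\{a_{j'}\})$ and $W^{\star}(U\setminus\{a_i\})$, but this is precisely Inequality~\eqref{eq:why hard body}, and Claim~\ref{claim:ass is not for W} shows these values are not comparable even \emph{with} stochastic dominance; the paper's way around this (the $\alpha$-monotonicity of Proposition~\ref{prop: monotonicity in thm}) is the one place where Assumption~\ref{assumption:dominance} enters, and it is exactly what you no longer have. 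Your proposed fix---a strengthened induction hypothesis giving a product-form decomposition of $W^{\star}(U)$ into the $Q$-pinned term plus per-arm weights that absorb $\E[\mathbf{1}_{X_{a_j}>0}\max_{a_l\in U}X_{a_l}]$---is only asserted to ``likely exist''; no such formula is derived, and deriving it is the whole difficulty.

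Two further points you should flag if you pursue this. First, the index $f^U$ is state-dependent (the max is over the shrinking set $U$), so the conjectured policy is not right-ordered by any single global order on $\below(A)$; the paper's $W$-factorization machinery for the inductive step (the suffix/prefix decompositions behind Propositions~\ref{prop:coef c} and~\ref{prop:coef d}, and Equations~\eqref{eq:w of pi}--\eqref{eq:w of rho}) is written for ordered policies, so it does not transfer verbatim---$Q$ is policy-independent, but the coefficients multiplying the partial-terminal rewards are not, and those are where all the action is. Second, your plan never verifies the base cases; the paper at least establishes Proposition~\ref{prop:index with ugeq one}, which is what actually ``supports the conjecture.'' The Pandora's-box/exponential-clock coupling you sketch is an interesting alternative, but as stated it is speculation: you would need to show the continuous-time embedding reproduces the MIR-constrained transition probabilities $\bl p_{i,j}$ and the partial-terminal reward structure, which is not obvious and is not argued.
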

The base cases we proved for Theorem \ref{thm:optimal policy} support this conjecture. 

\section*{Acknowledgments}
The work of Omer Ben-Porat was supported by the Israel Science Foundation (ISF; Grant No. 3079/24).


{\ifnum\Includeappendix=1{ 
\appendix

\section{Proof Outline for Theorem \ref{thm:optimal policy}}\label{sec:thm1 outline}
In this section, we outline the proof of Theorem~\ref{thm:optimal policy}. 
To allow this section to be self-explanatory, we reiterate some definitions that appear in the body of the paper. We begin with several notations and definitions we use extensively in the proof.
\subsection{Preliminaries}
We denote the set of all states by $\mS=2^A$. A \textit{policy} is a mapping from previous states and actions to a randomized action. Formally, let $\mH$ be the set of histories, $\mH=\cup_{k=0}^K\left(\mS \times \Delta(A)\right)^k$ be a tuple comprising pairs of states and randomized actions taken. A policy $\pi$ is a function $\pi:\mH \times \mS \rightarrow \Delta(A)$. We say that a policy is \textit{MIR} if for every $h\in \mH,s\in \mS$, $\pi(h,s)\in \safe(s)$. From here on, we consider MIR policies solely. Given a policy $\pi$ and a pair $(h,s)$, we let $W(\pi,h,s)$ denote the expected reward of $\pi$ when starting from $s$ after witnessing $h$. Namely,
\begin{align}\label{eq:W elaborated}
W(\pi,h,s) = 
\begin{cases}
R(s) & \textnormal{if }\safe(s)=\emptyset\\
\sum_{a\in s}\pi(h,s)(a)W(\pi,h\oplus(s,a),s\setminus \{a\}) & \textnormal{otherwise}
\end{cases}.
\end{align}
For every state $s$, let $W^\star(s)=\sup_{\pi'}W(\pi ',\emptyset,s)$.\footnote{As we show in Proposition \ref{prop:optimal p valid}, there exists a policy that attains this supremum.} While policies may depend on histories, it often suffices to consider \emph{stationary} policies.
\begin{definition}[Stationary]
A MIR policy $\pi$ is \textit{stationary} if for every two histories $h,h' \in \mH$ and a state $s \in \mS$, $\pi(h,s)=\pi(h',s)$.
\end{definition}
Since there is a finite set of states and the action sets are convex, there exists 
an optimal stationary policy; hence, from here on we address stationary policies solely. When discussing stationary policies, we thus neglect the dependency on $h$, writing $\pi(s)$. For stationary policies, the definition of $W$ is much more intuitive: Given a stationary policy $\pi$ and a state $s$,
\begin{align}\label{eq:w with terminal}
W(\pi,s)=\sum_{\substack{s_t\in \mS:\\ s_t \textnormal{ is terminal}}}\Pr(\pathto{s}{s_t})R(s_t),
\end{align}
where $\pathto{s}{s_t}$ indicates the event that, starting from $s$ and following the actions of $\pi$, the GMDP terminates at $s_t$. 

An additional useful notation is the following. For every state $s\in \mS$, we denote by $Q(\pi,s)$ the probability starting at $s\subseteq A$, following the policy $\pi$ and exploring all arms. Formally,
\[
Q(\pi,s)= \Pr(\pathto{s}{\emptyset}),
\]
Note that $Q$ is defined recursively: Namely, if $\pi(s)=\bl p_{i,j}$ for a non-terminal state $s$, then
\[
Q(\pi,s)=\bl p_{i,j}(a_i)Q(\pi,s\setminus\{a_i\})+\bl p_{i,j}(a_j)Q(\pi,s\setminus\{a_j\}).
\]
It will sometimes be convenient to denote $Q(\pi,\above(s),\below(s))$ for $Q(\pi,s)$, thereby explicitly stating the two distinguished sets of arms.

\subsection{Binary Structure}\label{subsec:bin}
A structural property of the above GMDP is that in every terminal state $s_t$, $\above(s_t)=\emptyset$, or otherwise we could explore more arms; thus, intuitively, the arms in $\above(A)$ provide us ``power'' to explore the arms of $\below(A)$. Following this logic, in every state we should aim to explore arms from $\below(A)$ and not those of $\above(A)$, subject to satisfying the MIR constraint.

Recall the definition of $\bl p_{i,j}$ and $\bl p_{i,i}$ from Equation~\eqref{eq:blp from body} in Subsection~\ref{subsec:optimal GMDP policy}. Next, we define $\mP,\mP'$ such that
\[
\mP\defeq \{\bl p_{i,j}\mid a_i\in \above(A), a_j\in \below(A) \}, \qquad \mP'\defeq\{\bl p_{i,i}\mid a_i\in \above(A) \}.
\]
Notice that $\mP\cup \mP'$ includes $O(K^2)$ actions, while $\safe(s)$ for a state $s$ is generally a convex polytope  with infinitely many actions. Further, in every non-terminate state $s$, ${\safe(s) \cap (\mP \cup \mP') \neq \emptyset}$. Next, we remind the reader of the definition of $\mP$-valid policies.
\begin{definition}[Mirroring Definition~\ref{def:p valid}]
A MIR policy $\pi$ is $\mP$-valid if for every non-terminal state $s\in \mS$,
\begin{itemize}
\item if $\below(s)\neq \emptyset$, then $\pi(s)\in \mP$;
\item else, if $\below(s)= \emptyset$, then $\pi(s)\in \mP'$.
\end{itemize} 
\end{definition}
Observe that $\mP$ is a strict subset of all MIR actions in the state $A$, which incorporate mixes of at most two arms. However, the set of MIR actions $\safe(s)$ for $s\subseteq A$ may include distributions mixing several elements of $A$. Due to the convexity of $W(\pi,s)$ in $\pi(s)$ (see the elaborated representation of $W$ in Equation \refeq{eq:W elaborated}), the GMDP exhibits a nice structural property, as captured by the following Proposition~\ref{prop:optimal p valid}. 
\begin{proposition}[Mirroring Proposition~\ref{prop:main optimal p valid}]\label{prop:optimal p valid}
There exists an optimal policy that is $\mP$-valid.
\end{proposition}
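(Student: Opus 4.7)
The plan is a two-step reduction: first, exhibit an optimal stationary policy whose action at each non-terminal state is a vertex of the polytope $\safe(s)$; second, identify these vertices with $\mP\cup\mP'$ and promote any $\mP'$ choice in a state with $\below(s)\neq\emptyset$ to a $\mP$ choice via a monotonicity argument. The existence of an optimal stationary policy follows from standard finite-state MDP arguments, since the state set is finite, each $\safe(s)$ is compact and convex, and $W(\pi,s)$ is continuous in $\pi(s)$.

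For the vertex step, $\safe(s)\subseteq\Delta(s)$ is a compact convex polytope of dimension at most $|s|-1$, cut out of the probability simplex by the single MIR half-space $\sum_a \bl p(a)\mu(a)\geq 0$. Specializing~\eqref{eq:W elaborated} to stationary policies at non-terminal $s$ gives $W(\pi,s)=\sum_{a\in s}\pi(s)(a)W(\pi,s\setminus\{a\})$, which is a linear function of $\pi(s)$ when the policy is held fixed on strict descendants of $s$. Starting from any optimal stationary $\pi^\star$, I would process states bottom-up in $|s|$. At each $s$, decompose $\pi^\star(s)$ as a convex combination of vertices of $\safe(s)$; by linearity, replacing $\pi^\star(s)$ by some vertex $v$ (leaving $\pi^\star$ unchanged elsewhere) yields a policy whose value at $s$ is at least $W(\pi^\star,s)=W^\star(s)$, and optimality of $W^\star$ forces equality, so we may substitute $\pi^\star(s):=v$. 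Bottom-up processing ensures values at already-processed descendants are unaffected. To identify the vertices, a standard active-set calculation shows each vertex has $|s|-1$ tight independent inequalities beyond the normalization equality. If the MIR inequality is slack, then $|s|-1$ non-negativity constraints are tight, giving a pure portfolio $\bl p_{i,i}$; feasibility forces $a_i\in\above(s)$, so $\bl p_{i,i}\in\mP'$. If the MIR inequality is tight, then $|s|-2$ non-negativity constraints are tight, and positivity of both active weights forces the two supporting arms to have opposite-sign means, giving $\bl p_{i,j}\in\mP$ with $a_i\in\above(s),\,a_j\in\below(s)$. Hence every vertex of $\safe(s)$ lies in $\mP\cup\mP'$, and after processing, $\pi^\star$ is optimal with $\pi^\star(s)\in\mP\cup\mP'$ at every non-terminal state.

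The final step, and the main obstacle, is to replace any remaining $\mP'$ action by a $\mP$ action whenever $\below(s)\neq\emptyset$. A direct computation shows that at such a state, the action $\bl p_{i,i}$ yields value $V_i\defeq W(\pi^\star,s\setminus\{a_i\})$ while $\bl p_{i,j}$ yields $\bl p_{i,j}(a_i)V_i+\bl p_{i,j}(a_j)V_j$ with $V_j\defeq W(\pi^\star,s\setminus\{a_j\})$; the swap is non-harmful iff $V_j\geq V_i$. The plan is to establish a monotonicity lemma to close this gap: for any stationary policy $\pi$ taking values in $\mP\cup\mP'$, any state $s$, and any $a_i\in\above(s),\,a_j\in\below(s)$, $W(\pi,s\setminus\{a_j\})\geq W(\pi,s\setminus\{a_i\})$. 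Intuitively, exploring a $\below$ arm preserves the pool of $\above$ arms available as mixing partners for future exploration, while exploring an $\above$ arm strictly shrinks this pool. I would prove the lemma by induction on $|s|$ via a coupling argument: any trajectory starting at $s\setminus\{a_i\}$ is mirrored from $s\setminus\{a_j\}$ by using $a_i$ as the $\above$-partner in subsequent $\bl p_{i,\cdot}$ mixes, arriving at an at-least-as-good terminal state in the sense that its ``some explored arm has positive realized reward'' event contains the mirrored one. The delicate part is handling the dependence of the terminal reward on which arms' values were revealed along the way, and closing the induction cleanly without invoking Assumption~\ref{assumption:dominance}, since the proposition is claimed in full generality.
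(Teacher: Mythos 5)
Your first two steps are, in substance, the paper's own proof: the paper fixes a non-terminal state $U$, writes the one-step choice as a linear program over $\safe(U)$ with objective $\sum_{a\in U}\bl p(a)W^{\star}(U\setminus\{a\})$, observes that every feasible portfolio is a convex combination of the portfolios $\bl p_{i,j}$ and $\bl p_{i,i}$ restricted to $U$, and concludes by linearity that the optimum is attained at a single element of $\mP\cup\mP'$ --- which is exactly your vertex enumeration of $\safe(s)$ combined with the bottom-up substitution argument.

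The divergence is your third step, and that is where the genuine gap in your write-up lies. You correctly note that the vertex argument alone only produces actions in $\mP\cup\mP'$, whereas Definition~\ref{def:p valid} demands an action from $\mP$ whenever $\below(s)\neq\emptyset$, and that closing this requires a monotonicity statement of the form $W^{\star}(s\setminus\{a_j\})\geq W^{\star}(s\setminus\{a_i\})$ for $a_j\in\below(s)$, $a_i\in\above(s)$. But in your proposal this lemma is only asserted, with a coupling sketch and an explicit admission that the delicate part is unresolved; as written it does not go through. The "mirrored" trajectories from $s\setminus\{a_j\}$ and $s\setminus\{a_i\}$ evolve under different transition kernels (the weights of $\bl p_{i',j'}$ depend on the particular means), so an event-containment argument about terminal states does not by itself compare the two expected values; moreover, there is no pathwise containment of explored sets --- in the branch where the policy from $s\setminus\{a_i\}$ explores $a_j$ it retains its $\above$ partner as future fuel, while a naive mimic from $s\setminus\{a_j\}$ burns an $\above$ arm, and the policy from $s\setminus\{a_i\}$ may go on to explore below arms the mimic misses. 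Claim~\ref{claim:ass is not for W} in the paper is a warning that such monotonicity intuitions about $W^{\star}$ require care. It is worth knowing, though, that the paper's proof of this proposition stops exactly where your step 2 stops: it establishes membership in $\mP\cup\mP'$ and does not separately argue that a mixed action can always be preferred to a pure $\bl p_{i,i}$ when $\below(s)\neq\emptyset$. So your steps 1--2 already reproduce everything the paper proves here, and your step 3 is an honest flag of a point the paper leaves implicit --- but until that monotonicity lemma is actually proved (without Assumption~\ref{assumption:dominance}), your proposal is not a complete proof of the statement as literally defined.
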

The proof of Proposition \ref{prop:optimal p valid} appears in Section~\ref{sec:aux}. 
Due to Proposition \ref{prop:optimal p valid}, we shall focus on $\mP$-valid policies. Such policies are easy to visualize using trees, as we exemplify next.\footnote{In Figure~\ref{example with normal} we illustrated the optimal policy using a graph that is not a tree. However, a tree structure  serves better the presentation of our technical statements.}
\begin{example}\label{example with four}
We reconsider Example~\ref{example with normal}, but neglect the actual distributions (as we only care about the expected values). Let $A=\{a_1,a_2,a_3,a_4\}$, with $\above(A)=\{a_1,a_2\}$ and $\below(A)=\{a_3,a_4\}$. Consider the tree description in Figure \ref{fig:tree example}. The root of the tree is the set of all arms. At the root, the policy picks $\bl p_{1,3}$. The outgoing left edge represents the case the realized action is $a_1$, which happens w.p. $\bl p_{1,3}(a_1)$. In such a case, the new state is $\{a_2,a_3,a_4\}$. With the remaining probability, $\bl p_{1,3}(a_3)$, the new state will be $\{a_1,a_2,a_4\}$. Leaves of the tree are terminal states, where no further exploration could be done. For instance, in the leftmost leaf, $\{a_3,a_4\}$, the only arms explored are $\{a_1,a_2\}$. The two highlighted nodes represent the same state. Since the presented policy is $\mP$-valid, it is stationary; hence, the policy acts exactly the same in these two nodes and their sub-trees.
\end{example}
Notice that the tree in Figure \ref{fig:tree example} represents only the \textit{on-path} states, i.e., states that are reachable with positive probability, while policies are functions from the entire space of states, including off-path states; thus, two different policies can be described using the same tree. Nevertheless, the tree structure is convenient and will be used extensively in our analysis. When we define a policy using a tree, we shall also describe its behavior at \textit{off-path} states.

The policy exemplified in Figure \ref{fig:tree example} has an additional combinatorial property: In every state $s$, it takes an action according to some order of the arms. This property is manifested in the following Definitions \ref{def:right ordered} and \ref{def:left ordered}.
\begin{definition}[Right-ordered policy]\label{def:right ordered}
A $\mP$-valid policy $\pi$ is right-ordered if there exist a bijection $\sigr_\pi: \below(A)\rightarrow [\abs{\below(A)}]$ such that in every state $s$ with $\below(s) \neq \emptyset$, $\pi(s)=\bl p_{i, {j^*}}$ where $a_i \in \above(s)$ and $a_{j^*} = \argmin_{a_j \in \below(s)} \sigr_\pi(a_j)$.
\end{definition}
\begin{definition}[Left-ordered policy]\label{def:left ordered}
A $\mP$-valid policy $\pi$ is left-ordered if there exist a bijection $\sigl_\pi: \above(A)\rightarrow [\abs{\above(A)}]$ such that in every state $s$ with $\below(s) \neq \emptyset$, $\pi(s)=\bl p_{{i^*}, j}$ where $a_j \in \below(s)$ and $a_{i^*} = \argmin_{a_i \in \above(s)} \sigl_\pi(a_i)$.
\end{definition}
In addition, we say that a policy is \textit{ordered} if it is right-ordered and left-ordered. To illustrate, observe the example in Figure \ref{fig:tree example}. The tree depicts an ordered policy, with $\sigl=(a_1,a_2)$ and $\sigr=(a_3,a_4)$. Notice that ordered policies are well-defined for off-path states.

\begin{figure}[t]
\centering
\includegraphics[scale=0.9]{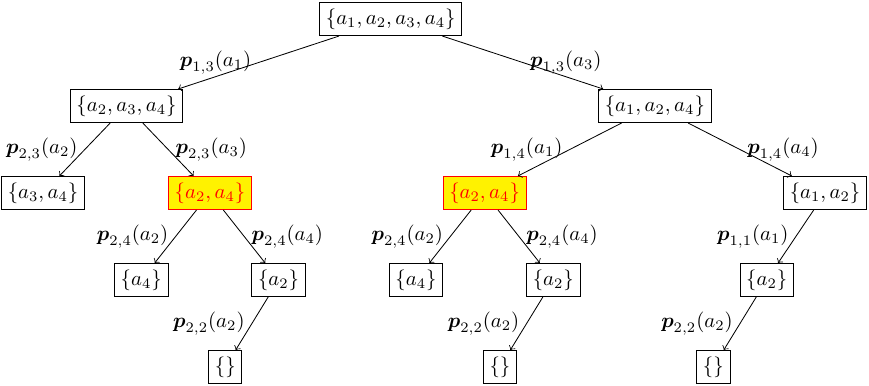}
\caption{The policy described in Example \ref{example with four}.  
Every node represents a state (the mapping is onto, but not one-to-one). Outgoing left edges imply the coin flips resulted in an arm from $\above(v)$, and outgoing right edges imply an arm from $\below(v)$. Leaves correspond to terminal states, where no action could be taken. \label{fig:tree example}
}
\end{figure}

\subsection{Proof Overview}\label{subsec:results}
We are ready to prove Theorem~\ref{thm:optimal policy}. The main tool in our analysis is Lemma \ref{lemma:equivalence}. Lemma \ref{lemma:equivalence} reveals a rather surprising feature of $Q$: $Q$ is policy independent. 
\begin{lemma}[Mirroring Lemma~\ref{lemma:equivalence body}]\label{lemma:equivalence}
For every two $\mP$-valid policies $\pi,\rho$ and every state $s\in \mS$, it holds that $Q(\pi,s)=Q(\rho,s)$.
\end{lemma}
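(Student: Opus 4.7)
The plan is induction on $|s|$. The base cases are immediate: if $s=\emptyset$ then $Q(\pi,s)=1$; if $\above(s)=\emptyset$ and $s\neq\emptyset$ then $s$ is terminal (since $\safe(s)=\emptyset$), so $Q(\pi,s)=0$; and if $\below(s)=\emptyset$ while $\above(s)\neq\emptyset$, every $\mP$-valid policy deterministically plays $\bl p_{i,i}$ at each successive state and descends to $\emptyset$, so $Q(\pi,s)=1$. All three values are policy-independent.

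For the inductive step, consider a state $s$ with both $U:=\above(s)$ and $D:=\below(s)$ non-empty, and assume by the IH that $Q(\pi,s')$ is policy-independent on every $s'\subsetneq s$; denote this common value by $\hat Q(s')$. Every $\mP$-valid action at $s$ has the form $\bl p_{i,j}$ with $a_i\in U,\,a_j\in D$, and the recursion gives
\[
Q(\pi,s) = \bl p_{i,j}(a_i)\,\hat Q(s\setminus\{a_i\}) + \bl p_{i,j}(a_j)\,\hat Q(s\setminus\{a_j\}).
\]
It therefore suffices to show this right-hand side is the same for every pair $(i,j)\in U\times D$. I would triangulate this via two sub-identities: (a) invariance in $j$ with $i$ fixed, and (b) invariance in $i$ with $j$ fixed; the general case follows by chaining.

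For (a), I would subtract the two candidate values at $(i,j)$ and $(i,j')$, factor out the common nonzero factor $\mu(a_i)/[(\mu(a_i)-\mu(a_j))(\mu(a_i)-\mu(a_{j'}))]$ coming from $\bl p_{i,j}(a_i)-\bl p_{i,j'}(a_i)$, and use the IH to expand $\hat Q(U,D\setminus\{a_j\})$ via $\bl p_{i,j'}$ and $\hat Q(U,D\setminus\{a_{j'}\})$ via $\bl p_{i,j}$. The common bridge term $\hat Q(U,D\setminus\{a_j,a_{j'}\})$ then cancels and the required equality reduces to
\[
(\mu(a_{j'})-\mu(a_j))\,\hat Q(U\setminus\{a_i\},D) = \mu(a_{j'})\,\hat Q(U\setminus\{a_i\},D\setminus\{a_j\}) - \mu(a_j)\,\hat Q(U\setminus\{a_i\},D\setminus\{a_{j'}\}).
\]
If $U\setminus\{a_i\}=\emptyset$, all three $\hat Q$ values vanish (each corresponds to a non-empty state with $\above=\emptyset$). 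Otherwise, pick any $a_{i'}\in U\setminus\{a_i\}$, expand $\hat Q(U\setminus\{a_i\},D)$ via both $\bl p_{i',j}$ and $\bl p_{i',j'}$, multiply the two expansions by $\mu(a_{j'})$ and $\mu(a_j)$ respectively, and subtract; the term in $\hat Q(U\setminus\{a_i,a_{i'}\},D)$ cancels, leaving a factor $\mu(a_{i'})(\mu(a_{j'})-\mu(a_j))$ on the left, and after dividing by $\mu(a_{i'})>0$ the claim follows. Sub-identity (b) is handled symmetrically by expanding $\hat Q(U,D\setminus\{a_j\})$ through $\bl p_{i,\tilde j}$ and $\bl p_{i',\tilde j}$ for some $a_{\tilde j}\in D\setminus\{a_j\}$, with the $|D|=1$ case resolved directly using $\hat Q(U',\emptyset)=1$ for non-empty $U'$.

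The main obstacle is not conceptual but bookkeeping: the IH allows \emph{any} auxiliary action at a smaller state, and the argument collapses only if one selects actions that produce identical bridge terms (such as $\hat Q(U,D\setminus\{a_j,a_{j'}\})$ and $\hat Q(U\setminus\{a_i,a_{i'}\},D)$) in the two expansions being compared. Once the right pairing is chosen, the derivation is a direct manipulation of the explicit formulas $\bl p_{i,j}(a_i)=-\mu(a_j)/(\mu(a_i)-\mu(a_j))$ and $\bl p_{i,j}(a_j)=\mu(a_i)/(\mu(a_i)-\mu(a_j))$, and Assumption~\ref{assumption:dominance} is never invoked, consistent with the paper's remark that the Equivalence Lemma is independent of the stochastic order.
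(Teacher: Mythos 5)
Your proposal is correct, and it takes a genuinely different---and substantially lighter---route than the paper's. The paper proves the Equivalence Lemma by a two-dimensional induction on $(\abs{\above(s)},\abs{\below(s)})$ with four nontrivial base cases (Propositions~\ref{prop:case of one strong}--\ref{prop:case of two}, which involve lengthy telescoping-product manipulations and the ``triplet'' claims), and its inductive step argues by contradiction via ordered policies, factorizing $Q$ along suffixes with coefficients $c^\pi_Z,d^\pi_Z$ whose identification requires the tree arguments of Propositions~\ref{prop:coef c} and~\ref{prop:coef d}; the heavy base cases are forced precisely because that factorization produces terms (e.g.\ $Q(\pi,\ug,\{a_{j^*},a_{j'}\})$) that are not strictly smaller than $U$. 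You sidestep all of this: a single induction on $\abs{s}$ suffices, because once policy-independence holds on all proper subsets, $Q(\pi,s)$ depends only on the first action $\bl p_{i,j}$, so the lemma reduces to the exchange identities $F(i,j)=F(i,j')$ and $F(i,j)=F(i',j)$ for $F(i,j)=\bl p_{i,j}(a_i)\hat Q(s\setminus\{a_i\})+\bl p_{i,j}(a_j)\hat Q(s\setminus\{a_j\})$, and each of these follows from one more level of expansion in which the bridge terms $\hat Q(s\setminus\{a_j,a_{j'}\})$, resp.\ $\hat Q(s\setminus\{a_i,a_{i'}\})$, cancel. I checked the algebra: the reduction to $(\mu(a_{j'})-\mu(a_j))\hat Q(U\setminus\{a_i\},D)=\mu(a_{j'})\hat Q(U\setminus\{a_i\},D\setminus\{a_j\})-\mu(a_j)\hat Q(U\setminus\{a_i\},D\setminus\{a_{j'}\})$ and its resolution via an auxiliary $a_{i'}$ both hold, with the minor caveat that the ``multiply by $\mu(a_{j'})$ and $\mu(a_j)$'' step must be applied to the expansions with denominators cleared (which is evidently what you intend, given the factor $\mu(a_{i'})(\mu(a_{j'})-\mu(a_j))$ you report on the left); the degenerate cases $U\setminus\{a_i\}=\emptyset$ and $\abs{D}=1$ are handled exactly as you say, and, like the paper's proof, you never use Assumption~\ref{assumption:dominance}. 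What the paper's machinery buys is reuse: the ordered-policy factorization and the suffix coefficients are exactly what reappears in the proof of Theorem~\ref{thm:holy grail} (e.g.\ Claim~\ref{claim:f are equal}), so the detour builds tools needed later, whereas your argument establishes the lemma itself far more economically.
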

The proof of Lemma \ref{lemma:equivalence} appears in~{\ifnum\Includeappendix=0{the appendix}\else{Section \ref{sec:proof of lemma}}\fi}. We stress that this lemma holds regardless of Assumption~\ref{assumption:dominance}. Next, we leverage Lemma \ref{lemma:equivalence} to prove the main technical result of the paper.
\begin{theorem}
\label{thm:holy grail}
Let $\pi^\star$ be a right-ordered, $\mP$-valid policy with $\sigr_{\pi^\star}$ ordered in decreasing expected value. Under Assumption \ref{assumption:dominance}, for every state $s\in \mS=2^A$, it holds that $W(\pi^\star,s)=W^\star(s)$.
\end{theorem}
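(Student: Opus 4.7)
The plan is to prove $W(\pi^\star,s)=W^\star(s)$ by strong induction on $|s|$, with inductive hypothesis $W(\pi^\star,s')=W^\star(s')$ for every $s'\subsetneq s$. The base cases are (i) terminal $s$ (where $\above(s)=\emptyset$), for which both sides equal $R(s)$ by definition, and (ii) $\below(s)=\emptyset$ with $\above(s)\neq\emptyset$, where every $\mP$-valid action is some $\bl p_{i,i}$ and the process deterministically exhausts $\above(s)$, terminating at $\emptyset$ with common value $V_\emptyset\defeq\E[(\max_a X(a))^+]$ regardless of the order. For the inductive step on a non-terminal $s$ with both $\above(s),\below(s)\neq\emptyset$, Proposition~\ref{prop:optimal p valid} restricts attention to $\mP$-valid competitors $\pi$, and the inductive hypothesis lets me take $\pi$ to agree with $\pi^\star$ on every $s'\subsetneq s$. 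The task collapses to the one-step inequality
\begin{align*}
\bl p_{i,j^*}(a_i)\,W^\star(s\setminus\{a_i\}) &+ \bl p_{i,j^*}(a_{j^*})\,W^\star(s\setminus\{a_{j^*}\}) \\
&\geq\bl p_{i',\tilde j}(a_{i'})\,W^\star(s\setminus\{a_{i'}\})+\bl p_{i',\tilde j}(a_{\tilde j})\,W^\star(s\setminus\{a_{\tilde j}\}),
\end{align*}
valid for every admissible alternative action $\bl p_{i',\tilde j}$, where $a_{j^*}=\argmax_{d\in\below(s)}\mu(d)$.

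To prove this one-step inequality, I first apply Lemma~\ref{lemma:equivalence} to decompose $W^\star(s')=Q(s')\,V_\emptyset+N(s')$, where $Q(s')$ is the policy-independent probability of reaching $\emptyset$ from $s'$ and $N(s')$ aggregates non-empty terminal rewards. By the recursive identity for $Q$ together with the Equivalence Lemma applied at $s$ itself, the $Q\,V_\emptyset$-components on both sides of the one-step inequality aggregate to the common value $Q(s)\,V_\emptyset$ and hence cancel. This reduces the one-step inequality to the same statement with $N$ in place of $W^\star$ throughout, eliminating the "reach-$\emptyset$" part of each branch and leaving only the contribution from paths that get stuck at non-empty terminals.

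The main obstacle, as the excerpt explicitly flags, is that stochastic dominance does not yield the termwise comparison $W^\star(s\setminus\{a_{j^*}\})\geq W^\star(s\setminus\{a_{\tilde j}\})$, so the $N$-version of the inequality cannot be shown summand by summand. My plan is to unroll both $N(s\setminus\{a_{j^*}\})$ and $N(s\setminus\{a_{\tilde j}\})$ one additional level of recursion under $\pi^\star$ (valid by the inductive hypothesis), exposing the common sub-state $s\setminus\{a_i,a_{j^*},a_{\tilde j}\}$ on both branches and coupling the residual terms by an exchange between the roles of $a_{j^*}$ and $a_{\tilde j}$ in the subsequent exploration. The sign of the remaining expression is controlled by two ingredients used in tandem. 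First, Assumption~\ref{assumption:dominance} yields the monotonicity $\E[\max_a X(a)\cdot\ind_{X(a_{j^*})>0}]\geq\E[\max_a X(a)\cdot\ind_{X(a_{\tilde j})>0}]$: the arm with larger $\mu$ carries more Bernoulli-trial potential for the realized maximum, a fact which follows from the independence of the $X(a)$'s together with stochastic dominance. Second, a monotonicity-in-available-arms property of $W^\star$ aligns the surviving $Q$-factors (produced by the Equivalence Lemma) across matched sub-states, so that the probabilistic gains from stochastic dominance dominate the weight loss $\lambda'\leq\lambda$ where $\lambda=\bl p_{i,j^*}(a_{j^*})$ and $\lambda'=\bl p_{i,\tilde j}(a_{\tilde j})$. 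Combining these ingredients yields the one-step inequality and closes the induction.
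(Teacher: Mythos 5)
Your overall scaffolding matches the paper's: induction, restriction to $\mP$-valid actions via Proposition~\ref{prop:optimal p valid}, reduction to a one-step comparison at the root with $\pi^\star$ continuations, and use of the Equivalence Lemma to strip out the contribution of the empty terminal state (your cancellation of the $Q(s)V_\emptyset$ component is legitimate and consistent with how the paper exploits policy-independence of $Q$). Your ``ingredient one'' inequality is also true under independence and first-order dominance. The problem is that everything after that—exactly the crux of the theorem—is asserted rather than proved, and the specific mechanism you sketch does not work as stated. After playing $\bl p_{i,j^\star}$ versus $\bl p_{i',\tilde j}$ at $s$, the two branches do \emph{not} re-synchronize at $s\setminus\{a_i,a_{j^\star},a_{\tilde j}\}$ after one more level of unrolling: under $\pi^\star$ the $\tilde j$-branch next attacks $a_{j^\star}$ while the $j^\star$-branch attacks the next arm in the mean order (which need not be $a_{\tilde j}$), each step may instead consume a superior arm, and if $i'\neq i$ the claimed common sub-state is not even well defined (the paper isolates the $i'\neq i$ case as a separate Step~2, where the Equivalence Lemma gives exact equality). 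The branches only recouple after \emph{both} distinguished inferior arms are gone, which forces the paper's factorization over all prefixes $\psi$ of explored inferior arms and suffixes $Z$ of remaining superior arms, with the coefficients $f^\pi_Z$ identified through the Equivalence Lemma.

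Moreover, your closing appeal to ``a monotonicity-in-available-arms property of $W^\star$'' to align the surviving $Q$-factors is unsubstantiated, and the natural candidate is false: Claim~\ref{claim:ass is not for W} exhibits instances satisfying Assumption~\ref{assumption:dominance} with $W^\star(s\setminus\{a_{j^\star}\})<W^\star(s\setminus\{a_{\tilde j}\})$, which is precisely why no termwise or ``aligned-factor'' comparison can close the argument. The paper instead introduces the conditional quantities $\alpha^\pi_l,\alpha^\rho_l$, proves two monotonicity properties for them (Proposition~\ref{prop: monotonicity in thm}, the only place Assumption~\ref{assumption:dominance} enters), and then runs a telescoping (Abel-summation style) downward induction over nested events $E^\pi_{l+1}\subseteq E^\pi_l$ to get $W(\pi,U)-W(\rho,U)\geq 0$ even though individual summands can be negative. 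None of this machinery, nor a substitute for it, appears in your proposal; as written, the step ``the probabilistic gains from stochastic dominance dominate the weight loss'' is exactly the gap the theorem requires you to fill.
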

In particular, Theorem \ref{thm:holy grail} implies that $W(\SEGB,s_0)=W^\star(s_0)$ and that Theorem~\ref{thm:optimal policy} holds,  since $\SEGB$ is right-ordered in decreasing expected value. The formal proof of Theorem \ref{thm:holy grail} is relegated to~{\ifnum\Includeappendix=0{the appendix}\else{Section~\ref{sec:proof of thm}}\fi}. 

\section{Proof of Lemma \ref{lemma:equivalence}}\label{sec:proof of lemma}
\begin{proofof}{Lemma \ref{lemma:equivalence}}
We prove the lemma by a two-dimensional induction on the number of arms in $\above(s)$ and $\below(s)$. We prove four base cases in Section \ref{sec:base for lemma}:
\begin{itemize}
\item $\abs{\above(s)}=1$ and $\abs{\below(s)}\geq 2$ (Proposition \ref{prop:case of one strong}).
\item $\abs{\above(s)}\geq 2 $ and $\abs{\below(s)} = 1$ (Proposition \ref{prop:case of one}).
\item $\abs{\above(s)}\geq 2$ and $\abs{\below(s)} = 2$ (Proposition \ref{prop:case of two strong}).
\item $\abs{\above(s)}=2$ and $\abs{\below(s)}\geq 2$ (Proposition \ref{prop:case of two}).
\end{itemize}
While the first two cases are almost immediate, the other two are technical and require careful attention. Next, assume the statement holds for all states $s\in \mS$ such that $\abs{\above(s)}\leq K_1$, $\abs{\below(s)}\leq K_2$ and $\abs{\above(s)}+\abs{\below(s)}< K_1+K_2$.

Let $U\in\mS$ denote a state with $\abs{\above(U)}=K_1$ and $\abs{\below(U)}=K_2$. For abbreviation, let $\ug\defeq\above(U),\ul\defeq\below(U)$. Further, define $Q^*(U) = \sup_{\pi} Q(\pi,U)$, \footnote{This supremum is attained since there are only finitely many $\mP$-valid policies.} and for every $a_i \in \ug, a_j\in \ul$ let 
\[
Q^*_{i,j}(\ug,\ul)\defeq\bl p_{i,j}(a_j)Q^*(\ug,\ul\setminus \{a_j\}) +\bl p_{i,j}(a_i)Q^*(\ug\setminus\{a_i\},\ul).
\]
Next, let $(a_{i^*},a_{j^*})\in \argmax_{a_i\in \ug,a_j\in \ul}Q^*_{i,j}(\ug,\ul)$, and assume by contradiction that there exists a pair $(a_{\tilde i}, a_{\tilde j})$ such that 
\begin{equation}\label{eq:contradiction of lemma}
Q^*_{{i^*},{j^*}}(U) > Q^*_{{\tilde i}, {\tilde j}}(U).
\end{equation}
\paragraph{Step 1} Fix arbitrary $a_{i'}$ and $a_{j'}$ such that  $a_{i'} \in \ug$ and $a_{j'} \in \ul$. We will show that 
\begin{equation}\label{eq:step 1 goal}
Q^*_{{i'},{j^*}}(U)=Q^*_{{i'},{j'}}(U).
\end{equation}
We define the ordered policy $\pi$ such that $\sigr_\pi=(a_{i'},\dots)$, i.e., $\sigr_\pi$ first explores $a_{i'}$ and then the rest of the arms of $\ug$ in some arbitrary order; and, $\sigl_\pi=(a_{j^*},a_{j'},\dots)$. In addition, we define $\rho$ such that $\sigl_{\rho}=\sigl_\pi$, and $\sigr_\rho=(a_{j'},a_{j^*},\dots)$. Due to the inductive assumption, we have 
\begin{align}\label{eq:policies suffice}
Q^*_{{i'},{j^*}}(U) &=  \bl p_{{i'},{j^*}}(a_{j^*})Q^*(\ug,\ul\setminus \{a_{j^*}\}) +\bl p_{{i'},{j^*}}(a_{i'})Q^*(\ug\setminus \{a_{i'}\},\ul) \nonumber \\
&= \bl p_{{i'},{j^*}}(a_{j^*})Q(\pi,\ug,\ul\setminus \{a_{j^*}\})+\bl p_{{i'},{j^*}}(a_{i'})Q(\pi,\ug\setminus \{a_{i'}\},\ul)\\
&=Q(\pi,U).\nonumber
\end{align}

Similarly, $Q^*_{{i'},{j'}}(U)  = Q(\rho,U)$; hence, proving that $Q(\pi,U) =Q(\rho,U)$ entails Equality (\ref{eq:step 1 goal}). Next, let $\suff(\sigl_\pi)$ be the set of all non-empty suffices of $\sigl_\pi$. Being left-ordered suggests that on-path\footnote{These are terminal states that $\pi$ reaches to with positive probability.} terminal states with all arms of $\ul$ explored of $\pi$ are of the form $(Z,\emptyset)$, where $Z\in \suff(\sigl_\pi)$. Next, we factor $Q(\pi,U)$ recursively as follows: We factor $Q(\pi,U)$ into two terms, like in Equation \refeq{eq:policies suffice}. Following, for each term obtained, we ask whether the corresponding state excludes $\{a_{j^*},a_{j'}\}$. If the answer is yes, we stop factorizing it, and move to the other terms. We do this recursively, until we cannot factor anymore, or we reached a terminal state. Using this factorizing process, we have \footnote{We stop factorizing if both $a_{j^*},a_{j'}$ were observed; thus, $Z$ will never be the empty set.} 
\begin{align*}
Q(\pi,U) &= \alpha \cdot Q(\pi,\emptyset,\ul)+\beta\cdot Q(\pi,\emptyset,\ul\setminus \{a_{j^*} \}) + \sum_{Z \in \suff(\sigl_\pi)} c^\pi_Z \cdot Q(\pi,Z,\ul\setminus \{a_{j^*},a_{j'}\}),
\end{align*}
for $\alpha=\Pr(\pathto{s}{(\emptyset,\ul)})$ and $\beta=\Pr(\pathto{s}{(\emptyset,\ul\setminus\{a_{j^*}\})})$ such that $\alpha+\beta+\sum_{Z\in \suff(\sigl_\pi)}c^\pi_Z =1$ and $\alpha,\beta,c^\pi_Z\in [0,1]$ for every $Z \in \suff(\sigl_\pi)$. In this representation, $\alpha$ is the probability of reaching the terminal $(\emptyset,\ul)$, while $\beta$ is the probability of reaching the terminal state $(\emptyset,\ul\setminus\{a_{j^*}\})$. For these two terminal states, we know that $Q^*(\emptyset,\ul)= Q^*(\emptyset,\ul\setminus \{a_{j^*} \})=0$; hence,
\begin{align}\label{eq:pi j^* to j}
Q(\pi,U) &= \sum_{Z \in \suff(\sigl_\pi)} c^\pi_Z \cdot Q(\pi,Z,\ul\setminus \{a_{j^*},a_{j'}\}).
\end{align}
Following the same factorization process for $\rho$, we get
\begin{align}\label{eq:rho j^* to j}
Q(\rho,U) &= \sum_{Z \in \suff(\sigl_\rho)} c^\rho_Z \cdot Q(\rho,Z,\ul\setminus \{a_{j^*},a_{j'}\}).
\end{align}
Next, we want to simplify the coefficients $\left(c^\pi_Z\right)_Z$. We remark that $c^\pi_Z$ is not simply the probability of reaching $(Z,\ul\setminus \{a_{j^*},a_{j'}\})$ from $s$, i.e., $\Pr(\pathto{s}{(Z,\ul\setminus \{a_{j^*},a_{j'}\})})$. To clarify, consider a strict suffix $Z$, $1\leq \abs{Z}< \abs{\above(A)}$, and the suffix $Z'=Z \cup\{a_l\}$ for the minimal element $a_l \in \ug \setminus Z$ according to $\sigl_\pi$,i.e., $a_l = \argmin_{a\in \ug \setminus Z}\sigl_\pi(a)$. In the factorization process that produced Equation (\ref{eq:pi j^* to j}), once we got the term $Q(\pi,Z',\ul\setminus \{a_{j^*},a_{j'}\})$, we stopped factorizing any further; thus, $c^\pi_Z$ does not include the probability of reaching a node associated with $(Z',\ul\setminus \{a_{j^*},a_{j'}\})$ and then following the left edge to $(Z,\ul\setminus \{a_{j^*},a_{j'}\})$. However, this probability is taken into account in $\Pr(\pathto{s}{(Z,\ul\setminus \{a_{j^*},a_{j'}\})})$. Rather, $c^\pi_Z$ is the probability of reaching any node $v$ in the tree induced by $\pi$ with the following property: $v$ represents the state $(Z,\ul\setminus \{a_{j^*},a_{j'}\})$, while $a_{j'}$ does not belong to the state represented by the parent of $v$.  In the tree interpretation, $v$ should also be a \textit{right child of its parent} (for instance, the left highlighted node in the tree in Figure \ref{fig:tree example}). The following Proposition \ref{prop:coef c} describes $\left(c^\pi_Z\right)_Z$ in terms of $Q$.
\begin{proposition}\label{prop:coef c}
For every $Z\in \suff(\sigl_\pi)$, let $a_{i(Z)} = \argmin_{a_i\in Z} \sigl_\pi(a_i)$. It holds that
\[
c^\pi_Z = Q(\pi,\ug\setminus Z \cup \{a_{i(Z)}\}, \{a_{j^*},a_{j'}\})-Q(\pi,\ug\setminus Z, \{a_{j^*},a_{j'}\}).
\]
\end{proposition}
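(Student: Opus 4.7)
My plan is to prove the formula in two steps: first establish the cumulative identity
\[
\sum_{Z' \in \suff(\sigl_\pi),\, Z' \supseteq Z} c^\pi_{Z'} \;=\; Q(\pi,\, \ug\setminus Z \cup \{a_{i(Z)}\},\, \{a_{j^*}, a_{j'}\}),
\]
and then recover $c^\pi_Z$ by telescoping. By the construction of the factorization that produced Equation~\eqref{eq:pi j^* to j}, the left-hand sum equals the probability that, starting from $U=(\ug,\ul)$ under $\pi$, both $a_{j^*}$ and $a_{j'}$ are eventually removed and, at the moment $a_{j'}$ is removed, the current $\ug$-set is a suffix $Z'\supseteq Z$ (equivalently, at most $|\ug|-|Z|$ arms of $\ug$ have been removed). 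Because $\pi$'s $\ul$-ordering begins with $a_{j^*}, a_{j'}$, no arm of $\ul\setminus\{a_{j^*},a_{j'}\}$ can be selected by $\pi$ before both $a_{j^*}$ and $a_{j'}$ have been removed; hence the dynamics from $U$ up to the removal of $a_{j'}$ coincide with those of the auxiliary process A starting at $(\ug,\{a_{j^*},a_{j'}\})$.

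I couple process A with the process B starting at $(S_Z,\{a_{j^*},a_{j'}\})$, where $S_Z := \ug\setminus Z \cup \{a_{i(Z)}\}$, using shared coin flips (the same Bernoulli threshold at each step). The key structural fact is that $S_Z$ is exactly the $\sigl_\pi$-prefix of length $|\ug\setminus Z|+1$, consisting of the prefix $\ug\setminus Z$ together with the next element $a_{i(Z)}$. Consequently, throughout the coupled run the $\sigl_\pi$-minimum of the current $\ug$-set agrees in the two processes, so their $\bl p_{i,j}$ actions are identical at every step. The only possible point of divergence is the attempted $(|\ug\setminus Z|+1)$-th $\ug$-removal---the removal of $a_{i(Z)}$---which empties B's $\ug$-side while A still has the arms of $Z\setminus\{a_{i(Z)}\}$ available. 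Therefore, the event ``$a_{j'}$ is removed in A before more than $|\ug\setminus Z|$ $\ug$-removals'' corresponds bijectively to the event ``B reaches $(\emptyset,\emptyset)$,'' whose probability is exactly $Q(\pi,S_Z,\{a_{j^*},a_{j'}\})$. This proves the cumulative identity.

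To derive the stated formula, I write $c^\pi_Z = \sum_{Z'\supseteq Z} c^\pi_{Z'} - \sum_{Z'\supsetneq Z} c^\pi_{Z'}$. When $\ug\setminus Z\neq\emptyset$, letting $a_\ell$ denote the arm immediately preceding $a_{i(Z)}$ in $\sigl_\pi$, one has $Z'\supsetneq Z \iff Z'\supseteq Z\cup\{a_\ell\}$; moreover $a_{i(Z\cup\{a_\ell\})}=a_\ell$, so $S_{Z\cup\{a_\ell\}}=\ug\setminus Z$. Applying the cumulative identity to both sums produces the claimed formula, with the boundary case $Z=\ug$ handled by the convention $Q(\pi,\emptyset,\{a_{j^*},a_{j'}\})=0$, since that state is terminal. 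The main obstacle will be rigorously justifying the coupling step by step---verifying that the $\sigl_\pi$-minima of the two processes' $\ug$-sides agree throughout (which uses the prefix structure of $S_Z$) and that divergence can only occur at the removal of $a_{i(Z)}$; once this is in place, the event identification and the subsequent telescoping are largely bookkeeping.
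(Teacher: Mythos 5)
Your proposal is correct and follows essentially the same route as the paper: the cumulative identity you prove (that $\sum_{Z'\supseteq Z}c^\pi_{Z'}=Q(\pi,\ug\setminus Z\cup\{a_{i(Z)}\},\{a_{j^*},a_{j'}\})$, justified by the fact that until $a_{j'}$ is explored the ordered policy only ever mixes $\{a_{j^*},a_{j'}\}$ with the positive arms taken in prefix order) is exactly the paper's identification of the two $Q$-terms with probabilities of reaching type-2 leaves of the pruned tree, with your coupling making explicit what the paper phrases as comparing on-path reachability to off-path behavior of $\pi$. The subsequent telescoping over nested suffixes (with $Q(\pi,\emptyset,\{a_{j^*},a_{j'}\})=0$ for the boundary case) is just the paper's direct differencing of the two leaf-set probabilities, so no substantive gap remains.
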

The proof of Proposition \ref{prop:coef c} appears at the end of this proof. Notice that for every $Z$, $c^\pi_Z$ includes values of $Q$ with less arms than $U$ (besides, perhaps, the case where $\abs{\ug}=2$ and $\abs{Z}=1$ obtaining $Q(\pi,\ug, \{a_{j^*},a_{j'}\})$, but we cover this case in the bases cases); consequently, due to the inductive step
\begin{align}\label{c pi is rho}
c^\pi_Z = Q(\rho,\ug\setminus Z \cup \{a_{i(Z)}\}, \{a_{j^*},a_{j'}\})-Q(\rho,\ug\setminus Z, \{a_{j^*},a_{j'}\})=c^\rho_Z,
\end{align}
where the last equality follows from mirroring Proposition \ref{prop:coef c} for $(c^\rho_Z)_Z$. Ultimately,
{\thinmuskip=.2mu
\medmuskip=0mu plus .2mu minus .2mu
\thickmuskip=1mu plus 1mu
\begin{align*}
Q(\pi,U) &\stackrel{\textnormal{Eq. (\ref{eq:pi j^* to j})}}{=}\sum_{Z \in \suff(\sigl_\pi)}c^\pi_Z \cdot Q(\pi,Z,\ul\setminus \{a_{j^*},a_{j'}\})\stackrel{\textnormal{Eq. (\ref{c pi is rho})}}{=}\sum_{Z \in \suff(\sigl_\pi)}c^\rho_Z \cdot Q(\pi,Z,\ul\setminus \{a_{j^*},a_{j'}\}) \nonumber\\
&\stackrel{\textnormal{Ind. step}}{=}\sum_{Z \in \suff(\sigl_\pi)}c^\rho_Z \cdot Q(\rho,Z,\ul\setminus \{a_{j^*},a_{j'}\}) \stackrel{\sigl_{\rho}=\sigl_\pi}{=}\sum_{Z \in \suff(\sigl_\rho)}c^\rho_Z \cdot Q(\rho,Z,\ul\setminus \{a_{j^*},a_{j'}\}) \nonumber\\
&\stackrel{\textnormal{Eq. (\ref{eq:rho j^* to j})}}{=}Q(\rho,U) .
\end{align*}}
This completes Step 1.

\paragraph{Step 2}
Fix arbitrary $a_{i'}$ and $a_{j'}$ such that $a_{i'} \in \ug$ and $a_{j'} \in \ul$. We will show that 
\begin{equation}\label{eq:step 2 goal}
Q^*_{{i^*},{j'}}(U)=Q^*_{{i'},{j'}}(U).
\end{equation}
We follow the same technique as in the previous step. Let $\pi$ be an ordered policy such that $\sigl_\pi=(a_{i^*},a_{i'},\dots )$, i.e., $\sigl_\pi$ ranks $a_{i^*}$ first, $a_{i'}$ second and then follows some arbitrary order on the remaining arms, and $\sigr_\pi=(a_{j'},\dots)$. In addition, we define the ordered policy $\rho$ with $\sigl_\rho=(a_{i'},a_{i^*},\dots )$, where the dots refer to any arbitrary order on the remaining elements of $\ug$, and $\sigr_\rho =\sigr_\pi=(a_{j'},\dots)$. Using the inductive step and the same arguments as in Equation (\ref{eq:policies suffice}), it suffices to show that $Q(\pi,U)=Q(\rho,U)$. We factor $Q(\pi,U)$ recursively such that 
\begin{align}\label{eq: q pi with d}
Q(\pi,U)=Q(\pi,\{a_{i^*},a_{i'}\},\ul)+ \sum_{Z \in \suff(\sigr_\pi)} d^\pi_Z \cdot Q(\pi,\ug \setminus \{a_{i^*},a_{i'}\},Z),
\end{align}
and similarly
\begin{align}\label{eq: q rho with d}
Q(\rho,U)=Q(\rho,\{a_{i^*},a_{i'}\},\ul)+ \sum_{Z \in \suff(\sigr_\rho)} d^\rho_Z \cdot Q(\rho,\ug \setminus \{a_{i^*},a_{i'}\},Z).
\end{align}
Next, we claim that
\begin{proposition}\label{prop:coef d}
For every $Z\in \suff(\sigr_\pi)$, let $a_{j(Z)} = \argmin_{a_j\in Z}\sigr_\pi(a_j)$. It holds that
\[
d^\pi_Z = Q(\pi,\{a_{i^*},a_{i'}\},\ul \setminus Z )-Q(\pi,\{a_{i^*},a_{i'}\},\ul \setminus Z \cup \{a_{j(Z)}\}).
\]
\end{proposition}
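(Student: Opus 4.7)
My plan is to mirror the proof of Proposition~\ref{prop:coef c}, interchanging the roles of $\above$ and $\below$. I would proceed in three steps: first, give a combinatorial interpretation of $d^\pi_Z$ as a reach probability in the tree induced by $\pi$; second, unroll the two $Q$-values on the right-hand side along $\pi$'s recursion and exhibit a telescoping cancellation; third, match the two sides using the inductive hypothesis of Lemma~\ref{lemma:equivalence}.

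For the combinatorial interpretation, by construction $d^\pi_Z$ equals the sum, over root-to-leaf paths in the $\pi$-tree rooted at $U$, of the probability of arriving at the first node whose state is $(\ug \setminus \{a_{i^*}, a_{i'}\}, Z)$. Because $\sigl_\pi$ ranks $a_{i^*}$ strictly before $a_{i'}$, the policy $\pi$ always removes $a_{i^*}$ before $a_{i'}$; hence every such stopping node is necessarily the left child of a node representing $(\ug \setminus \{a_{i^*}\}, Z)$, reached by a transition in which Nature realizes $a_{i'}$. At that parent state $\pi$ plays the portfolio $\bl p_{i', j(Z)}$, so the final edge contributes the factor $\bl p_{i', j(Z)}(a_{i'})$, yielding
\[
d^\pi_Z \;=\; \Pr\!\bigl[(\ug \setminus \{a_{i^*}\}, Z) \text{ is reached under } \pi\bigr]\cdot \bl p_{i', j(Z)}(a_{i'}).
\]

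For the algebraic rewrite, I would unroll the recursive definition of $Q$ from $(\{a_{i^*}, a_{i'}\}, \ul \setminus Z \cup \{a_{j(Z)}\})$ in lock-step with the recursion from $(\{a_{i^*}, a_{i'}\}, \ul \setminus Z)$, following $\pi$'s sequence of below-arm removals. Nearly all intermediate $Q$-terms cancel pairwise, and the residue isolates exactly the probability that the augmented process hits the intermediate state $(\ug \setminus \{a_{i^*}\}, Z)$ and then realizes $a_{i'}$ in the next round. Every $Q$-value encountered in the expansion lives on a state with strictly fewer arms than $U$, so the inductive hypothesis of Lemma~\ref{lemma:equivalence} makes each such $Q$ independent of the $\mP$-valid policy used; this freedom lets me substitute structurally convenient ordered policies wherever they simplify the bookkeeping, and lets me reorganize the residue into the same product form as the combinatorial expression above.

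The main obstacle will be the telescoping step itself. Adjoining the single extra below-arm $a_{j(Z)}$ to $\ul \setminus Z$ alters the portfolio $\pi$ plays at \emph{every} round until $a_{j(Z)}$ is itself removed, so a direct round-by-round coupling of the two processes is not tight. I plan to overcome this by conditioning on the first round at which the shorter process would clear the prefix $\ul \setminus Z$ and exploiting the linearity of $\bl p_{i, j}(a_i)$ and $\bl p_{i, j}(a_j)$ in $\mu(a_j)$ to align the surviving summands; these are precisely the manipulations used in Proposition~\ref{prop:coef c}, so the symmetric argument should go through with the obvious renaming.
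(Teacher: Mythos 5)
Your route is essentially the paper's: you read $d^\pi_Z$ off the tree induced by $\pi$ as the probability of reaching a stopping node with state $(\ug\setminus\{a_{i^*},a_{i'}\},Z)$ whose parent still contains $a_{i'}$, and you read the difference of the two $Q$-values as the probability of exploring all of $\ul\setminus Z$ with above arms $\{a_{i^*},a_{i'}\}$ and then failing on $a_{j(Z)}$. Your intermediate identity $d^\pi_Z=\Pr\left[\text{reach } (\ug\setminus\{a_{i^*}\},Z)\text{ under }\pi\right]\cdot \bl p_{{i'},{j(Z)}}(a_{i'})$ is correct and is the paper's Observation~\ref{obs: for d} with the last edge factored out; the paper then formalizes the correspondence you use implicitly by passing to an off-path tree rooted at $\{a_{i^*},a_{i'}\}\cup(\ul\setminus Z)\cup\{a_{j(Z)}\}$.

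Two of your supporting claims are off, though neither is fatal. First, the reason a stopping node's parent must be $(\ug\setminus\{a_{i^*}\},Z)$ is not merely that $a_{i^*}$ is consumed before $a_{i'}$: a priori the parent could be $(\ug\setminus\{a_{i^*},a_{i'}\},Z\cup\{a\})$ for some $a\in\ul\setminus Z$, and this is excluded because such a node is itself a stopping node of the factorization, contributing to $d^\pi_{Z\cup\{a\}}$ rather than to $d^\pi_Z$ --- this is how the paper argues. Second, your ``main obstacle'' is illusory: since $\pi$ is right-ordered and $\ul\setminus Z$ is precisely the prefix of $\sigr_\pi$ preceding $a_{j(Z)}$, adjoining $a_{j(Z)}$ changes no portfolio in any state in which some arm of $\ul\setminus Z$ is still unexplored. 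Hence the two processes coincide step by step until the prefix is cleared, and the difference of the $Q$'s is directly the probability of clearing $\ul\setminus Z$ and then exhausting the remaining above arms against $a_{j(Z)}$, the last transition being the realization of $a_{i'}$ from $\bl p_{{i'},{j(Z)}}$; no manipulation of the weights (and no ``linearity in $\mu(a_j)$'') is needed, nor is any used in the proof of Proposition~\ref{prop:coef c}, which is purely combinatorial. Relatedly, the appeal to the inductive hypothesis of Lemma~\ref{lemma:equivalence} is unnecessary: the statement concerns the fixed ordered policy $\pi$ alone, and leaning on policy-independence would anyway require care in the edge case (e.g., $\abs{\ug}=2$, $\abs{Z}=1$) where the right-hand-side state $(\{a_{i^*},a_{i'}\},\ul)$ is not smaller than $U$, a case the paper defers to its base cases.
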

The proof of Proposition \ref{prop:coef d} appears at the end of this proof. Notice that for every $Z$, $d^\pi_Z$ includes values of $Q$ with less arms than $U$ (besides, perhaps, the case where $\abs{\ul}=2$ and $\abs{Z}=1$ obtaining $Q(\pi,\{a_{i^*},a_{i'}\},\ul\})$, but we cover this case in the bases cases); consequently, due to the inductive step
\begin{align}\label{eq: d pi is rho}
d^\pi_Z = Q(\rho,\{a_{i^*},a_{i'}\},\ul \setminus Z )-Q(\rho,\{a_{i^*},a_{i'}\},\ul \setminus Z \cup \{a_{j(Z)}\})=d^\rho_Z,
\end{align}
where the last equality follows from mirroring Proposition \ref{prop:coef d} for $(d^\rho_Z)_Z$. Ultimately, by rearranging Equation (\ref{eq: q pi with d}) and invoking the inductive step, Equation (\ref{eq: d pi is rho}) and the fact that $\sigr_\rho =\sigr_\pi$, we get
\begin{align*}
Q(\pi,U)=Q(\rho,\{a_{i^*},a_{i'}\},\ul)+ \sum_{Z \in \suff(\sigr_\rho)} d^\rho_Z \cdot Q(\rho,\ug \setminus \{a_{i^*},a_{i'}\}),Z)\stackrel{\textnormal{Eq. (\ref{eq: q rho with d})}}{=}Q(\pi,U),
\end{align*}
implying Equation (\ref{eq:step 2 goal}) holds.
\paragraph{Step 3} We are ready to prove the lemma. Fix arbitrary $a_{\tilde i}$ and $a_{\tilde j}$ such that $a_{\tilde i} \in \ug$ and  $a_{\tilde j} \in \ul$. By the previous Step 1 and Step 2, we know that
\[
Q^*_{{i^*},{j^*}}(U)\stackrel{\textnormal{Step 1}}{=}Q^*_{{i^*},{\tilde j}}(U)\stackrel{\textnormal{Step 2}}{=}Q^*_{{\tilde i},{\tilde j}}(U),
\]
which contradicts Equation (\ref{eq:contradiction of lemma}); hence, the lemma holds.
\end{proofof}

\section{Additional Statements for Lemma \ref{lemma:equivalence}}
\begin{proofof}{Proposition \ref{prop:coef c}}
\begin{figure}
\centering
\includegraphics[scale=0.9]{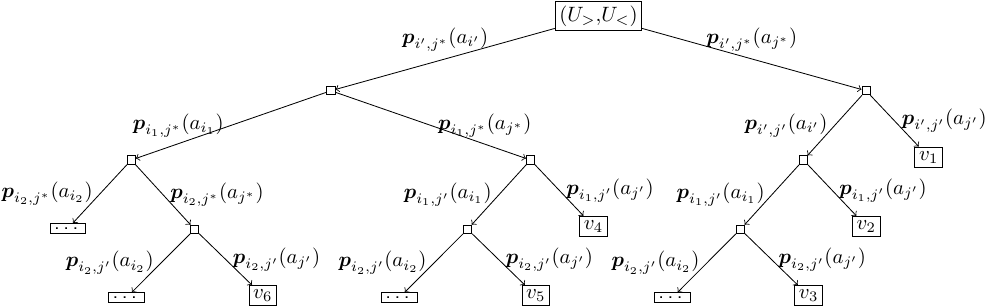}
\caption{Illustration for Proposition \ref{prop:coef c}. The tree depicts $T(\pi)$. Nodes $v_1$ to $v_6$ are nodes whose sub-trees were pruned in the construction of $T$. Let $Z=\ug\setminus \{a_{i'},a_{i_1}\}$ and $Z^c= \{a_{i'},a_{i_1}\}$. The minimal element of $Z$, denoted $a_{i(Z)}$ in the proof, is $a_{i_2}$. The corresponding $c^\pi_Z$ is the probability to reach one of $\{v_3,v_5,v_6\}$, namely, $c^\pi_Z = \Pr(\{v_3,v_5,v_6\})$. In the tree $T$, we ignore sub-trees of nodes $v$ labeled with ``$\dots $'' since these do not contribute to $c^\pi_Z$. Observe that the probability of reaching $v_i$, for $i\in\{1,\dots, 6\}$ is the same in $T(\pi)$ and $T$. Finally, notice that $Q(\pi,\ug\setminus Z \cup \{a_{i(Z)}\}, \{a_{j^*},a_{j'}\})=\Pr(\{v_1,v_2,\dots,v_6 \})$, and $Q(\pi,\ug\setminus Z, \{a_{j^*},a_{j'}\})=\Pr(\{v_1,v_2,v_4\})$. Combining, we get that $c^\pi_Z = Q(\pi,\ug\setminus Z \cup \{a_{i(Z)}\}, \{a_{j^*},a_{j'}\})-Q(\pi,\ug\setminus Z, \{a_{j^*},a_{j'}\})=\Pr(\{v_3,v_5,v_6\})$ as required.
\label{fig:tree illustation}}
\end{figure}
To prove this claim, we focus on the tree induced by $\pi$, $T(\pi)$. It is convenient to discuss a modified version $T(\pi)$ obtained by pruning, and this is feasible since even if prune nodes from $T(\pi)$ it still remains Markov chain. We illustrate the proof of this claim in Figure \ref{fig:tree illustation}.

We factorize $Q(\pi,U) $ recursively (see Equation (\ref{eq:pi j^* to j})) until we hit a node associated with a state that excludes $\{a_{j^*},a_{j'}\}$, or a leaf. This factorization can be illustrated as follows: We traverse $T(\pi)$, from right to left. Every node we visit, we ask whether that node includes $\{a_{j^*},a_{j'}\}$. If it does not, we prune its sub-tree (i.e., it becomes a leaf) while leaving it intact. Denote the obtained tree by $T$, and let $V(T)$ be its set of nodes. Observe that
\begin{observation}\label{obs:two types}
Every leaf $v$ in $V(T)$ satisfies exactly one property: 
\begin{enumerate}[leftmargin=0cm,itemindent=.5cm,labelwidth=\itemindent,labelsep=0cm,align=left]
\item[]\textit{Type 1:} $\above(state(v))=\emptyset$, or
\item[]\textit{Type 2:} $\below(state(v))=\ul\setminus \{j^*,j'\}$ with $\below(state(v)) \subset \below(state(parent(v)))$.
\end{enumerate}
\end{observation}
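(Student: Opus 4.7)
My plan is to case-split on whether $v$ was already a leaf of $T(\pi)$ or was introduced as a leaf by the pruning step used to build $T$. The first case is almost immediate and yields Type~1, while the second case is where the work lies and yields Type~2 after exploiting the ordered structure of $\pi$.

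If $v$ is already a leaf of $T(\pi)$, then $\state(v)$ is a terminal state of the GMDP, i.e.~$\safe(\state(v))=\emptyset$. Here I will observe that whenever $\above(s)\neq\emptyset$, the deterministic portfolio $\bl p_{i,i}\in\mP'$ for any $a_i\in\above(s)$ lies in $\safe(s)$; hence $\safe(s)=\emptyset$ forces $\above(s)=\emptyset$, which is exactly Type~1.

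Otherwise, $v$ was pruned during the construction of $T$, so $\state(v)$ excludes both $a_{j^*}$ and $a_{j'}$, while its parent $u$ does not. The key ingredient is a short depth induction showing that on every root-to-node path in $T(\pi)$ the $\below$ arms are removed in exactly the order dictated by $\sigr_\pi=(a_{j^*},a_{j'},\dots)$: as long as $a_{j^*}\in\below(\state(w))$, the right-ordered policy $\pi$ selects $a_{j^*}$ as its $\below$ component, and the same holds for $a_{j'}$ once $a_{j^*}$ has been removed. Consequently, the number $k$ of right edges on any root-to-$w$ path uniquely determines $\below(\state(w))$. Writing $k_u,k_v$ for the counts at $u,v$, I get $k_u\leq 1$ from $u$ being unpruned and $k_v\geq 2$ from $v$ being pruned; combined with $k_v\in\{k_u,k_u+1\}$ (a single edge removes at most one $\below$ arm) this forces $k_u=1$, $k_v=2$, and the $u\to v$ edge to be a right edge that removes $a_{j'}$. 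Hence $\below(\state(v))=\ul\setminus\{a_{j^*},a_{j'}\}$ and $\below(\state(v))\subset\below(\state(u))$, which is exactly Type~2.

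The main obstacle I anticipate is the clean bookkeeping between the two different notions of ``leaf'' (terminals of $T(\pi)$ versus pruned nodes of $T$) and arguing mutual exclusivity of the two types. For the latter, since the $u\to v$ edge in the second case is a right edge, we have $\above(\state(v))=\above(\state(u))$; and $u$ itself is not a leaf of $T(\pi)$ (otherwise the factorization would have halted at $u$, making $u$ and not $v$ the relevant leaf of $T$), so $\above(\state(u))\neq\emptyset$, ruling out Type~1 for $v$. Everything else reduces to deterministic properties of an ordered $\mP$-valid policy and needs no further calculation.
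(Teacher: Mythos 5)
Your proof is correct and follows essentially the same route as the paper's (much terser) justification: unpruned leaves of $T(\pi)$ are terminal states of the GMDP, hence $\above(state(v))=\emptyset$ (Type 1), while pruned nodes are the first nodes on their path whose state excludes both $a_{j^*}$ and $a_{j'}$, which by the right-ordered structure of $\pi$ forces $\below(state(v))=\ul\setminus\{a_{j^*},a_{j'}\}$ together with the strict inclusion relative to the parent (Type 2). The only point you leave implicit is that a Type-1 leaf cannot also be Type-2, but this follows from your own observation that a node reached by a right edge inherits its parent's nonempty $\above$ set: a terminal leaf must be reached by a left edge, so the strict inclusion $\below(state(v))\subset\below(state(parent(v)))$ fails for it.
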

Leaves of type 1 are associated with terminal states of the MDP (see Subsection~\ref{subsec:aux GMDP}). Leaves of type 2 are those whose sub-trees were pruned during the traversal. Moreover, $\below(state(v)) \subset \below(state(parent(v)))$ holds in every such a leaf $v$, since otherwise we would have pruned its parent. Due to Observation \ref{obs:two types}, every node $v$ with $state(v)=(Z,\ul\setminus \{a_{j^*},a_{j'}\})$ is of type 2; therefore,
\begin{align}\label{eq:c with T}
c^\pi_Z = \sum_{\substack{v\in V(T):state(v)=\\(Z,\ul\setminus \{a_{j^*},a_{j'}\})}}\Pr\left(\pth{root(T)}{}{v}\right).
\end{align}

Next, fix an arbitrary non-empty $Z$, $Z \subseteq \ug$, and $\Psi$ be the set of all non-empty suffixes of $\ug\setminus Z$. Consider $T$ and its root $root(T)$. Notice that $Q(\pi,\ug\setminus Z, \{a_{j^*},a_{j'}\})$ is the probability of reaching a (type 2) leaf $v$ such that $\above(v)=\psi\cup Z$ for some $\psi \in \Psi$. This is true since $\pi$ is ordered, and every path from $root(T)$ to such a leaf $v$ does not include any action from $Z$; hence, we can compare the probability of reaching it to off-path behavior of $\pi$. Further, $Q(\pi,\ug\setminus Z \cup \{a_{i(Z)}\}, \{a_{j^*},a_{j'}\})$ is the probability of reaching a (type 2) leaf $v$ such that $\above(v)=\psi\cup Z$ for some $\psi \in \Psi$ or $\above(v)=Z$; hence,
\[
Q(\pi,\ug\setminus Z \cup \{a_{i(Z)}\}, \{a_{j^*},a_{j'}\})-Q(\pi,\ug\setminus Z, \{a_{j^*},a_{j'}\})
\]
is precisely the right-hand-side of Equation \refeq{eq:c with T}.
\end{proofof}

\begin{proofof}{Proposition \ref{prop:coef d}}
Fix $Z\in \suff(\sigr_\pi)$, and let $a_{j(Z)} = \argmin_{a_j\in Z}\sigr_\pi(a_j)$. Let $T(\pi)$ denote the tree induced by $\pi$. Observe that
\begin{observation}\label{obs: for d}
The coefficient $d^\pi_Z$ is the probability to get to a node $v$ in $T(\pi)$ such that
\begin{enumerate}
\item $state(v)=(\ug\setminus \{a_{i^*},a_{i'} \},Z )$, and 
\item $state(parent(v))=(\ug\setminus \{a_{i^*}\},Z )$. 
\end{enumerate}
\end{observation}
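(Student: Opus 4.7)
\begin{proofsketchof}{Observation~\ref{obs: for d}}
The plan is to mirror the argument used for Proposition~\ref{prop:coef c}, recasting $d^\pi_Z$ as a reachability probability in a suitably pruned version of $T(\pi)$. Starting from the recursive factorization of $Q(\pi,U)$ that produces Equation~\refeq{eq: q pi with d}, I would picture a top-down traversal of $T(\pi)$ in which the sub-tree rooted at any node $u$ whose $\above$-part equals $\ug\setminus\{a_{i^*},a_{i'}\}$ is pruned, turning $u$ into a (pruned) leaf. By construction of the factorization, each summand $Q(\pi,\ug\setminus\{a_{i^*},a_{i'}\},Z)$ inherits as its coefficient $d^\pi_Z$ exactly the total probability of reaching those pruned leaves whose state is $(\ug\setminus\{a_{i^*},a_{i'}\},Z)$. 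The isolated first term $Q(\pi,\{a_{i^*},a_{i'}\},\ul)$ in Equation~\refeq{eq: q pi with d} plays a separate bookkeeping role tied to the boundary situation $\ug=\{a_{i^*},a_{i'}\}$, already handled by the outer induction's base cases, and does not interact with the $d^\pi_Z$ coefficients we are computing.

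With $d^\pi_Z$ so interpreted, the next step is to pin down $state(parent(v))$ for any leaf $v$ contributing to the sum. Since the transition from $parent(v)$ to $v$ removes exactly one arm, $state(parent(v))$ must be obtained from $state(v)=(\ug\setminus\{a_{i^*},a_{i'}\},Z)$ by reinserting either $a_{i^*}$, $a_{i'}$, or some $a_j\in\ul\setminus Z$. I would then rule out two of the three candidates. Reinserting $a_{i^*}$ yields $(\ug\setminus\{a_{i'}\},Z)$, but the ordering $\sigl_\pi=(a_{i^*},a_{i'},\dots)$ prevents $\pi$ from ever removing $a_{i'}$ while $a_{i^*}$ is still present, so this state is unreachable from $U$ under $\pi$. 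Reinserting $a_j$ yields $(\ug\setminus\{a_{i^*},a_{i'}\},Z\cup\{a_j\})$, whose $\above$-part already equals $\ug\setminus\{a_{i^*},a_{i'}\}$; hence the parent is itself a pruned leaf, so $v$ never appears in the pruned tree and cannot contribute to $d^\pi_Z$. The only remaining candidate $state(parent(v))=(\ug\setminus\{a_{i^*}\},Z)$ is exactly what the observation asserts, and at such a parent $\pi$ selects $\bl p_{i',j(Z)}$, so the left transition that removes $a_{i'}$ reaches $v$ with positive probability.

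The step I expect to take the most care is articulating the pruning rule precisely enough that no alternative path produces a hidden contribution, much as Proposition~\ref{prop:coef c} needed Observation~\ref{obs:two types} to classify the leaves of its pruned tree. The deterministic left-ordered character of $\pi$ is what makes the argument work: once $a_{i^*}$ has left the state, $\sigl_\pi$ pins $a_{i'}$ as the next $\above$-arm played in every non-terminal, $\below$-nonempty state, so every path from $U$ to a node with $\above$-part equal to $\ug\setminus\{a_{i^*},a_{i'}\}$ must visit $(\ug\setminus\{a_{i^*}\},Z)$ as the immediate predecessor of that node, except in the excluded case where the final arm removed was a $\below$-arm---which would place the parent itself in the pruned tree as a leaf and thus exclude $v$ as above.
\end{proofsketchof}
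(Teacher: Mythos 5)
Your argument is correct and follows essentially the same route as the paper: condition 1 is read off the factorization, and condition 2 is obtained by noting that $state(parent(v))$ exceeds $state(v)$ by exactly one arm and ruling out the two alternative parents---$(\ug\setminus\{a_{i'}\},Z)$ being unreachable under $\sigl_\pi=(a_{i^*},a_{i'},\dots)$, and $(\ug\setminus\{a_{i^*},a_{i'}\},Z\cup\{a_j\})$ because such a node's mass is already absorbed into the summand for $Z\cup\{a_j\}$ (the paper states the latter exclusion and leaves the former implicit). One minor correction that does not affect the claim about $d^\pi_Z$: the standalone term $Q(\pi,\{a_{i^*},a_{i'}\},\ul)$ is not a boundary case of the outer induction but rather the probability that all of $\ul$ is explored before $a_{i'}$ is removed, i.e., the paths that never generate a coefficient $d^\pi_Z$ with $Z\in\suff(\sigr_\pi)$.
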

The first condition is immediate, due to the way we factorize $Q$ in Equation \refeq{eq: q pi with d}. To see why the second condition holds, notice that $state(parent(v))$ must be a strict superset of $state(v)$; hence, $state(parent(v))$ could be either $(\ug\setminus \{a_{i^*}\},Z )$ or $(\ug\setminus \{a_{i^*},a_{i'}\},Z\cup\{ a\} )$ for $a\in \ul\setminus Z$, but then it would contribute to $d^\pi_{Z \cup \{a\}}$, namely, to another summand in Equation \refeq{eq: q pi with d}. 

Denote by $V$ the set of all nodes that satisfy the conditions of Observation \ref{obs: for d}. Due to the way we constructed $\pi$, the paths from the root of $T(\pi)$ to any node in $V$ consist of actions that involve the arms $\{a_{i^*},a_{i'},a_{j(Z)}\}\cup (\ul \setminus Z)$ solely; hence, we can focus on the \text{off-path} tree whose root is $s_0'\defeq\{a_{i^*},a_{i'}\}\cup (\ul \setminus Z)\cup \{a_{j(Z)}\}$, and the actions are precisely as in the tree induced by $\pi$ (according to the order of $\pi$). Denote this new tree by $T'$, and let
\[
V' \defeq \left\{v\in nodes(T)\mid state(v)=\{a_{j(Z)}\}  \right\}.
\] 
Due to this construction, 
\begin{observation}\label{obs: for d two}
The coefficient $d^\pi_Z$ is the probability to get to a node that belongs to $V'$  in $T'$.
\end{observation}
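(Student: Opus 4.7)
The plan is to derive Observation~\ref{obs: for d two} as a direct consequence of Observation~\ref{obs: for d} by showing that every root-to-$V$-node path in $T(\pi)$ is faithfully simulated by a root-to-$V'$-node path in the restricted tree $T'$. First I would read off from Observation~\ref{obs: for d} that $d^\pi_Z = \sum_{v \in V} \Pr(\pth{\rt(T(\pi))}{}{v})$, where $V$ is the set of nodes in $T(\pi)$ whose state equals $(\ug \setminus \{a_{i^*},a_{i'}\},\, Z)$ and whose parent has state $(\ug \setminus \{a_{i^*}\},\, Z)$; the task then reduces to matching each such path-probability with the probability of a corresponding path in $T'$.

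The crux is to show that every portfolio played on such a path mixes only arms from $s_0' = \{a_{i^*},a_{i'}\} \cup (\ul \setminus Z) \cup \{a_{j(Z)}\}$. This uses the ordered structure of $\pi$. Along any root-to-$v$ path with $v \in V$, exactly the arms $\{a_{i^*},a_{i'}\}$ of $\ug$ and exactly the arms $\ul \setminus Z$ of $\ul$ get explored. Because $\sigl_\pi = (a_{i^*}, a_{i'}, \ldots)$, the left arm selected at any state on the path is the $\sigl_\pi$-minimum of its $\above$-component, which is always either $a_{i^*}$ (while $a_{i^*}$ is unexplored) or $a_{i'}$ (once $a_{i^*}$ is explored). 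Similarly, because $Z$ is a suffix of $\sigr_\pi$, the right arm at each on-path state is the $\sigr_\pi$-minimum of its $\below$-component, which lies in $\ul \setminus Z$ until everything in $\ul \setminus Z$ has been explored, and thereafter equals $a_{j(Z)}$ (relevant only at the parent of $v$, where the left edge is taken). Hence no arm outside $s_0'$ ever appears in a portfolio on these paths.

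Next I would match paths in $T(\pi)$ with paths in $T'$. Since the portfolio $\bl p_{i,j}$ depends only on $\mu(a_i),\mu(a_j)$, and since $T'$ is driven by the same orders $\sigl_\pi,\sigr_\pi$ restricted to $s_0'$, the chosen portfolio at a state along any such path is identical in $T(\pi)$ and in $T'$; consequently the left/right transition probabilities match edge-by-edge. The map sends a $V$-node $v$ in $T(\pi)$ (with state $(\ug \setminus \{a_{i^*},a_{i'}\},Z)$, parent $(\ug \setminus \{a_{i^*}\},Z)$) to the node $v'$ in $T'$ reached by the same sequence of left/right choices; under this correspondence $v'$ has state $s_0' \setminus \{a_{i^*},a_{i'}\} \setminus (\ul \setminus Z) = \{a_{j(Z)}\}$, so $v' \in V'$. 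Conversely, every node of $T'$ with state $\{a_{j(Z)}\}$ must have been reached by the left edge out of state $\{a_{i'},a_{j(Z)}\}$ (the only predecessor state in $T'$ consistent with that state), and its root-to-$v'$ sequence of choices lifts uniquely to a root-to-$v$ path in $T(\pi)$ with $v \in V$; so the map is a bijection preserving path probabilities.

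Summing over $V$ and $V'$ under the bijection yields
\[
d^\pi_Z \;=\; \sum_{v \in V}\Pr\bigl(\pth{\rt(T(\pi))}{}{v}\bigr) \;=\; \sum_{v' \in V'}\Pr\bigl(\pth{\rt(T')}{}{v'}\bigr),
\]
which is exactly the claim. I expect the main subtlety to be the bookkeeping in degenerate cases, namely $Z = \ul$ (where the path takes two consecutive left edges and no right edge, so $\ul \setminus Z$ is empty and $a_{j(Z)}$ must still be included in $s_0'$ for the left arm to be well-defined at the parent) and the case $|Z|=1$ with $a_{j(Z)}$ equal to the $\sigr_\pi$-first element of $Z$; both are handled uniformly by the construction of $s_0'$ and by checking that the transition into $v'$ in $T'$ is forced to be a left edge, matching condition 2 of Observation~\ref{obs: for d}.
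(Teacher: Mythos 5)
Your proposal is correct and follows essentially the same route as the paper: the paper justifies this observation with a single sentence invoking the one-to-one correspondence between root-to-$V$-node paths in $T(\pi)$ and root-to-$V'$-node paths in the off-path tree $T'$, and your argument is a careful elaboration of exactly that correspondence (showing all portfolios on such paths mix only arms of $s_0'$, hence transition probabilities match edge-by-edge). The only nitpick is the parenthetical claiming $a_{j(Z)}$ is relevant only at the parent of $v$ --- it can also be the right arm at the grandparent when $\ul\setminus Z$ is explored before $a_{i^*}$ --- but this does not affect the argument since $a_{j(Z)}\in s_0'$ in any case.
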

The observation follows from the one-to-one correspondence between the nodes and path in $T(\pi)$ and their counterparts in the off-path tree $T'$. 

In $T'$,  $Q(\pi,s'_0)$ is the probability of starting at $s'_0$ and reaching the leaf with no arms (terminal state $\emptyset$), i.e., exploring $\ul \setminus Z$ \textit{and} $a_{j(Z)}$. In contrast, $Q(\pi,\{a_{i^*},a_{i'}\}\cup (\ul \setminus Z))$ is the probability of starting at $s'_0$ and reaching a node (internal or terminal) $v$ with $state(v)= \cap \left(\ul \setminus Z\right) =\emptyset$, namely, exploring $\ul \setminus Z$. Such a node $v$ leads to a leaf with probability 1; hence, paths from $v$ terminate in leaves corresponding to either state $\emptyset$ or  $\{a_{j(Z)}\}$. Consequently,
\[
Q(\pi,\{a_{i^*},a_{i'}\}\cup (\ul \setminus Z)) - Q(\pi,\{a_{i^*},a_{i'}\}\cup (\ul \setminus Z)\cup \{ a_{j(Z)}\})
\]
is the probability of starting at $s_0'$, and reaching a terminal node that belongs to $V'$.
\end{proofof}

\section{Base Cases for Lemma \ref{lemma:equivalence}}\label{sec:base for lemma}
\begin{proposition}\label{prop:case of one strong}
Let $\abs{\ug}=1$ and $\abs{\ul} \geq 2$. For any pair of policies $\pi,\rho$, it holds that $Q(\pi,\ug,\ul)=Q(\rho,\ug,\ul)$.
\end{proposition}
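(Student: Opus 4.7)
The plan is to observe that when $\abs{\ug}=1$, the only $\mP$-valid portfolio at a state $(\{a_i\},S)$ with $S \neq \emptyset$ must be of the form $\bl p_{i,j}$ for some $a_j \in S$, since the unique arm with positive expectation is $a_i$. Hence a $\mP$-valid policy $\pi$ is specified precisely by a choice function $\sigma_\pi(S)\in S$ for each non-empty $S\subseteq \ul$, indicating which arm from $\below$ is mixed with $a_i$ at state $(\{a_i\},S)$. The ``shape'' of the policy is therefore fully captured by an exploration ordering, and the proof reduces to showing that the probability of reaching $\emptyset$ is order-independent.

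First I would write down the recursion for $Q$. At a state $(\{a_i\},S)$ with $S\neq\emptyset$, if $\pi(\{a_i\},S)=\bl p_{i,\sigma_\pi(S)}$, then the GMDP transitions either to $(\{a_i\},S\setminus\{a_{\sigma_\pi(S)}\})$ with probability $\bl p_{i,\sigma_\pi(S)}(a_{\sigma_\pi(S)})$, or to $(\emptyset,S)$ with probability $\bl p_{i,\sigma_\pi(S)}(a_i)$. The latter is terminal because $\above=\emptyset$ leaves no MIR action, so $Q(\pi,\emptyset,S)=0$. Consequently
\[
Q(\pi,\{a_i\},S) \;=\; \bl p_{i,\sigma_\pi(S)}(a_{\sigma_\pi(S)}) \cdot Q(\pi,\{a_i\},S\setminus\{a_{\sigma_\pi(S)}\}).
\]
For the base case $S=\emptyset$, state $(\{a_i\},\emptyset)$ forces the portfolio $\bl p_{i,i}$, after which we deterministically reach $\emptyset$, so $Q(\pi,\{a_i\},\emptyset)=1$.

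Next I would unroll the recursion $\abs{\ul}$ times. The key observation is that the transition probability for the ``right'' branch depends only on the pair of arms involved:
\[
\bl p_{i,j}(a_j) \;=\; \frac{\mu(a_i)}{\mu(a_i)-\mu(a_j)},
\]
so along any path realizing $\emptyset$ the visited arm $a_j$ contributes exactly this factor, independently of the current state or of which other arms remain. Hence by a straightforward induction on $\abs{S}$,
\[
Q(\pi,\{a_i\},S) \;=\; \prod_{a_j \in S} \frac{\mu(a_i)}{\mu(a_i)-\mu(a_j)}.
\]
Specializing to $S=\ul$ yields an expression that makes no reference to $\sigma_\pi$, and therefore $Q(\pi,\{a_i\},\ul)=Q(\rho,\{a_i\},\ul)$ for any two $\mP$-valid policies $\pi,\rho$.

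There is no serious obstacle here: the entire content is the observation that $\bl p_{i,j}(a_j)$ is a function of the pair $(a_i,a_j)$ alone, so the product telescopes into an order-independent quantity. The only small care needed is to confirm that reaching $\emptyset$ requires taking the ``right'' branch at every step (which follows because any transition to $(\emptyset,S')$ is terminal and contributes $0$ to $Q$), and that the boundary state $(\{a_i\},\emptyset)$ contributes $1$ via the deterministic $\bl p_{i,i}$ action.
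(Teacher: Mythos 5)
Your proof is correct and takes essentially the same route as the paper: both express $Q(\pi,\{a_i\},\ul)$ as a product of one exploration-probability factor per arm of $\ul$ and conclude by order-invariance (commutativity/associativity) of that product, with you merely making the recursion, the base case $Q(\pi,\{a_i\},\emptyset)=1$, and the zero contribution of the terminal states with $\above=\emptyset$ explicit. Your per-step factor $\bl p_{i,j}(a_j)=\mu(a_i)/(\mu(a_i)-\mu(a_j))$ is the correct one under the paper's definition of $\bl p_{i,j}$, so no gap remains.
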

\begin{proofof}{Proposition \ref{prop:case of one strong}}
Let $\tilde \mu(a)\defeq \abs{\mu(a)}$, and denote $\ug=\{a_{i_1}\}$  and $\ul=\{a_{j_1},\dots a_{j_k}\}$ for $k=\abs{\ul}$. The probability of reaching the empty terminal state under any $\mP$-valid policy is
\begin{align*}
\prod_{l=1}^k\frac{\tilde \mu(a_{j_l})}{\tilde \mu(a_{j_l})+\tilde \mu(a_{i_1})},
\end{align*}
i.e., the probability of successfully exploring $\ul$. Due to multiplication associativity, the above expression is invariant of the way we order its elements. Finally, by definition of $Q$, this implies that $Q(\pi,\ug,\ul)=Q(\rho,\ug,\ul)$.
\end{proofof}

\begin{proposition}\label{prop:case of one}
Let $\abs{\ug}\geq 2$ and $\abs{\ul} = 1$. For any pair of policies $\pi,\rho$, it holds that $Q(\pi,\ug,\ul)=Q(\rho,\ug,\ul)$.
\end{proposition}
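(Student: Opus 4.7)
The plan is to reduce $Q(\pi,\ug,\ul)$ to the probability of a single simple event, and then observe that this probability has a closed form that does not depend on $\pi$.

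Denote $\ul=\{a_{j_1}\}$ and let $\tilde\mu(a)=\abs{\mu(a)}$. First I would characterize the reachable states: since $\abs{\ul}=1$, in any non-terminal state of the form $(\ug',\{a_{j_1}\})$ with $\ug'\subseteq\ug$ nonempty, every $\mP$-valid portfolio has the form $\bl p_{i,j_1}$ for some $a_i\in\ug'$; selecting it either (i) succeeds, transitioning to $(\ug',\emptyset)$ with probability $\tilde\mu(a_i)/(\tilde\mu(a_i)+\tilde\mu(a_{j_1}))$, or (ii) fails, transitioning to $(\ug'\setminus\{a_i\},\{a_{j_1}\})$ with probability $\tilde\mu(a_{j_1})/(\tilde\mu(a_i)+\tilde\mu(a_{j_1}))$. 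Moreover, from any state $(\ug'',\emptyset)$ the only $\mP$-valid portfolios are the deterministic $\bl p_{i,i}$, so the empty state is reached with probability $1$. Conversely, if every attempt to explore $a_{j_1}$ fails, the process terminates at $(\emptyset,\{a_{j_1}\})\neq\emptyset$. Hence
\[
Q(\pi,\ug,\ul)=1-\Pr_\pi(\text{every attempt at mixing with }a_{j_1}\text{ fails}).
\]

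Next I would analyze the failure event. Along any trajectory that never explores $a_{j_1}$, the policy $\pi$ visits the sequence of states $(\ug,\{a_{j_1}\}),(\ug\setminus\{a_{i_1}\},\{a_{j_1}\}),\ldots,(\emptyset,\{a_{j_1}\})$, where $a_{i_1},\ldots,a_{i_m}$ is the order in which the arms of $\ug$ get explored (this order is determined by $\pi$ conditional on repeated failures, and $m=\abs{\ug}$). Since each step is independent given the realized state and $\pi$ is stationary, the probability of all failures equals
\[
\prod_{k=1}^{m}\frac{\tilde\mu(a_{j_1})}{\tilde\mu(a_{i_k})+\tilde\mu(a_{j_1})}.
\]
By commutativity of multiplication, this product depends only on the multiset $\{a_{i_1},\ldots,a_{i_m}\}=\ug$ and not on the order $(a_{i_1},\ldots,a_{i_m})$; in particular, it equals
\[
\prod_{a\in\ug}\frac{\tilde\mu(a_{j_1})}{\tilde\mu(a)+\tilde\mu(a_{j_1})}.
\]
Therefore $Q(\pi,\ug,\ul)=1-\prod_{a\in\ug}\tilde\mu(a_{j_1})/(\tilde\mu(a)+\tilde\mu(a_{j_1}))$, which depends only on $(\ug,\ul)$ and not on $\pi$, giving $Q(\pi,\ug,\ul)=Q(\rho,\ug,\ul)$ for any two $\mP$-valid policies $\pi,\rho$.

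There is no genuine obstacle here: the only substantive point is recognizing that the event ``$a_{j_1}$ is ever explored'' suffices (since post-exploration the remaining $\above$-arms are cleared deterministically), and that the product of per-step failure probabilities is invariant under reordering. This is essentially the symmetric counterpart of Proposition~\ref{prop:case of one strong}, with the roles of $\ug$ and $\ul$ interchanged.
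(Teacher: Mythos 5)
Your proof is correct and takes essentially the same route as the paper's: it expresses the probability of never exploring $a_{j_1}$ as a product of per-step failure probabilities over the arms of $\ug$, notes that this product is invariant under reordering, and concludes that $Q$ is policy-independent. Your explicit remarks that once $a_{j_1}$ is explored the remaining arms of $\above(\cdot)$ are cleared with probability one (so $Q$ equals one minus the all-failure probability), and that each per-step failure has probability $\tilde\mu(a_{j_1})/(\tilde\mu(a_i)+\tilde\mu(a_{j_1}))$ as dictated by the definition of $\bl p_{i,j}$, just spell out what the paper leaves implicit.
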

\begin{proofof}{Proposition \ref{prop:case of one}}
Let $\tilde \mu(a)\defeq \abs{\mu(a)}$, and denote $\ug=\{a_{i_1},\dots a_{i_k}\}$ for $k=\abs{\ug}$ and $\ul=\{a_{j_1}\}$. The probability of reaching the terminal state $(a_{j_1})$ under any $\mP$-valid policy is
\begin{align*}
\prod_{l=1}^k\frac{\tilde \mu(a_{i_l})}{\tilde \mu(a_{i_l})+\tilde \mu(a_{j_1})},
\end{align*}
i.e., the probability of failing to explore $a_{j_1}$. Due to multiplication associativity, the above expression is invariant of the way we order its elements. Finally, by definition of $Q$, this implies that $1-Q(\pi,\ug,\ul)=1-Q(\rho,\ug,\ul)$; hence, $Q(\pi,\ug,\ul)=Q(\rho,\ug,\ul)$
\end{proofof}

\begin{proposition}\label{prop:case of two strong}
Let $U$ be an arbitrary state, such that $\ug\defeq\above(U)=2$ and $\ul\defeq\below(U) \geq 2$. For any pair of $\mP$-valid policies $\pi$ and $\rho$, it holds that $Q(\pi,U)=Q(\rho,U)$.
\end{proposition}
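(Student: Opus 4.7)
The plan is to identify an explicit closed-form for $Q(\pi, \ug, S)$ that is manifestly policy-independent and then verify it by induction on $|S|$. Write $\ug = \{a_{i_1}, a_{i_2}\}$ and $\alpha_l = \mu(a_{i_l}) > 0$, and assume first that $\alpha_1 \neq \alpha_2$. The proposed formula is that for any $\mP$-valid policy $\pi$ and any non-empty $S \subseteq \ul$,
\begin{equation*}
Q(\pi, \ug, S) \;=\; \frac{\alpha_2 \, Q(\{a_{i_2}\}, S) \;-\; \alpha_1 \, Q(\{a_{i_1}\}, S)}{\alpha_2 - \alpha_1},
\end{equation*}
where $Q(\{a_{i_l}\}, S) = \prod_{a_j \in S}\alpha_l/(\alpha_l + \tilde\mu(a_j))$ is policy-independent by Proposition~\ref{prop:case of one strong}. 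Since the right-hand side depends only on the state, this establishes the proposition when $\alpha_1 \neq \alpha_2$; the edge case $\alpha_1 = \alpha_2$ then follows by a continuity argument, since $Q(\pi, \ug, S)$ is a rational function of the expected rewards for each fixed $\pi$, and rational identities that hold on the open set $\{\alpha_1 \neq \alpha_2\}$ extend to the closure.

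For the base case $|S| = 1$, I invoke Proposition~\ref{prop:case of one} to write $Q(\ug, \{a_j\}) = 1 - \tilde\mu(a_j)^2/[(\alpha_1 + \tilde\mu(a_j))(\alpha_2 + \tilde\mu(a_j))]$, and a short calculation shows both sides of the proposed formula simplify to $[\alpha_1 \alpha_2 + \tilde\mu(a_j)(\alpha_1 + \alpha_2)]/[(\alpha_1 + \tilde\mu(a_j))(\alpha_2 + \tilde\mu(a_j))]$. For the inductive step, I assume the formula for all $|S'| < |S|$ and let $\pi$ choose $\bl p_{i,j}$ at the root of $(\ug, S)$, writing $i'$ for the other positive arm. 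The defining recursion gives
\begin{equation*}
Q(\pi, \ug, S) \;=\; \bl p_{i,j}(i)\,Q(\{a_{i'}\}, S) \;+\; \bl p_{i,j}(j)\,Q(\pi, \ug, S\setminus\{j\}),
\end{equation*}
where $Q(\pi, \ug, S\setminus\{j\})$ equals the proposed formula by the inductive hypothesis. Using the elementary multiplicative identity $Q(\{a_{i_l}\}, S\setminus\{j\}) = Q(\{a_{i_l}\}, S)\cdot(\alpha_l + \tilde\mu(a_j))/\alpha_l$ to rewrite the second term in terms of $Q(\{a_{i_l}\}, S)$, the sum collapses algebraically to the proposed formula with $|S|$, regardless of which pair $(i,j)$ is chosen at the root.

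The main obstacle is the algebraic collapse in the inductive step: it is not transparent a priori that all four root choices produce the same expression. The key observation is that after substituting $\bl p_{i,j}(i) = \tilde\mu(a_j)/(\alpha_i + \tilde\mu(a_j))$ and $\bl p_{i,j}(j) = \alpha_i/(\alpha_i + \tilde\mu(a_j))$, the coefficient of $Q(\{a_{i'}\}, S)$ cleanly reduces to $\pm \alpha_{i'}/(\alpha_2 - \alpha_1)$ thanks to the identity $\tilde\mu(a_j)(\alpha_2 - \alpha_1) + \alpha_i(\alpha_{i'} + \tilde\mu(a_j)) = \alpha_{i'}(\alpha_i + \tilde\mu(a_j))$ (with the appropriate sign depending on whether $i = i_1$ or $i = i_2$). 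The entire argument is driven by this cancellation, and once one discovers the closed-form formula by direct computation on the smallest case $|\ug|=|\ul|=2$, the induction is essentially mechanical.
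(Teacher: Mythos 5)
Your proposal is correct, and it takes a genuinely different route from the paper. The paper proves this base case by an inner induction on $\abs{\ul}$ together with explicit swap arguments: it exchanges the two candidate arms from $\below(U)$ at the root via an algebraic ``triplet'' identity (Claim~\ref{claim:triplets}) and exchanges the arm from $\above(U)$ via an auxiliary reordered policy and the inductive hypothesis, manipulating the full sum--product expansion of $Q$ for ordered policies. You instead exhibit a closed form, $Q(\pi,\ug,S)=\bigl(\alpha_2 P_2(S)-\alpha_1 P_1(S)\bigr)/(\alpha_2-\alpha_1)$ with $P_l(S)=\prod_{a_j\in S}\alpha_l/(\alpha_l+\tilde\mu(a_j))$, and verify it by a one-dimensional induction on $\abs{S}$; policy-independence is then immediate because the right-hand side depends only on the state. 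I checked the key cancellation: with root action $\bl p_{i,j}$ the coefficient of $P_{i'}(S)$ indeed collapses to $\pm\alpha_{i'}/(\alpha_2-\alpha_1)$ via $\tilde\mu(a_j)(\alpha_2-\alpha_1)+\alpha_i(\alpha_{i'}+\tilde\mu(a_j))=\alpha_{i'}(\alpha_i+\tilde\mu(a_j))$, and the formula also holds at $S=\emptyset$ (value $1$), so the recursion closes for every choice of root pair; the induction hypothesis legitimately covers arbitrary continuations since it is quantified over all $\mP$-valid policies at smaller states, and the one-positive-arm values $P_l(S)$ are policy-independent by Proposition~\ref{prop:case of one strong} (note your product, with $\alpha_l$ in the numerator, is the correct expression --- the displayed products in Propositions~\ref{prop:case of one strong} and~\ref{prop:case of one} have the two absolute means swapped relative to the definition of $\bl p_{i,j}$, which does not affect their order-invariance conclusion). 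The degenerate case $\alpha_1=\alpha_2$ is soundly handled by your continuity argument, since for a fixed combinatorial policy $Q$ is a rational function with strictly positive denominators. What your approach buys is a shorter, more transparent proof plus an explicit formula for $Q$; what it costs is that it leans on $\abs{\ug}=2$ (the two-term partial-fraction structure) and so, unlike the paper's swap machinery, it does not carry over symmetrically to the companion base case $\abs{\ul}=2$ with general $\abs{\ug}$ (Proposition~\ref{prop:case of two}), which is outside the statement you were asked to prove.
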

\begin{proofof}{Proposition \ref{prop:case of two strong}}
We prove the claim by induction, with Proposition \ref{prop:case of one} serving as the base case. Assume the claim holds for $\abs{\ul}=k-1$. It is enough to show that if $\abs{\ul}=k$, for any $a_i\in \ug,a_j\in \ul$, $Q^*_{i,j}(U)=Q^*(U)$. Assume that $Q^*_{{i_1},{j_1}}(U)=Q^*(U)$, and fix any $a_{i'}\in \ug, a_{j'}\in \ul$. 
\paragraph{Remark} We do not use Assumption \ref{assumption:dominance} here.
\paragraph{Step 1} Assume that $i'=i_1$ and $j' \neq j_1$. W.l.o.g. $j'=j_2$. We construct two policies, $\pi$ that ordered $\ul$ as $\sigr_\pi=(a_{j_1},a_{j_2},\dots, a_{j_{k}})$, and $\rho$ that orders $\ul$ as $\sigr_\rho=(a_{j_2},a_{j_1},\dots,a_{j_{k}})$. Both policies order $\ug$ according to $\sigl_\pi=\sigl_\rho=(a_{i_1},a_{i_2})$. Due to the inductive step and our assumption that  $Q^*_{i_1,j_1}(U)=Q^*(U)$, we have that $Q(\pi,U)=Q^*(U)$, and
\begin{align*}
&Q(\pi,\ug,\ul) = \underbrace{\prod_{l=1}^{k} \bl p_{{i_1},{j_l}}(a_{j_l})}_{\lambda(\pi)}
+ \underbrace{\sum_{f=1}^{k} \left(\prod_{l=1}^{f-1} \bl p_{{i_1},{j_l}}(a_{j_l})\right) \bl p_{{i_1},{j_f}}(a_{i_1}) \left(\prod_{l=f}^{k} \bl p_{{i_2},{j_l}}(a_{j_l})\right)}_{\delta(\pi)}.
\end{align*}
Notice that $\lambda(\pi)=\lambda(\rho)$. In addition, it holds that
{\thinmuskip=0mu
\medmuskip=0mu plus 0mu minus 0mu
\thickmuskip=0mu plus 0mu
\begin{align}\label{eq:ind 2 step 1}
\delta(\pi)&=\bl p_{{i_1},{j_1}}(a_{i_1})\prod_{l=1}^{k} \bl p_{{i_2},{j_l}}(a_{j_l})\nonumber\\
&\qquad +\bl p_{{i_1},{j_1}}(a_{j_1})\sum_{f=2}^{k} \left(\prod_{l=2}^{f-1} \bl p_{{i_1},{j_l}}(a_{j_l})\right) \bl p_{{i_1},{j_f}}(a_{i_1}) \left(\prod_{l=f}^{k} \bl p_{{i_2},{j_l}}(a_{j_l})\right)\nonumber\\
& =\bl p_{{i_1},{j_1}}(a_{i_1})\bl p_{{i_2},{j_1}}(a_{j_1})\bl p_{{i_2},{j_2}}(a_{j_2})\prod_{l=3}^{k} \bl p_{{i_2},{j_l}}(a_{j_l})\nonumber\\
&\qquad+ \bl p_{{i_1},{j_1}}(a_{j_1})\Bigg[\bl p_{{i_1},{j_2}}(a_{i_1})\bl p_{{i_2},{j_2}}(a_{j_2})\prod_{l=3}^{k} \bl p_{{i_2},{j_l}}(a_{j_l})\nonumber\\
&\qquad \qquad+\bl p_{{i_1},{j_2}}(a_{j_2})  \sum_{f=3}^{k} \left(\prod_{l=3}^{f-1} \bl p_{{i_1},{j_l}}(a_{j_l})\right) \bl p_{{i_1},{j_f}}(a_{i_1}) \left(\prod_{l=f}^{k} \bl p_{{i_2},{j_l}}(a_{j_l})\right) \Bigg]\nonumber\\
&=\bl p_{{i_2},{j_2}}(a_{j_2})\left( \bl p_{{i_1},{j_1}}(a_{i_1})\bl p_{{i_2},{j_1}}(a_{j_1})+\bl p_{{i_1}
,{j_1}}(a_{j_1})\bl p_{{i_1},{j_2}}(a_{i_1})\right)\prod_{l=3}^{k} \bl p_{{i_2},{j_l}}(a_{j_l})\nonumber\\
&\qquad+ \bl p_{{i_1},{j_1}}(a_{j_1})\bl p_{{i_1},{j_2}}(a_{j_2})  \sum_{f=3}^{k} \left(\prod_{l=3}^{f-1} \bl p_{{i_1},{j_l}}(a_{j_l})\right) \bl p_{{i_1},{j_f}}(a_{i_1}) \left(\prod_{l=f}^{k} \bl p_{{i_2},{j_l}}(a_{j_l})\right).
\end{align}}%
We show that
\begin{claim}\label{claim:triplets}
It holds that
\begin{align*}
&\bl p_{{i_2},{j_2}}(a_{j_2})\left( \bl p_{{i_1},{j_1}}(a_{i_1})\bl p_{{i_2},{j_1}}(a_{j_1})+\bl p_{{i_1},{j_1}}(a_{j_1})\bl p_{{i_1},{j_2}}(a_{i_1})\right)\\
&=\bl p_{{i_2},{j_1}}(a_{j_1})\left(\bl p_{{i_1},{j_2}}(a_{i_1})\bl p_{{i_2},{j_2}}(a_{j_2})+\bl p_{{i_1},{j_2}}(a_{j_2})\bl p_{{i_1},{j_1}}(a_{i_1})  \right).
\end{align*}
\end{claim}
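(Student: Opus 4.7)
The plan is to reduce the claim to a purely algebraic identity in four positive variables. Recall from Equation~\eqref{eq:blp from body} that for $a_i\in \above(A)$ and $a_j\in \below(A)$,
\[
\bl p_{i,j}(a_i)=\frac{-\mu(a_j)}{\mu(a_i)-\mu(a_j)},\qquad \bl p_{i,j}(a_j)=\frac{\mu(a_i)}{\mu(a_i)-\mu(a_j)}.
\]
Writing $\tmu{a}=\abs{\mu(a)}$ and setting $x_l=\tmu{a_{i_l}}$, $y_l=\tmu{a_{j_l}}$ (so that $x_l,y_l>0$), every mixing probability appearing in the claim has the clean form
\[
\bl p_{i_l,j_m}(a_{i_l})=\frac{y_m}{x_l+y_m},\qquad \bl p_{i_l,j_m}(a_{j_m})=\frac{x_l}{x_l+y_m}.
\]
First I would substitute these into both sides of the claim. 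Doing so expresses LHS and RHS as rational functions in $x_1,x_2,y_1,y_2$ whose common denominator is $(x_1+y_1)(x_1+y_2)(x_2+y_1)(x_2+y_2)$.

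Next I would clear this common denominator. After the cancellation a factor of $x_2$ on each side, the LHS becomes $y_1 x_2(x_1+y_2)+x_1 y_2(x_2+y_1)$ and the RHS becomes $y_2 x_2(x_1+y_1)+x_1 y_1(x_2+y_2)$. Expanding both, I expect each to collapse to the same symmetric polynomial $x_1 x_2 y_1 + x_1 x_2 y_2 + x_1 y_1 y_2 + x_2 y_1 y_2$, which establishes the identity.

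The main obstacle is nothing conceptual---it is purely bookkeeping. One must be careful that every $\bl p_{i_l,j_m}$ is expanded with the correct numerator (i.e., remembering which argument is the positive-expectation arm and which is the negative-expectation arm), because a sign or index slip would produce an asymmetric polynomial. Note also that Assumption~\ref{assumption:dominance} is not used anywhere in the derivation; the identity is a structural feature of two-arm MIR mixtures, which is consistent with the remark preceding the claim that the arguments leading to Lemma~\ref{lemma:equivalence} are independent of the stochastic-order assumption.
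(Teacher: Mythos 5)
Your proposal is correct and follows essentially the same route as the paper: substitute the explicit two-arm mixing probabilities $\bl p_{i_l,j_m}(a_{i_l})=\tmu{j_m}/(\tmu{i_l}+\tmu{j_m})$ and $\bl p_{i_l,j_m}(a_{j_m})=\tmu{i_l}/(\tmu{i_l}+\tmu{j_m})$, put both sides over the common denominator $(\tmu{i_1}+\tmu{j_1})(\tmu{i_1}+\tmu{j_2})(\tmu{i_2}+\tmu{j_1})(\tmu{i_2}+\tmu{j_2})$, and check that the numerators agree (the paper regroups four monomials, you cancel the common factor $\tmu{i_2}$ and expand to the same symmetric polynomial). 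Your numerator computation and the observation that Assumption~\ref{assumption:dominance} plays no role are both accurate.
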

Now, set $\sigma:\mathbb N \rightarrow \mathbb N$ such that $\sigma(1)=2, \sigma(2)=1$, and $\sigma(i)=i$ for $i\geq 3$; hence, using Claim \ref{claim:triplets},
\begin{align}
\textnormal{Eq. (\ref{eq:ind 2 step 1})}
& =\bl p_{{i_1},{j_2}}(a_{i_1})\bl p_{{i_2},{j_2}}(a_{j_2})\bl p_{{i_2},{j_1}}(a_{j_1})\prod_{l=3}^{k} \bl p_{{i_2},{j_l}}(a_{j_l})\nonumber\\
&\qquad+ \bl p_{{i_1},{j_2}}(a_{j_2}) \Bigg[\bl p_{{i_1},{j_1}}(a_{i_1})\bl p_{{i_2},{j_1}}(a_{j_1}) \prod_{l=3}^{k} \bl p_{{i_2},{j_l}}(a_{j_l})\nonumber\\
&\qquad \qquad + \bl p_{{i_1},{j_1}}(a_{j_1})\sum_{f=3}^{k} \left(\prod_{l=3}^{f-1} \bl p_{{i_1},{j_l}}(a_{j_l})\right) \bl p_{{i_1},{j_f}}(a_{i_1}) \left(\prod_{l=f}^{k} \bl p_{{i_2},{j_l}}(a_{j_l})\right) \Bigg]\nonumber\\
&= \sum_{f=1}^{k} \left(\prod_{l=1}^{f-1} \bl p_{{i_1},{j_{\sigma(l)}}}(a_{j_{\sigma(l)}})\right) \bl p_{{i_1},{j_{\sigma (f)}}}(a_{i_1}) \left(\prod_{l=f}^{k} \bl p_{{i_2},{j_{\sigma(l)}}}(a_{j_{\sigma(l)}})\right)\nonumber\\
&=\delta(\rho)
\end{align}
\paragraph{Step 2} Assume that $i'=i_2\neq i_1$ and $j'=j_1$. We construct two policies, $\pi$ that orders $\ug$ as $\sigl_\pi=(a_{i_1},a_{i_2})$, and $\rho$ that orders $\ug$ as  $\sigl_\rho=(a_{i_2},a_{i_1})$. Both policies order $\ul$ by $\sigr_\pi=\sigr_\rho=(a_{j_1},a_{j_2}, \dots, a_{j_k})$. In addition, we introduce a third policy, $\tilde \rho$, that has the same order as $\rho$ on $\ug$, namely $\sigl_{\tilde \rho}=\sigl_\rho=(a_{i_2},a_{i_1})$, and orders $\ul$ by $\sigr_{\tilde \rho}=(a_{j_k},a_{j_2},\dots ,a_{j_{k-1}},a_{j_1})$.
It holds that
{\thinmuskip=0mu
\medmuskip=0mu plus 0mu minus 0mu
\thickmuskip=0mu plus 0mu
\begin{align}\label{eq: multiple transitions}
&Q(\pi,\ug,\ul)=\prod_{l=1}^k \bl p_{{i_1},{j_l}}(a_{j_l})+\sum_{f=1}^{k} \left(\prod_{l=1}^{f-1} \bl p_{{i_1},{j_l}}(a_{j_l})\right) \bl p_{{i_1},{j_f}}(a_{i_1}) \left(\prod_{l=f}^{k} \bl p_{{i_2},{j_l}}(a_{j_l})\right)\nonumber\\
&=\bl p_{{i_1},{j_1}}(a_{j_1})\bl p_{{i_1},{j_k}}(a_{j_k})\overbrace{\prod_{l=2}^{k-1} \bl p_{{i_1},{j_l}}(a_{j_l})}^{I_1}+\bl p_{{i_2},{j_k}}(a_{j_k})\Bigg[\bl p_{{i_1},{j_1}}(a_{j_1})\bl p_{{i_1},{j_k}}(a_{i_1})\underbrace{\prod_{l=2}^{k-1} \bl p_{{i_1},{j_l}}(a_{j_l})}_{I_1}\nonumber\\
&\quad+\underbrace{\sum_{f=1}^{k-1} \left(\prod_{l=1}^{f-1} \bl p_{{i_1},{j_l}}(a_{j_l})\right) \bl p_{{i_1},{j_f}}(a_{i_1}) \left(\prod_{l=f}^{k-1} \bl p_{{i_2},{j_l}}(a_{j_l})\right)}_{I_2}\Bigg]\nonumber\\
&= \bl p_{{i_1},{j_1}}(a_{j_1})\bl p_{{i_1},{j_k}}(a_{j_k})\bl p_{{i_2},{j_k}}(a_{j_k})I_1 +\bl p_{{i_1},{j_1}}(a_{j_1})\bl p_{{i_1},{j_k}}(a_{j_k})\bl p_{{i_2},{j_k}}(a_{i_2})I_1\nonumber\\
&\qquad+\bl p_{{i_2},{j_k}}(a_{j_k})\left[\bl p_{{i_1},{j_1}}(a_{j_1})\bl p_{{i_1},{j_k}}(a_{i_1})I_1+I_2\right]\nonumber\\
&=\bl p_{{i_1},{j_1}}(a_{j_1})\bl p_{{i_1},{j_k}}(a_{j_k})\bl p_{{i_2},{j_k}}(a_{i_2})I_1\nonumber\\
&\qquad +\bl p_{{i_2},{j_k}}(a_{j_k})\left[\bl p_{{i_1},{j_1}}(a_{j_1})\bl p_{{i_1},{j_k}}(a_{j_k})I_1+\bl p_{{i_1},{j_1}}(a_{j_1})\bl p_{{i_1},{j_k}}(a_{i_1})I_1+I_2\right]\nonumber\\
&=\bl p_{{i_1},{j_1}}(a_{j_1})\bl p_{{i_1},{j_k}}(a_{j_k})\bl p_{{i_2},{j_k}}(a_{i_2})I_1+\bl p_{{i_2},{j_k}}(a_{j_k})\left[\bl p_{{i_1},{j_1}}(a_{j_1})I_1+I_2\right].
\end{align}}%
Next, observe that
{\thinmuskip=0mu
\medmuskip=0mu plus 0mu minus 0mu
\thickmuskip=0mu plus 0mu
\begin{equation}\label{eq:ind step with k-1}
\left[\bl p_{{i_1},{j_1}}(a_{j_1})I_1+I_2\right]=\prod_{l=1}^{k-1} \bl p_{{i_1},{j_l}}(a_{j_l})+\sum_{f=1}^{k-1} \left(\prod_{l=1}^{f-1} \bl p_{{i_1},{j_l}}(a_{j_l})\right) \bl p_{{i_1},{j_f}}(a_{i_1}) \left(\prod_{l=f}^{k-1} \bl p_{{i_2},{j_l}}(a_{j_l})\right).
\end{equation}}%
Notice that the latter is precisely $Q(\pi,\left\{a_{i_1},a_{i_2},a_{j_1},\dots a_{j_{k-1}}\right\})$; thus, the inductive step implies that it is order invariant. Let $\sigma:\mathbb N \rightarrow \mathbb N$ such that $\sigma(1)=k,\sigma(k)=1$, and $\sigma(i)=i$ for $1<i<k$. Since $Q(\pi,\left\{a_{i_1},a_{i_2},a_{j_1},\dots a_{j_{k-1}}\right\})=Q(\tilde \rho,\left\{a_{i_1},a_{i_2},a_{j_1},\dots a_{j_{k-1}}\right\})$, we conclude that the expression in Equation (\ref{eq:ind step with k-1}) equals 
\[
\prod_{l=2}^{k} \bl p_{{i_2},{j_{\sigma(l)}}}(a_{j_{\sigma(l)}})+\sum_{f=2}^{k} \left(\prod_{l=2}^{f-1} \bl p_{{i_2},{j_{\sigma(l)}}}(a_{j_{\sigma(l)}})\right) \bl p_{{i_2},{j_{\sigma(f)}}}(a_{i_2}) \left(\prod_{l=f}^{k} \bl p_{{i_1},{j_{\sigma(l)}}}(a_{j_{\sigma(l)}})\right).
\]
Combining this with Equation (\ref{eq: multiple transitions}),
{\thinmuskip=0mu
\medmuskip=0mu plus 0mu minus 0mu
\thickmuskip=0mu plus 0mu
\begin{align}
&\textnormal{Eq. }\refeq{eq: multiple transitions}=\bl p_{{i_1},{j_1}}(a_{j_1})\bl p_{{i_1},{j_k}}(a_{j_k})\bl p_{{i_2},{j_k}}(a_{i_2})\prod_{l=2}^{k-1} \bl p_{{i_1},{j_{\sigma(l)}}}(a_{j_{\sigma(l)}})\nonumber\\
&\quad +\bl p_{{i_2},{j_k}}(a_{j_k})\Bigg[
\prod_{l=2}^{k} \bl p_{{i_2},{j_{\sigma(l)}}}(a_{j_{\sigma(l)}})+\sum_{f=2}^{k} \left(\prod_{l=2}^{f-1} \bl p_{{i_2},{j_{\sigma(l)}}}(a_{j_{\sigma(l)}})\right) \bl p_{{i_2},{j_{\sigma(f)}}}(a_{i_2}) \left(\prod_{l=f}^{k} \bl p_{{i_1},{j_{\sigma(l)}}}(a_{j_{\sigma(l)}})\right)
\Bigg]\nonumber\\
&=\bl p_{{i_2},{j_k}}(a_{i_2})\prod_{l=1}^{k} \bl p_{{i_1},{j_{\sigma(l)}}}(a_{j_{\sigma(l)}})+\bl p_{{i_2},{j_k}}(a_{j_k})
\prod_{l=2}^{k} \bl p_{{i_2},{j_{\sigma(l)}}}(a_{j_{\sigma(l)}})\nonumber\\
&\quad +\bl p_{{i_2},{j_k}}(a_{j_k})\left[
\sum_{f=2}^{k} \left(\prod_{l=2}^{f-1} \bl p_{{i_2},{j_{\sigma(l)}}}(a_{j_{\sigma(l)}})\right) \bl p_{{i_2},{j_{\sigma(f)}}}(a_{i_2}) \left(\prod_{l=f}^{k} \bl p_{{i_1},{j_{\sigma(l)}}}(a_{j_{\sigma(l)}})\right)
\right]\nonumber\\
&=\prod_{l=1}^{k} \bl p_{{i_2},{j_{\sigma(l)}}}(a_{j_{\sigma(l)}})+
\sum_{f=1}^{k} \left(\prod_{l=1}^{f-1} \bl p_{{i_2},{j_{\sigma(l)}}}(a_{j_{\sigma(l)}})\right) \bl p_{{i_2},{j_{\sigma(f)}}}(a_{i_2}) \left(\prod_{l=f}^{k} \bl p_{{i_1},{j_{\sigma(l)}}}(a_{j_{\sigma(l)}})\right)\nonumber\\
&=Q(\tilde \rho ,U),
\end{align}}%
where the last equality follows from the definition of $\tilde \rho$ (orders the arms precisely so). Finally, $Q(\tilde \rho, U)=Q(\rho ,U)$ follows from the previous Step 1.
\paragraph{Step 3}
The two previous steps imply that for any $a_{i'}\in \ug, a_{j'}\in \ul$, it holds that
\[
Q_{i',j'}(U)=Q_{i',j_1}(U)=Q_{i_1,j_1}(U).
\]
This completes the proof of Proposition \ref{prop:case of two strong}.
\end{proofof}
\begin{proofof}{Claim \ref{claim:triplets}}
To ease readability, let $\tmu{i}\defeq\abs{\mu(a_i)}$ for every $a_i\in A$. It holds that
\begin{align*}
&\bl p_{{i_2},{j_2}}({j_2})\left( \bl p_{{i_1},{j_1}}({i_1})\bl p_{{i_2},{j_1}}({j_1})+\bl p_{{i_1},{j_1}}({j_1})\bl p_{{i_1},{j_2}}({i_1})\right)\\
&=\frac{\tmu{i_2}}{\tmu{i_2}+\tmu{j_2}}\left( \frac{\tmu{j_1}}{\tmu{i_1}+\tmu{j_1}}\frac{\tmu{i_2}}{\tmu{i_2}+\tmu{j_1}}+\frac{\tmu{i_1}}{\tmu{i_1}+\tmu{j_1}}\frac{\tmu{j_2}}{\tmu{i_1}+\tmu{j_2}}\right) \\
&= \frac{\tmu{i_2}\tmu{j_1}\tmu{i_2}(\tmu{i_1}+\tmu{j_2})+\tmu{i_2}\tmu{i_1}\tmu{j_2}(\tmu{i_2}+\tmu{j_1})}{(\tmu{i_1}+\tmu{j_1})(\tmu{i_1}+\tmu{j_2})(\tmu{i_2}+\tmu{j_1})(\tmu{i_2}+\tmu{j_2})}\\
&= \frac{\overbrace{\tmu{i_2}\tmu{j_1}\tmu{i_2}\tmu{i_1}}^{I}+\overbrace{\tmu{i_2}\tmu{j_1}\tmu{i_2}\tmu{j_2}}^{II}+\overbrace{\tmu{i_2}\tmu{i_1}\tmu{j_2}\tmu{i_2}}^{III}+\overbrace{\tmu{i_2}\tmu{i_1}\tmu{j_2}\tmu{j_1}}^{IV}}{(\tmu{i_1}+\tmu{j_1})(\tmu{i_1}+\tmu{j_2})(\tmu{i_2}+\tmu{j_1})(\tmu{i_2}+\tmu{j_2})}\\
&= \frac{\overbrace{\tmu{i_2}\tmu{j_2}\tmu{i_2}\tmu{i_1}}^{III}+\overbrace{\tmu{i_2}\tmu{j_2}\tmu{i_2}\tmu{j_1}}^{II}+\overbrace{\tmu{i_2}\tmu{i_1}\tmu{j_1}\tmu{i_2}}^{I}+\overbrace{\tmu{i_2}\tmu{i_1}\tmu{j_1}\tmu{j_2}}^{IV}}{(\tmu{i_1}+\tmu{j_1})(\tmu{i_1}+\tmu{j_2})(\tmu{i_2}+\tmu{j_1})(\tmu{i_2}+\tmu{j_2})}\\
&= \frac{\tmu{i_2}\tmu{j_2}\tmu{i_2}(\tmu{i_1}+\tmu{j_1})+\tmu{i_2}\tmu{i_1}\tmu{j_1}(\tmu{i_2}+\tmu{j_2})}{(\tmu{i_1}+\tmu{j_1})(\tmu{i_1}+\tmu{j_2})(\tmu{i_2}+\tmu{j_1})(\tmu{i_2}+\tmu{j_2})}\\
&=\bl p_{{i_2},{j_1}}({j_1})\left(\bl p_{{i_1},{j_2}}({i_1})\bl p_{{i_2},{j_2}}({j_2})+\bl p_{{i_1},{j_2}}({j_2})\bl p_{{i_1},{j_1}}({i_1})  \right)
\end{align*}
\end{proofof}

\begin{proposition}\label{prop:case of two}
Let $U$ be an arbitrary state, such that $\ug\defeq\above(U)\geq 2$ and $\ul\defeq\below(U) = 2$. For any pair of $\mP$-valid policies $\pi$ and $\rho$, it holds that $Q(\pi,U)=Q(\rho,U)$.
\end{proposition}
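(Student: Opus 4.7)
The plan is to mirror the structure of Proposition~\ref{prop:case of two strong}, inducting this time on $k := |\above(U)|$ (rather than on $|\below|$), with the base case $k = 2$ supplied directly by Proposition~\ref{prop:case of two strong}: that result already covers $|\above| = 2,\, |\below| \geq 2$, and in particular the case $|\below| = 2$ that we need. For the inductive step, I assume the statement for every state with $|\above| \in \{2, \ldots, k-1\}$ and $|\below| = 2$, and consider $U$ with $\above(U) = \{a_{i_1}, \ldots, a_{i_k}\}$ and $\below(U) = \{a_{j_1}, a_{j_2}\}$ for $k \geq 3$.

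The key observation is that one unrolled step of the recursion,
\[
Q(\pi, U) = \bl p_{i_l, j_m}(a_{j_m})\, Q(\pi, U \setminus \{a_{j_m}\}) + \bl p_{i_l, j_m}(a_{i_l})\, Q(\pi, U \setminus \{a_{i_l}\}),
\]
lands in two sub-states that are already policy-invariant. The state $U \setminus \{a_{j_m}\}$ has $|\above| = k,\, |\below| = 1$ (Proposition~\ref{prop:case of one}), while $U \setminus \{a_{i_l}\}$ has $|\above| = k-1,\, |\below| = 2$ (inductive hypothesis, or Proposition~\ref{prop:case of two strong} when $k-1 = 2$). Writing $Q^{\star}$ for these invariant values, $Q(\pi, U)$ depends on $\pi$ only through the first-action choice $(l, m)$, so the task reduces to showing that
\[
Q^{\star}_{l,m}(U) := \bl p_{i_l, j_m}(a_{j_m})\, Q^{\star}(U \setminus \{a_{j_m}\}) + \bl p_{i_l, j_m}(a_{i_l})\, Q^{\star}(U \setminus \{a_{i_l}\})
\]
is constant in $(l, m) \in [k] \times \{1, 2\}$.

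I would verify this invariance in two swap steps analogous to those in Proposition~\ref{prop:case of two strong}, but with the roles of $\above$ and $\below$ exchanged. First, for fixed $l$, construct two $\mP$-valid policies $\pi, \rho$ that agree on $\sigl$ (starting with $a_{i_l}$) but swap the two entries of $\sigr$; expand $Q(\pi, U)$ and $Q(\rho, U)$ by factorization along the common $\sigl$, and use the identity of Claim~\ref{claim:triplets} (which is symmetric under $i \leftrightarrow j$, since $\bl p_{x,y}(a_x)$ depends symmetrically on $|\mu(a_x)|$ and $|\mu(a_y)|$) to conclude $Q^{\star}_{l,1}(U) = Q^{\star}_{l,2}(U)$. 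Second, for fixed $m$, establish $Q^{\star}_{l, m}(U) = Q^{\star}_{l+1, m}(U)$ by the analogous adjacent-transposition argument along the common $\sigr$; a bubble-sort composition then yields $Q^{\star}_{l, m}(U) = Q^{\star}_{l', m}(U)$ for all $l, l'$. Chaining the two swap invariances bridges any $(l, m)$ to $(1, 1)$, closing the induction.

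The main obstacle is the algebraic bookkeeping in the two swap steps. Although Claim~\ref{claim:triplets} already contains the crucial $2 \times 2$ cancellation, with $k$ a priori superior arms and only $2$ a priori inferior arms the factorization tree is shaped quite differently from the one in Proposition~\ref{prop:case of two strong}: many $i$-drops can interleave around the two $j$-drops. I would parametrize each on-path terminal by the position in $\sigl$ at which each $j$-drop occurs, exploit that the state after the first $j$-drop falls under Proposition~\ref{prop:case of one} (hence policy-invariant with an explicit product form $\prod_l \tfrac{\tmu{i_l}}{\tmu{i_l}+\tmu{j}}$), and then collapse the resulting double sum onto the same $2 \times 2$ identity of Claim~\ref{claim:triplets} applied to the pair $(a_{j_1}, a_{j_2})$ against any two $i$-arms flanking the swap. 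Making this reduction fully precise is the most delicate part of the argument.
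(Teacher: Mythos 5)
Your skeleton coincides with the paper's own proof: it too inducts on $|\above(U)|$ (its base case is $|\above(U)|=1$ via Proposition~\ref{prop:case of one strong}; starting at $k=2$ from Proposition~\ref{prop:case of two strong} is equally legitimate), reduces the inductive step to showing that the one-step value $Q^*_{i,j}(U)$ is independent of the first action $(i,j)$ --- your justification of this reduction, via Proposition~\ref{prop:case of one} for $U\setminus\{a_{j_m}\}$ and the inductive hypothesis for $U\setminus\{a_{i_l}\}$, is correct --- and then establishes the invariance through two exchange arguments closed by a triplet identity.

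The genuine gap is in your Step~A, the exchange of the two entries of $\sigr$ under a common $\sigl$. You assert that, after parametrizing terminals by the positions of the two $j$-drops, the double sum ``collapses onto'' Claim~\ref{claim:triplets} applied to $(a_{j_1},a_{j_2})$ against two flanking $i$-arms. With $k\geq 3$ superior arms the two expansions do not cancel in local $2\times 2$ blocks: the mismatch propagates across all $k$ positions, and no pairing of individual terms closes it. The paper does not obtain this exchange by a direct collapse at all; in its Step~2 it first proves the $\sigl$-transposition invariance (your Step~B, which is the genuinely direct computation, using the mirrored identity, Claim~\ref{claim:triplets additional}), then peels off the last arm $a_{i_k}$, recognizes the remaining bracket as $\overline{Q}$ of the state with $k-1$ superior arms and both inferior arms, reorders that sub-state by the inductive hypothesis, and matches the result against an auxiliary policy $\tilde\rho$ with a permuted $\sigl$. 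So you should reverse the order of your two steps and derive Step~A from Step~B together with the order-invariance at $k-1$, exactly as the paper does (alternatively, Step~A can be verified directly by exhibiting a closed form for the non-exploration probability that is manifestly symmetric in the two inferior arms, but that is not the mechanism you sketched).
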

\begin{proofof}{Proposition \ref{prop:case of two}}
The proof of this proposition goes along the lines of the proof of Proposition \ref{prop:case of two strong}, but we provide the details here for completeness. For simplicity, we let $\overline Q = 1-Q$, and prove that for any two policies $\pi, \rho$ it holds that $\overline Q(\pi,U)=\overline Q(\rho,U)$.

We prove the claim by induction, with Proposition \ref{prop:case of one strong} serving as the base case. Assume the claim holds for $\abs{\ug}=k-1$. It is enough to show that if $\abs{\ug}=k$, for any $a_i\in \ug,a_j\in \ul$, $Q^*_{i,j}(U)=Q^*(U)$. Assume that $Q^*_{{i_1},{j_1}}(U)=Q^*(U)$, and define a policy $\pi$ such that $\sigr_\pi=(a_{j_1},a_{j_2})$ and $\sigl_\pi=(a_{i_1},a_{i_2},\dots a_{i_k})$. Next, fix any $a_{i'}\in \ug, a_{j'}\in \ul$. 
\paragraph{Remark} We do not use Assumption \ref{assumption:dominance} here.
\paragraph{Step 1} Assume that $i'\neq i_1$ and $j' = j_1$. W.l.o.g. $i'=i_2$. We construct the ordered policy $\rho$ that orders $\ul$ by $\sigr_\rho=\sigr_\pi=(a_{j_1},a_{j_2})$ and $\ug$ by $\sigl_\rho=(a_{i_2},a_{i_1},\dots, a_{i_k})$. Due to the inductive step and our assumption that  $Q^*_{i_1,j_1}(U)=Q^*(U)$, we have that $\overline Q(\pi,U)=\overline Q^*(U)$. For brevity, we introduce the following notations. For $r\in \{1,2,3\}$, let
{\thinmuskip=.2mu
\medmuskip=0mu plus .2mu minus .2mu
\thickmuskip=1mu plus 1mu
\[
\lambda^r_{j_1}=\prod_{l=r}^{k} \bl p_{{i_l},{j_1}}(a_{i_l}),\lambda^r_{j_2}=\prod_{l=r}^{k} \bl p_{{i_l},{j_2}}(a_{i_l}), \delta^r= \sum_{f=r}^{k} \left(\prod_{l=r}^{f-1} \bl p_{{i_l},{j_1}}(a_{i_l})\right) \bl p_{{i_f},{j_1}}(a_{j_1}) \left(\prod_{l=f}^{k} \bl p_{{i_l},{j_2}}(a_{i_l})\right).
\]
}%
Observe that
\begin{align}\label{eq:from pi to ghjghj}
\overline Q(\pi,U)&=\lambda^1_{j_1}+\delta^1=\lambda^1_{j_1}+\bl p_{i_1,j_1}(a_{j_1})\lambda^1_{j_2}+\bl p_{i_1,j_1}(a_{i_1})\delta^2\nonumber\\
&=\lambda^1_{j_1}+\bl p_{i_1,j_1}(a_{j_1})\bl p_{i_1,j_2}(a_{i_1})\bl p_{i_2,j_2}(a_{i_2})\lambda^3_{j_2}\nonumber\\
&\qquad \qquad+\bl p_{i_1,j_1}(a_{i_1})\left[
\bl p_{i_2,j_1}(a_{j_1})\bl p_{i_2,j_2}(a_{i_2})\lambda^3_{j_2}+\bl p_{i_2,j_1}(a_{i_2})\delta^3\right].
\end{align}
Next, we show that 
\begin{claim}\label{claim:triplets additional}
It holds that
\begin{align*}
&\bl p_{{i_2},{j_2}}(a_{i_2})\left( \bl p_{{i_1},{j_1}}(a_{j_1})\bl p_{{i_1},{j_2}}(a_{i_1})+\bl p_{{i_1},{j_1}}(a_{i_1})\bl p_{{i_2},{j_1}}(a_{j_1})\right)\\
&=\bl p_{{i_1},{j_2}}(a_{i_1})\left(\bl p_{{i_2},{j_1}}(a_{j_1})\bl p_{{i_2},{j_2}}(a_{i_2})+\bl p_{{i_2},{j_1}}(a_{i_2})\bl p_{{i_1},{j_1}}(a_{j_1})  \right).
\end{align*}
\end{claim}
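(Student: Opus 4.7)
The plan is to verify Claim~\ref{claim:triplets additional} by direct algebraic substitution, mirroring the treatment of Claim~\ref{claim:triplets}. Writing $\tmu{k}\defeq\abs{\mu(a_k)}$ and using the explicit formulas $\bl p_{i,j}(a_i)=\tmu{j}/(\tmu{i}+\tmu{j})$ and $\bl p_{i,j}(a_j)=\tmu{i}/(\tmu{i}+\tmu{j})$ for $a_i\in\above(A)$ and $a_j\in\below(A)$, I first observe that the four portfolios appearing on each side belong to the same set $\{\bl p_{i_1,j_1},\bl p_{i_1,j_2},\bl p_{i_2,j_1},\bl p_{i_2,j_2}\}$. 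Hence both sides admit the common denominator $(\tmu{i_1}+\tmu{j_1})(\tmu{i_1}+\tmu{j_2})(\tmu{i_2}+\tmu{j_1})(\tmu{i_2}+\tmu{j_2})$, and the identity reduces to a polynomial equality in the four symbols $\tmu{i_1},\tmu{i_2},\tmu{j_1},\tmu{j_2}$.

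After clearing denominators and pulling out the common leading factor $\tmu{j_2}$, the LHS numerator becomes $\tmu{i_1}\tmu{j_2}(\tmu{i_2}+\tmu{j_1})+\tmu{j_1}\tmu{i_2}(\tmu{i_1}+\tmu{j_2})$, while the RHS numerator becomes $\tmu{i_2}\tmu{j_2}(\tmu{i_1}+\tmu{j_1})+\tmu{j_1}\tmu{i_1}(\tmu{i_2}+\tmu{j_2})$. Expanding each sum of products yields the same four cubic monomials $\tmu{i_1}\tmu{i_2}\tmu{j_2}$, $\tmu{i_1}\tmu{j_1}\tmu{j_2}$, $\tmu{i_1}\tmu{i_2}\tmu{j_1}$, and $\tmu{i_2}\tmu{j_1}\tmu{j_2}$, establishing the equality.

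Since the claim is ultimately a polynomial identity in four variables, no genuine conceptual obstacle arises; the only care required is bookkeeping the indices correctly when substituting the explicit formulas. I would present the verification as a single aligned display in the same format used for the proof of Claim~\ref{claim:triplets}, so that the cancellation pattern of the four monomials is visible at a glance.
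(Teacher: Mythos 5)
Your proposal is correct and follows essentially the same route as the paper's proof: both substitute the explicit formulas $\bl p_{i,j}(a_i)=\tmu{j}/(\tmu{i}+\tmu{j})$, $\bl p_{i,j}(a_j)=\tmu{i}/(\tmu{i}+\tmu{j})$, put everything over the common denominator $(\tmu{i_1}+\tmu{j_1})(\tmu{i_1}+\tmu{j_2})(\tmu{i_2}+\tmu{j_1})(\tmu{i_2}+\tmu{j_2})$, and check that the expanded numerators consist of the same monomials. The only cosmetic difference is that you factor out the common $\tmu{j_2}$ (leaving four cubic monomials) while the paper carries it along and regroups the quartic terms directly.
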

Combining Equation \refeq{eq:from pi to ghjghj} and Claim \ref{claim:triplets additional}, we get
\begin{align}\label{eq: flipping}
\textnormal{Eq. } \refeq{eq:from pi to ghjghj}
&=\lambda^1_{j_1}+\bl p_{i_2,j_1}(a_{j_1})\bl p_{i_2,j_2}(a_{i_2})\bl p_{i_1,j_2}(a_{i_1})\lambda^3_{j_2}\nonumber\\
&\qquad \qquad+\bl p_{i_2,j_1}(a_{i_2})\left[
\bl p_{i_1,j_1}(a_{j_1})\bl p_{i_1,j_2}(a_{i_1})\lambda^3_{j_2}+\bl p_{i_1,j_1}(a_{i_1})\delta^3\right]\nonumber \\
&=\lambda^1_{j_1}+\bl p_{i_2,j_1}(a_{j_1})\lambda^1_{j_2}+\bl p_{i_2,j_1}(a_{i_2})\left[
\bl p_{i_1,j_1}(a_{j_1})\bl p_{i_1,j_2}(a_{i_1})\lambda^3_{j_2}+\bl p_{i_1,j_1}(a_{i_1})\delta^3\right]
\end{align}
Now, set $\sigma:\mathbb N \rightarrow \mathbb N$ such that $\sigma(1)=2, \sigma(2)=1$, and $\sigma(i)=i$ for $i\geq 3$; hence,
{\thinmuskip=.2mu
\medmuskip=0mu plus .2mu minus .2mu
\thickmuskip=1mu plus 1mu
\begin{align*}
\textnormal{Eq. }\refeq{eq: flipping}&=\lambda^1_{j_1}+\bl p_{i_2,j_1}(a_{j_1})\lambda^1_{j_2}\\
&\qquad \qquad +\bl p_{i_2,j_1}(a_{i_2})\left[ 
\sum_{f=2}^{k} \left(\prod_{l=2}^{f-1} \bl p_{{i_{\sigma(l)}},{j_1}}(a_{i_{\sigma(l)}})\right) \bl p_{{i_{\sigma (f)}},{j_1}}(a_{j_1}) \left(\prod_{l=f}^{k} \bl p_{{i_{\sigma(l)}},{j_2}}(a_{i_{\sigma(l)}})\right)
 \right] \\
&= \lambda^1_{j_1}+ \sum_{f=1}^{k} \left(\prod_{l=1}^{f-1} \bl p_{{i_{\sigma(l)}},{j_1}}(a_{i_{\sigma(l)}})\right) \bl p_{{i_{\sigma (f)}},{j_1}}(a_{j_1}) \left(\prod_{l=f}^{k} \bl p_{{i_{\sigma(l)}},{j_2}}(a_{i_{\sigma(l)}})\right)\\
&= \overline Q(\rho,U).
\end{align*}
}
This concludes Step 1.
\paragraph{Step 2} Assume that $i'=i_1$ and $j' = j_2 \neq j_1$. We construct two ordered policies, $\rho$ and $\tilde \rho$ such that $\sigr_\rho=\sigr_{\tilde \rho}=(a_{j_2},a_{j_1})$ and $\sigl_\rho=(a_{i_1},a_{i_2},\dots, a_{i_k})$, $\sigl_{\tilde \rho}=(a_{i_k},a_{i_2},\dots, a_{i_{k-1}},a_{i_1})$. The previous Step 1 implies that $\overline Q(\rho,U)=\overline Q(\tilde \rho, U)$; thus, it suffices to show that $\overline Q(\pi,U)=\overline Q(\tilde \rho, U)$. Notice that
{\thinmuskip=.2mu
\medmuskip=0mu plus .2mu minus .2mu
\thickmuskip=1mu plus 1mu
\begin{align}\label{jnknkfdas}
\overline Q(\pi,U)&=\bl p_{i_k,j_1}(a_{i_k})\overbrace{\prod_{l=1}^{k-1} \bl p_{{i_l},{j_1}}(a_{i_l})}^{I_1}\nonumber \\
&+\bl p_{i_k,j_2}(a_{i_k})\left[\bl p_{i_k,j_1}(a_{j_1})\underbrace{\prod_{l=1}^{k-1} \bl p_{{i_l},{j_1}}(a_{i_l})}_{I_1}+\underbrace{\sum_{f=1}^{k-1} \left(\prod_{l=1}^{f-1} \bl p_{{i_l},{j_1}}(a_{i_l})\right) \bl p_{{i_f},{j_1}}(a_{j_1}) \left(\prod_{l=f}^{k-1} \bl p_{{i_l},{j_2}}(a_{i_l})\right)}_{I_2}  \right]
\end{align}}%
Rearranging,
\begin{align}\label{eq:kijnjvfvs}
\textnormal{Eq. }\refeq{jnknkfdas} &=\bl p_{i_k,j_1}(a_{i_k})I_1+\bl p_{i_k,j_2}(a_{i_k})\left[\bl p_{i_k,j_1}(a_{j_1})I_1+I_2  \right]. \nonumber \\
&= \bl p_{i_k,j_1}(a_{i_k})\left( \bl p_{i_k,j_2}(a_{i_k})+\bl p_{i_k,j_2}(a_{j_2}) \right) I_1+\bl p_{i_k,j_2}(a_{i_k})\left[\bl p_{i_k,j_1}(a_{j_1})I_1+I_2  \right]. \nonumber \\
&= \bl p_{i_k,j_1}(a_{i_k})\bl p_{i_k,j_2}(a_{j_2})I_1+\bl p_{i_k,j_2}(a_{i_k})\left[I_1+I_2  \right]. 
\end{align}
Recall that
\begin{align} \label{porenkjnf}
I_1+I_2=\prod_{l=1}^{k-1} \bl p_{{i_l},{j_1}}(a_{i_l})+\sum_{f=1}^{k-1} \left(\prod_{l=1}^{f-1} \bl p_{{i_l},{j_1}}(a_{i_l})\right) \bl p_{{i_f},{j_1}}(a_{j_1}) \left(\prod_{l=f}^{k-1} \bl p_{{i_l},{j_2}}(a_{i_l})\right),
\end{align}
which is precisely $\overline Q(\pi,\left\{a_{i_1},\dots, a_{i_{k-1}},a_{j_1},a_{j_2}\right\})$; thus, the inductive step implies that it is order invariant. Let $\sigma:\mathbb N \rightarrow \mathbb N$ such that $\sigma(1)=k,\sigma(k)=1$, and $\sigma(i)=i$ for $1<i<k$. Since $\overline Q(\pi,\left\{a_{i_1},\dots, a_{i_{k-1}},a_{j_1},a_{j_2}\right\})=\overline Q(\tilde \rho,\left\{a_{i_1},\dots, a_{i_{k-1}},a_{j_1},a_{j_2}\right\}))$, we conclude that the expression in Equation (\ref{porenkjnf}) equals 
\[
\prod_{l=2}^{k} \bl p_{{\sigma (i_l)},{j_2}}(a_{\sigma (i_l)})+\sum_{f=2}^{k} \left(\prod_{l=2}^{f-1} \bl p_{{\sigma (i_l)},{j_2}}(a_{\sigma (i_l)})\right) \bl p_{{\sigma (i_f)},{j_2}}(a_{j_2}) \left(\prod_{l=f}^{k} \bl p_{{\sigma (i_l)},{j_1}}(a_{\sigma (i_l)})\right).
\]
Notice that in the above expression, we first try to explore $a_{j_2}$, and only then $a_{j_1}$. Combining this with Equation \refeq{eq:kijnjvfvs},
{\thinmuskip=.2mu
\medmuskip=0mu plus .2mu minus .2mu
\thickmuskip=1mu plus 1mu
\begin{align*}
&\textnormal{Eq. } \refeq{eq:kijnjvfvs}=\bl p_{i_k,j_1}(a_{i_k})\bl p_{i_k,j_2}(a_{j_2})I_1\\
&\quad + \bl p_{i_k,j_2}(a_{i_k})\left[\prod_{l=2}^{k} \bl p_{{\sigma (i_l)},{j_2}}(a_{\sigma (i_l)})+\sum_{f=2}^{k} \left(\prod_{l=2}^{f-1} \bl p_{{\sigma (i_l)},{j_2}}(a_{\sigma (i_l)})\right) \bl p_{{\sigma (i_f)},{j_2}}(a_{j_2}) \left(\prod_{l=f}^{k} \bl p_{{\sigma (i_l)},{j_1}}(a_{\sigma (i_l)})\right) \right]. \\
&= \bl p_{i_k,j_1}(a_{i_k})\bl p_{i_k,j_2}(a_{j_2})\prod_{l=1}^{k-1} \bl p_{{i_l},{j_1}}(a_{i_l})+\bl p_{i_k,j_2}(a_{i_k})\prod_{l=2}^{k} \bl p_{{\sigma (i_l)},{j_2}}(a_{\sigma (i_l)})\\
&\quad + \bl p_{i_k,j_2}(a_{i_k})\left[\sum_{f=2}^{k} \left(\prod_{l=2}^{f-1} \bl p_{{\sigma (i_l)},{j_2}}(a_{\sigma (i_l)})\right) \bl p_{{\sigma (i_f)},{j_2}}(a_{j_2}) \left(\prod_{l=f}^{k} \bl p_{{\sigma (i_l)},{j_1}}(a_{\sigma (i_l)})\right) \right]\\
&=\bl p_{i_k,j_2}(a_{j_2})\prod_{l=1}^{k} \bl p_{{\sigma(i_l)},{j_1}}(a_{\sigma(i_l)})+\prod_{l=1}^{k} \bl p_{{\sigma (i_l)},{j_2}}(a_{\sigma (i_l)})\\
&\quad + \bl p_{i_k,j_2}(a_{i_k})\left[\sum_{f=2}^{k} \left(\prod_{l=2}^{f-1} \bl p_{{\sigma (i_l)},{j_2}}(a_{\sigma (i_l)})\right) \bl p_{{\sigma (i_f)},{j_2}}(a_{j_2}) \left(\prod_{l=f}^{k} \bl p_{{\sigma (i_l)},{j_1}}(a_{\sigma (i_l)})\right) \right]\\
&=\prod_{l=1}^{k} \bl p_{{\sigma (i_l)},{j_2}}(a_{\sigma (i_l)})+\sum_{f=1}^{k} \left(\prod_{l=1}^{f-1} \bl p_{{\sigma (i_l)},{j_2}}(a_{\sigma (i_l)})\right) \bl p_{{\sigma (i_f)},{j_2}}(a_{j_2}) \left(\prod_{l=f}^{k} \bl p_{{\sigma (i_l)},{j_1}}(a_{\sigma (i_l)})\right), 
\end{align*}}%
and the latter is precisely $\overline Q(\tilde \rho, U)$.
\paragraph{Step 3}
The two previous steps imply that for any $a_{i'}\in \ug, a_{j'}\in \ul$, it holds that
\[
\overline Q_{i',j'}(U)=\overline Q_{i',j_1}(U)= \overline Q_{i_1,j_1}(U).
\]
This completes the proof of Proposition \ref{prop:case of two}.
\end{proofof}
\begin{proofof}{Claim \ref{claim:triplets additional}}
To ease readability, let $\tmu{i}\defeq\abs{\mu(a_i)}$ for every $a_i\in A$. It holds that
\begin{align*}
&\bl p_{{i_2},{j_2}}(a_{i_2})\left( \bl p_{{i_1},{j_1}}(a_{j_1})\bl p_{{i_1},{j_2}}(a_{i_1})+\bl p_{{i_1},{j_1}}(a_{i_1})\bl p_{{i_2},{j_1}}(a_{j_1})\right)\\
&=\frac{\tmu{j_2}}{\tmu{i_2}+\tmu{j_2}}\left( \frac{\tmu{i_1}}{\tmu{i_1}+\tmu{j_1}}\frac{\tmu{j_2}}{\tmu{i_1}+\tmu{j_2}}+\frac{\tmu{j_1}}{\tmu{i_1}+\tmu{j_1}}\frac{\tmu{i_2}}{\tmu{i_2}+\tmu{j_1}}\right) \\
&= \frac{\tmu{j_2}\tmu{i_1}\tmu{j_2}(\tmu{i_2}+\tmu{j_1})+\tmu{j_2}\tmu{j_1}\tmu{i_2}(\tmu{i_1}+\tmu{j_2})}{(\tmu{i_1}+\tmu{j_1})(\tmu{i_1}+\tmu{j_2})(\tmu{i_2}+\tmu{j_1})(\tmu{i_2}+\tmu{j_2})}\\
&= \frac{\overbrace{\tmu{j_2}\tmu{i_1}\tmu{j_2}\tmu{i_2}}^{I}+\overbrace{\tmu{j_2}\tmu{i_1}\tmu{j_2}\tmu{j_1}}^{II}+\overbrace{\tmu{j_2}\tmu{j_1}\tmu{i_2}\tmu{i_1}}^{III}+\overbrace{\tmu{j_2}\tmu{j_1}\tmu{i_2}\tmu{j_2}}^{IV}}{(\tmu{i_1}+\tmu{j_1})(\tmu{i_1}+\tmu{j_2})(\tmu{i_2}+\tmu{j_1})(\tmu{i_2}+\tmu{j_2})}\\
&= \frac{\overbrace{\tmu{j_2}\tmu{i_1}\tmu{j_2}\tmu{i_2}}^{I}+\overbrace{\tmu{j_2}\tmu{j_1}\tmu{i_2}\tmu{j_2}}^{IV}+\overbrace{\tmu{j_2}\tmu{j_1}\tmu{i_2}\tmu{i_1}}^{III}+\overbrace{\tmu{j_2}\tmu{i_1}\tmu{j_2}\tmu{j_1}}^{II}}{(\tmu{i_1}+\tmu{j_1})(\tmu{i_1}+\tmu{j_2})(\tmu{i_2}+\tmu{j_1})(\tmu{i_2}+\tmu{j_2})}\\
&= \frac{\tmu{j_2}\tmu{i_2}\tmu{j_2}(\tmu{i_1}+\tmu{j_1})+\tmu{j_2}\tmu{j_1}\tmu{i_1}(\tmu{i_2}+\tmu{j_2})}{(\tmu{i_1}+\tmu{j_1})(\tmu{i_1}+\tmu{j_2})(\tmu{i_2}+\tmu{j_1})(\tmu{i_2}+\tmu{j_2})}\\
&=\bl p_{{i_1},{j_2}}(a_{i_1})\left(\bl p_{{i_2},{j_1}}(a_{j_1})\bl p_{{i_2},{j_2}}(a_{i_2})+\bl p_{{i_2},{j_1}}(a_{i_2})\bl p_{{i_1},{j_1}}(a_{j_1})  \right).
\end{align*}
\end{proofof}

\section{Proof of Theorem \ref{thm:holy grail}}\label{sec:proof of thm}
\begin{proofof}{Theorem \ref{thm:holy grail}}
Fix an arbitrary instance. We prove the claim by a two-dimensional induction on the size of $\above(s),\below(s)$, for states $s\in \mS$. The base cases are 
\begin{itemize}
\item $\abs{\above(s)}\geq 2 $ and $\abs{\below(s)} = 1$ (Proposition \ref{prop:W case of one}), and
\item $\abs{\above(s)}=1$ and $\abs{\below(s)}\geq 2$ (Proposition \ref{prop:W case of one strong}),
\end{itemize}
which we relegate to Section \ref{sec:for theorem}. Next, assume the statement holds for all $s\in \mS$ such that $\abs{\above(s)}\leq K_1$, $\abs{\below(s)}\leq K_2$ and $\abs{\above(s)}+\abs{\below(s)}< K_1+K_2$. Let $U\in\mS$ denote a state with $\abs{\above(U)}=K_1$ and $\abs{\below(U)}=K_2$. For abbreviation, let $\ug\defeq\above(U)=\{a_{i_1},a_{i_2},\dots ,a_{i_{K_1}}\}$ and $\ul\defeq\below(U)=\{a_{j_1},a_{j_2},\dots ,a_{j_{K_2}}\}$, and assume the indices follow the stochastic order. Further, for every $a_i \in \ug, a_j\in \ul$ let 
\[
W^*_{i,j}(\ug,\ul)\defeq\bl p_{i,j}(a_j)W^*(\ug,\ul\setminus \{a_j\}) +\bl p_{i,j}(a_i)W^*(\ug\setminus\{a_i\},\ul).
\]
We need to prove that $W^*_{{i_1},{j_1}}(s)=W^*(s)$.
\paragraph{Remarks} Notice that if $X_{a_i'}>0$ for $a_{i'}\in \above(A)$, any $\mP$-valid policy gets $W^*(s)$. To see this, recall that $\mP$-valid policies reach terminate states only after exploring all arms in $\above(A)$. Consequently, we assume for the rest of the proof that $X_{a_i'}\leq 0$ for $a_{i'}\in \above(A)$. 
\paragraph{Step 1}
\begin{figure}[htbp]
\centering
\forestset{
 strongedge label/.style 2 args={
    edge label={node[midway,left, #1]{#2}},
  }, 
 weakedge label/.style 2 args={
    edge label={node[midway,right, #1]{#2}},
  }, 
   straightedge label/.style 2 args={
    edge label={node[midway, #1]{#2}},
  }, 
  important/.style={draw={red,thick,fill=red}}
}
\begin{forest} 
[{\Large $\pi$},
 [{$(U_>{,}U_<)$}, edge={white},l*=.05, for tree=
	{
	draw,
	font=\sffamily,
	l+=.5cm,
	inner sep=2pt,
	l sep=5pt,
	s sep=5pt,
	parent anchor=south,
	child anchor=north
    }
 	[${(U_>\setminus\{ a_{i'}\},U_<)}$, strongedge label={left}{$\bl p_{{i'},{j_k}}(a_{i'})$}
 		[,white, edge={dashed}, straightedge label={fill=white}{$\pi^*$}
 		]
 	]
	[${(U_>,U_<\setminus\{ a_{j_k}\})}$, weakedge label={right}{$\bl p_{{i'},{j_k}}(a_{j_k})$}
 		[,white, edge={dashed}, straightedge label={fill=white}{$\pi^*$}
 		]
 	]
 ]
]
\end{forest}
\qquad
\begin{forest} 
[{\Large $\rho$},
 [{$(U_>{,}U_<)$}, edge={white},l*=.05, for tree=
	{
	draw,
	font=\sffamily,
	l+=.5cm,
	inner sep=2pt,
	l sep=5pt,
	s sep=5pt,
	parent anchor=south,
	child anchor=north
    }
 	[${(U_>\setminus\{ a_{i'}\},U_<)}$, strongedge label={left}{$\bl p_{{i'},{j_k}}(a_{i'})$}
 		[,white, edge={dashed}, straightedge label={fill=white}{$\pi^*$}
 		]
 	]
	[${(U_>,U_<\setminus\{ a_{j_{k+1}}\})}$, weakedge label={right}{$\bl p_{{i'},{j_{k+1}}}(a_{j_{k+1}})$}
 		[,white, edge={dashed}, straightedge label={fill=white}{$\pi^*$}
 		]
 	]
 ]
]
\end{forest}
\label{fig:helping for thm two}
\caption{Illustration of the policies $\pi,\rho$ from Step 1 of Theorem \ref{thm:holy grail}. Notice that the left sub-trees of $\pi$ and $\rho$ are identical.
}
\end{figure}

Fix $a_{i'} \in \ug$. We show that for every $k$, $1\leq k<K_2$ it holds that $W^*_{{i'},{j_k}}(s) \geq W^*_{{i'},{j_{k+1}}}(s)$. We define the ordered, $\mP$-valid policy $\pi^*$ by $\sigl_\pi = (a_{i'},\dots)$  namely, $\sigl_{\pi^*}$ ranks $a_{i'}$ first and all the other arms in $\ug$ arbitrarily, and $\sigr_{\pi^*}=(a_{j_1},a_{j_2},\dots,a_{j_{K_2}})$. Due to the inductive assumption, for every state $s$ with $\abs{s}<\abs{U}$, $W^*(s) = W(\pi^*,s)$. Next, we define the policies $\pi,\rho$ explicitly, as follows:
\[
\pi(s) = 
\begin{cases}
\bl p_{{i'},{j_k}} & \textnormal{if $s=(\ug,\ul)$} \\
\pi^*(s)& \textnormal{otherwise} 
\end{cases},\quad 
\rho(s) = 
\begin{cases}
\bl p_{{i'},{j_{k+1}}} & \textnormal{if $s=(\ug,\ul)$} \\
\pi^*(s)& \textnormal{otherwise}
\end{cases}.
\]
We illustrate $\pi$ and $\rho$ in Figure \ref{fig:helping for thm two}. Note that both $\pi,\rho$ have on-path states that are off-path for $\pi^*$. For instance, $\rho$ reaches the state $(\ug,\ul\setminus\{a_{j_{k+1}}\})$ with positive probability, while $\pi^*$ cannot reach it at all. In addition, $\pi$ and $\rho$ are left-ordered with $\sigl_\pi=\sigl_\rho=\sigl_{\pi^*}$. Due to the inductive assumption, it is enough to show that $W(\pi,U)-W(\rho,U)\geq 0$, as this implies $W^*_{{i'},{j_k}}(s) \geq W^*_{{i'},{j_{k+1}}}(s)$.

Next, we factor $W(\pi,U)$ as follows: for every state $U'\subset U$, we factor $W(\pi,U')$ as long as $a_{j_k},a_{j_{k+1}}\in U'$. Once we reach a term $W(\pi,U')$ with $a_{j_k},a_{j_{k+1}}\notin U'$, we stop. Let $\Psi\defeq \prefix(a_{j_1},a_{j_2}\dots, a_{j_{k-1}})$ be the set of (possibly empty) prefixes of the first $k-1$ arms in $\ul$ according to $\pi^*$. Observe that\footnote{The reader can think of $\psi$ as the set of arms from $(a_{j_1},a_{j_2}\dots, a_{j_{k-1}})$ that were explored.}  
{\thinmuskip=.2mu
\medmuskip=0mu plus .2mu minus .2mu
\thickmuskip=1mu plus 1mu
\begin{align}\label{eq:w of pi}
W(\pi,U)&=\sum_{\psi\in \Psi}
\Pr(\pathto{s}{(\ul\setminus \psi)})R(\ul\setminus \psi)+\Pr(\pathto{s}{(\ul\setminus (\psi \cup \{a_{j_k}\})})R(\ul\setminus (\psi \cup \{a_{j_k}\})\nonumber \\
&\qquad +\underbrace{\sum_{\substack{Z\in \suff(\sigl_\pi)}}
f^\pi_Z \cdot W(\pi,Z,\ul\setminus \{a_{j_1},a_{j_2},\dots,a_{j_{k}},a_{j_{k+1}} \})}_{I^\pi}
\end{align}}
The coefficients $(f^\pi_Z)$ follow from the factorization process. Using similar factorization,
{\thinmuskip=.2mu
\medmuskip=0mu plus .2mu minus .2mu
\thickmuskip=1mu plus 1mu
\begin{align}\label{eq:w of rho}
W(\rho,U)&=\sum_{\psi\in \Psi}
\Pr(\pathtorho{s}{(\ul\setminus \psi)})R(\ul\setminus \psi)+\Pr(\pathtorho{s}{(\ul\setminus (\psi \cup \{a_{j_{k+1}}\}))})R(\ul\setminus (\psi \cup \{a_{j_{k+1}}\}))\nonumber\\
&\qquad +\underbrace{\sum_{\substack{Z\in \suff(\sigl_\rho)}}
f^\rho_Z \cdot W(\rho,Z,\ul\setminus \{a_{j_1},a_{j_2},\dots,a_{j_{k}},a_{j_{k+1}} \})}_{I^\rho}. 
\end{align}}%
Next, we express $(f^\pi_Z)_Z$ in terms of $Q$. By relying on the Equivalence lemma, we show that
\begin{claim}\label{claim:f are equal}
For every $Z\in \suff(\sigl_\pi)=\suff(\sigl_\rho)$, it holds that $f^\pi_Z=f^\rho_Z$.
\end{claim}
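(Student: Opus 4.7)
The plan is to mimic the tree-pruning argument from Proposition~\ref{prop:coef c} used in the proof of the Equivalence Lemma. Since $\pi$ and $\rho$ are both $\mP$-valid policies differing only at the root action $U$ (both follow $\pi^*$ elsewhere), it suffices to express $f^\pi_Z$ as a difference of $Q$-values on strict sub-states of $U$ and then invoke Lemma~\ref{lemma:equivalence} to conclude $f^\pi_Z=f^\rho_Z$.

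The first step is to interpret $f^\pi_Z$ combinatorially as the probability, starting at $U$ and following $\pi$, of reaching a node $v$ in the induced tree whose state is $(Z,\ul\setminus\{a_{j_1},\ldots,a_{j_{k+1}}\})$ and whose parent still contains at least one of $\{a_{j_k},a_{j_{k+1}}\}$---that is, the transition into $v$ is precisely the step that removes the last of $a_{j_k},a_{j_{k+1}}$ from the state. This is the direct analogue of the interpretation of $c^\pi_Z$ in Proposition~\ref{prop:coef c}. Next I would prune the tree as in that proposition: every node whose state already excludes both $a_{j_k}$ and $a_{j_{k+1}}$ is turned into a leaf. The resulting pruned tree has exactly the two kinds of leaves that appear in Equation~\refeq{eq:w of pi}---terminal GMDP states (first summand) and pruned stopping states of the form $(Z,\ul\setminus\{a_{j_1},\ldots,a_{j_{k+1}}\})$ (the $I^\pi$ summand). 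Re-running the counting argument of Proposition~\ref{prop:coef c} on this pruned tree, and exploiting that $\pi$ is left-ordered via $\sigl_\pi$, I would express $f^\pi_Z$ as a difference of two $Q$-values on the sub-state $\ug\cup\{a_{j_1},\ldots,a_{j_{k+1}}\}$, in full analogy with the identity
\[
c^\pi_Z=Q(\pi,\ug\setminus Z\cup\{a_{i(Z)}\},\{a_{j^*},a_{j'}\})-Q(\pi,\ug\setminus Z,\{a_{j^*},a_{j'}\}),
\]
now using the full block $\{a_{j_1},\ldots,a_{j_{k+1}}\}$ in place of the pair $\{a_{j^*},a_{j'}\}$. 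Since both $Q$-values live on strict sub-states of $U$ and $\pi,\rho$ are both $\mP$-valid, Lemma~\ref{lemma:equivalence} gives $f^\pi_Z=f^\rho_Z$.

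The main obstacle will be making this counting step rigorous. Unlike Proposition~\ref{prop:coef c}, which tracked only two distinguished $\below$-arms, the stopping condition here involves the whole block $\{a_{j_1},\ldots,a_{j_{k+1}}\}$, and a path to a pruned leaf may remove any subset of the earlier arms $\{a_{j_1},\ldots,a_{j_{k-1}}\}$ before finally losing both $a_{j_k}$ and $a_{j_{k+1}}$. The pruning/bijection argument does extend, but verifying that the stated $Q$-value difference exactly matches $f^\pi_Z$ requires a careful type-analysis of pruned leaves analogous to the Type~1/Type~2 split in Observation~\ref{obs:two types}, together with the observation that the order in which $a_{j_1},\ldots,a_{j_{k-1}}$ are consumed along a path does not affect the overall probability once Lemma~\ref{lemma:equivalence} is in hand.
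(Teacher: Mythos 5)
Your proposal takes essentially the same route as the paper: the paper also represents $f^\pi_Z$ as a difference of two $Q$-values in the spirit of Proposition~\ref{prop:coef c} (explicitly deferring the pruning/counting details to that proof) and then invokes the Equivalence Lemma (Lemma~\ref{lemma:equivalence}) to equate the $\pi$- and $\rho$-representations, exactly as you do. Your choice of the block $\{a_{j_1},\dots,a_{j_{k+1}}\}$ as the second coordinate of the $Q$-values is the right analogue of the pair $\{a_{j^*},a_{j'}\}$ in Proposition~\ref{prop:coef c}, so the argument goes through as you describe.
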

To see why Claim \ref{claim:f are equal} holds, notice that we can represent $f^\pi_Z$ as 
{\thinmuskip=.2mu
\medmuskip=0mu plus .2mu minus .2mu
\thickmuskip=1mu plus 1mu
\[
Q(\pi,\ug\setminus Z \cup \{a_{i(Z)}\}, \ul\setminus \{a_{j_1},a_{j_2},\dots,a_{j_{k}},a_{j_{k+1}} \})-Q(\pi,\ug\setminus Z, \ul\setminus \{a_{j_1},a_{j_2},\dots,a_{j_{k}},a_{j_{k+1}} \}),
\]}%
where $a_{i(Z)}$ is the minimal element in $Z$ according to $\sigl_\pi$. By invoking the Equivalence lemma, we can replace $\pi$ in the above expression with $\rho$ and thus for every $Z\in \suff(\sigl_\pi)=\suff(\sigl_\rho)$, it holds that $f^\pi_Z=f^\rho_Z$. Combining Claim \ref{claim:f are equal} with the inductive step and $\sigl_\pi,\sigl_\rho$ being equal, we conclude that $I^\pi=I^\rho$.

Next, we focus on the first sum of $W(\pi,U)$ in Equation \refeq{eq:w of pi}. For every $\psi\in \Psi$, we denote the event $E^\pi_\psi$ as a shorthand for
\[
E^\pi_\psi \defeq \left(\pathto{s}{(\ul\setminus \psi)}\right)\cup\left( \pathto{s}{(\ul\setminus (\psi \cup \{a_{j_k}\}))}\right).
\]
In words, $E^\pi_\psi$ is the event that the GMDP reaches the final state $(\ul\setminus \psi)$ or $(\ul\setminus (\psi \cup \{a_{j_k}\}))$ after starting in $s_0$ and following $\pi$. We use conditional expectation to simplify the summands in the first sum of Equation (\ref{eq:w of pi}),
{\thinmuskip=.2mu
\medmuskip=0mu plus .2mu minus .2mu
\thickmuskip=1mu plus 1mu
\begin{align}\label{eq:pi with alpha}
&\Pr(\pathto{s}{(\ul\setminus \psi)})R(\ul\setminus \psi)+\Pr(\pathto{s}{(\ul\setminus (\psi \cup \{a_{j_k}\})})R(\ul\setminus (\psi \cup \{a_{j_k}\})\nonumber\\
&=\Pr\left(E^\pi_\psi\right)\cdot \underbrace{\left( \Pr(\pathto{s}{(\ul\setminus \psi)}\mid E^\pi_\psi)R(\ul\setminus \psi)+\Pr(\pathto{s}{(\ul\setminus (\psi \cup \{a_{j_k}\})}\mid E^\pi_\psi)R(\ul\setminus (\psi \cup \{a_{j_k}\}))\right)}_{\alpha^\pi_\psi}\nonumber\\
&=\Pr(E^\pi_\psi) \alpha^\pi_\psi. 
\end{align}}%
Similarly for $\rho$, by letting
\[
E^\rho_\psi \defeq \left(\pathtorho{s}{(\ul\setminus \psi)}\right)\cup\left( \pathtorho{s}{(\ul\setminus (\psi \cup \{a_{j_{k+1}}\}))}\right)
\]
and following the derivation in Equation \refeq{eq:pi with alpha} for $\rho$, we get
{\thinmuskip=.2mu
\medmuskip=0mu plus .2mu minus .2mu
\thickmuskip=1mu plus 1mu
\begin{align}\label{eq:rho with alpha}
&\Pr(\pathtorho{s}{(\ul\setminus \psi)})R(\ul\setminus \psi)+\Pr(\pathtorho{s}{(\ul\setminus (\psi \cup \{a_{j_{k+1}}\})})R(\ul\setminus (\psi \cup \{a_{j_{k+1}}\})\nonumber\\
&=\Pr\left(E^\rho_\psi\right)\cdot \underbrace{\left( \Pr(\pathtorho{s}{(\ul\setminus \psi)}\mid E^\rho_\psi)R(\ul\setminus \psi)+\Pr(\pathtorho{s}{(\ul\setminus (\psi \cup \{a_{j_{k+1}}\})}\mid E^\rho_\psi)R(\ul\setminus (\psi \cup \{a_{j_{k+1}}\})) \right)}_{\alpha^\rho_\psi}\nonumber\\
&=\Pr(E^\rho_\psi) \alpha^\rho_\psi. 
\end{align}}%
Due to the fact that $I^\pi=I^\rho$ and relying on Equations (\ref{eq:w of pi})-(\ref{eq:rho with alpha}), we have  
\begin{align}\label{eq:w minus w}
W(\pi,U)-W(\rho,U)=\sum_{\psi\in \Psi}\left(\Pr(E^\pi_\psi) \alpha^\pi_\psi-\Pr(E^\rho_\psi) \alpha^\rho_\psi\right).
\end{align}
At this point, we might be tempted to show that every summand of the sum in Equation \refeq{eq:w minus w} is non-negative. Unfortunately, this approach would not work---stochastic order does not mean $\Pr(E^\pi_\psi) \alpha^\pi_\psi\geq \Pr(E^\rho_\psi) \alpha^\rho_\psi$. Instead, we take a different approach. For every $l, 1\leq l \leq k-1$ let $E^\pi_{l}$ denote the event that $a_{j_l}$ was observed in the final state. Formally,
\[
E^\pi_{l} \defeq  \bigcup_{\substack{\psi' \in \Psi\\a_{j_l} \in \psi' }} \left(\pathto{s}{(\ul\setminus \psi')}\right)\cup\left(\pathto{s}{(\ul\setminus (\psi' \cup \{a_{j_k}\})}\right).
\]
In addition, we let $E^\pi_{k}$ denote the empty event, and $E^\pi_{0}$ be the full event (that occurs w.p.~$1$).   For every non-empty $\psi \in \Psi$, i.e. $\abs{\psi}\geq 1$, let $\max(\psi)\defeq\argmax_{j_l:{a_{j_l}} \in \psi} \sigr_{\pi^*}(a_{j_l})$. According to our assumption about the index of the arms, $\max(\psi)$ is simply the maximal index of an arm in $\psi$ (that is well-defined when $\abs{\psi}\geq 1$). In addition, for completeness, if $\psi=\emptyset$ we let $\max(\emptyset)=0$. We can use the terms $(E^\pi_{l})_{l=0}^k$ to provide an alternative form for $E^\pi_\psi$:
\[
E^\pi_\psi=E^\pi_{\max(\psi)}\setminus E^\pi_{\max(\psi)+1}.
\]
Put differently, for $\psi$ with $\max(\psi)<k-1$, $E^\pi_\psi$ can be viewed as the collection of all of events in which arm $a_{j_{\max(\psi)}}$ was explored, but arm $a_{j_{\max(\psi)+1}}$ was not (where the ``arm'' $a_{j_0}$ for $\psi=\emptyset$ refers to $E^\pi_{0}$). Recall that for $\psi$ with $\max(\psi)=k-1$, $E^\pi_\psi=E^\pi_{\max(\psi)}$ since $E^\pi_{k}$ is the empty event.

Since $E^\pi_{\max(\psi)+1}\subseteq E^\pi_{\max(\psi)}$, we have that for every $\psi\in \Psi$,
\begin{align}\label{eq:e pi with Z}
\Pr(E^\pi_\psi)=\Pr(E^\pi_{\max(\psi)})-\Pr(E^\pi_{\max(\psi)+1}),
\end{align}
taking care of edge cases too. By renaming $\alpha^\pi_\psi$ to $\alpha^\pi_{\max(\psi)}$, i.e., $\alpha^\pi_{\max(\psi)} \defeq \alpha^\pi_\psi $, and rearranging Equation \refeq{eq:w of pi} using Equations \refeq{eq:pi with alpha} and \refeq{eq:e pi with Z},
{\thinmuskip=.2mu
\medmuskip=0mu plus .2mu minus .2mu
\thickmuskip=1mu plus 1mu
\begin{align}\label{eq:w pi with diff}
W(\pi,U)&= I^\pi+\sum_{\psi\in \Psi} \Pr\left(E^\pi_\psi\right) \alpha^\pi_\psi=I^\pi+\sum_{l=0}^{k-1}\left(\Pr(E^\pi_{l})-\Pr(E^\pi_{l+1})\right) \alpha^\pi_l.
\end{align}}%
By defining $E^\rho_l$ analogously,
\[
E^\rho_{l} \defeq  \bigcup_{\substack{\psi' \in \Psi\\a_{j_l} \in \psi' }} \left(\pathtorho{s}{(\ul\setminus \psi')}\right)\cup\left(\pathtorho{s}{(\ul\setminus (\psi' \cup \{a_{j_{k+1}}\})}\right),
\]
and following similar arguments, we conclude that
{\thinmuskip=.2mu
\medmuskip=0mu plus .2mu minus .2mu
\thickmuskip=1mu plus 1mu
\begin{align}\label{eq:w rho with diff}
W(\rho,U)&= I^\rho+\sum_{\psi\in \Psi} \Pr\left(E^\rho_\psi\right) \alpha^\rho_\psi=I^\rho+\sum_{l=0}^{k-1}\left(\Pr(E^\rho_{l})-\Pr(E^\rho_{l+1})\right) \alpha^\rho_l.
\end{align}}%
By rephrasing Equation (\ref{eq:w minus w}) using Equations (\ref{eq:w pi with diff}) and (\ref{eq:w rho with diff}),
\begin{align}\label{eq: thm step 1 good}
W(\pi,U)-W(\rho,U)=\sum_{l=0}^{k-1}\left(\Pr(E^\pi_{l})-\Pr(E^\pi_{l+1})\right) \alpha^\pi_l-\left(\Pr(E^\rho_{l})-\Pr(E^\rho_{l+1})\right) \alpha^\rho_l.
\end{align}
Next, we show two monotonicity properties.
\begin{proposition}\label{prop: monotonicity in thm}
Under Assumption \ref{assumption:dominance},
\begin{enumerate}
\item for every $l\in \{0,1,\dots,k-1 \}$, it holds that $\alpha^\pi_l \geq \alpha^\rho_l$. \label{item:prop alphas rho pi}
\item for every $l\in \{0,1,\dots,k-2\}$, it holds that $\alpha^\pi_{l+1} \geq \alpha^\pi_{l}$ and $\alpha^\rho_{l+1} \geq \alpha^\rho_{l}$. \label{item:prop alphas alpha pi}
\end{enumerate}
\end{proposition}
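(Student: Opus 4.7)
My plan is to express both $\alpha^\pi_l$ and $\alpha^\rho_l$ in a common convex-combination form and then exploit stochastic dominance at the level of the summands. Since the GMDP path depends only on Nature's coin flips (which are independent of the arm realizations), conditioning on reaching a particular terminal state leaves the joint distribution of rewards unchanged. Writing $f(S)=\E[\max_{a\in A}X(a)\cdot\ind[\max_{a\in S}X(a)>0]]$ (implicitly conditioning on $X(a)\le 0$ for every $a\in\ug$, as the proof does), the terminal reward at state $\ul\setminus S$ has conditional expected value $f(S)$. Thus I will write $\alpha^\pi_l=w^\pi_l\,f(\psi_l)+(1-w^\pi_l)\,f(\psi_l\cup\{a_{j_k}\})$ and $\alpha^\rho_l=w^\rho_l\,f(\psi_l)+(1-w^\rho_l)\,f(\psi_l\cup\{a_{j_{k+1}}\})$, where $\psi_l=\{a_{j_1},\ldots,a_{j_l}\}$ and $w^\pi_l,w^\rho_l\in[0,1]$ are the conditional probabilities of terminating at the ``no additional exploration'' state.

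The central technical lemma I will prove is the following: if $X(a)$ stochastically dominates $X(a')$ and $\{a,a'\}\cap S=\emptyset$, then $f(S\cup\{a\})\ge f(S\cup\{a'\})$. I would establish it by conditioning on the event $\max_{a''\in S}X(a'')>0$ (on which both indicators equal $1$ and the difference vanishes) and on its complement (on which the indicators simplify to $\ind[X(a)>0]$ and $\ind[X(a')>0]$). In the latter case, the independence of arm rewards lets the expected difference factorize into terms such as $\E[\max(G,X(a))\ind[X(a)>0]]\cdot\pr(X(a')\le 0)-\E[\max(G,X(a'))\ind[X(a')>0]]\cdot\pr(X(a)\le 0)$, and both factors are monotone under stochastic dominance. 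Claim~2 will then follow from the chain $\alpha^\pi_l\le f(\psi_l\cup\{a_{j_k}\})\le f(\psi_{l+1})\le\alpha^\pi_{l+1}$: the outer inequalities are immediate bounds on convex combinations, and the middle inequality is the lemma applied with $a=a_{j_{l+1}}$ dominating $a'=a_{j_k}$ (valid because $l+1<k$); the $\rho$ version is identical with $a_{j_{k+1}}$ in place of $a_{j_k}$.

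For Claim~1, I would compute the weights explicitly. Using $\bl p_{i,j}(a_j)=\mu(a_i)/(\mu(a_i)+\abs{\mu(a_j)})$ and the observation that for $l<k-1$ the subtree probabilities of $\pi^*$ beneath the root coin flip depend only on arms outside $\{a_{j_k},a_{j_{k+1}}\}$, a common factor $r_l$ cancels and one obtains $w^\pi_l/(1-w^\pi_l)=(\abs{\mu(a_{j_k})}/\mu(a_{i'}))\cdot r_l$ and $w^\rho_l/(1-w^\rho_l)=(\abs{\mu(a_{j_{k+1}})}/\mu(a_{i'}))\cdot r_l$; since $\abs{\mu(a_{j_k})}<\abs{\mu(a_{j_{k+1}})}$ by the stochastic order, I conclude $w^\pi_l\le w^\rho_l$. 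Rearranging then gives
\[\alpha^\pi_l-\alpha^\rho_l=(w^\pi_l-w^\rho_l)\bigl(f(\psi_l)-f(\psi_l\cup\{a_{j_{k+1}}\})\bigr)+(1-w^\pi_l)\bigl(f(\psi_l\cup\{a_{j_k}\})-f(\psi_l\cup\{a_{j_{k+1}}\})\bigr),\]
and both summands are non-negative (the first because $w^\pi_l\le w^\rho_l$ and $f$ is monotone under set inclusion, the second by the central lemma). The main obstacle I anticipate is the boundary case $l=k-1$: the terminal state with $a_{j_k}$ added is reachable under $\pi$ not only via the root coin flip picking $a_{j_k}$ but also via picking $a_{i'}$ followed by a successful exploration of $a_{j_k}$ by $\pi^*$, which introduces an extra term in the weight formula and also breaks the equality $\mu^{(\pi)}_{k-1}=\mu^{(\rho)}_{k-1}$. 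I expect $w^\pi_{k-1}\le w^\rho_{k-1}$ to persist because the extra contribution only tilts more mass onto the additional-exploration branch, but verifying this rigorously will require a careful case analysis.
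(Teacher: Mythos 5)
Your proposal is correct and follows essentially the same route as the paper's proof: your convex-combination representation of $\alpha^\pi_l$ and $\alpha^\rho_l$, the weight comparison driven by the root portfolio with the identical $\pi^*$ continuations cancelling, the dominance lemma $f(S\cup\{a\})\ge f(S\cup\{a'\})$, and the chain $\alpha_l\le f(\psi_l\cup\{a_{j_k}\})\le f(\psi_{l+1})\le\alpha_{l+1}$ correspond exactly to the paper's comparison of the reaching probabilities of the two terminal states, its monotonicity claims for the expected terminal reward $R$ under Assumption~\ref{assumption:dominance}, and its argument for the second part. The boundary case $l=k-1$ that you flag closes with one extra observation---after the root realizes its a priori inferior arm, the continuation that explores the prefix (identical in both cases) and then exhausts $\ug$ against the surviving inferior arm is at least as likely when that surviving arm is the more negative $a_{j_{k+1}}$, and moreover $\bl p_{i',j_k}(a_{j_k})\ge\bl p_{i',j_{k+1}}(a_{j_{k+1}})$, so $w^\pi_{k-1}\le w^\rho_{k-1}$ holds even before adding your extra left-branch mass---and the paper itself dispatches this case with only a ``similar arguments'' remark, so there is no substantive divergence.
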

In fact, this is the only place in the proof of Theorem \ref{thm:holy grail} where we rely on Assumption \ref{assumption:dominance}. Equipped with Proposition \ref{prop: monotonicity in thm}, we can make the final argument. For every $r,r\in \{1\dots,k-1\}$ let
\[
f(r) \defeq \left( \Pr(E^\pi_{r})-\Pr(E^\rho_{r}) \right)\alpha^\pi_{r-1}.
\]
In addition, let
\[
g(r)\defeq \sum_{l=0}^{r}\left(\Pr(E^\pi_{l})-\Pr(E^\pi_{l+1})\right) \alpha^\pi_l-\left(\Pr(E^\rho_{l})-\Pr(E^\rho_{l+1})\right) \alpha^\rho_l
\]
We shall show that for every $r,r\in \{0,\dots,k-2\}$ it holds that
\begin{align}\label{eq:thm last argument}
W(\pi,U)-W(\rho,U)\geq f(r+1)+g(r).
\end{align}
For $r=k-2$, we have
{
\begin{align*}
\textnormal{Eq. (\ref{eq: thm step 1 good})}&=\left(\Pr(E^\pi_{k-1})-\Pr(E^\pi_{k})\right) \alpha^\pi_{k-1}-\left(\Pr(E^\rho_{k-1})-\Pr(E^\rho_{k})\right) \alpha^\rho_{k-1}+g(k-2)\\
&\stackrel{\substack{E^\pi_{k},E^\rho_{k}\\\textnormal{are empty}}}{=}\Pr(E^\pi_{k-1})\alpha^\pi_{k-1}-\Pr(E^\rho_{k-1}) \alpha^\rho_{k-1}+g(k-2)\\
&\stackrel{\textnormal{Prop. \ref{prop: monotonicity in thm}.\ref{item:prop alphas rho pi}}}{\geq}\left(\Pr(E^\pi_{k-1})-\Pr(E^\rho_{k-1}) \right)\alpha^\pi_{k-1}+g(k-2)\\
&\stackrel{\textnormal{Prop. \ref{prop: monotonicity in thm}.\ref{item:prop alphas alpha pi}}}{\geq}\left(\Pr(E^\pi_{k-1})-\Pr(E^\rho_{k-1}) \right)\alpha^\pi_{k-2}+g(k-2)\\
&=f(k-1)+g(k-2).
\end{align*}}%
Assume Inequality \refeq{eq:thm last argument} holds for $r+1$. Then, for $r$ we have 
{\thinmuskip=.2mu
\medmuskip=0mu plus .2mu minus .2mu
\thickmuskip=1mu plus 1mu
\begin{align*}
f(r+1)+g(r) &= \left( \Pr(E^\pi_{r+1})-\Pr(E^\rho_{r+1}) \right)\alpha^\pi_{r}+g(r-1)\\
&\qquad \qquad +\left(\Pr(E^\pi_{r})-\Pr(E^\pi_{r+1})\right) \alpha^\pi_r-\underbrace{\left(\Pr(E^\rho_{r})-\Pr(E^\rho_{r+1})\right)}_{\geq 0,\textnormal{ Eq. \refeq{eq:e pi with Z}}}  \alpha^\rho_r\\
&\stackrel{\textnormal{Prop. \ref{prop: monotonicity in thm}.\ref{item:prop alphas rho pi}}}{\geq} \left( \Pr(E^\pi_{r+1})-\Pr(E^\rho_{r+1}) \right)\alpha^\pi_{r}+g(r-1)\\
&\qquad \qquad +\left(\Pr(E^\pi_{r})-\Pr(E^\pi_{r+1})\right) \alpha^\pi_r-\left(\Pr(E^\rho_{r})-\Pr(E^\rho_{r+1})\right) \alpha^\pi_r\\
&=\Pr(E^\pi_{r})\alpha^\pi_r-\Pr(E^\rho_{r}) \alpha^\pi_r +g(r-1)\\
&\stackrel{\textnormal{Prop. \ref{prop: monotonicity in thm}.\ref{item:prop alphas alpha pi}}}{\geq}\Pr(E^\pi_{r})\alpha^\pi_{r-1}-\Pr(E^\rho_{r}) \alpha^\pi_{r-1} +g(r-1)\\
&= f(r)+g(r-1).
\end{align*}}%
Ultimately, by setting $r=0$ in Inequality \refeq{eq:thm last argument},
\begin{align*}
W(\pi,U)-W(\rho,U)&\geq f(1)+g(0)\\
&=\left( \Pr(E^\pi_{1})-\Pr(E^\rho_{1}) \right)\alpha^\pi_{0}\\
&\qquad \qquad +\left(\Pr(E^\pi_{0})-\Pr(E^\pi_{1})\right) \alpha^\pi_0-\left(\Pr(E^\rho_{0})-\Pr(E^\rho_{1})\right) \alpha^\rho_0\\
&\geq\Pr(E^\pi_{0}) \alpha^\pi_0-\Pr(E^\rho_{0}) \alpha^\pi_0\\
&=0.
\end{align*}
This concludes the first step of the theorem.
\paragraph{Step 2}  In this step, we show that for every $k$, $1\leq k < K_1$ it holds that $W^*_{{i_k},{j_1}}(s) = W^*_{{i_{k+1}},{j_1}}(s)$. Define an ordered, $\mP$-valid policy $\pi^*$ by $\sigl_\pi = (a_{i_1},a_{i_2},\dots,a_{i_{K_1}})$,  namely, $\sigl_{\pi^*}$ ranks the elements of $\ug$ according to the stochastic order, and $\sigr_{\pi^*}=(a_{j_1},a_{j_2},\dots,a_{j_{K_2}})$. Due to the inductive assumption, for every state $s$ with $\abs{s}<\abs{U}$, $W^*(s) = W(\pi^*,s)$. Next, we define the policies $\pi,\rho$ explicitly, as follows:
\[
\pi(s) = 
\begin{cases}
\bl p_{{i_k},{j_1}} & \textnormal{if $s=(\ug,\ul)$} \\
\pi^*(s)& \textnormal{otherwise} 
\end{cases},\quad 
\rho(s) = 
\begin{cases}
\bl p_{{i_{k+1}},{j_{1}}} & \textnormal{if $s=(\ug,\ul)$} \\
\pi^*(s)& \textnormal{otherwise}
\end{cases}.
\]
As in the previous step, the inductive step suggests that showing $W(\pi,s)=W(\rho,s)$ is suffice. However, unlike the previous step, here the set of reachable terminal state is the same for $\pi$ and $\rho$; hence, this equality is almost immediate due to the Equivalence lemma. Let $\Psi'\defeq \prefix(a_{j_1},a_{j_2}\dots, a_{j_{K_2-1}})$ be the set of (possibly empty) prefixes of the arms in $\ul \setminus \{a_{j_{K_2}}\}$ according to $\pi^*$. Observe that we can factor $W(\pi,s)$ as follows:
\begin{align}\label{eq:w for pi with prob}
W(\pi,s) &= Q(\pi,U)\cdot R(\emptyset)+ \sum_{\psi \in \Psi}\Pr(\pathto{s}{(\ul \setminus \psi)})R(\ul \setminus \psi).
\end{align}
The next Claim \ref{claim:thm:step 2 claim} suggests we can replace probabilities with functions of $Q$.
\begin{claim}\label{claim:thm:step 2 claim}
For every $\psi \in \Psi$, it holds that 
\[
\Pr(\pathto{s}{(\ul \setminus \psi)})=Q(\pi,\ug, \psi)-Q(\pi,\ug,\psi \cup\{a_{j_{max(\psi)+1}} \}).
\]
\end{claim}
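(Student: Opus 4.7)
My plan is to reduce the identity to the Equivalence Lemma (Lemma~\ref{lemma:equivalence}) via a coupling argument. Write $m = |\psi|$; since $\psi \in \Psi$ is a prefix under $\sigr_{\pi^\star}$, we have $\psi = \{a_{j_1}, \ldots, a_{j_m}\}$ and $a_{j_{\max(\psi)+1}} = a_{j_{m+1}}$. For each $\ell \in \{m, m+1\}$, let $\psi_\ell := \{a_{j_1}, \ldots, a_{j_\ell}\}$, and let $E_\ell$ denote the event that the execution of $\pi$ started at $s = (\ug, \ul)$ visits a state whose below-component equals $\ul \setminus \psi_\ell$. Because $\pi^\star$ is right-ordered and $\pi$ agrees with $\pi^\star$ away from the root, every right-transition removes the smallest-indexed remaining below-arm, so $E_\ell$ coincides with the event that the first $\ell$ attempted right-transitions all succeed; in particular $E_{m+1} \subseteq E_m$.

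With this structure in place, the terminal state $(\ul \setminus \psi)$ is reached exactly when $\psi$ is explored in order but $\ug$ is exhausted before $a_{j_{m+1}}$ is ever realized, i.e., on the event $E_m \setminus E_{m+1}$. Hence
\[
\Pr(\pathto{s}{(\ul \setminus \psi)}) = \Pr(E_m) - \Pr(E_{m+1}),
\]
and the rest of the argument is devoted to proving $\Pr(E_\ell) = Q(\pi, \ug, \psi_\ell)$ for $\ell \in \{m, m+1\}$.

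For this step I would introduce an auxiliary $\mP$-valid policy $\hat\rho$ that coincides with $\pi^\star$ at every non-root state and whose root action at $(\ug, \psi_\ell)$ is $\bl p_{i_k, j_1}$ (the same as $\pi$'s root action). The key observation that makes the coupling go through is that for any $\ug' \subseteq \ug$ and any nonempty $\psi'' \subseteq \psi_\ell$, the smallest-indexed arm of $(\ul \setminus \psi_\ell) \cup \psi''$ lies in $\psi''$---because indices in $\psi''$ are at most $\ell$ while those in $\ul \setminus \psi_\ell$ are strictly larger---so the action $\pi^\star$ prescribes at $(\ug', (\ul \setminus \psi_\ell) \cup \psi'')$ and at $(\ug', \psi'')$ is literally the same $\bl p_{i,j}$. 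This lets me couple the execution of $\pi$ from $(\ug, \ul)$ with the execution of $\hat\rho$ from $(\ug, \psi_\ell)$ step-by-step through the bijection $(\ug', \psi'') \leftrightarrow (\ug', (\ul \setminus \psi_\ell) \cup \psi'')$: both chains take identical actions and outcomes until $\hat\rho$ enters the absorbing zone $\{(\ug', \emptyset): \ug' \subseteq \ug\}$, which happens exactly when $\pi$'s below-component first equals $\ul \setminus \psi_\ell$. Therefore $E_\ell$ corresponds to $\hat\rho$ eventually reaching $(\emptyset, \emptyset)$, giving $\Pr(E_\ell) = Q(\hat\rho, \ug, \psi_\ell)$, and the Equivalence Lemma upgrades this to $Q(\pi, \ug, \psi_\ell)$ as required.

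The main obstacle I foresee is a careful verification of the coupling---checking that the state bijection is preserved by every transition and that $E_\ell$ corresponds exactly to $\hat\rho$ exhausting $\psi_\ell$. The ordered structure of $\pi^\star$ makes the bijection clean once the indexing observation above is in hand, and the Equivalence Lemma is precisely the tool that absorbs the single discrepancy introduced at the root, where $\pi$ uses $\bl p_{i_k, j_1}$ rather than $\pi^\star$'s default $\bl p_{i_1, j_1}$.
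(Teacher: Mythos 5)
Your proof is correct and follows essentially the route the paper intends (the paper omits this proof, pointing to the technique of Claim~\ref{claim:f are equal} and Proposition~\ref{prop:coef c}): you write the probability of terminating at $(\ul\setminus\psi)$ as the telescoping difference of the probabilities of exploring the prefixes $\psi$ and $\psi\cup\{a_{j_{\max(\psi)+1}}\}$, identify each such probability with a $Q$-value at an off-path state via the right-ordered structure, and use the Equivalence Lemma to absorb the root-action discrepancy. Your explicit coupling through the auxiliary policy $\hat\rho$ is simply a clean formalization of the ``compare to off-path behavior'' step that the paper leaves implicit.
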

By applying the Equivalence lemma on the statement of  Claim \ref{claim:thm:step 2 claim}, we obtain that for every $\psi \in \Psi$
\[
\Pr(\pathto{s}{(\ul \setminus \psi)})=\Pr(\pathtorho{s}{(\ul \setminus \psi)});
\]
hence, we can rewrite Equation \refeq{eq:w for pi with prob} as 
\begin{align*}
W(\pi,s) &= Q(\rho,U)\cdot R(\emptyset)+ \sum_{\psi \in \Psi}\Pr(\pathtorho{s}{(\ul \setminus \psi)})R(\ul \setminus \psi)\\
&=W(\rho,s).
\end{align*}
This concludes the second step of the theorem.
\paragraph{Step 3 (final)} We are ready to prove the theorem. Fix arbitrary $a_{\tilde i}$ and $a_{\tilde j}$ such that $a_{\tilde i} \in \ug$ and  $a_{\tilde j} \in \ul$. By the previous steps, we know that
\[
W^*_{{i_1},{j_1}}(U)\stackrel{\textnormal{Step 2}}{=}W^*_{{\tilde i},{j_1}}(U)\stackrel{\textnormal{Step 1}}{\geq}W^*_{{\tilde i},{\tilde j}}(U).
\]
This ends the proof of Theorem \ref{thm:holy grail}.
\end{proofof}

\section{Statements for Theorem \ref{thm:holy grail}}\label{sec:for theorem}
\begin{proposition}\label{prop:W case of one}
Let $U$ be a state such that $\abs{\above(U)}\geq 2$ and $\abs{\below(U)} =1$. 
It holds that $W(\pi^\star,U)=W^*(U)$.
\end{proposition}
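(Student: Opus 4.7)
The plan is to exploit the degenerate structure of the base case. Let $a_j$ denote the unique arm of $\below(U)$. For any $\mP$-valid policy $\pi$ starting at $U$, exactly two terminal states are reached with positive probability: the empty set $\emptyset$ and the singleton $\{a_j\}$. Indeed, so long as $a_j$ remains unobserved, a $\mP$-valid policy must play $\bl p_{i,j}$ for some $a_i \in \above(s)$; once $a_j$ is drawn, the resulting substate has empty $\below$-component, and the $\mP$-valid prescription forces $\pi$ to play the deterministic portfolios $\bl p_{i,i}$ in succession until every arm of $\above(U)$ is pulled, terminating at $\emptyset$. If $a_j$ is never drawn, $\pi$ eventually exhausts $\above(U)$ and halts at $\{a_j\}$. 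By the definition of $Q$, these two terminals are reached with probabilities $Q(\pi,U)$ and $1-Q(\pi,U)$ respectively.

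Next, I observe that the coin flips determining which terminal is reached are independent of the realized rewards $(X(a))_{a\in A}$, since portfolios depend only on the state history, which does not encode reward realizations. Consequently
\[
W(\pi,U) \;=\; Q(\pi,U)\,\E[R(\emptyset)] \;+\; (1-Q(\pi,U))\,\E[R(\{a_j\})].
\]
Both expectations on the right-hand side are functionals of the joint reward distribution alone---explicitly, $\E[R(\emptyset)] = \E\bigl[\max_{a\in A} X(a)\cdot \ind(\max_{a\in A} X(a) > 0)\bigr]$ and $\E[R(\{a_j\})] = \E\bigl[\max_{a\in A} X(a)\cdot \ind(\max_{a\in A\setminus\{a_j\}} X(a) > 0)\bigr]$---so neither depends on $\pi$.

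Finally, I invoke Proposition~\ref{prop:case of one}, the $\abs{\ug}\ge 2,\abs{\ul}=1$ base case of the Equivalence Lemma, which guarantees that $Q(\pi,U)$ is the same for every $\mP$-valid $\pi$. Substituting into the factorization above shows that $W(\pi,U)$ is itself constant across all $\mP$-valid policies. Since $\pi^\star$ is $\mP$-valid (on a singleton $\below(U)$ a right-ordered policy is uniquely determined modulo the arbitrary choice of $a_i\in\above(s)$) and Proposition~\ref{prop:optimal p valid} ensures that an optimal $\mP$-valid policy exists, this common value must coincide with $W^\star(U)$, completing the proof. No substantive obstacle arises---the statement reduces to policy-independence of $Q$ combined with independence of Nature's randomness from the reward realizations---and in particular Assumption~\ref{assumption:dominance} is not invoked, which is consistent with stochastic order being vacuous on the singleton $\below(U)$.
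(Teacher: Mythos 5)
Your proof is correct and follows essentially the same route as the paper's: both arguments reduce the claim to the fact that only the two terminal states $\emptyset$ and $\{a_j\}$ are reachable, that the terminal rewards do not depend on the policy, and that $Q$ is policy-invariant by the $\abs{\above(U)}\geq 2$, $\abs{\below(U)}=1$ base case of the Equivalence Lemma, combined with Proposition~\ref{prop:optimal p valid}. The only cosmetic difference is that the paper conditions on the realizations of $\above(U)$ being non-positive (writing $W^*(U)=Q^*(U)\cdot\max\{0,X_{a_j}\}$), whereas you keep the unconditional decomposition $Q(\pi,U)\E[R(\emptyset)]+(1-Q(\pi,U))\E[R(\{a_j\})]$, which is an equally valid (arguably slightly more explicit) way to express the same argument.
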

\begin{proofof}{Proposition \ref{prop:W case of one}}
Denote $\below(U)=\{a_{j}\}$. We can assume w.l.o.g. that the realization of all arms in $\above(U)$ are non-positive, as otherwise every $\mP$-valid policy will explore all the arms; thus, $W^*(U)=Q^*(U)\cdot \max\{0, X_{a_{j}}  \}$. Finally, the Equivalence lemma suggests that   $Q^*(U)$ is policy invariant; hence, $W(\pi,U)=W^*(U)$ holds for any $\mP$-valid policy $\pi$, and in particular for $\pi=\pi^\star$.
\end{proofof}

\begin{proposition}\label{prop:W case of one strong}
Let $U$ be a state such that $\abs{\above(U)}=1$ and $\abs{\below(U)} \geq 2$. It holds that $W(\pi^\star,U)=W^*(U)$.
\end{proposition}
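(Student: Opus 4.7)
My plan is to reduce to a swap (exchange) argument and bubble-sort any $\mP$-valid policy into $\pi^\star$. By Proposition~\ref{prop:optimal p valid}, it suffices to restrict to $\mP$-valid policies; and since $\above(U)=\{a_i\}$ is a singleton, every such policy is determined, along its exploration path, by the permutation $\sigma$ of $\ul=\below(U)=\{a_{j_1},\ldots,a_{j_k}\}$ listing the arms it attempts to mix with $a_i$ (so $\pi^\star$ corresponds to the decreasing-$\mu$ ordering). The main step is to show that swapping two adjacent entries in $\sigma$ so as to place the stochastically dominant arm earlier weakly increases $W(\pi,U)$; a standard bubble-sort argument then gives $W(\pi^\star,U)\geq W(\pi,U)$ for every $\mP$-valid $\pi$, hence $W(\pi^\star,U)=W^\star(U)$.

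For the swap, fix $\mP$-valid policies $\pi,\pi'$ that agree everywhere except at two consecutive states on the exploration path, where $\pi$ attempts $a_{j_m}$ then $a_{j_n}$ and $\pi'$ attempts $a_{j_n}$ then $a_{j_m}$, with $a_{j_m}$ stochastically dominating $a_{j_n}$. Let $E_0\subseteq \ul$ be the arms attempted before these two positions, $\alpha$ the probability of reaching them, and write $p_l:=\bl{p}_{i,j_l}(a_{j_l})$, $q_l:=\bl{p}_{i,j_l}(a_i)$. The key structural point is that the ``both arms explored'' continuation occurs with the common probability $\alpha p_m p_n=\alpha p_n p_m$ under both policies and leads to an identical state with identical subsequent behavior; thus the entire difference $W(\pi,U)-W(\pi',U)$ comes from the two terminal states reached when Nature picks $a_i$ at either position. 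Setting $R_0:=\E[\max_a X(a)\cdot \ind[\max_{\{a_i\}\cup E_0}X(a)>0]]$ and $B_l:=\E[\max_a X(a)\cdot \ind[\max_{\{a_i\}\cup E_0}X(a)\leq 0,\; X(a_{j_l})>0]]$, so the reward at the terminal whose explored set is $\{a_i\}\cup E_0\cup\{a_{j_l}\}$ equals $R_0+B_l$, a one-line computation using $p_l+q_l=1$ yields $q_m+p_m q_n=q_n+p_n q_m$, so the $R_0$ contributions cancel and
\[
W(\pi,U)-W(\pi',U)=\alpha\bigl(p_m q_n B_m-p_n q_m B_n\bigr).
\]

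It remains to verify $p_m q_n\geq p_n q_m$ and $B_m\geq B_n$. The first is two lines of algebra: $p_m q_n-p_n q_m=\mu(a_i)(\mu(a_{j_m})-\mu(a_{j_n}))/[(\mu(a_i)-\mu(a_{j_m}))(\mu(a_i)-\mu(a_{j_n}))]$, positive because $\mu(a_{j_m})>\mu(a_{j_n})$ and $\mu(a_i)>0$. For $B_m\geq B_n$, use independence to factor out $\Pr(\max_{\{a_i\}\cup E_0}X(a)\leq 0)$; on the conditioning event the arms in $\{a_i\}\cup E_0$ are all nonpositive, so $\max_a X(a)=\max(X(a_{j_m}),X(a_{j_n}),G)$, where $G$ is the max over $\ul\setminus(E_0\cup\{a_{j_m},a_{j_n}\})$ and is independent of both $X(a_{j_m})$ and $X(a_{j_n})$. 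The inequality reduces to
\[
\E\bigl[\max(X(a_{j_m}),X(a_{j_n}),G)\cdot(\ind[X(a_{j_m})>0]-\ind[X(a_{j_n})>0])\bigr]\geq 0.
\]
Splitting on the quadrants $\{X(a_{j_m})>0,X(a_{j_n})\leq 0\}$ and $\{X(a_{j_n})>0,X(a_{j_m})\leq 0\}$ (the ``both positive'' and ``both nonpositive'' cases cancel) and using independence of $X(a_{j_n})$ from the remaining variables rewrites the left-hand side as $\Pr(X(a_{j_n})\leq 0)\,\E[\psi(X(a_{j_m}))]-\Pr(X(a_{j_m})\leq 0)\,\E[\psi(X(a_{j_n}))]$, where $\psi(z):=\E_G[\max(z,G)\cdot\ind[z>0]]$ vanishes on $z\leq 0$ and is non-decreasing on $z>0$, hence non-decreasing on $\R$. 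Assumption~\ref{assumption:dominance} makes each factor of the first product dominate the corresponding factor of the second, so the difference is non-negative.

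The hard part is precisely this last stochastic-dominance step: although $X(a_{j_m})$ dominates $X(a_{j_n})$, the payoff $\max(X(a_{j_m}),X(a_{j_n}),G)$ mixes both variables, so their roles inside the indicator are not symmetric and one cannot appeal to dominance directly. The trick is the quadrant split, which lets us replace the joint max by a single-variable non-decreasing test function $\psi$ and then apply the standard characterization of first-order stochastic dominance against each marginal; everything else (the cancellation of $R_0$-terms and the positivity of $p_m q_n-p_n q_m$) is routine algebra.
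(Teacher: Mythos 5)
Your proof is correct, and its engine is the same as the paper's: an adjacent-exchange (swap) argument over the order in which the single positive arm $a_i$ is mixed with the arms of $\below(U)$. The difference is in packaging. The paper proves this proposition by citing Proposition~\ref{prop:index with ugeq one}, which establishes optimality of the right-ordered policy sorted by the index $f^*(a_l)=\Pr(X_{a_l}>0)\E(\max_{a_j\in U}X_{a_j}\mid X_{a_l}>0)/\abs{\mu(a_l)}$ \emph{without} Assumption~\ref{assumption:dominance}, and then asserts that under stochastic dominance the decreasing-$\mu$ order of $\pi^\star$ agrees with the decreasing-$f^*$ order; that last deduction is left implicit. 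You instead argue directly under Assumption~\ref{assumption:dominance}: after cancelling the common $R_0$ terms (the identity $q_m+p_mq_n=q_n+p_nq_m$), you reduce the swap to $p_mq_n\geq p_nq_m$ and $B_m\geq B_n$, and your quadrant split with the monotone test function $\psi$ is exactly the missing verification that dominance sorts the relevant index---so your write-up is in effect a self-contained specialization of the paper's more general exchange lemma, supplying a detail the paper glosses over, while the paper's route buys the stronger, assumption-free index statement (used also for the conjecture in Subsection~\ref{subsec:conjecture}). One small slip to fix: since $U$ need not be the initial state, the terminal-reward indicators must range over \emph{all} previously explored arms, i.e.\ over $S:=(A\setminus U)\cup\{a_i\}\cup E_0$ rather than $\{a_i\}\cup E_0$ (equivalently, first observe that on the event that some arm of $A\setminus U$ realized a positive value all policies tie). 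With $S$ so corrected your decomposition $R_0+B_l$, the independence-based factorization, and the $\psi$ argument go through verbatim, since the arms of $A\setminus U$ are independent of $X(a_{j_m}),X(a_{j_n})$ and $G$.
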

\begin{proofof}{Proposition \ref{prop:W case of one strong}}
This statement is a special case of Proposition \ref{prop:index with ugeq one} for instances satisfying Assumption \ref{assumption:dominance}.
\end{proofof}
\begin{proposition}\label{prop:index with ugeq one}
Let $U$ be a state such that $\abs{\above(U)}=1$ and $\abs{\below(U)} \geq 2$. Let $f^*$ be a real-valued function, $f^*:\below(U)\rightarrow \mathbb R$, such that for every $a_l \in \below(U)$,
\[
f^*(a_{l})=\frac{\Pr(X_{a_l} > 0 )\E(\max_{a_j \in {U} } X_{a_j}\mid X_{a_l}>0 )}{\abs{\mu(a_l)}}.
\]  
Denote by $\pi_{f^*}$ the right-ordered policy that orders $\below(U)$ according to decreasing order of $f^*$. Then, $W(\pi_{f^*},U)=W^*(U)$. 
\end{proposition}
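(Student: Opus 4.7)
The plan is to prove Proposition~\ref{prop:index with ugeq one} by induction on $n := \abs{\below(U)}$, with the base case $n=1$ immediate since only a single ordering exists. Because $\abs{\above(U)}=1$, the unique positive arm $a_i$ must appear in every $\mP$-valid portfolio at every state reachable from $U$; hence every $\mP$-valid policy starting from $U$ is automatically right-ordered and is specified by a permutation $\sigma$ of $\below(U)$. By Proposition~\ref{prop:optimal p valid}, restricting attention to such policies $\pi_\sigma$ is without loss of optimality, so the task reduces to identifying the best $\sigma$.

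The first step is to expand $W(\pi_\sigma,U)$ as a sum over the step $L$ at which Nature first selects $a_i$, together with the ``Nature never picks $a_i$'' tail in which all negative arms are exhausted and then $a_i$ is played deterministically. Writing $p_a=\frac{\mu(a_i)}{\mu(a_i)+\abs{\mu(a)}}$, $q_a=1-p_a$, and $E_l=\{a_i,a_{\sigma(1)},\dots,a_{\sigma(l-1)}\}$, this yields $W(\pi_\sigma,U)=\sum_{l=1}^{n}(\prod_{k<l}p_{a_{\sigma(k)}})\,q_{a_{\sigma(l)}}\,\phi_l+(\prod_{k=1}^{n}p_{a_{\sigma(k)}})\,\phi_{n+1}$, with $\phi_l=\E[\max_{a\in U} X(a)\cdot\ind\{\max_{a'\in E_l}X(a')>0\}]$. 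The central tool is then an adjacent-swap exchange argument: letting $\sigma'$ differ from $\sigma$ only by interchanging positions $l,l+1$, for every other step the two policies agree, so $W(\pi_\sigma,U)-W(\pi_{\sigma'},U)$ collapses to the contributions of those two steps alone. Independence of the arms' rewards lets me factor the marginal increments as $\phi_l-\phi_{l-1}=\Pr(\max_{E_l}X\leq 0)\cdot\E[\max_{a\in T_l}X(a)\cdot\ind\{X(a_{\sigma(l-1)})>0\}]$ with $T_l:=U\setminus E_l$, and combined with the algebraic identity $\frac{p_l q_{l+1}}{p_{l+1}q_l}=\frac{\abs{\mu(a_{\sigma(l+1)})}}{\abs{\mu(a_{\sigma(l)})}}$, the $\phi_l$ coefficients cancel and the swap is disadvantageous precisely when
\[
\frac{\E[\max_{a\in T_l}X(a)\cdot\ind\{X(a_{\sigma(l)})>0\}]}{\abs{\mu(a_{\sigma(l)})}}\ \geq\ \frac{\E[\max_{a\in T_l}X(a)\cdot\ind\{X(a_{\sigma(l+1)})>0\}]}{\abs{\mu(a_{\sigma(l+1)})}}.
\]

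The subtle point---and the step I expect to be the main obstacle---is that this local swap criterion uses the \emph{running} max set $\max_{a\in T_l}X(a)$, whereas $f^*$ is defined via the \emph{initial} max set $\max_{a\in U}X(a)$. Bridging this gap is where the induction does real work: invoking the inductive hypothesis at the sub-state $U\setminus\{a_{j^*}\}$ reached after one successful exploration, I show that (i) among possible first actions at $U$, the portfolio $\bl p_{i,j^*}$ with $j^*\in\argmax_{a\in\below(U)} f^*(a)$ is optimal, and (ii) the orderings induced by $f^*$ computed at $U$ and at $U\setminus\{a_{j^*}\}$ agree on their common arms, so that a single fixed $f^*_U$-ordering suffices throughout. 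I plan to establish (ii) through the decomposition $\E[\max_U X\,\ind\{X(a_l)>0\}]=\E[\max_{U\setminus\{a_{j^*}\}}X\,\ind\{X(a_l)>0\}]+\E[(X(a_{j^*})-\max_{U\setminus\{a_{j^*}\}}X)^+\ind\{X(a_l)>0\}]$ combined with a monotonicity analysis exploiting the independence of the arms; by carefully tracking how the additive correction compares across $a_l$, one reconciles the local and global criteria. Consistently with Subsection~\ref{subsec:conjecture}, the appendix proof will also record the examples showing that this special structure does \emph{not} extend to $\abs{\above(U)}\geq 2$, which is why $\OGP$ can fail without Assumption~\ref{assumption:dominance}.
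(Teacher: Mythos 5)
Your core derivation is essentially the paper's own argument: restrict to right-ordered ($\mP$-valid) policies, which for $\abs{\above(U)}=1$ are just permutations of $\below(U)$; expand $W$ over the step at which Nature first realizes the positive arm; and run an adjacent-swap exchange. Your cancellation of the $\phi_l$ coefficient and the resulting local criterion (a ratio test with the \emph{running} tail set in the maximum) coincide, up to a small index slip in your increment formula (the probability factor should involve the explored set before the newly explored arm is added, and the max should include that arm), with the inequality the paper derives with $\max_{a_j\in U_<\setminus \ul'}$. The difference is in how the two of you close the argument: the paper simply identifies that tail-set comparison with the comparison of $f^*$ (which is defined via $\max_{a_j\in U}$) and declares a contradiction, i.e., it treats the two orderings as interchangeable without comment; you explicitly flag this mismatch and propose to bridge it by induction via your claims (i) and (ii).

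That bridge is where your proposal has a genuine gap. Claim (ii) --- that the ordering induced by the index is preserved when the index-maximal arm is removed, so a single $f^*_U$-ordering is consistent with all the tail-set criteria encountered --- is exactly the nontrivial consistency property, and your plan (the decomposition of $\E[\max_U X\,\ind\{X(a_l)>0\}]$ plus ``a monotonicity analysis exploiting independence'') does not deliver it: the pairwise comparison $\E[\max_S X\,\ind\{X(a_l)>0\}]/\abs{\mu(a_l)}$ versus the same quantity for $a_r$ genuinely depends on the set $S$ over which the max is taken and can reverse as $S$ changes (e.g., adding an arm with a large upside and a large $\abs{\mu}$ inflates both conditional maxima toward a common value, making the comparison degenerate to $\Pr(X_{a_l}>0)/\abs{\mu(a_l)}$, which can disagree with the comparison on the smaller set); independence alone gives no order preservation. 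The same issue infects the bubble-sort step you and the paper both invoke: swapping an adjacent pair that is out of $f^*$-order improves welfare only when the tail-set criterion for that pair and that position agrees with the $f^*$ comparison, which is again the unproven consistency statement (for intermediate permutations the local criterion can point the other way, so one must argue more carefully, e.g., that the $f^*$-order is the unique locally optimal order, or restate the index on the running tail sets). So as written your step (ii), and with it step (i), remains a plan rather than a proof; to be fair, the paper's own write-up passes over this same point silently, but your proposal does not yet fill it.
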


\begin{proofof}{Proposition \ref{prop:index with ugeq one}}
Denote $\above(U)=\ug=\{a_{i}\}$  and $\below(U)=\ul=\{a_{j_1},\dots a_{j_k}\}$ for $k=\abs{\ul}$. Let $\pi$ be any right-ordered policy with the matching $\sigr_\pi$, such that $\pi\neq \pi_{f^*}$. Assume that there are indices $r,l$, for $1\leq r,l \leq k$, such that $\sigr_\pi(a_{l})<\sigr_\pi(a_{r})$ yet $f(a_{l})<f(a_{{r}})$, for arms $a_l,a_r\in \ul$. Moreover, if such a pair $(l,r)$ exists,  assume w.l.o.g. that $\sigr_\pi$ orders them consequentially, i.e., for every arm $a\in U$ such that $a\notin \{ a_{l}, a_{r} \}$,  either $\sigr_\pi(a)< \sigr_\pi(a_{l})$ or $\sigr_\pi(a)> \sigr_\pi(a_{r})$.

Denote by $\pi'$ the right-ordered policy that swaps $a_{l}$ and $a_{r}$. If we show that $\pi'$ yields a better reward than $\pi$, we could swap the order of $\pi$ one pair at a time, thereby showing that $\pi_{f^*}$ is indeed optimal. To simplify notation, we denote by $\ind_{a}$ the event that $X_a>0$ for arm $a\in U$.

Let $\ul'$ be the set of all arms in $\ul$ such that $\ul'=\{a\in \ul\mid \sigr_\pi(a)< \sigr_\pi(a_{l})  \}$. We divide the analysis into two cases: in case $X_{a_{i}} >0$ or $\max_{a\in \ul'} X_a > 0$, both $\pi,\pi'$ obtain the same reward. Otherwise, assume that $X_{a_{i}} \leq 0$ and $\max_{a\in \ul'} X_a \leq 0$ ; hence, in case arm $a_{i}$ is selected, the policy reaches a terminal state with a reward of 0. The reward of $\pi$ is given by 
{\small
\thinmuskip=.2mu
\medmuskip=0mu plus .2mu minus .2mu
\thickmuskip=1mu plus 1mu
\begin{align*}
W(\pi,U)=C_1 \left( \bl p_{i,l}(l)\E(\ind_l \max_{a_j \in {U_< \setminus \ul'}} X_{a_j} )+\bl p_{i,l}(l)\bl p_{i,r}(r)\E((1-\ind_l)\ind_r \max_{a_j \in {U_< \setminus \ul'}} X_{a_j})+\bl p_{i,l}(l)\bl p_{i,r}(r)C_2 \right),
\end{align*}}%
where $C_1$ and $C_2$ are constants that depend on $\sigr_\pi$. Similarly, the reward of $\pi'$ is given by 
{\small
\thinmuskip=.2mu
\medmuskip=0mu plus .2mu minus .2mu
\thickmuskip=1mu plus 1mu
\begin{align*}
W(\pi',U) = C_1\left( \bl p_{i,r}(r)\E(\ind_r \max_{a_j \in {U_< \setminus \ul'}} X_{a_j} )+\bl p_{i,r}(r)\bl p_{i,l}(l)\E((1-\ind_r)\ind_l \max_{a_j \in {U_< \setminus \ul'}} X_{a_j})+\bl p_{i,r}(r)\bl p_{i,l}(l)C_2 \right),
\end{align*}}%
where $C_1$ and $C_2$ are the same constants.
If $W(\pi,U)\geq W(\pi',U)$, then
\begin{align*}
& \bl p_{i,l}(l)\E(\ind_l \max_{a_j \in {U_< \setminus \ul'}} X_{a_j} )+\bl p_{i,l}(l)\bl p_{i,r}(r)\E((1-\ind_l)\ind_r \max_{a_j \in {U_< \setminus \ul'}} X_{a_j}) \\
&\qquad \geq \bl p_{i,r}(r)\E(\ind_r \max_{a_j \in {U_< \setminus \ul'}} X_{a_j} )+\bl p_{i,r}(r)\bl p_{i,l}(l)\E((1-\ind_r)\ind_l \max_{a_j \in {U_< \setminus \ul'}} X_{a_j}),
\end{align*}
implying that
\begin{align*}
&\bl p_{i,l}(l)\E(\ind_l \max_{a_j \in  {U_< \setminus \ul'}} X_{a_j} )(1-\bl p_{i,r}(r)) \geq \bl p_{i,r}(r)\E(\ind_r \max_{a_j \in  {U_< \setminus \ul'}} X_{a_j} )(1-\bl p_{i,l}(l)).
\end{align*}
Stated otherwise,
\begin{align*}
\frac{\bl p_{i,l}(l)\E(\ind_l \max_{a_j \in {U_< \setminus \ul'}} X_{a_j} )}{ 1-\bl p_{i,l}(l)} > \frac{\bl p_{i,r}(r)\E(\ind_r \max_{a_j \in {U_< \setminus \ul'}} X_{a_j} )}{1-\bl p_{i,r}(r) }.
\end{align*}
Finally, due to the definitions of $\bl p_{i,r},\ind_l$ and $\bl p_{i,l}, \ind_r$,
\begin{align*}
\frac{\Pr(X_{a_l} > 0 )\E(\max_{a_j \in {U_< \setminus \ul'} } X_{a_j}\mid X_{a_l}>0 )}{\abs{\mu(a_l)}} \geq \frac{\Pr(X_{a_l} > 0 )\E( \max_{a_j \in {U_< \setminus \ul'}} X_{a_j} \mid X_{a_r} >0 )}{\abs{\mu(a_r)}},
\end{align*}
which contradicts our assumption that $f(a_{l})<f(a_{r})$.
\end{proofof}

\begin{proofof}{Claim \ref{claim:f are equal}}
Notice that we can represent $f^\pi_Z$ as 
{\thinmuskip=.2mu
\medmuskip=0mu plus .2mu minus .2mu
\thickmuskip=1mu plus 1mu
\[
Q(\pi,\ug\setminus Z \cup \{a_{i(Z)}\}, \ul\setminus \{a_{j_1},a_{j_2},\dots,a_{j_{k}},a_{j_{k+1}} \})-Q(\pi,\ug\setminus Z, \ul\setminus \{a_{j_1},a_{j_2},\dots,a_{j_{k}},a_{j_{k+1}} \}),
\]}%
where $a_{i(Z)}$ is the minimal element in $Z$ according to $\sigl_\pi$. This process is similar in spirit to the proof of Proposition \ref{prop:coef c} and is hence omitted. Then, we can mirror the same arguments for $f^\rho_Z$. Finally, the Equivalence lemma suggests that the two representations are equal.
\end{proofof}

\begin{proofof}{Proposition \ref{prop: monotonicity in thm}}
We address the two parts separately below.
\paragraph{Part \ref{item:prop alphas rho pi}} 
Notice that the terminal state $(\ul\setminus \psi)$ is reachable from the left sub-tree of $\pi$ and $\rho$ solely (see Figure \ref{fig:helping for thm two} for illustration). Due to the construction of $\pi$ and $\rho$,
\begin{align}\label{eq:pi to rho}
\Pr(\pathto{s}{(\ul\setminus \psi)})&= \bl p_{{i_1},{j_k}}(a_{i_1})\Pr(\pathto{s\setminus \{a_{i_1}\}}{(\ul\setminus \psi)})\nonumber\\
&=\frac{-\mu(a_{j_{k}})}{- \mu(a_{j_{k}})+ \mu(a_{i_{1}})} \Pr(\pathto{s\setminus \{a_{i_1}\}}{(\ul\setminus \psi)})\nonumber\\
&=\frac{-\mu(a_{j_{k}})}{- \mu(a_{j_{k}})+ \mu(a_{i_{1}})} \Pr(\pathtorho{s\setminus \{a_{i_1}\}}{(\ul\setminus \psi)})\nonumber\\
&\leq\frac{-\mu(a_{j_{k+1}})}{- \mu(a_{j_{k+1}})+ \mu(a_{i_{1}})} \Pr(\pathtorho{s\setminus \{a_{i_1}\}}{(\ul\setminus \psi)})\nonumber\\
&\leq \bl p_{{i_1},{j_{k+1}}}(a_{i_1})\Pr(\pathtorho{s\setminus \{a_{i_1}\}}{(\ul\setminus \psi)})\nonumber \\
& = \Pr(\pathtorho{s}{(\ul\setminus \psi)}),
\end{align}
since $\mu(a_{j_{k+1}})\geq \mu(a_{j_{k}})$, and due to monotonicity of $f(x)=\frac{x}{x+c}$ for positive $c$. Using similar arguments, 
\begin{align}\label{eq: pi with rho again}
\Pr(\pathto{s}{(\ul\setminus (\psi \cup \{a_{j_k}\})}) \geq \Pr(\pathtorho{s}{(\ul\setminus (\psi \cup \{a_{j_{k+1}}\})}).
\end{align}
Now, observe that
\begin{align}\label{eq: with Ez}
\Pr(\pathto{s}{(\ul\setminus \psi)}\mid E^\pi_\psi)
&=\frac{\Pr(\pathto{s}{(\ul\setminus \psi)})}{\Pr(\pathto{s}{(\ul\setminus \psi)}) +\Pr(\pathto{s}{(\ul\setminus (\psi \cup \{a_{j_k}\})})} \nonumber\\
&\stackrel{\textnormal{Eq. }(\ref{eq:pi to rho}),(\ref{eq: pi with rho again})}{\leq} \frac{\Pr(\pathtorho{s}{(\ul\setminus \psi)})}{\Pr(\pathtorho{s}{(\ul\setminus \psi)}) +\Pr(\pathtorho{s}{(\ul\setminus (\psi \cup \{a_{j_{k+1}}\})})} \nonumber\\
&=\Pr(\pathtorho{s}{(\ul\setminus \psi)}\mid E^\rho_\psi),
\end{align}
where the second to last step follows again from monotonicity of $f(x)=\frac{x}{x+c}$ for positive $c$. Further, due to monotonicity of the reward function $R$,
\[
R(\ul\setminus (\psi\cup \{a_{j_k}\})) \geq R(\ul\setminus \psi), \qquad R(\ul\setminus (\psi\cup \{a_{j_{k+1}}\})) \geq R(\ul\setminus \psi).
\]
In addition, due to Assumption \ref{assumption:dominance}, $R(\ul\setminus (\psi\cup \{a_{j_k}\})) \geq  R(\ul\setminus (\psi\cup \{a_{j_{k+1}}\}))$. Wrapping up,
{\thinmuskip=.2mu
\medmuskip=0mu plus .2mu minus .2mu
\thickmuskip=1mu plus 1mu
\begin{align*}
\alpha^\pi_\psi&=\Pr(\pathto{s}{(\ul\setminus \psi)}\mid E^\pi_\psi)R(\ul\setminus \psi)+(1-\Pr(\pathto{s}{(\ul\setminus \psi)}\mid E^\pi_\psi))R(\ul\setminus (\psi \cup \{a_{j_k}\}))\\
&\geq \Pr(\pathto{s}{(\ul\setminus \psi)}\mid E^\pi_\psi)R(\ul\setminus \psi)+(1-\Pr(\pathto{s}{(\ul\setminus \psi)}\mid E^\pi_\psi))R(\ul\setminus (\psi \cup \{a_{j_{k+1}}\}))\\
&\stackrel{\textnormal{Eq. }(\ref{eq: with Ez})}{\geq} \Pr(\pathtorho{s}{(\ul\setminus \psi)}\mid E^\rho_\psi)R(\ul\setminus \psi)+(1-\Pr(\pathtorho{s}{(\ul\setminus \psi)}\mid E^\rho_\psi))R(\ul\setminus (\psi \cup \{a_{j_{k+1}}\}))\\
&=\alpha^\rho_\psi.
\end{align*}}%
This completes the proof of the first part.

\paragraph{Part \ref{item:prop alphas alpha pi}} We prove the claim for $\alpha^\rho_{l+1} \geq \alpha^\rho_{l}$, and the other part is symmetrical. Let $\psi=\psi(l)$ such that $\max(\psi)=l$. Notice that the reward function $R$ is a set function, and is, by definition monotonically decreasing; hence, $R(\ul\setminus \psi) \leq R(\ul\setminus (\psi \cup \{a_{j_{k+1}}\}))$. Consequently,
{\thinmuskip=.2mu
\medmuskip=0mu plus .2mu minus .2mu
\thickmuskip=1mu plus 1mu
\begin{align*}
\alpha^\rho_\psi&=\Pr(\pathtorho{s}{(\ul\setminus \psi)}\mid E^\rho_\psi)R(\ul\setminus \psi)+\Pr(\pathtorho{s}{(\ul\setminus (\psi \cup \{a_{j_{k+1}}\})}\mid E^\rho_\psi)R(\ul\setminus (\psi \cup \{a_{j_{k+1}}\}))\nonumber\\
&\leq R(\ul\setminus (\psi \cup \{a_{j_{k+1}}\})).
\end{align*}}%
Further, let $\psi'$ such that $\max(\psi')=l+1$, namely $\psi'=\psi\cup\{a_{j_{l+1}} \}$. Using the same properties of $R$, we have that $R(\ul\setminus \psi') \leq R(\ul\setminus (\psi' \cup \{a_{j_{k+1}}\}))$; thus,
{\thinmuskip=.2mu
\medmuskip=0mu plus .2mu minus .2mu
\thickmuskip=1mu plus 1mu
\begin{align*}
\alpha^\rho_{\psi'}&=\Pr(\pathtorho{s}{(\ul\setminus \psi')}\mid E^\rho_{\psi'})R(\ul\setminus \psi')+\Pr(\pathtorho{s}{(\ul\setminus (\psi' \cup \{a_{j_{k+1}}\})}\mid E^\rho_{\psi'})R(\ul\setminus (\psi' \cup \{a_{j_{k+1}}\}))\nonumber\\
&\geq R(\ul\setminus \psi' ).
\end{align*}}%
Next, let $V_\psi$ denote the event that $(X_{a})_{a\in \psi}$ attain value below $\alpha$. Observe that 
{\thinmuskip=.2mu
\medmuskip=0mu plus .2mu minus .2mu
\thickmuskip=1mu plus 1mu
\begin{align*}
R(\ul\setminus (\psi \cup \{a_{j_{k+1}}\})) &=\Pr(V_\psi)R(\ul\setminus (\psi \cup \{a_{j_{k+1}}\}))+(1-\Pr(V_\psi))R(\ul\setminus (\psi \cup \{a_{j_{k+1}}\}))\nonumber\\
&=\Pr(V_\psi)\max\{\alpha,X_{a_{j_{k+1}}}\}+(1-\Pr(V_\psi))\max\{a_{j_1},\dots ,a_{j_l},a_{j_{k+1}}  \}\nonumber\\
&\leq \Pr(V_\psi)\max\{\alpha,X_{a_{j_{l+1}}}\}+(1-\Pr(V_\psi))\max\{a_{j_1},\dots ,a_{j_l},a_{j_{l+1}}  \}\\
&=R(\ul\setminus \psi' ) ,
\end{align*}}%
where the second to last inequality is due to Assumption \ref{assumption:dominance} and independence of $(X_{a_i})_{i=1}^K$. Ultimately, 
\[
\alpha^\rho_{l+1}=\alpha^\rho_{\psi'} \geq R(\ul\setminus \psi' ) \geq R(\ul\setminus (\psi \cup \{a_{j_{k+1}}\})) \geq \alpha^\rho_{\psi}=\alpha^\rho_{l}.
\]
This completes the proof of the second part.
\end{proofof}

\begin{proofof}{Claim \ref{claim:thm:step 2 claim}}
The proof goes along the lines of Claim \ref{claim:f are equal}, and is hence omitted.
\end{proofof}

\section{Proofs of Observations and Propositions from Sections \ref{sec:infinite} and \ref{sec:policy to algorithm}}\label{sec:appendix main body}
\begin{proofof}{Observation \ref{obs:eventually will explore}}
Let $x(a_i)>0$ for some $i\in [K]$, let $j$ be an index of unexplored arm, and let $\mI$ be the information of the algorithm. We overload the notation $\bl p_{i,j}$ to acknowledge the realized value $x(a_i)$; that is,
\begin{align*}
\bl p_{i,j}(a) =
\begin{cases}
\frac{-\mu(a_j)}{x(a_i)-\mu(a_j)} & \textnormal{if } a=a_i\\
\frac{x(a_i)}{x(a_i)-\mu(a_j)} & \textnormal{if } a=a_j\\
0 & \textnormal{otherwise}
\end{cases}.
\end{align*}
Notice that 
\begin{align*}
\sum_{a\in A}\bl p(a)\E\left[X(a)\mid \mI\right] &= \bl  p_{i,j}(a_i)x(a_i)  + \bl  p_{i,j}(a_j)\mu(a_j) \\
&= x(a_i)\cdot \frac{-\mu(a_j)}{x(a_i)-\mu(a_j)} + \mu(a_j)\cdot \frac{x(a_i)}{x(a_i)-\mu(a_j)} = 0;  
\end{align*}
hence, $\bl p_{i,j}$ is MIR w.r.t. to $\mI$. After selecting $\bl p_{i,j}$, either $a_i$ was realized or $a_j$. In the former, the information remains the same, and we can repeat this experiment again. The probability of $a_j$ realizing is positive and constant, and hence, after a finite time, we will eventually realize it. Once we do, the number of unexplored armed decreases by one. We can follow this process until all arms are explored.
\end{proofof}

\begin{proofof}{Observation~\ref{obs: U leq W*}}
The proof of this observation relies on constructing a policy $\pi$ that simulates $\ALG$. Since by definition $W(\pi,A) \leq  W^\star(A)$, it is enough to show that $\lim_{T \rightarrow \infty }\mU_T(\ALG) \leq W(\pi,A)$. In every round, $\pi$ selects precisely what $\ALG$ selects, and if the realized arm was already explored by $\ALG$, $\pi$ ignores it. The infinite time expected value of $\ALG$ cannot exceed $ W(\pi,A)$. The full details are similar to \cite[Theorem 3]{Fiduciary} and are hence omitted. 
\end{proofof}

\begin{proofof}{Observation \ref{obs: U get W}}
Fix any policy $\pi$. Let $\ALG(\pi)$ be the modification of Algorithm~\ref{alg:alg of pi} that uses $\pi$ instead of $\OGP$ in Lines \ref{algpi:while}-\ref{algpi:play with ogp}. Once $\pi$ reaches a terminal state, $\ALG(\pi)$ secures the reward of $\pi$ in finite time. Overall, $\lim_{T \rightarrow \infty }\mU_T(\ALG(\pi)) = W(\pi,A)$.
\end{proofof}

\begin{proofof}{Proposition \ref{prop:bernoulli opt}}
Fix an $\ise$ instance such that $(X(a_i))_i \in \{x^-,x^+\}$ (for $x^- \leq x^+$) almost surely. For the problem to be non-trivial, we must have $x^- <0$ and $x^+ >0$. Otherwise, if  $x^-,x^+ <0$ the only MIR action is $a_0$, and if $x^-,x^+ \geq 0$, we can explore all arms using the singleton portfolios $(\bl p_{ii})_{i \in [K]}$. From here on, we assume w.l.o.g. that $x^- = -1$ and $x^+=H$. For convenience, we state $\SEGB'$ explicitly in Algorithm~\ref{alg:alg of pi two supported}. Before we prove the proposition, we remark that
\begin{enumerate}
    \item Since $(X(a_i))_i$ take either $-1$ or $H$, Assumption~\ref{assumption:dominance} implies a stochastic order on all arms, not only on $\below(A)$. 
    \item Any asymptotically optimal algorithm conducts at most $K$ exploration rounds before it exploits. This implies an immediate crude bound of
    $\mU_T(\SEGB') \geq \left(  1-\frac{KH}{T}\right) \OPT_T$.
    \item This proof uses the analysis presented in Section~\ref{sec:thm1 outline}.
\end{enumerate}

The proof is composed of two steps. In the first step, we show that if $T > T_0$ for some $T_0$, any optimal algorithm must explore the arms according to a policy that admits the same structure of $\OGP$. In the second step, we show that all such policies have an identical exploration time, and hence all achieve the same social welfare.\\
\textbf{Step 1:} In the case of realizing a positive reward, any algorithm would stop exploring and exploit that realized reward. Consequently, we can separate exploration rounds from exploitation rounds. Notice, however, that the exploration policy can select portfolios different that $\OGP$ for finite $T$. To illustrate, reconsider Example~\ref{example with four}. In the extreme case of $T=1$, there is no point in selecting $\bl p_{1,3}$, since exploring $a_3$ is futile; we only care about maximizing the current round's reward.

However, if $T$ is large \textit{enough}, any optimal algorithm must employ an asymptotically optimal policy. To see this, let $(\pi,\ALG^{\pi})$ be a pair of exploration policy and the algorithm that employs it, and assume $\pi$ does not admit the structure of $\OGP$. The social welfare of $\ALG^\pi$ satisfies
\begin{align}\label{eq:alg of pi is not optimal}
\mU_T(\ALG^\pi) \leq KH+ (T-K)W(\pi,A).    
\end{align}
Similarly, taking into account the optimality of $\OGP$,
\begin{align}\label{eq:alg of pi with ogp}
\mU_T(\ALG^{\OGP}) \geq -K +(T-K)W^\star(A).
\end{align}
Using similar arguments to those in Proposition~\ref{prop:optimal p valid}, we can assume w.l.o.g. that $\pi$ belongs to $\{2^A \rightarrow \mP \cup \mP'\}$ (recall the definition of $\mP$ and $\mP'$ from Subsection~\ref{subsec:bin}). To see this, observe that any MIR policy can be formulated as a convex combination of policies that use $\mP \cup \mP'$ solely, and therefore we can assume that $\pi$ is the one for which $\ALG^{\pi}$ gets the highest social welfare. Furthermore, due to the proof of Theorem~\ref{thm:holy grail} (precisely Equation~\eqref{eq:rho with alpha}) it follows that if $\pi$ does not admit the structure of $\OGP$, then it is strictly sub-optimal. Next, let
\begin{align}\label{eq:alg of pi little omega}
\omega  \defeq \min_{\substack{\rho \in \{2^A \rightarrow \mP \cup \mP'\},\\W(\rho, A) < W^\star(A)}} W^\star(A) - W(\rho,A) > 0.
\end{align}
The quantity $\omega$ concerns the distributions of $(X(a_i))_i$ and is completely independent of the time $T$. We can further quantify or bound $\omega$ but this abstract and simple form is sufficient for our purposes. Let $T_0 \defeq K + \frac{K(H+1)}{\omega}$. Combining Inequalities~ \eqref{eq:alg of pi is not optimal},\eqref{eq:alg of pi with ogp}, and \eqref{eq:alg of pi little omega} we get
\begin{align*}
\mU_T(\ALG^{\OGP})-\mU_T(\ALG^\pi) &\geq  -K +(T-K)W^\star(A) - KH - (T-K)W(\pi,A) \\
& \geq -K(H+1) + (T-K)\omega\\
& > 0,
\end{align*}
provided that $T>T_0$. To conclude this step, we know that $\pi$ is a variation of $\OGP$.

\textbf{Step 2:}
Notice, however, that $\OGP$ is a class of policies differing from one another in the choices of arms from $\above(A)$ (Line~\ref{policy:pick arbitrary} in Policy~\ref{policy:pi star}); hence, one policy may attain a better reward than the other by reaching exploitation faster. 

As we commented in Line~\ref{algpi:two:play with ogp} of Algorithm~\ref{alg:alg of pi two supported}, when the state $s$ does not contain arms from $\below(A)$ we prioritize singleton portfolios according to the stochastic order. That is, we favor $\bl p_{i,i}$ over $\bl p_{i',i'}$ if $\mu(a_i) > \mu(a_{i'})$. This modification ensures that the time to exploitation from such states is minimal. 

Nevertheless, we might face a problem in states for which $\below(s)\neq\emptyset$. To illustrate, consider the action $\bl p_{i,j}$ for some $a_i\in \above(A), a_j\in \below(A)$. The probability we discover a reward of $H$ when selecting $\bl p_{i,j}$ is
\begin{align}\label{eq:q is identical to all}
\bl p_{i,j}(a_i)\Pr(X(a_i)=H)  + \bl p_{i,j}(a_j)\Pr(X(a_j)=H).
\end{align}
Consequently, we might favor $\bl p_{i,j}$ over $\bl p_{i',j}$ (for $a_{i'}\in \above(A)$) if it allows faster discovery of a positive reward (which is necessarily $H$). However, as we show next, the probability in Equation~\eqref{eq:q is identical to all} is the same regardless of the selection the arm from $\above(s)$. Observe that 
\[
\mu(a_i) = H\Pr(X(a_j)=H) +(-1)\cdot (1-\Pr(X(a_j)=H)) = (H+1)\Pr(X(a_j)=H)-1; 
\]
thus, by reformulating Equation~\eqref{eq:q is identical to all} we get
{
\thinmuskip=.2mu
\medmuskip=0mu plus .2mu minus .2mu
\thickmuskip=1mu plus 1mu
\begin{align*}\label{eq:q is identical to all elaborate}
\textnormal{Eq.}\eqref{eq:q is identical to all}&=\frac{-\mu(a_j)}{\mu(a_i)-\mu(a_j)}\Pr(X(a_i)=H)+ \frac{\mu(a_i)}{\mu(a_i)-\mu(a_j)}\Pr(X(a_j)=H)\\
& =\frac{-(H+1)\Pr(X(a_j)=H)+1}{\mu(a_i)-\mu(a_j)}\Pr(X(a_i)=H)+ \frac{(H+1)\Pr(X(a_i)=H)-1}{\mu(a_i)-\mu(a_j)}\Pr(X(a_j)=H)\\
& = \frac{\Pr(X(a_i)=H)-\Pr(X(a_j)=H)}{\mu(a_i)-\mu(a_j)}\\
& = \frac{\Pr(X(a_i)=H)-\Pr(X(a_j)=H)}{(H+1)\Pr(X(a_i)=H)-1-(H+1)\Pr(X(a_i)=H)+1}\\
& = \frac{1}{H+1}.
\end{align*}
}%

\begin{algorithm}[t]
\renewcommand{\algorithmiccomment}[1]{\texttt{\kibitz{blue}{\##1}}}
\caption{$\SEGB$ for Two-Supported Distributions \label{alg:alg of pi two supported}}
\begin{algorithmic}[1]
\REQUIRE the $\OGP$ policy
\STATE $s\gets A$
\WHILE[$s$ is not a terminal state] {$\OGP(s)\neq \emptyset$\label{algpi:two:while}}{
\STATE select $\OGP(s)$, and denote the realized action by $a_k$\label{algpi:two:play with ogp} \COMMENT{if $\below(s)=\emptyset$, prioritize the arms in $\above(s)$ according to the stochastic order}
\IF[a reward of $H$ was realized]{$x_{a_k}>0$} {
		\STATE \textbf{break}
}
\ENDIF
\STATE $s\gets s\setminus \{a_k\}$\label{algpi:two:update s}
}
\ENDWHILE
\IF{$x(a_k)=H$ for some explored arm $a_k$}{
\STATE exploit $a_k$ forever
}
\ELSE{
\STATE exploit $a_0$ forever
}
\ENDIF
\end{algorithmic}
\end{algorithm}
Since all $\OGP$ have the same expected exploration time, they all achieve the same social welfare. This completes the proof of Proposition~\ref{prop:bernoulli opt}.
\end{proofof}

\begin{proofof}{Proposition \ref{prop:i-d bounds}}
To prove the claim, we show that $\SEGB$ explores for at most $K(1+\frac{\eta}{\delta})$ rounds, and then exploits. First, $\SEGB$ uses $K_1 \leq K$ rounds following $\OGP$ until it reaches a terminal state (the while loop in Line~\ref{algpi:while} breaks). If all the realized rewards are negative, the exploration ends after that. Otherwise, if it discovers a positive reward, the value of that reward is at least $\delta$. Next, the Bernoulli trials will explore every unexplored arm w.p. of at least $ \frac{\delta}{\delta+\eta}$, or $ \frac{\delta+\eta}{\delta}$ rounds in expectation. Therefore, after $(K-K_1)(1+\frac{\eta}{\delta})$ rounds in expectation we explore all the remaining arms. Overall, the expected number of rounds devoted to exploration is
\[
K_1 + (K-K_1)\left(1+\frac{\eta}{\delta}\right) \leq K \left(1+\eta\E\left[{\frac{1}{\delta}}\right]\right).
\]
Ultimately, recall that $\lim_{T \rightarrow \infty }\mU_T(\SEGB) = \OPT_{\infty}$, so $\SEGB$ exploits an expected reward of $\OPT_{\infty}$ after it completes its exploration. Therefore,
\begin{align*}
\mU_T(\SEGB)& \geq  \frac{1}{T} \left[K \left(1++\eta\E\left[{\frac{1}{\delta}}\right]\right) \cdot 0 + \left(T-K \left(1+\eta\E\left[{\frac{1}{\delta}}\right]\right) \right)\OPT_{\infty}\right]. \\
&\geq\left(1-\frac{K \left(1+\eta\E\left[{\frac{1}{\delta}}\right]\right)}{T} \right)\OPT_{\infty}. 
\end{align*}
This completes the proof of Proposition \ref{prop:i-d bounds}.
\end{proofof}

\section{Proof of Statements from Section \ref{sec:thm1 outline}}\label{sec:aux}
\begin{proofof}{Proposition \ref{prop:optimal p valid}}
Fix a non-terminal state $U\subseteq A$. Further, for simplicity of notation, denote $V(a)\defeq W^*(U\setminus \{a\})$ for every $a\in U$. Due to Equation (\ref{eq:W elaborated}), the action that maximizes the reward at state $U$ is the solution $\bl p \in \Delta(U)$ of the following linear program:
\begin{equation}
\tag{P1} \label{eq:lp w}
\begin{array}{ll@{}ll}
\max \limits_{\bl p} &\sum_{a\in U} \bl p(a)V(a)   & \\
\text{subject to} & \sum_{a\in U} \bl p(a)\mu(a) \geq 0 &     \\
& \sum_{a\in U} \bl p(a)=1 &  \\
& 0\leq \bl p(a) \leq 1 & \textnormal{for all }a\in U
\end{array}
\end{equation}
Observe that for every $\bl p$ such that $ \sum_{a\in U} \bl p(a)\mu(a) \geq 0$, there exist coefficients $(\alpha_{i,j})_{i,j}$ such that for every $a_i\in U$, $\bl p(a_i)=\sum_{a_j\in U}\alpha_{i,j}\bl p_{i,j}(i)$, and
\[
\sum_{a\in U} \bl p(a)\mu(a) = \sum_{i,j}\alpha_{i,j}\left(\bl p_{i,j}(i)\mu(a_i)+\bl p_{i,j}(j)\mu(a_j)\right).
\]
Hence, an equivalent form of Problem (\ref{eq:lp w}) is
\begin{equation}
\tag{P2} \label{eq:lp w with alpha}
\begin{array}{ll@{}ll}
\max \limits_{\bl \alpha} &\sum_{a\in U} \alpha_{i,j}\left(\bl p_{i,j}(i)V(a_i)+\bl p_{i,j}(j)V(a_j)\right)  & \\
\text{subject to} & \sum_{i,j}\alpha_{i,j}\left(\bl p_{i,j}(i)\mu(a_i)+\bl p_{i,j}(j)\mu(a_j)\right) \geq 0 &     \\
&\sum_{i,j}\alpha_{i,j}=1 &  \\
& 0\leq \alpha_{i,j} \leq 1 \qquad  \textnormal{for all }(i,j)\in \{(i',j')\mid \bl p_{i',j'} \in \mP\cup \mP'\textnormal{ and } i',j' \in U \} &
\end{array}
\end{equation}
Finally, notice that the constraint $\sum_{i,j}\alpha_{i,j}\left(\bl p_{i,j}(i)\mu(a_i)+\bl p_{i,j}(j)\mu(a_j)\right) \geq 0$ holds for every selection of $(\alpha_{i,j})_{i,j}$ by the way we defined $\mP\cup \mP'$; thus, the maximum of Problem (\ref{eq:lp w with alpha}) is obtained when we set $\alpha_{i,j}=1$ for the pair $(i,j)$ that maximizes $\left(\bl p_{i,j}(i)V(a_i)+\bl p_{i,j}(j)V(a_j)\right) $.
\end{proofof}

\begin{claim}\label{claim:ass is not for W}
Consider a state $U\in\mS$, such that $\below(U) \geq 2$. Let  $a_j = \argmin_{a_{j'}\in\below(U)}\sigr_\pi(a_{j'})$, and let $a_{\tilde j}\in \below(U), a_{\tilde j} \neq a_j$. Under Assumption \ref{assumption:dominance}, it might be the case that $W^*(U\setminus \{a_j\}) < W^*(U\setminus \{a_{\tilde j}\})$.
\end{claim}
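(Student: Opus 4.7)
The plan is to prove the claim by exhibiting a concrete three-arm counterexample. Take $A=U=\{a_1,a_j,a_{\tilde j}\}$ where every $X(a_i)$ is Bernoulli on the common support $\{-1,+1\}$, with success probabilities $\Pr(X(a_1)=1)=0.55$, $\Pr(X(a_j)=1)=0.45$ and $\Pr(X(a_{\tilde j})=1)=0.40$. Then $\mu(a_1)=0.10>0$, $\mu(a_j)=-0.10$, and $\mu(a_{\tilde j})=-0.20$, so $a_j,a_{\tilde j}\in\below(U)$ with $\mu(a_j)>\mu(a_{\tilde j})$; and because the two $\below$-arms share a support, the strictly larger success probability of $a_j$ immediately means $a_j$ stochastically dominates $a_{\tilde j}$. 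Hence $a_j$ is the argmin of any right-ordering $\sigr_\pi$ compatible with Assumption~\ref{assumption:dominance}, and the instance satisfies all hypotheses of the claim.

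Next, I would compute $W^\star(U\setminus\{a_j\})$ and $W^\star(U\setminus\{a_{\tilde j}\})$ directly from the GMDP recursion. In each of the two two-arm states, $\safe(s)\cap\mP$ contains exactly one portfolio---$\bl p_{1,\tilde j}$ in state $\{a_1,a_{\tilde j}\}$ and $\bl p_{1,j}$ in state $\{a_1,a_j\}$---so by Proposition~\ref{prop:optimal p valid} that portfolio realizes the sup. Expanding the resulting two-step transition tree (one branch observes $a_1$ and terminates immediately, the other branch observes the $\below$-arm, proceeds through the singleton $\{a_1\}$, and lands in the fully-explored terminal $\emptyset$), and using that every realization lies in $\{-1,+1\}$, each terminal-state reward reduces to the indicator that some observed arm realized $+1$. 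All terms then become closed-form expressions in the three Bernoulli parameters and can be read off in a few lines.

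The required strict inequality arises from balancing two opposing effects. In state $U\setminus\{a_j\}$ we have already observed the dominant arm $a_j$, so the ``early'' terminal branch is more likely to witness a positive observed arm than its counterpart in $U\setminus\{a_{\tilde j}\}$. On the other hand, because $|\mu(a_j)|<|\mu(a_{\tilde j})|$, the exploration arm $a_j$ receives weight $\bl p_{1,j}(a_j)=1/2$ while $a_{\tilde j}$ receives only $\bl p_{1,\tilde j}(a_{\tilde j})=1/3$; hence starting from $U\setminus\{a_{\tilde j}\}$ we reach the fully-explored terminal $\emptyset$---whose conditional expected reward is strictly larger than either partial terminal---considerably more often. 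With the chosen parameters the second effect dominates the first: a direct calculation yields $W^\star(U\setminus\{a_j\})\approx 0.7855$ and $W^\star(U\setminus\{a_{\tilde j}\})\approx 0.7908$, producing the desired strict inequality.

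I do not anticipate a substantive obstacle. The two details that warrant care are (i) confirming that $\bl p_{1,\tilde j}$ and $\bl p_{1,j}$ are the only $\mP$-valid actions in their respective states, so the computed values really equal $W^\star$; and (ii) selecting Bernoulli parameters in the regime where the ``more exploratory'' portfolio wins the trade-off sketched above. The numerical gap is intentionally small---two competing effects nearly cancel---but the inequality is strict, and any small perturbation of the chosen probabilities preserves it.
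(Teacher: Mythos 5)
Your proposal is correct and takes essentially the same route as the paper: the paper also proves the claim by exhibiting a three-arm counterexample (one a priori superior arm, two stochastically ordered inferior arms) and computing $W^\star$ of the two two-arm states directly via the unique $\mP$-valid mixing portfolio, the key mechanism in both being that the dominated arm has the larger $\abs{\mu}$ and hence a much smaller exploration probability, so it is better to have already removed it. The only difference is cosmetic: the paper uses extreme, asymmetric reward supports and an $\epsilon\to 0$ limit to make the gap enormous, whereas your common $\{-1,+1\}$ support gives a small but strict and cleanly verifiable gap ($0.7855<0.79075$), and your numbers check out.
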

\begin{proofof}{Claim \ref{claim:ass is not for W}}
We prove the claim by providing an example, that could be easily extended to a family of infinitely many examples. Consider $K=3$, $A=\{a_1,a_2,a_3\}$ such that 
\[
X_1=\begin{cases}
-1 & \textnormal{w.p. 0.45}\\
1 & \textnormal{w.p. 0.55}
\end{cases}, \qquad
X_2=\begin{cases}
-10^6-2\epsilon & \textnormal{w.p. 0.5}\\
10^6 & \textnormal{w.p. 0.5}
\end{cases}, \qquad
X_3=\begin{cases}
-10^{\frac{1}{\epsilon}} & \textnormal{w.p. 0.5}\\
10^6 & \textnormal{w.p. 0.5}
\end{cases}
\]
For small $\epsilon$, say $\epsilon<\frac{1}{7}$, it is clear that $X_2$ stochastically dominates $X_3$. The resulting expected values are $\mu({a_1})=0.1,\mu({a_2})=-\epsilon,$ and $\mu({a_3})= -\Theta(10^{\frac{1}{\epsilon}})$. The intuition behind our selection of rewards is that arm $a_2$ could have high reward, and can be explored with probability $\bl p_{1,2}(2)=1-O(\epsilon)$. On the other hand, arm $a_3$ has a high reward with the same probability, but it is highly unlikely to explore it. More precisely, $\bl p_{1,3}(3)=\Theta(10^{-\frac{1}{\epsilon}})$. To finalize the argument, notice that
\[
W^*(A\setminus \{a_2\})=\bl p_{1,3}(1)\cdot R(\{a_3\})+\bl p_{1,3}(3)\cdot \bl p_{1,1}(1)\cdot R(\emptyset)=0.5\cdot 10^6+0.5\cdot 0.55\cdot 1 +O(\epsilon)
\]
while
\begin{align*}
W^*(A\setminus \{a_3\})&=\bl p_{1,2}(1)\cdot R(\{a_2\})+\bl p_{1,2}(2)\cdot \bl p_{1,1}(1)\cdot R(\emptyset)\\
&=0.75\cdot 10^6+0.25\cdot 0.55\cdot 1+O(\epsilon).
\end{align*}
The proof is completed by taking $\epsilon$ to zero.
\end{proofof}

\section{Bayesian Incentive Compatibility}\label{sec:IC in app}
\begin{proofof}{Proposition~\ref{prop:ic for uniform}}
The proof follows closely the proof of~\citet[Section E]{Fiduciary}; hence, we omit the details.
\end{proofof}

\begin{proofof}{Theorem~\ref{theorem: ic fee}}
$\ICSEGB$ is MIR by design, as it only recommends the default arm, a greedy arm, or according to $\SEGB$. To analyze its social welfare, observe that after $K(1+{\eta}\E\left[\frac{1}{\delta}\right])$ phases in expectation, we only explore. This follows directly from Proposition~\ref{prop:i-d bounds}. Since the length of each phase is $B=\ceil*{\frac{H}{\xi \gamma}}+1$, there are at most $O\left(\frac{K \eta H \E\left[\frac{1}{\delta}\right]}{\xi \gamma}  \right)$ exploration rounds in expectation. To show it is BIC, we divide the analysis according to the agent's index. The analysis closely follows the proof of \citet[Theorem 3]{Fiduciary}, and is hence omitted.
\end{proofof}
}

\fi}


\begin{thebibliography}{37}
\providecommand{\natexlab}[1]{#1}
\providecommand{\url}[1]{\texttt{#1}}
\expandafter\ifx\csname urlstyle\endcsname\relax
  \providecommand{\doi}[1]{doi: #1}\else
  \providecommand{\doi}{doi: \begingroup \urlstyle{rm}\Url}\fi

\bibitem[Amani et~al.(2019)Amani, Alizadeh, and Thrampoulidis]{amani2019linear}
S.~Amani, M.~Alizadeh, and C.~Thrampoulidis.
\newblock Linear stochastic bandits under safety constraints.
\newblock In \emph{Advances in Neural Information Processing Systems}, pages 9252--9262, 2019.

\bibitem[Atsidakou et~al.(2024)Atsidakou, Caramanis, Gergatsouli, Papadigenopoulos, and Tzamos]{atsidakou2024contextual}
A.~Atsidakou, C.~Caramanis, E.~Gergatsouli, O.~Papadigenopoulos, and C.~Tzamos.
\newblock Contextual pandora’s box.
\newblock In \emph{Proceedings of the AAAI Conference on Artificial Intelligence}, volume~38, pages 10944--10952, 2024.

\bibitem[Badanidiyuru et~al.(2013)Badanidiyuru, Kleinberg, and Slivkins]{badanidiyuru2013bandits}
A.~Badanidiyuru, R.~Kleinberg, and A.~Slivkins.
\newblock Bandits with knapsacks.
\newblock In \emph{2013 IEEE 54th Annual Symposium on Foundations of Computer Science}, pages 207--216. IEEE, 2013.

\bibitem[Bahar et~al.(2016)Bahar, Smorodinsky, and Tennenholtz]{Bahar2016}
G.~Bahar, R.~Smorodinsky, and M.~Tennenholtz.
\newblock Economic recommendation systems: One page abstract.
\newblock In \emph{Proceedings of the 2016 ACM Conference on Economics and Computation}, EC '16, pages 757--757, New York, NY, USA, 2016. ACM.
\newblock ISBN 978-1-4503-3936-0.

\bibitem[Bahar et~al.(2019)Bahar, Smorodinsky, and Tennenholtz]{bahar2019recommendation}
G.~Bahar, R.~Smorodinsky, and M.~Tennenholtz.
\newblock Social learning and the innkeeper challenge.
\newblock In \emph{ACM Conf. on Economics and Computation (EC)}, 2019.

\bibitem[Bahar et~al.(2020)Bahar, Ben-Porat, Leyton-Brown, and Tennenholtz]{Fiduciary}
G.~Bahar, O.~Ben-Porat, K.~Leyton-Brown, and M.~Tennenholtz.
\newblock Fiduciary bandits.
\newblock In \emph{International Conference on Machine Learning}, pages 518--527. PMLR, 2020.

\bibitem[Barto et~al.(1995)Barto, Bradtke, and Singh]{barto1995learning}
A.~G. Barto, S.~J. Bradtke, and S.~P. Singh.
\newblock Learning to act using real-time dynamic programming.
\newblock \emph{Artificial Intelligence}, 72\penalty0 (1-2):\penalty0 81--138, 1995.

\bibitem[Ben{-}Porat and Tennenholtz(2018)]{BenPoratT18}
O.~Ben{-}Porat and M.~Tennenholtz.
\newblock A game-theoretic approach to recommendation systems with strategic content providers.
\newblock In \emph{Advances in Neural Information Processing Systems 31: Annual Conference on Neural Information Processing Systems 2018, NeurIPS 2018, 3-8 December 2018, Montr{\'{e}}al, Canada.}, pages 1118--1128, 2018.

\bibitem[Ben-Porat and Torkan(2023)]{ben2023learning}
O.~Ben-Porat and R.~Torkan.
\newblock Learning with exposure constraints in recommendation systems.
\newblock In \emph{Proceedings of the ACM Web Conference 2023}, pages 3456--3466, 2023.

\bibitem[Berger et~al.(2023)Berger, Ezra, Feldman, and Fusco]{berger2023pandora}
B.~Berger, T.~Ezra, M.~Feldman, and F.~Fusco.
\newblock Pandora's problem with combinatorial cost.
\newblock In \emph{Proceedings of the 24th ACM Conference on Economics and Computation}, pages 273--292, 2023.

\bibitem[Boodaghians et~al.(2020)Boodaghians, Fusco, Lazos, and Leonardi]{boodaghians2020pandora}
S.~Boodaghians, F.~Fusco, P.~Lazos, and S.~Leonardi.
\newblock Pandora's box problem with order constraints.
\newblock In \emph{Proceedings of the 21st ACM Conference on Economics and Computation}, pages 439--458, 2020.

\bibitem[Braverman et~al.(2019)Braverman, Mao, Schneider, and Weinberg]{braverman2019multi}
M.~Braverman, J.~Mao, J.~Schneider, and S.~M. Weinberg.
\newblock Multi-armed bandit problems with strategic arms.
\newblock In \emph{Conference on Learning Theory}, pages 383--416. PMLR, 2019.

\bibitem[Chen et~al.(2018)Chen, Frazier, and Kempe]{chen18a}
B.~Chen, P.~Frazier, and D.~Kempe.
\newblock Incentivizing exploration by heterogeneous users.
\newblock In S.~Bubeck, V.~Perchet, and P.~Rigollet, editors, \emph{Proceedings of the 31st Conference On Learning Theory}, volume~75 of \emph{Proceedings of Machine Learning Research}, pages 798--818. PMLR, 06--09 Jul 2018.

\bibitem[Cohen and Mansour(2019)]{cohen2019optimal}
L.~Cohen and Y.~Mansour.
\newblock Optimal algorithm for bayesian incentive-compatible exploration.
\newblock In \emph{Proceedings of the 2019 ACM Conference on Economics and Computation}, pages 135--151, 2019.

\bibitem[Feng et~al.(2020)Feng, Parkes, and Xu]{feng2020intrinsic}
Z.~Feng, D.~Parkes, and H.~Xu.
\newblock The intrinsic robustness of stochastic bandits to strategic manipulation.
\newblock In \emph{International Conference on Machine Learning}, pages 3092--3101. PMLR, 2020.

\bibitem[Gergatsouli and Tzamos(2024)]{gergatsouli2024weitzman}
E.~Gergatsouli and C.~Tzamos.
\newblock Weitzman's rule for pandora's box with correlations.
\newblock \emph{Advances in Neural Information Processing Systems}, 36, 2024.

\bibitem[Hu et~al.(2023)Hu, Jagadeesan, Jordan, and Steinhardt]{hu2023incentivizing}
X.~Hu, M.~Jagadeesan, M.~I. Jordan, and J.~Steinhardt.
\newblock Incentivizing high-quality content in online recommender systems.
\newblock \emph{arXiv preprint arXiv:2306.07479}, 2023.

\bibitem[Immorlica et~al.(2019)Immorlica, Mao, Slivkins, and Wu]{immorlica2019bayesian}
N.~Immorlica, J.~Mao, A.~Slivkins, and Z.~S. Wu.
\newblock Bayesian exploration with heterogeneous agents, 2019.

\bibitem[Immorlica et~al.(2024)Immorlica, Jagadeesan, and Lucier]{immorlica2024clickbait}
N.~Immorlica, M.~Jagadeesan, and B.~Lucier.
\newblock Clickbait vs. quality: How engagement-based optimization shapes the content landscape in online platforms.
\newblock In \emph{Proceedings of the ACM on Web Conference 2024}, pages 36--45, 2024.

\bibitem[Joseph et~al.(2016)Joseph, Kearns, Morgenstern, and Roth]{MatthewKearnsMorgensternRothNIPS2016}
M.~Joseph, M.~Kearns, J.~H. Morgenstern, and A.~Roth.
\newblock Fairness in learning: Classic and contextual bandits.
\newblock In D.~D. Lee, M.~Sugiyama, U.~V. Luxburg, I.~Guyon, and R.~Garnett, editors, \emph{Advances in Neural Information Processing Systems 29}, pages 325--333. Curran Associates, Inc., 2016.

\bibitem[Kazerouni et~al.(2017)Kazerouni, Ghavamzadeh, Abbasi, and Van~Roy]{kazerouni2017conservative}
A.~Kazerouni, M.~Ghavamzadeh, Y.~Abbasi, and B.~Van~Roy.
\newblock Conservative contextual linear bandits.
\newblock In \emph{Advances in Neural Information Processing Systems}, pages 3910--3919, 2017.

\bibitem[Kremer et~al.(2013)Kremer, Mansour, and Perry]{KremerMP13}
I.~Kremer, Y.~Mansour, and M.~Perry.
\newblock Implementing the "wisdom of the crowd".
\newblock In \emph{Proceedings of the fourteenth {ACM} Conference on Electronic Commerce, {EC} 2013, Philadelphia, PA, USA, June 16-20, 2013}, pages 605--606, 2013.

\bibitem[Kremer et~al.(2014)Kremer, Mansour, and Perry]{Kremer2014}
I.~Kremer, Y.~Mansour, and M.~Perry.
\newblock Implementing the wisdom of the crowd.
\newblock \emph{Journal of Political Economy}, 122:\penalty0 988--1012, 2014.

\bibitem[Liu et~al.(2017)Liu, Radanovic, Dimitrakakis, Mandal, and Parkes]{liu2017calibrated}
Y.~Liu, G.~Radanovic, C.~Dimitrakakis, D.~Mandal, and D.~C. Parkes.
\newblock Calibrated fairness in bandits, 2017.

\bibitem[Mansour et~al.(2015)Mansour, Slivkins, and Syrgkanis]{mansour2015bayesian}
Y.~Mansour, A.~Slivkins, and V.~Syrgkanis.
\newblock Bayesian incentive-compatible bandit exploration.
\newblock In \emph{Proceedings of the Sixteenth ACM Conference on Economics and Computation}, pages 565--582, 2015.

\bibitem[Mansour et~al.(2016)Mansour, Slivkins, Syrgkanis, and Wu]{Mansour2016Slivkins}
Y.~Mansour, A.~Slivkins, V.~Syrgkanis, and Z.~S. Wu.
\newblock Bayesian exploration: Incentivizing exploration in bayesian games.
\newblock In \emph{Proceedings of the 2016 ACM Conference on Economics and Computation}, EC '16, pages 661--661, New York, NY, USA, 2016. ACM.

\bibitem[Mansour et~al.(2020)Mansour, Slivkins, and Syrgkanis]{mansour2020bayesian}
Y.~Mansour, A.~Slivkins, and V.~Syrgkanis.
\newblock Bayesian incentive-compatible bandit exploration.
\newblock \emph{Operations Research}, 68\penalty0 (4):\penalty0 1132--1161, 2020.

\bibitem[Mladenov et~al.(2020)Mladenov, Creager, Ben-Porat, Swersky, Zemel, and Boutilier]{googleOmer20}
M.~Mladenov, E.~Creager, O.~Ben-Porat, K.~Swersky, R.~Zemel, and C.~Boutilier.
\newblock Optimizing long-term social welfare in recommender systems: A constrained matching approach.
\newblock In \emph{Proceedings of the 37th International Conference on Machine Learning}, volume 119, pages 6987--6998. PMLR, 13--18 Jul 2020.

\bibitem[Moldovan and Abbeel(2012)]{Moldovan:2012}
T.~M. Moldovan and P.~Abbeel.
\newblock Safe exploration in markov decision processes.
\newblock In \emph{Proceedings of the 29th International Conference on Machine Learning}, ICML'12, pages 1451--1458, 2012.
\newblock ISBN 978-1-4503-1285-1.

\bibitem[Nisan and Ronen(1999)]{nisan1999algorithmic}
N.~Nisan and A.~Ronen.
\newblock Algorithmic mechanism design.
\newblock In \emph{Proceedings of the thirty-first annual ACM Symposium on Theory of Computing (STOC)}, pages 129--140. ACM, 1999.

\bibitem[Nisan et~al.(2007)Nisan, Roughgarden, Tardos, and Vazirani]{nisan2007algorithmic}
N.~Nisan, T.~Roughgarden, E.~Tardos, and V.~V. Vazirani.
\newblock \emph{Algorithmic game theory}, volume~1.
\newblock Cambridge University Press Cambridge, 2007.

\bibitem[Simchowitz and Slivkins(2024)]{simchowitz2024exploration}
M.~Simchowitz and A.~Slivkins.
\newblock Exploration and incentives in reinforcement learning.
\newblock \emph{Operations Research}, 72\penalty0 (3):\penalty0 983--998, 2024.

\bibitem[Slivkins(2019)]{slivkins2019introduction}
A.~Slivkins.
\newblock Introduction to multi-armed bandits.
\newblock \emph{Foundations and Trends{\textregistered} in Machine Learning}, 12\penalty0 (1-2):\penalty0 1--286, 2019.

\bibitem[Tadelis(2013)]{tadelis2013game}
S.~Tadelis.
\newblock \emph{Game theory: an introduction}.
\newblock Princeton university press, 2013.

\bibitem[Wachi and Sui(2020)]{wachi2020safe}
A.~Wachi and Y.~Sui.
\newblock Safe reinforcement learning in constrained markov decision processes.
\newblock In \emph{International Conference on Machine Learning}, pages 9797--9806. PMLR, 2020.

\bibitem[Weitzman(1978)]{weitzman1978optimal}
M.~Weitzman.
\newblock \emph{Optimal search for the best alternative}, volume~78.
\newblock Department of Energy, 1978.

\bibitem[Wu et~al.(2016)Wu, Shariff, Lattimore, and Szepesv{\'a}ri]{wu2016conservative}
Y.~Wu, R.~Shariff, T.~Lattimore, and C.~Szepesv{\'a}ri.
\newblock Conservative bandits.
\newblock In \emph{International Conference on Machine Learning}, pages 1254--1262, 2016.

\end{thebibliography}
\end{document}